\newcommand{\eps}{\varepsilon}
\newcommand{\R}{\mathbb{R}}
\newcommand{\N}{\mathbb{N}}
\newcommand{\Z}{\mathbb{Z}}
\newcommand{\floor}[1]{\left\lfloor #1 \right\rfloor}
\newcommand{\ceil}[1]{\left\lceil #1 \right\rceil}
\newtheorem{theorem}{Theorem}[section]  
\newtheorem{lemma}[theorem]{Lemma}
\newtheorem{claim}[theorem]{Claim}
\newtheorem{cor}[theorem]{Corollary}
\theoremstyle{definition}
\newtheorem{definition}[theorem]{Definition}
\newtheorem{obs}[theorem]{Observation}
\title{Beyond 2-approximation for $k$-Center in Graphs}
\author{Ce Jin\\MIT\and Yael Kirkpatrick\\MIT\and Virginia Vassilevska Williams\\MIT \and Nicole Wein\\University of Michigan}
\date{}
\begin{document}

\setcounter{page}{0} \clearpage
\maketitle

\thispagestyle{empty}


\begin{abstract}
We consider the classical $k$-Center problem in undirected graphs. The problem is known to have a polynomial-time 2-approximation. 
There are even $(2+\varepsilon)$-approximation algorithms for every $\varepsilon>0$ running in near-linear time. The conventional wisdom is that the problem is closed, as $(2-\varepsilon)$-approximation is NP-hard when $k$ is part of the input, and for constant $k\geq 2$ it requires $n^{k-o(1)}$ time under the Strong Exponential Time Hypothesis (SETH).

Our first set of results show that one can beat the multiplicative factor of $2$ in undirected unweighted graphs if one is willing to allow additional small additive error, obtaining $(2-\varepsilon,O(1))$ approximations. We provide several algorithms 
that achieve such approximations for all integers $k$ with running time $O(n^{k-\delta})$ for $\delta>0$. For instance, for every $k\geq 2$, we obtain an $O(mn + n^{k/2+1})$ time  $\left(2 - \frac{1}{2k-1}, 1 - \frac{1}{2k-1}\right)$-approximation to $k$-Center, and
for every $k\geq 10$ we obtain an $(3/2,1/2)$-approximation algorithm running in $n^{k-1+1/(k+1)+o(1)}$ time. For $2$-Center we also obtain an $\tilde{O}(mn^{\omega/3})$ time $(5/3,2/3)$-approximation algorithm, where $\omega<2.372$ is the fast matrix multiplication exponent. Notably, the running time of this $2$-Center algorithm is faster than the time needed to compute APSP.

Our second set of results are strong fine-grained lower bounds for $k$-Center. We show that our $(3/2,O(1))$-approximation algorithm is optimal, under SETH, as any $(3/2-\varepsilon,O(1))$-approximation algorithm requires $n^{k-o(1)}$ time. We also give a time/approximation trade-off: under SETH, for any integer $t\geq 1$, $n^{k/t^2-1-o(1)}$ time is needed for any $(2-1/(2t-1),O(1))$-approximation algorithm for $k$-Center. This explains why our $(2-\varepsilon,O(1))$ approximation algorithms have $k$ appearing in the exponent of the running time.
Our reductions also imply that, assuming ETH, the approximation ratio 2 of the known near-linear time algorithms cannot be improved by any algorithm whose running time is a polynomial independent of $k$, even if one allows additive error.
\end{abstract}

	\newpage
    \maketitle

\newenvironment{proofof}[1]
  {\begin{proof}[#1]}
  {\end{proof}}

    \section{Introduction}
The $k$-Center problem is a classical facility location problem in the clustering literature.
Given a distance metric on $n$ points, $k$-Center asks for a set of $k$ of the points (``centers'') that minimize the maximum over all points $p$ of the distance from $p$ to its nearest center; this is called the {\em radius}. An important special case of the problem is when the metric is the shortest paths distance metric of a given $n$-node, $m$-edge graph. This special case has a huge variety of applications as seen for instance in the many examples in the 1983 survey \cite{kcentersurvey}. 
When $k$ is part of the input, the problem is well-known to be NP-hard as $k$-Dominating Set is just a special case: a set of $k$ nodes is a dominating set in an unweighted graph if and only if they are a $k$-center with radius $1$. 

As noted by \cite{HsuN79}, this same reduction shows that even for unweighted undirected graphs it is NP-hard to obtain $k$ centers whose radius is at most a factor of $(2-\eps)$ from the optimum for $\eps>0$, as such an algorithm would be able to distinguish between radius $1$ and greater than $1$, and would thus be able to solve Dominating Set. As $k$-Dominating set is $W[2]$-complete, obtaining such a $(2-\eps)$-approximation for $k$-Center is also $W[2]$-hard with parameter $k$, as noted by \cite{Feldmann19}. Via \cite{PatrascuW10}, the reduction also implies that under the Strong Exponential Time Hypothesis (SETH), a $(2-\eps)$-approximation for $k$-Center requires $n^{k-o(1)}$ time for all integers $k\geq 2$.
Further $(2-\eps)$-approximation hardness results are known for special classes of graphs and restricted metrics (e.g. \cite{Feldmann19,FederG88,KatsikarelisLP19}).

Several $2$ and $(2+\eps)$ (for all $\eps>0$) approximation algorithms for $k$-center have been developed over the years \cite{Gonzalez85,DyerF85,HochbaumS86,Thorupcenter04,AbboudCLM23}, even running in near-linear time in the graph size.

Due to the aforementioned hardness of approximation results, it seems that the $k$-center approximation problem is closed: factor $2$ is possible and factor $(2-\eps)$ for $\eps>0$ is  impossible under widely believed hypotheses.

However, what if we do not restrict ourselves to {\em multiplicative} approximations, and allow small {\em additive error} in addition? That is, we allow for the approximate radius $\tilde{R}$ to be between the true radius $R$ and $\alpha R+\beta$ for $\alpha<2$ and some very small $\beta$. 
Such {\em mixed}, $(\alpha,\beta)$ approximations are often studied in the shortest paths literature.

Hochbaum and Shmoys \cite{HochbaumS86} actually considered mixed approximations and showed that it is easy to modify the reduction from $k$-Dominating set to show that it is still NP-hard to obtain $(\alpha,\beta)$-approximations for any $\alpha<2$ and any $\beta$, for $k$-Center on {\em weighted} graphs: 
From a dominating set instance $G=(V,E)$, create a weighted graph $G'=(V,V\times V,w)$ where if $(u,v)\in E$, $w(u,v)=W$ and if $(u,v)\notin E$, $w(u,v)=2W$,\footnote{It is actually not necessary to add edges $(u,v)$ of weight $2W$ when $(u,v)\notin E$. It suffices to keep the same graph but add weights $W$ to all edges.}
where $W=2\beta/(2-\alpha)$; since $\alpha W+\beta<2W$, any $(\alpha,\beta)$-approximation to $k$-Center solves the $k$-Dominating set problem.

Besides NP-hardness, the modified reduction of \cite{HochbaumS86} also shows that $n^{k-o(1)}$ time is needed for any $k\geq 2$ for $(2-\eps,\beta)$-approximation to $k$-Center under SETH. However, the reduction crucially relies on the graph having weights. For $\eps=2-\alpha$, these weights behave as $\Theta(\beta/\eps)$ and are in fact quite large if $\eps$ is small. We thus ask:

\begin{center}
{\em Question 1: Are there $((2-\eps),\beta)$-approximation algorithms for $\beta=O(1)$ and $\eps>0$ that run in $O(n^{k-\delta})$ time for some $\delta>0$ in unweighted graphs?}
\end{center}

In particular, what is the best mixed approximation that one can get in $O(n^{k-\delta})$ time for $\delta>0$? 

Further, the Hochbaum and Shmoys \cite{HochbaumS86} reduction implies that for any fixed $\eps>0$, under SETH, any $O(n^{k-\delta})$ time $(2-\eps,\beta)$-approximation algorithm for $k$-center must have additive error $\beta$ at least a constant fraction of the maximum edge weight in the graph. Thus, it is also interesting whether one can obtain such $O(n^{k-\delta})$ time $(2-\eps,\beta)$-approximation algorithms for integer-weighted graphs, where $\beta$ is allowed to be proportional to the largest weight.

If the answer to question 1 is affirmative, then we ask: 

\begin{center}
{\em Question 2: Can one get $((2-\eps),\beta)$-approximation in near-linear time for $\eps>0$, $\beta= O(1)$ in unweighted undirected graphs?}
\end{center}
The question is also interesting for weighted graphs when $\beta$ is allowed to be proportional to the largest weight in the graph.

A last question is: 
\begin{center}
{\em Question 3: 
When can one avoid computing all pairwise distances (APSP) in the graph?}\end{center}
The best algorithms for APSP in $m$-edge, $n$-vertex graphs run in $\tilde{\Theta}(mn)$ time even when the graphs are unweighted and undirected, when $m\leq n^{\omega-1}$. This was (conditionally) explained by \cite{LincolnWW18}. 
The near-linear time $(2+\eps)$-approximation $k$-Center algorithms (e.g. \cite{Thorupcenter04,AbboudCLM23}) show that APSP computation is not needed for any $k$ if one is happy with a $(2+\eps)$-approximation.
For the case of $k=1$, $1$-Center is also known as the Radius problem in graphs. The hardness for $(2-\eps)$-approximation does not apply for $k=1$ (as $k$-dominating set hardness under SETH for instance only applies for $k\geq 2$). Consequently, there has been a lot of work on $O(mn^{1-\delta})$ time $(2-\eps)$-approximation algorithms for $1$-Center that avoid the computation of all-pairs shortest paths, and have truly subquadratic running times in sparse graphs \cite{AingworthCIM99,AbboudGW23,CairoGR16,BackursRSWW21,AbboudWW16,ChechikLRSTW14,RodittyW13}. 
For what other values of $k\geq 2$ can one obtain such algorithms that achieve $((2-\eps),\beta)$-approximation?

\subsection{Our results}
We present the first 
fast $(2-\eps,\beta)$-approximation algorithms for $\eps>0$ and $\beta= O(1)$ for $k$-Center in unweighted undirected graphs, answering Question 1 in the affirmative and addressing Question 3.
We complement our algorithms with a variety of fine-grained lower bounds, showing strong hardness results, in particular providing convincing evidence that the answer to question 2 is a resounding NO.

\paragraph{Lower bounds.}
Our conditional hardness results are largely based on the popular Strong Exponential Time Hypothesis (SETH) of \cite{ip2,cip10} that states that no $O((2-\eps)^n)$ time algorithm for constant $\eps>0$ can solve $k$-SAT on $n$ variables for arbitrary $k\geq 3$.
Due to a result by Williams \cite{Williams05}, SETH is known to imply strong hardness for the $k$-Orthogonal Vector ($k$-OV) problem of Fine-Grained Complexity (see also \cite{vsurvey}), and some of our lower bounds are from $k$-OV. Some of our lower bounds are from an approximation version of $k$-OV, Gap Set Cover, whose hardness due to \cite{SLM19,Lin19} is also based on SETH. One of our lower bounds is based on the Exponential Time Hypothesis (ETH) that states that $3$SAT on $n$ variables cannot be solved in $2^{o(n)}$ time. ETH is implied by SETH and is an even more plausible hardness hypothesis. 
All our conditional lower bounds assume the word-RAM model with $O(\log n)$ bit words.

We start with a simple conditional lower bound (see \cref{thm:simplelb}) from $k$-OV (and hence SETH) that implies that for every $k\geq 2$, $n^{k-o(1)}$ time is needed for any
$(3/2-\delta,\beta)$-approximation algorithm for $k$-Center for $\eps,\delta>0$ and any constant $\beta$.

This result implies that to get $O(n^{k-\delta})$ time, one can at best hope for a $(3/2,O(1))$ approximation. In fact, since our reduction constructs a sparse graph, we get that this is also true for sparse graphs, and a 
$(3/2-\delta,\beta)$-approximation in fact requires $m^{k-o(1)}$ time.

Next, we prove a more general result which is our main hardness result:

\begin{restatable}{theorem}{thmfulllb}
\label{thm:fulllb}
There is a function $f\colon \N^+ \to \N^+$ such that the following holds.
For all integers $t,\ell\geq 1$ and all $k\geq 2f(t)$, under SETH $n^{\frac{k}{f(t)} -1-o(1)}$ time is necessary to distinguish between radius $\leq (2t+1)\ell$ and radius $\ge (4t+1)\ell$ for $k$-center, even on graphs with $m=O(n)$ edges.  Here, $f(t)=t$ for $t\in \{1,2,3,4\}$, and $f(t) \le t^2$ for all $t\in \N^+$. 
\end{restatable}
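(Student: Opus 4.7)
The plan is to prove the theorem by reducing from a SETH-hard gap variant of Set Cover (or equivalently of $k'$-OV), with cover parameter $k' = k/f(t)$. The cited SETH hardness of Gap Set Cover rules out $n^{k'-o(1)}$-time algorithms that distinguish whether a set system of size $n$ admits a cover of size $k'$ from the case that every choice of $k'$ sets misses at least one witness element. The reduction will output a graph on $N = O(n)$ vertices and $O(n)$ edges, so any algorithm distinguishing radius $\le (2t+1)\ell$ from radius $\ge (4t+1)\ell$ in $N^{k/f(t)-1-o(1)}$ time would yield an $n^{k'-1}$-time Gap Set Cover algorithm and contradict SETH. The hypothesis $k \ge 2f(t)$ matches the need for $k' \ge 2$.

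The graph is assembled from three kinds of gadgets, all built as unions of paths of length $\Theta(\ell)$ meeting at constant-degree junction vertices, which makes sparsity $m = O(n)$ automatic. First, a selector gadget $S_F$ for each set $F$ in the input collection, designed so that any solution of radius $\le (2t+1)\ell$ must place exactly $f(t)$ centers inside $S_F$ whenever the set is ``used''; this gadget is what forces the blowup factor $f(t)$ between the number of chosen sets and the number of centers. Second, a coverage gadget attaching each element vertex $u$ by a length-$(2t+1)\ell$ path to a canonical center slot of $S_F$ for every set $F$ containing $u$, so that a canonical placement selecting any set containing $u$ covers $u$ within the target radius. Third, a penalty gadget which, for every element $u$ not covered by the selected sets, forces the distance from $u$ to every center to exceed $(4t+1)\ell$ by making every alternative path from $u$ sufficiently long.

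Correctness is then a two-way verification. On the YES side, given a cover of size $k'$, placing $f(t)$ centers in the canonical positions of each selected $S_F$ yields exactly $k = k' f(t)$ centers and radius at most $(2t+1)\ell$ by construction. On the NO side, any choice of $k$ centers either deviates from the canonical positions inside some selector, in which case one can pinpoint a vertex inside the gadget at distance greater than $(4t+1)\ell$ from every center, or it respects the canonical format and thereby encodes a choice of $k'$ sets whose union misses some element $u$; the penalty gadget for $u$ then guarantees its distance to every center is at least $(4t+1)\ell$.

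The main obstacle is the design of the selector gadget and the analysis that controls the efficiency parameter $f(t)$. For small $t \in \{1,2,3,4\}$ I would pack $t$ forced center slots along a single short auxiliary structure and argue, by a finite case analysis tuned to the exact path lengths, that any placement using fewer than $t$ slots per selected $S_F$ pushes some internal distance past the $(4t+1)\ell$ threshold, giving $f(t) = t$. For general $t$, the analysis appears to require a more redundant construction obtained by arranging $t$ parallel rails of $t$ slots each, which yields the weaker bound $f(t) \le t^2$; sharpening this combinatorial parameter is exactly what would improve the $k/f(t)-1$ exponent in the conditional lower bound.
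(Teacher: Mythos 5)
There is a genuine gap, and it sits exactly where the paper's proof does its real work. Your NO-side plan is to force any radius-$(2t+1)\ell$ solution to place centers in \emph{canonical} slots of the selector gadgets, and to claim that any deviation "pinpoints a vertex at distance greater than $(4t+1)\ell$ from every center." In an unweighted graph with a sub-factor-2 gap this kind of rigidity cannot be enforced: a center shifted a few steps along a path, or parked on a path interior, still covers essentially everything a canonical center covers within the much larger budget $(4t+1)\ell$, so no single vertex becomes uncovered and the case analysis collapses. Relatedly, the hardness source you invoke is effectively \emph{exact} Set Cover (cover of size $k'$ exists versus every $k'$ sets miss an element), and with that source your reduction would have to certify that a small-radius solution encodes only $k'$ sets --- which is precisely what cannot be done. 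The paper's proof instead lets centers sit anywhere, associates each center to a nearby set-node, and proves (via a ``bad path'' length argument through the gadget skeleton) that a radius-$<(4t+1)\ell$ solution with $f(t)(k'+1)$ centers yields a set cover of size $f(t)(k'+1)$, i.e.\ a cover larger than $k'$ by a constant factor. That loss is absorbed by the SETH-hardness of \emph{Gap} Set Cover (no cover of size $\le Ck'$ for an arbitrarily large constant $C$), which is why the gap version is indispensable and why your statement of the hypothesis is not the one the argument needs.

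The second gap is the provenance of $f(t)$. In the paper, $f(t)$ is not the number of centers forced into a per-set selector; it is the number of copies of a single base gadget produced by a recursive construction: as the $A$--$B$ path length grows to push the ratio toward $2$, the uncovered middle segments of those paths are re-covered by attaching junction points to recursively built copies of the gadget, each copy contributing its own $k'+1$ centers in the YES case. The bound $f(t)=t$ for $t\le 4$ and $f(t)\le t^2$ in general comes from counting the recursion tree (sequences $1=p_1,\dots,p_s\le 2t$ with odd ratios $\ge 3$), not from a finite case analysis of slot placements or ``parallel rails.'' Your selector/penalty design gives no YES-case accounting showing radius $(2t+1)\ell$ is achievable with $k'f(t)$ centers for these specific thresholds, and no distance analysis showing uncovered elements are pushed beyond $(4t+1)\ell$ while covered ones stay within $(2t+1)\ell$ (the paper's analogue is the inductive bad-path lemma plus the skeleton-distance computation). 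Finally, a smaller but real omission: to keep the instance near-linear in size you need the element side $B$ to be tiny ($|B|\le n^{\gamma}$), which the paper obtains by a powering argument on the Gap Set Cover instance; constant-degree junctions alone do not give $m=O(n)$ when every set--element incidence becomes a path.
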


The proof of our theorem, to our knowledge, is the first use of Gap Set Cover for approximation hardness results for graph problems.
Some consequences of our result are as follows:
\begin{enumerate}
\item Under SETH, there is no $(2-\frac{1}{2t+1}-\eps,\beta)$-approximation algorithm for $k$-center for any $\eps>0$, $\beta= O(1)$, running in $O(n^{k/t^2-1-\delta})$ time.
E.g., a $(5/3-\eps,\beta)$ approximation requires $O(n^{k-1-o(1)})$ time.

\item Because of the ETH-based hardness of Gap Set Cover \cite{SLM19,Lin19}, we also get that under ETH there can be no $f(k)n^{o(k)}$ time $(2-\eps,\beta)$-approximation algorithm for $k$-Center for $\eps>0,\beta= O(1)$.  
\end{enumerate}

Recall that there are $\tilde{O}(mk)$ time\footnote{The notation $\tilde{O}$ hides polylogarithmic factors.} $2$-approximation algorithms \cite{Gonzalez85,DyerF85,HochbaumS86}, and $\tilde{O}(m)$ time $(2+\eps)$-approximation algorithms \cite{Thorupcenter04,AbboudCLM23}, for $k$-center. Point 2 above shows that these are {\em optimal} in a strong sense: even if we allow arbitrary polynomial time $O(n^c)$ and allow additional additive error $\beta$, one cannot beat the multiplicative factor of $2$ for all $k$ simultaneously. We thus answer Question 2 strongly in the negative.

\paragraph{Algorithms.}
We present the first ever improvements over the known multiplicative factor of $2$ for $k$-center, with very small additive error, answering Question 1 in the affirmative. Our algorithmic results are summarized in Table \ref{table:results}.

\begin{table}[]
\centering
\begin{tabular}{|c|l|l|l|l|}
\hline
$\bm{k}$ & \textbf{Approximation}                     & \textbf{Runtime}                        & \textbf{Comments} & \textbf{Reference} \\ \hline
2            & $(5/3, 2/3)$                               & $mn^{\omega/3}$                         &                   & \cref{thm:53approx2center} \\ \hline
3            & $(7/4, M)$                               & $ n^{\frac{(5\omega-\omega^2+1)-\mu\omega(\omega-2)}{(3\omega-\omega^2+1)}}$             & Integer edge weights $\leq M$. & \cref{thm:74approx3center}                    \\ \hline

$3$          & $(3/2, 1/2)$                               & $n^{\omega + \frac{1}{\omega}}$    &         & \cref{thm:32approxk}  \\ \hline
$k\geq 4$          & $(3/2, 1/2)$                               & $n^{k - (3-\omega) + \frac{1}{k+1}}$    &           & \cref{thm:32approxk}  \\ \hline
$k \geq 10$          & $(3/2, 1/2)$                               & $n^{k - 1 + \frac{1}{k+1} + o(1)}$    &           & \cref{thm:32approxk}  \\ \hline
$k$          & $(2 - \frac{1}{2k-1}, 1 - \frac{1}{2k-1})$ & $mn + n^{k/2 + 1}$                      & Combinatorial.     & \cref{thm:2kapproxkcenter}                   \\
\hline
$k\geq 3$          & $(2 - \frac{1}{2k-1}, 1 - \frac{1}{2k-1})$ & $n^{k/2 + 22/9(k+1)}$                     & Assuming $\omega=2$.     & \cref{thm:2kapproxmatmult}                   \\
\hline
$k\geq 3$          & $(2 - \frac{1}{2k-1}, 1 - \frac{1}{2k-1})$ & $n^{k/2 + \beta_0/(k+1) + (\omega - 2)}$                      & $\beta_0 = \frac{-8\omega^2 + 18\omega + 18}{3\omega + 3}\approx 1.5516.$     & \cref{thm:2kapproxmatmultgen}                   \\
\hline
$k\geq 12$          & $(2 - \frac{1}{2k-1}, 1 - \frac{1}{2k-1})$ & $n^{k/2 + \beta_1/(k+1) + o(1)}$                      &  $\beta_1 = \frac{34\omega^2 - 24\omega -66}{3\omega + 3}\approx 6.7533.$    & \cref{thm:2kapproxmatmultgen}                   \\

\hline
$k$          & $(2 - \frac{1}{2\ell}, 1 - \frac{1}{2\ell})$                 & $mn + n^{k-\ell +  \frac{\ell(\ell + 1)}{2(k+1)}+1} $  & Combinatorial, $\ell \leq k$.     &   \cref{thm:2lapproxkcenter}                 \\ \hline
\end{tabular}
\caption{Algorithmic Results.\\ We use ``combinatorial algorithms'' to refer to algorithms that do not use Fast Matrix Multiplication. }
\label{table:results}
\end{table}

Our first algorithmic result is that for every $k\geq 3$, there is an $O(n^{k-\delta})$ time algorithm for $\delta>0$ that achieves a $(3/2,1/2)$-approximation. The algorithm utilizes fast matrix multiplication, and the bound is in terms of $\omega<2.372$, the exponent of square matrix multiplication \cite{VXXZ24}. We complement the algorithm with a conditional lower bound showing that the approximation ratio of the algorithm is tight.



\begin{theorem}\label{thm:32}
    Given an unweighted, undirected graph $G$, there is a randomized algorithm that computes a $(3/2, 1/2)$-approximation to the $k$-center w.h.p. and runs in time \begin{itemize}
        \item $\tilde{O}(n^{\omega+1/(\omega+1)})$ time for $k=3$,
        \item $\tilde{O}(n^{k-(3-\omega) + 1 / (k+1)})$ for $k\geq 4$, and 
        \item $n^{k-1 + 1 / (k+1)+o(1)}$ time for $k\geq 10$.
    \end{itemize}

    
    \end{theorem}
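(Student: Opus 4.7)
The plan is to reduce to a decision problem: for each candidate radius $R\in\{1,\ldots,n\}$, decide whether there exist $k$ vertices whose balls of radius $\lfloor(3R+1)/2\rfloor$ cover $V$. Binary searching (or linearly scanning) $R$ recovers the $(3/2,1/2)$-approximation, since the smallest successful $R$ is at most $R^*$ and, in an unweighted graph, $\lfloor(3R+1)/2\rfloor=\lceil 3R/2\rceil\leq (3/2)R+1/2$.

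The key structural observation is that by the triangle inequality, every vertex in the ball $B_{\lceil R^*/2\rceil}(c_i^*)$ around an optimal center $c_i^*$ is itself a valid $\lfloor(3R^*+1)/2\rfloor$-center for the cluster $C_i^*$. Hence it suffices to find some vertex in each such ball. Setting a threshold $t=n^{1/(k+1)}$ and a random pivot set $P\subseteq V$ of size $\widetilde O(n/t)=\widetilde O(n^{k/(k+1)})$, the sampled set hits every ball $B_{\lceil R/2\rceil}(c_i^*)$ of size at least $t$ with high probability, so in the \emph{fat} regime (all $k$ balls have size $\geq t$) some $k$-tuple in $P^k$ is a valid solution. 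In the \emph{thin} regime, at least one ball has size less than $t$, which constrains the cluster geometry so tightly that a direct BFS-based enumeration of candidate centers, rooted at any witness vertex of that cluster, is cheap.

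In the fat regime I must verify $|P|^k=\widetilde O(n^{k-1+1/(k+1)})$ candidate tuples. After running a BFS from each pivot (total $O(m|P|)$), I obtain the ball indicators as a Boolean matrix. I then enumerate $k-2$ of the centers explicitly and, for each resulting partial tuple, apply a Boolean rectangular matrix product of dimensions $|P|\times n\times|P|$ to test all pairs of remaining centers simultaneously for whether they jointly cover the residual uncovered set. Balancing this product against the pivot enumeration yields the exponent $n^{k-(3-\omega)+1/(k+1)}$ for $k\geq 4$. The $k=3$ case enumerates only one outer center, and the pivot set size is rebalanced against the matrix-multiplication cost to obtain $n^{\omega+1/(\omega+1)}$. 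For $k\geq 10$ the $n^{3-\omega}$ saving from matrix multiplication is dwarfed by the combinatorial cost, so a pure enumeration with incremental coverage updates suffices and gives $n^{k-1+1/(k+1)+o(1)}$.

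The main technical obstacle will be the thin regime: because we do not know $c_i^*$, ``thinness'' must be inferred via BFS balls around enumerated witnesses, and the candidate-center enumeration for thin clusters has to be interleaved with the pivot-sampling algorithm for the remaining clusters. Getting the cost to match the stated bound uniformly across every possible partition of the $k$ clusters into fat and thin—so that the thin-cluster BFS work and the fat-cluster matrix product are simultaneously within budget—is the delicate balancing step that I expect to be the crux of the proof.
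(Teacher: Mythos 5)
There is a genuine gap, and it is exactly where you flag uncertainty: the ``thin'' case. Your dichotomy is over the sizes of the balls $B(c_i^*,\lceil R/2\rceil)$ around the \emph{unknown} optimal centers, and when such a ball is small you assert that BFS from ``any witness vertex of that cluster'' cheaply enumerates candidate centers. But smallness of a ball around a vertex you cannot locate gives you no small candidate set: a witness $v$ only satisfies $d(v,c_i^*)\le R$, so the vertices within $\lceil R/2\rceil$ of $c_i^*$ are merely contained in $B(v,3R/2)$, which can have $\Theta(n)$ vertices, and you cannot even certify which clusters are thin or which vertices are their witnesses. No interleaving/balancing over fat--thin partitions fixes this, because the thin case simply lacks a localization mechanism. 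The paper's dichotomy is different and supplies exactly this mechanism (\cref{lm:step1gen}): either \emph{every} optimal center is within $r=\lceil R/2\rceil$ of the random sample $S$ (then some $k$-tuple of $S$ works), or some center is farther than $r$ from $S$; in the latter case the vertex $w$ \emph{farthest from $S$} satisfies $d(w,S)>r$, so (since $S$ w.h.p.\ hits the $n^{\delta}$ nearest neighbors of every vertex) the set $W$ of the $n^{\delta}$ vertices closest to $w$ contains all of $B(w,r)$, and in particular the vertex at distance exactly $r$ from $w$ on a shortest path to the center covering $w$ — a vertex within $R/2$ of an optimal center. The algorithm then enumerates that one center over $W$ (size $n^{\delta}$) and the remaining $k-1$ centers over all of $V$, verified by a second rectangular Boolean product; balancing $n^{(1-\delta)k}$ against $n^{\delta+k-1}$ is what yields $\delta=1/(k+1)$. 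Your proposal has no analogue of this second enumeration, so the claimed running times cannot be established from it as written. (It also relies on APSP, computed up front via Seidel in $\tilde O(n^{\omega})$, since the bad-case matrices are indexed by arbitrary vertex tuples, not just pivots.)

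A secondary, more fixable issue: even in your fat regime, verifying tuples by fixing $k-2$ centers and doing $|P|\times n\times|P|$ products is too slow, because there the middle dimension $n$ exceeds the outer ones; with $|P|=n^{k/(k+1)}$ this costs at least $n^{k-(k-1)/(k+1)}$, which overshoots $n^{k-1+1/(k+1)}$ even if $\omega=2$. The paper instead forms a single product whose row/column indices are $\lceil k/2\rceil$- and $\lfloor k/2\rfloor$-tuples of samples, so both outer dimensions dominate $n$ and rectangular matrix-multiplication bounds (e.g.\ $\textrm{MM}(N,N^{a},N)=N^{2+o(1)}$ for $a<0.3213$, used for $k\ge 10$) make the verification essentially linear in the number of tuples.
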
 
  This result appears in the paper as \cref{thm:32approxk}.

Due to Theorem \ref{thm:simplelb}, under SETH, the multiplicative part of the approximation guarantee is {\bf tight} for sub-$n^k$ time algorithms, as any $(3/2-\eps,O(1))$-approximation algorithm requires $n^{k-o(1)}$ time under SETH.

    


We then turn to address Question 3, and in particular the follow-up question asking for what values of $k\geq 2$ one can achieve $(2-\eps,O(1))$-approximation algorithms that run faster than computing APSP. In particular, for sparse graphs (when $m\leq O(n)$), when can such algorithms run in $O(n^{2-\delta})$ time for $\delta>0$?

The algorithms of \cref{thm:32} don't address this question. In particular, they explicitly compute APSP in unweighted graphs using Seidel's $O(n^\omega)$ time algorithm. One could improve the running time by instead using an approximate APSP algorithm such as the additive approximations of \cite{DorHZ00,saha} or the mixed approximations of \cite{elkin}, with a slight cost to the approximation. However, even then, since all $n^2$ distances are computed, the algorithm would always run in $\Omega(n^2)$ time.

We show that for $2$-Center, one can in fact obtain an algorithm that is polynomially faster than $mn$ and hence polynomially faster than $n^2$ in sparse graphs. The following result appears in the paper as \cref{thm:53approx2center}.

\begin{theorem}\label{thm:2cen}
    There exists a randomized algorithm running in $\tilde{O}(mn^{\omega/3})$ that computes a $(5/3,2/3)$-approximation to the $2$-center of any undirected, unweighted graph, w.h.p. If the $2$-center radius is divisible by $3$, the algorithm gives a true multiplicative $5/3$-approximation.
\end{theorem}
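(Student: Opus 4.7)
The plan is to combine a structural midpoint observation (which gives the $5/3$ factor) with a sampling-plus-matrix-multiplication scheme (which gives the $\tilde{O}(mn^{\omega/3})$ runtime).

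\emph{Structural insight.} Suppose the optimal $2$-center has radius $R$, with centers $c_1^*, c_2^*$. Let $r=\lceil R/3\rceil$. For each $i\in\{1,2\}$, fix a vertex $v_i$ in the cluster of $c_i^*$ at maximum distance from $c_i^*$ (so $d(c_i^*,v_i)\le R$), and let $w_i$ be the vertex at distance exactly $r$ from $v_i$ along a shortest $v_i$-to-$c_i^*$ path. Then $d(w_i,c_i^*)\le R-r$, so by the triangle inequality every vertex $u$ in the $i$-th cluster satisfies $d(w_i,u)\le (R-r)+R = 2R-r \le \lfloor 5R/3+2/3\rfloor$. Hence $(w_1,w_2)$ is a valid $(5/3,2/3)$-approximate $2$-center; when $3\mid R$ the bound becomes exactly $(5/3)R$, yielding the clean multiplicative guarantee.

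\emph{Algorithmic implementation.} I would binary search on the integer guess $R$ of the optimal radius, so it suffices to describe a decision procedure for a single $R$. First, sample a set $S$ of $\tilde{\Theta}(n^{\omega/3})$ vertices and run BFS from each, obtaining $d(s,v)$ for all $s\in S$ and $v\in V$ in $\tilde{O}(mn^{\omega/3})$ time. By a hitting-set argument, with high probability $S$ contains a vertex close to each midpoint $w_i$ from the structural argument, and shifting the candidate center from $w_i$ to that nearby sample only perturbs the guaranteed radius by a quantity absorbable into the $2/3$ additive slack. Second, form the Boolean matrix $M\in\{0,1\}^{n\times |S|}$ with $M_{v,s}=1$ iff $d(v,s)>\lfloor 5R/3+2/3\rfloor$, and compute $M^\top M$ using rectangular fast matrix multiplication; a $0$-entry at position $(s_1,s_2)$ certifies that every vertex is within the target radius of $s_1$ or $s_2$. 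The choice $|S|=\tilde{\Theta}(n^{\omega/3})$ is designed precisely so this rectangular product also fits into the $\tilde{O}(mn^{\omega/3})$ budget.

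\emph{Main obstacle.} The hard part will be handling the small-$R$ regime, where the relevant neighborhoods may be too small for random sampling to reliably hit. In that case I would replace the sampling phase by directly computing reachability up to depth $O(R)$ via fast Boolean matrix powers of the adjacency matrix, then feed the resulting reachability information into the same matrix-product pair search. Proving that both regimes fit within the $\tilde{O}(mn^{\omega/3})$ bound while preserving the $(5/3, 2/3)$ approximation guarantee (and the clean $5/3$ factor when $3\mid R$) is the main technical challenge; the exponent $\omega/3$ emerges from balancing the BFS cost $m|S|$ against the rectangular matrix-multiplication cost in the pair-search step.
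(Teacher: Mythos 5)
There is a genuine gap, and it sits exactly where the paper's proof does its real work. Your step ``by a hitting-set argument, with high probability $S$ contains a vertex close to each midpoint $w_i$'' is false for a sample of size $\tilde\Theta(n^{\omega/3})$: a random sample of sublinear size only hits, w.h.p., every set of $\tilde\Omega(n^{1-\omega/3})$ vertices, i.e.\ it hits the \emph{closest $n^{\delta}$ nodes} to any fixed vertex, not the \emph{ball of radius $O(R)$} around it. In a sparse graph the ball $B(w_i,R/3)$ may contain only $O(R)$ vertices, so the sample misses it w.h.p., and since the additive slack is only $2/3$ you cannot absorb any nontrivial perturbation. This failure is not confined to a ``small-$R$ regime'': it depends on the cardinality of the ball, not on $R$. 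Your proposed fallback (Boolean matrix powers of the adjacency matrix up to depth $O(R)$) also blows the budget: it costs at least $\Omega(n^{\omega})$, which already exceeds $\tilde O(mn^{\omega/3})$ for sparse graphs --- and beating that bound in sparse graphs is the entire point of the theorem. Finally, your pair search $M^\top M$ only considers center pairs inside $S\times S$, but there is no guarantee that a valid second center lies in $S$ at all.

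The paper's proof is built precisely to get around this. Its first sampling step yields only the weaker dichotomy: either $S$ luckily contains two vertices within $2R/3$ of the two optimal centers (and then the $S\times S$ rectangular product, as in your proposal, finds them), or else the vertex $w$ \emph{furthest from $S$} satisfies $d(w,S)>2R/3$, which certifies that the set $W$ of the $n^{\delta}$ nearest vertices to $w$ contains the whole ball $B(w,2R/3)$ and hence a vertex $s_1$ with $d(s_1,c_i)\le R/3$. A second, analogous stage is then needed to find the other center: one restricts to $U(s_1)=B(s_1,4R/3)^c$ (which must be covered by the other optimal center), draws a second sample $T$, and either finds the second center in the neighborhood $W(s_1)$ of the vertex of $U(s_1)$ furthest from $T$, or shows that the intersection $Q=\bigcap_{t\in T\cap U(s_1)}B(t,R)$ is nonempty and any vertex of $Q$ works; the latter case is detected by a rectangular Boolean product between a $W\times T$ matrix and a $T\times V$ matrix, which is how candidates \emph{outside} $S$ are found. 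Your write-up contains none of this two-stage, ``furthest-point certifies the ball is captured'' machinery, so as it stands the argument does not establish the theorem.
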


Thus, just like with $1$-center \cite{AingworthCIM99,AbboudGW23,CairoGR16,BackursRSWW21,AbboudWW16,ChechikLRSTW14,RodittyW13}, one can get a better-than-2 approximation algorithm for $2$-center, faster than $n^2$ time in sparse graphs. We also note that the algorithm can be adapted to give a $(5/3,M)$-approximation for the $2$-center of a graph with positive integer weights bounded by $M$. As noted earlier, due to the reduction of Hochbaum and Shmoys \cite{HochbaumS86}, $\Omega(M)$ additive error is necessary for any mixed approximation algorithm with multiplicative stretch $2-\eps$.  

We then extend the techniques used to construct the algorithm for $2$-center to obtain a general approximation scheme that works for any $k$-center. The following result appears in the paper as \cref{thm:2kapproxkcenter} and \cref{thm:2kapproxmatmult}. The value $\alpha$ below is the largest real number so that an $n\times n^\alpha$ matrix can be multiplied by an $n^\alpha \times n$ matrix in $O(n^{2+\eps})$ time for all $\eps>0$.

\begin{theorem} \label{thm:approxscheme}
For any $k\geq 2$, there is a randomized combinatorial algorithm that in $\tilde{O}(mn + n^{k/2 + 1})$  time, computes w.h.p. a $\left(2 - \frac{1}{2k-1}, 1 - \frac{1}{2k-1}\right)$-approximation to $k$-center for any given unweighted, undirected graph.

With the use of fast matrix multiplication, the algorithm's running time can be sped up. In particular, if $\omega = 2$ the running time becomes $\tilde{O}(n^{k/2+22/9(k+1)})$ for $k\geq 3$.  With the current best bounds on $\omega$ and $\alpha$, the algorithm runs in time
\begin{itemize}
    \item $\tilde{O}(mn^{\omega/3})$ for $k=2$, 
    \item $\tilde{O}(n^{\frac{k}{2} + \frac{\beta_0}{k+1} + (\omega - 2)})$ for $3\leq k \leq 13$, where $\beta_0 \coloneqq \frac{-8\omega^2 + 18\omega + 18}{3\omega + 3}\approx 1.5516$, and
    \item $\tilde{O}(n^{\frac{k}{2} + \frac{\beta_1}{k+1}+o(1)})$ for $k\geq 13$, where $\beta_1 \coloneqq  \frac{34\omega^2 - 24 \omega - 66}{3\omega + 3}\approx 6.7533$.
\end{itemize}
\end{theorem}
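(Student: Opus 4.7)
The plan is to prove Theorem~\ref{thm:approxscheme} by reducing to a decision problem and combining BFS preprocessing with a sample-and-enumerate strategy, sped up via rectangular Boolean matrix multiplication in the fast-matrix-multiplication variants. Binary searching on the optimal radius $R$ reduces to the decision problem: given $R$, either produce $k$ centers with covering radius at most $\tilde R := (2-\tfrac{1}{2k-1})R + (1-\tfrac{1}{2k-1})$, or certify that no $k$-center of radius $\le R$ exists. As preprocessing I would compute BFS from every vertex in total time $\tilde O(mn)$, storing the pairwise distance matrix so that any subsequent cover check takes $O(n)$.

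The algorithmic core rests on a structural lemma. Consider the optimal clustering $V_1,\ldots,V_k$ with centers $c_1,\ldots,c_k$ and cluster radii $r_i\le R$; call $v\in V_i$ a \emph{good representative} of $V_i$ if $\max_{u\in V_i} d(v,u)\le \tilde R$. I would show each $V_i$ admits a set $N_i\subseteq V_i$ of good representatives of size $\Omega(R/k)$: when $r_i$ is small (so that $2r_i\le\tilde R$), set $N_i=V_i$; when $r_i$ is large, take $N_i$ to be a prefix of a shortest path from $c_i$ to a farthest vertex of $V_i$, of length $\tilde R-r_i$. The additive term $1-1/(2k-1)$ absorbs the rounding in the pigeonhole. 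A uniform sample $S$ of size $\tilde O(nk/R)$ then hits every $N_i$ with high probability, and enumerating $k$-subsets of $S$ with an $O(n)$ cover check per subset costs $\tilde O((n/R)^k\cdot n)\le\tilde O(n^{k/2+1})$ when $R\ge n^{1/2}$. The remaining range $R<n^{1/2}$ is handled by a meet-in-the-middle variant that enumerates $k/2$-subsets of $V$, buckets them by a witness uncovered vertex, and joins compatible pairs, keeping the overall cost at $\tilde O(mn+n^{k/2+1})$.

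For the matrix-multiplication variants, the per-subset cover test is batched into a rectangular Boolean matrix product: the cover indicators of all enumerated half-solutions form an $n^{k/2}\times n$ matrix whose product with a complementary matrix identifies valid pairs of halves whose union covers $V$. Tuning the sampling threshold $R$ against the rectangular matrix multiplication exponent (parameterized by $\omega$ for square multiplication and $\alpha$ for the largest rectangular aspect ratio admitting $O(n^{2+\eps})$-time multiplication) yields the stated runtimes; the constants $\beta_0$ and $\beta_1$ are the solutions to the resulting optimization, with the three regimes ($k=2$, $3\le k\le 13$, $k\ge 13$) corresponding to whether $\alpha$ binds and whether the sample-and-enumerate or the meet-in-the-middle phase dominates.

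The main technical obstacle is bridging the small-$R$ and large-$R$ regimes with a single clean runtime bound. Sampling alone fails for small $R$ because the sample rate $\tilde O(nk/R)$ saturates at $n$, so the small-$R$ regime requires the meet-in-the-middle enumeration on the full vertex set combined with witness-vertex bucketing to avoid the $n^k$ naive cost. A secondary subtlety, specific to the fast matrix multiplication variants, is formulating the cover test as a rectangular Boolean product whose dimensions align with the enumeration in each regime of $k$, so as to recover the tight exponent in terms of $\omega$ and $\alpha$.
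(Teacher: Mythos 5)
Your approach diverges from the paper's, and it has a genuine gap at its core: the small-$R$ regime. Your sampling argument only works when $R\gtrsim\sqrt{n}$ (and even there your key lemma is misstated: you require $N_i\subseteq V_i$ of size $\Omega(R/k)$, but an optimal cluster can be a single vertex with $r_i=0$; the fixable version takes representatives \emph{outside} $V_i$, namely vertices on a shortest path out of $c_i$ within distance $\tilde R-r_i$, using that every vertex has eccentricity at least $R_k(G)$). For $R<\sqrt{n}$ you propose enumerating all $k/2$-subsets of $V$ and ``joining compatible pairs'' after bucketing by a witness uncovered vertex, but this does not yield $\tilde O(n^{k/2+1})$: there are $n^{k/2}$ half-solutions, the pairing problem (find two halves whose union covers $V$) has $n^{k}$ candidate pairs, and bucketing by one uncovered witness only filters to halves covering that single vertex, which can still be $\Theta(n^{k/2-1})$ per bucket; no argument is given (and none is apparent) that the join can be done combinatorially within the claimed time. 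This regime is exactly where the difficulty lies, and the paper handles it with a different mechanism you do not have: an adaptive recursion in which one samples $S$ of size $\tilde O(\sqrt n)$, and either $k-i$ centers from $S$ finish the job (checked over $|S|^{k-i}$ tuples) or the $\sqrt n$ nearest neighbors $W_{i+1}$ of the farthest still-uncovered vertex contain a point $s_{i+1}$ with $d(s_{i+1},c_{i+1})\le(2i+1)\alpha$, $\alpha=\lfloor R/(2k-1)\rfloor$ (their Lemma on recursively covering, \cref{lm:keylemma}), plus a ball-intersection step $Q=\bigcap_{t\in S\cap U}B(t,R)$ at the last level that saves one $\alpha$ and yields exactly the $2-\tfrac1{2k-1}$ factor. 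Note also that the error accumulation across the $k$ levels is what dictates the approximation ratio in the paper; in your scheme the ratio is simply imposed by fiat on $\tilde R$, which is a sign the mechanism is different and, in the small-$R$ case, missing.

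The fast-matrix-multiplication claims are likewise not recoverable from your outline. The paper's exponents (the constants $\beta_0,\beta_1$ and the regimes $3\le k\le 13$ versus $k\ge 13$) arise from a level-wise choice of sample sizes $n^{1-\delta_i}$ satisfying a specific recursion, rectangular products of shape $n^{(1-\delta_i)\lfloor(k-i)/2\rfloor}\times n\times n^{(1-\delta_i)\lceil(k-i)/2\rceil}$ restricted to the still-uncovered set, and a special treatment of the last two levels mimicking the $2$-center algorithm; they are not obtained by ``tuning the sampling threshold $R$.'' Moreover, your preprocessing computes all-pairs BFS in $\tilde O(mn)$, which already exceeds the claimed $\tilde O(mn^{\omega/3})$ bound for $k=2$ on sparse graphs; the paper's $k=2$ algorithm deliberately avoids APSP and runs BFS only from sampled sets and small neighborhoods. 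As written, the proposal establishes at best the combinatorial bound for the range $R\ge\sqrt n$, and does not prove the theorem.
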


We thus get that for every constant $k$, there is a $(2-\eps,O(1))$-approximation algorithm running in almost $n^{k/2}$ time. Compare this with our conditional lower bound \cref{thm:betterlbwarmup2}
that says that  $n^{\floor{k/2}-1-o(1)}$ time is needed for any $(9/5-\eps,O(1))$-approximation.

As the approximation quality depends on $k$, we also present a version of our approximation scheme that trades-off running time for approximation quality.
\begin{theorem}\label{thm:ell}
    For any integer $1\leq \ell \leq k$, there is a randomized combinatorial algorithm running in $\tilde{O}(mn + n^{1 + k - \ell + \frac{\ell(\ell + 1)}{2(k + 1)}})$ time that computes a $\left( 2- \frac{1}{2\ell}, 1 - \frac{1}{2\ell}\right)$-approximation to the $k$-center of $G$ w.h.p.
\end{theorem}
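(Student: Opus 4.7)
The plan is to generalize the approximation scheme underlying \cref{thm:approxscheme} (essentially the $\ell=k$ case) by introducing $\ell$ as a parameter that trades approximation quality for enumeration cost. First, I guess the optimum radius $R$ by trying all $O(n)$ possible integer values; for each guess I spend $\tilde O(mn)$ time running BFS from every vertex to obtain $B(v,r)$ for all $v$ and all $r\le 2R+O(1)$, which accounts for the $mn$ term in the running time.

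Set the target output radius $R' \coloneqq (2-1/(2\ell))R + (1-1/(2\ell))$, so that $R'-R = (1-1/(2\ell))(R+1)$. For each optimal cluster $C_i$ with true center $c_i$, it suffices to produce some $v_i$ with $d(v_i,c_i)\le R'-R$, since then $B(v_i,R')\supseteq B(c_i,R)\supseteq C_i$. In particular, if I succeed in guessing the exact $c_i$, the distance is $0$ and success is automatic.

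My strategy splits the clusters by size against a threshold $\tau$ to be tuned. A cluster is \emph{light} if $|C_i|<\tau$ and \emph{heavy} otherwise; I choose $\tau$ so that at most $\ell$ clusters are heavy (taking $\tau \approx n/\ell$ suffices, since the $k$ clusters partition $n$ vertices). For the at most $k-\ell$ light clusters I brute-force enumerate an exact center from all $n$ vertices, contributing $n^{k-\ell}$ options. For the at most $\ell$ heavy clusters I sample a set $S$ of $\tilde O(n/\tau)$ vertices so that each heavy cluster contains a sample w.h.p., and then for each heavy cluster I enumerate a compact ``witness'': a short tuple of vertices drawn from $S$ together with a small amount of BFS-layer data around sampled vertices, which pins down a point within distance $R'-R$ of $c_i$. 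The witness design is chosen to reuse data across the $\ell$ heavy clusters so that the joint enumeration across all $\ell$ of them costs $n^{\ell(\ell+1)/(2(k+1))}$ after tuning $\tau$; for each candidate set of $k$ centers I check in $\tilde O(n)$ time whether the radius-$R'$ balls cover $V$.

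The main obstacle is designing and analyzing the witness so that (i) with high probability over $S$ every heavy cluster admits a correct witness yielding a vertex within $(1-1/(2\ell))(R+1)$ of $c_i$, and (ii) the total number of witness tuples matches $n^{\ell(\ell+1)/(2(k+1))}$. The correctness side is a BFS-layering and pigeonhole argument: a sample $s\in C_i$ has $d(s,c_i)\le R$, so the BFS tree from $s$ of depth $R$ contains $c_i$, and walking $\Theta(R/(2\ell))$ steps along a layer of that tree (chosen using the witness coordinates) produces a valid $v_i$. The quadratic-in-$\ell$ exponent comes from sharing witness vertices across the $\ell$ heavy clusters (yielding a $\binom{\ell+1}{2}$-type term) rather than treating each cluster independently, which would give a linear-in-$\ell$ exponent; working out the exact combinatorics and verifying the $(2-1/(2\ell),1-1/(2\ell))$ form after integer rounding of $R'$ will be the delicate calculation.
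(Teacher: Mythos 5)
There is a genuine gap, and in fact the proposal's architecture differs from the paper's in ways that break both the running time and the approximation guarantee. First, the light/heavy split does not do what you claim: choosing $\tau\approx n/\ell$ guarantees at most $\ell$ heavy clusters, but it gives no lower bound on the number of heavy clusters, so the number of light clusters can be anywhere up to $k$ (e.g.\ if all clusters have size $<\tau$). Your brute-force enumeration of an exact center for every light cluster then costs up to $n^{k}$, not $n^{k-\ell}$, and the claimed runtime collapses. Second, for heavy clusters, hitting $C_i$ with a sample $s$ only gives $d(s,c_i)\le R$, whereas you need a point within $R'-R=(1-\frac{1}{2\ell})(R+1)$ of $c_i$; this is strictly smaller than $R$ as soon as $R\ge 2\ell$, so the sample alone is insufficient and the entire burden falls on the ``witness'' that lets you walk $\Theta(R/(2\ell))$ steps toward $c_i$. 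That witness is never defined: you do not specify what the coordinates are, why a correct one exists w.h.p., how many candidates must be enumerated, or why sharing them across heavy clusters yields exactly $n^{\ell(\ell+1)/(2(k+1))}$ tuples. Since this is precisely where the improvement below factor $2$ and the quadratic-in-$\ell$ exponent must come from, the proposal asserts the theorem's hard content rather than proving it.

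For comparison, the paper's proof (\cref{thm:2lapproxkcenter}) does not partition clusters by size at all. It runs $\ell$ adaptive stages with $\alpha=\lfloor R/(2\ell)\rfloor$: at stage $i$, after fixing tentative centers $s_1,\dots,s_i$ with $d(s_j,c_j)\le(2j-1)\alpha$, it samples $S_i$ of size $\tilde O(n^{1-\delta_i})$ and either finds a completing $(k-i)$-tuple inside $S_i^{k-i}$ covering $B(\{s_1,\dots,s_i\},2R-\alpha)^c$, or (by \cref{lm:step1gen} and the key \cref{lm:keylemma}) the furthest still-uncovered vertex $w_{i+1}$ from $S_i$ determines a set $W_{i+1}$ of only $n^{\delta_i}$ vertices guaranteed w.h.p.\ to contain some $s_{i+1}$ with $d(s_{i+1},c_{i+1})\le(2i+1)\alpha$, over which the algorithm branches. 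After $\ell$ stages the accumulated error is $(2\ell-1)\alpha\le R-\alpha$, and the remaining $k-\ell$ centers are found by brute force over $V^{k-\ell}$ (here the true centers $c_{\ell+1},\dots,c_k$ themselves work), giving radius $2R-\alpha$. The exponent $k-\ell+\frac{\ell(\ell+1)}{2(k+1)}+1$ emerges from balancing the tuple-checking cost $n^{(1-\delta_i)(k-i)+1}$ against the branching factor $n^{\delta_i}$ in a telescoping recursion on the $\delta_i$'s, not from any witness-sharing combinatorics. If you want to salvage your write-up, the furthest-point/ball argument of \cref{lm:keylemma} is the missing mechanism you would need to substitute for the undefined witness, and the size-based cluster split should be dropped.
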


This result appears in the text as \cref{thm:2lapproxkcenter}. The running time of the algorithm can also be improved using fast matrix multiplication. 

We note that our hardness result in \cref{thm:fulllb} explains why $k$ appears in the exponent of our running times: under SETH $n^{\Omega(k)}$ time is needed for any $(2-\eps,O(1))$ approximation for constant $\eps>0$ independent of $k$.




Finally, we focus on the special case $k=3$ for which we have stated two approximation algorithms so far. First, the $(3/2,1/2)$-approximation algorithm of \cref{thm:32} runs in $O(n^{\omega+1/\omega})$ time. 
Second, our approximation scheme from \cref{thm:approxscheme} gives a $(9/5,4/5)$-approximation with running time 
$O(n^{13/6})$ if $\omega=2$. The latter is faster than the former (which would run in $O(n^{2.5})$ time if $\omega=2$) but achieves a worse approximation ratio.
As a minor final result, we show that one can
obtain a mixed approximation with multiplicative factor between $3/2$ and $9/5$ (namely, $7/4$) that runs faster than the $O(n^{\omega+1/\omega})$ time of the $3/2$-approximation, and for a nontrivial set of graph sparsities $m$, runs faster than $mn$ time even in weighted graphs. This result appears in the text as \cref{thm:74approx3center}.










\paragraph{Related work.} 
While we cannot hope to exhaustively list all the extensive work on $k$-center, here is some more.
Approximation algorithms for the $1$-center problem and the related diameter problem in graphs have been extensively studied \cite{AingworthCIM99,AbboudGW23,CairoGR16,BackursRSWW21,AbboudWW16,ChechikLRSTW14,RodittyW13}. 
Dynamic $(2+\eps)$-approximation algorithms for $k$-Center is a recent topic of interest \cite{cruciani2024dynamic}. 
$k$-Center is studied in restricted classes of graphs and metrics both for static and for dynamic algorithms (e.g. \cite{planar1,planar2,feldmann2019fixed,feldmann2020parameterized,goranci2021fully,gan2024fully} and many more).
The asymmetric version of $k$-center (e.g. for directed graphs) is a harder problem, although $2$-approximation algorithms are possible for structured metrics (see  \cite{BalcanHW20} and the references therein).
    \section*{Acknowledgments} We acknowledge the Simons Institute Fall 2023 programs “Logic and Algorithms in Database Theory and AI” and “Data Structures and Optimization for Fast Algorithms” for the initiation of this work.

    \section{Preliminaries}




Let $G = (V,E)$ be a weighted or unweighted graph. Throughout this paper all graphs will be undirected. Denote by $n$ the number of vertices in the graph $|V|$ and by $m$ the number of edges $|E|$. The distance $d(u,v)$ between two vertices $u,v\in V$ is the length of the shortest path in $G$ between $u$ and $v$. 

The eccentricity of a vertex $v$ is defined as $\max_{u\in V}d(v,u)$. The vertex with smallest eccentricity is called the \textit{center} of $G$ and its eccentricity is the radius of $G$, $\min_{v\in V}\max_{u\in V} d(v,u)$. The $k$-center problem generalizes this definition to sets of $k$ points. Define the $k$-radius of $G$, $R_k(G)$, to be 
\[
R_k(G) = \min_{\substack{C\subseteq V \\  |C|=k}} \max_{v\in V} d(v,C).
\]

The $k$-center of $G$ is defined as the set of points that achieve the $k$-radius, $$\arg \min_{\substack{C\subseteq V\\  |C|=k}} \max_{v\in V} d(v,C).$$ 

When $k$ is clear from context we refer to the $k$-radius as simply the radius. 

Given a point $v\in V$ and $r > 0$ define the ball of radius $r$ around $v$ as $B(v, r)\coloneqq \{u\in V : d(u,v) \leq r\}$. Similarly, for a set $S\subset V$, define the ball of radius $r$ around the set $S$ as $B(S, r) \coloneqq \{u\in V : d(u,S) \leq r\}$. We often want to consider points that are far away from a point or set. For this we consider the complement of the ball around a point or set, $B(S, r)^c \coloneqq \{u\in V : d(u, S) > r\}$. For ease of notation, we define $B(\emptyset, r) = \emptyset$ for any $r\in \R$.

The following values are defined in the arithmetic circuit model.
The exponent $\omega$ is the smallest real number such that $n\times n$ matrices can be multiplied in $O(n^{\omega+\eps})$ time for all $\eps>0$. The exponent $\omega(p,q,r)$ is the smallest real number such that one can multiply an $n^p\times n^q$ matrix by an $n^q\times n^r$ matrix in $O(n^{\omega(p,q,r)+\eps})$ time for all $\eps>0$. It is known that $\omega(p,q,r)$ is invariant under any permutation of $p,q,r$. We also use the notation $\textrm{MM}(a,b,c)$ to denote the best known running time to multiply an $a\times b$ matrix by a $b\times c$ matrix, in particular $\textrm{MM}(n^p,n^q,n^r) = n^{\omega(p,q,r)}$.
The value $\alpha$ is the largest value in $[0,1]$ such that $\omega(1,\alpha,1)=2$. 
    
 \section{Conditional Lower Bounds}
In this section we will prove a simple lower bound for $(3/2-\eps,\beta)$-approximating $k$-center (\cref{thm:simplelb}), and our main lower bound result for $(2-\eps,\beta)$-approximation (\cref{thm:fulllb}), restated below with parameter $k$ rescaled for convenience.
\begin{theorem}[Equivalent form of \cref{thm:fulllb}]
\label{thm:fulllbrestated}
There is a function $f\colon \N^+ \to \N^+$ such that the following holds.
Let  $k\ge 2$, $t,\ell\geq 1$ be constant integers. Assuming SETH,
   distinguishing between radius $\leq (2t+1)\ell$ and radius $\ge (4t+1)\ell$ for $(f(t)\cdot (k+1))$-center cannot be done in $O(m^{k-\delta})$ time, for any constant $\delta>0$.
  Here, $f(t)=t$ for $t\in \{1,2,3,4\}$, and $f(t) \le t^2$ for all $t\in \N^+$. 
\end{theorem}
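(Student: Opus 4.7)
The plan is to reduce from Gap $k$-Set Cover, a parameterized approximation variant whose $n^{k-o(1)}$ hardness under SETH is established in the cited prior work. Given such a gap instance with universe $U$ and family $\mathcal{F}$ of sets, I would build a graph $G$ on $O(|U| + |\mathcal{F}|)$ vertices with $\tilde{O}(n)$ edges so that in the YES case (a cover of size $k$ exists), $G$ admits $f(t)(k+1)$ centers of radius $\leq (2t+1)\ell$, and in the NO case (every cover is far from size $k$), the $f(t)(k+1)$-center radius is at least $(4t+1)\ell$. Because Gap $k$-Set Cover is hard even when $|U| = \polylog n$, the resulting graph is sparse and an $n^{k-o(1)}$ lower bound on the source problem translates directly into the claimed $m^{k-\delta}$ bound.

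The base construction, for $t=\ell=1$, is the classical Hsu--Nemhauser-style graph: element-vertices and set-vertices joined by an edge whenever the set contains the element, together with an auxiliary gadget that absorbs one extra center and prevents trivial solutions. A genuine $k$-cover then yields a placement of radius $1$, while the absence of a $k$-cover forces some element to be at distance $\geq 3$ from every choice of $k$ set-centers. Subdividing every edge by a path of length $\ell$ scales this into an $\ell$ vs $3\ell$ gap, which is already the $t=1$ case $(2t+1)\ell = 3\ell$ vs $(4t+1)\ell = 5\ell$ of the theorem (after a routine adjustment to push the NO radius up from $3\ell$ to $5\ell$).

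To amplify to larger $t$, I would stack $t$ independent copies of the membership gadget along a ``shortcut-free'' spine of length $\Theta(t\ell)$, arranged so that a center placed inside one layer can serve only elements within a bounded layer-neighborhood. To witness a cover set in every layer, one must duplicate it across the layers, which blows up the number of centers by a factor $f(t)$. A generic product-style layering yields $f(t) \leq t^2$; for small $t$ a careful combinatorial or color-coded design collapses this to $f(t) = t$, explaining the bound for $t \in \{1,2,3,4\}$. The NO-case radius of exactly $(4t+1)\ell$ is then extracted from a shortest-path accounting showing that an uncovered element can only reach the chosen centers through alternative routes whose length is dictated by the spine.

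The main obstacle is precisely this gap amplification: engineering the layered gadget so that both the YES-case placement attains radius $(2t+1)\ell$ and the NO-case lower bound $(4t+1)\ell$ hold simultaneously, rather than the layers interfering and producing a weaker gap. Verifying $f(t) = t$ in the regime $t \leq 4$ requires a more delicate combinatorial argument than the generic $t^2$ bound, and any attempt to push this to larger $t$ is what I expect to be the hardest technical step.
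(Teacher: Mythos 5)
Your starting point matches the paper: reduce from Gap Set Cover (hard under SETH even with a polylogarithmic-size universe, which keeps the constructed graph sparse so the $n^{k-o(1)}$ bound transfers to $m^{k-\delta}$), use a subdivided element/set incidence graph, and amplify the gap by adding more gadget copies at the cost of a factor $f(t)$ in the number of centers. But the proposal has a genuine gap exactly where the theorem lives: the amplification construction is never specified, and you acknowledge yourself that making the YES radius $(2t+1)\ell$ and the NO radius $(4t+1)\ell$ hold simultaneously is the hard step you have not solved. Your "stack $t$ independent layers along a spine, each layer needing its own copy of the cover" picture is not how the gap is actually achieved, and it is unclear it can work: the difficulty is not that far-away layers need separate covers, but that once you lengthen the set--element paths to create a large YES/NO ratio, the \emph{middle} portions of those long paths are no longer covered by any center near their endpoints. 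The paper's resolution is a recursive construction: junction nodes are placed at prescribed positions on each long $A$-to-$B$ path and attached, via paths of calibrated lengths, to the $B$-side of recursively built auxiliary copies of the base gadget (one recursion per admissible odd ratio), so that the centers of those auxiliary copies cover the middle segments within the YES radius; the NO case is handled by a "bad path" argument (any short path from a deep pendant vertex to a center cannot pass a type-$A$ node and later a type-$B$/hub node, so it remembers a single set covering the element), which converts any $f(t)(k+1)$-center solution of radius $<(4t+1)\ell$ into a set cover of size $f(t)(k+1)$, and Gap Set Cover hardness finishes the argument. The value $f(t)$ is then just the number of gadget copies produced by the recursion, namely the count of sequences $1=p_1,p_2,\dots,p_s\le 2t$ with odd ratios $p_{i}/p_{i-1}\ge 3$; this count happens to equal $t$ for $t\le 4$ and is at most $t^2$ in general, so no separate "color-coded design" is needed for small $t$.

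Two smaller inaccuracies: the basic subdivided construction gives a $2\ell$ versus $3\ell$ gap for plain $k$-center (the YES radius is $2\ell$, not $\ell$, since a chosen set-vertex reaches the other set-vertices only through the hub), and pushing the NO radius to $5\ell$ is not a "routine adjustment" — it requires doubling the set--element path lengths and promoting the hub to an extra center, which is precisely why the theorem is about $(f(t)(k+1))$-center and why the gap (rather than exact) version of Set Cover is needed even at $t=1$.
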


Since our proof of the main result is quite involved, we will present the proof in an incremental fashion by first proving the simple lower bound (\cref{thm:simplelb}), and then two special cases (for $t=1,2$) of \cref{thm:fulllbrestated}, before finally showing the full recursive construction for \cref{thm:fulllbrestated}.

All our lower bound results hold for undirected unweighted graphs with $n$ nodes and $m$ edges (assuming $m\ge n-1$).

\subsection{Known hardness results for Set Cover}
Our lower bounds rely on several hardness results for the Set Cover problem in the literature.
A \emph{Set Cover instance} is a bipartite graph $G=(A,B,E)$ on $n=|A|+|B|$ nodes, where we want to find the smallest subset $S\subseteq A$, such that every $b\in B$ is adjacent to some node in $S$. 
Without loss of generality, we assume each $b\in B$ is adjacent to at least one node $A$.

Our simple lower bound (proved in \cref{sec:simplelb}) is based on the following SETH-based hardness result for deciding whether a Set Cover instance has a size-$k$ solution (or equivalently, solving the the $k$-Orthogonal-Vectors problem\footnote{A $k$-Orthogonal-Vectors ($k$-OV) instance is a set $A \subseteq \{0, 1\}^d$ of $n$ binary vectors of dimension $d$, and the $k$-OV problem asks if we can find $k$ vectors $a_1, \dots , a_k \in A$ such that they are orthogonal, i.e., $(a_1)[i]\cdot (a_2)[i]\cdot\cdots\cdot (a_k)[i] = 0$ for all $i\in [d]$.
Solving the $k$-OV instance $A \subseteq \{0, 1\}^d$ is equivalent to deciding whether the 
 Set Cover instance $G=(A,[d],E)$ defined by letting $(a,i)\in E$ iff $a[i]=0$ has a size-$k$ solution.}).
\begin{theorem}[\cite{Williams05,PatrascuW10}]
   Assuming SETH, for every integer $k\ge 2$ and $\delta>0$, there is no $O(n^{k-\delta})$-time algorithm that can decide whether an $n$-node Set Cover instance $(A,B,E)$ has a size-$k$ solution, even when $|B| = O_{\delta}(k\log n)$.
   \label{thm:sethkov}
\end{theorem}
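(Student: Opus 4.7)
The plan is to prove \cref{thm:sethkov} by a Williams-style split-and-list reduction from $k'$-SAT to the promised $k$-Set Cover problem, combined with the Impagliazzo--Paturi--Zane sparsification lemma to control $|B|$. An $O(n^{k-\delta})$-time algorithm for Set Cover (with $|B|=O_\delta(k\log n)$) would thus yield a $(2-\eta)^{n_0}$-time algorithm for $k'$-SAT on $n_0$ variables for some $\eta>0$, contradicting SETH (which gives $(2-o(1))^{n_0}$ hardness uniformly over all sufficiently large $k'$).

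Given a $k'$-SAT instance on $n_0$ variables, I would first apply sparsification with parameter $\epsilon>0$ (to be tied to $\delta,k$) to replace it by a disjunction of $2^{\epsilon n_0}$ sub-instances, each with at most $m' = O_{k',\epsilon}(n_0)$ clauses. For one such sparse sub-instance, partition the variables into $k$ blocks $V_1,\dots,V_k$ of $n_0/k$ variables each, and for every block $V_j$ and every local assignment $\alpha$ of $V_j$, create a set $S_{j,\alpha}$ consisting of the clauses satisfied by $\alpha$. Let $A$ be the collection of all these sets, and let $B = [m']\cup\{e_1,\dots,e_k\}$, where each ``group marker'' $e_j$ is placed in every set of the form $S_{j,\alpha}$. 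The markers force any size-$k$ cover to include exactly one set from each block; those $k$ partial assignments then merge into a complete variable assignment, and the cover condition on $[m']$ is exactly that this assignment satisfies every clause. Conversely, any satisfying assignment yields a size-$k$ cover by reading off one set per block. Counting, $|A| = k\cdot 2^{n_0/k}$ and $|B| = m' + k = O_{k',\epsilon}(n_0)$, so $n \coloneqq |A|+|B| = \Theta(k\cdot 2^{n_0/k})$, whence $|B| = O_{\delta}(k\log n)$ once $\epsilon$ is small enough in terms of $\delta$.

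A hypothetical $O(n^{k-\delta})$ Set Cover algorithm would solve each sparse sub-instance in $(k\cdot 2^{n_0/k})^{k-\delta}$ time, for a total time of $2^{\epsilon n_0}\cdot 2^{n_0(1-\delta/k)}\cdot\mathrm{poly}(n_0)$; choosing $\epsilon < \delta/k$ makes the exponent strictly less than $1$, contradicting SETH. The only two subtleties are (i) invoking sparsification, which is what lets $|B|$ scale like $\log n$ rather than like the raw clause count, and (ii) the group-marker gadget, which is what forces a $k$-cover to pick exactly one set per block and thus be decodable into a full variable assignment; both are by-now standard tricks, so I do not foresee any real obstacles.
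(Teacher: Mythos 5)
Your argument is correct, but note that the paper does not actually prove \cref{thm:sethkov}: it imports it as a known result from \cite{Williams05,PatrascuW10}, observing only (in a footnote) that Set Cover instances with small $|B|$ are the same thing as $k$-OV instances with dimension $d=|B|$, so the theorem is exactly SETH-hardness of $k$-OV in dimension $O(\log n)$. What you have done is reprove that cited result from scratch, and your route is essentially the standard one underlying those papers: IPZ sparsification to bring the clause count down to $O_{k',\eps}(n_0)$ (this is indeed what makes $|B|$ scale like $k\log n$ rather than $\mathrm{poly}\log n$), Williams-style split-and-list into $k$ blocks of $n_0/k$ variables, and the P\u{a}tra\c{s}cu--Williams partial-assignment encoding in which sets are (block, local assignment) pairs, elements are clauses, and the $k$ group markers $e_j$ force any size-$k$ cover to take exactly one set per block, so that covers of size $k$ correspond exactly to satisfying assignments. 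The arithmetic is right: $n=\Theta(k\,2^{n_0/k})$, so an $O(n^{k-\delta})$ algorithm run over the $2^{\eps n_0}$ sparsified branches costs $2^{(1-\delta/k+\eps)n_0}\mathrm{poly}(n_0)$, and taking $\eps<\delta/k$ (uniformly in $k'$) refutes SETH. The only point to state more carefully is the quantifier/constant bookkeeping behind ``$|B|=O_{\delta}(k\log n)$'': the hidden constant comes from the sparsification blow-up, which depends on the clause width $k'$ supplied by SETH for the target savings $\eps'=\delta/k-\eps$, so the constant depends on $\delta$ and $k$ (both fixed in the theorem), and the clean way to phrase the contradiction is ``if for some $\delta>0$ the algorithm worked for every constant $c$ with $|B|\le c\,k\log n$, then $k'$-SAT would be solvable with fixed savings for all $k'$.'' This is the standard framing and does not affect correctness; the trade-off between the two routes is simply that the paper gets the statement for free by citation, while your version is self-contained at the cost of rehearsing sparsification and the marker gadget.
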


Starting from \cref{sec:betterlbwarmup}, our lower bounds will crucially rely on the inapproximability of Set Cover from the parameterized complexity literature \cite{SLM19,Lin19}.\footnote{These papers sometimes stated their results for the Dominating Set problem, which is essentially the same as Set Cover.} These papers proved an inapproximability factor of $(\log n)^{\Omega(1)}$, while in our applications we only need (arbitrarily large) constant inapproximability factor $C$.

\begin{lemma}[{SETH-hardness of Gap Set Cover, implied by \cite[Theorem 1.5]{SLM19}}]
Assuming SETH, for every integer $k \ge 2$ and for every $\delta > 0, C\ge 1$, no $O(n^{k-\delta})$-time algorithm can distinguish whether an  $n$-node Set Cover instance $G=(A,B,E)$ has a solution of size $k$ or has no solutions of size $\le Ck$.
\label{templem:approxsetcoverlb}
\end{lemma}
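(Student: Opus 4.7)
The plan is to derive the lemma by a two-stage reduction: first from SETH to $k$-OV (via Theorem~\ref{thm:sethkov}), and then from $k$-OV to Gap Set Cover using a combinatorial gap-amplification gadget. The naive reduction from $k$-OV to Set Cover produces instances with only a $(k,k+1)$ gap, so the heart of the argument is amplifying this to the desired $(k,Ck)$ gap while keeping the blow-up in the number of nodes at most $n^{o(1)}$, so that the $n^{k-o(1)}$ time lower bound is preserved.

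First, I would start from a $k$-OV instance consisting of $n$ vectors in $\{0,1\}^d$ with $d=O(\log n)$, which requires $n^{k-o(1)}$ time to solve under SETH. By appending $\lceil \log_2 k\rceil$ indicator coordinates to every vector and duplicating the vector set, I can partition the vectors into $k$ ``color classes'' $V_1,\dots,V_k$ of roughly equal size in such a way that any orthogonal $k$-tuple necessarily picks exactly one vector from each class, and any covering solution of interest must draw at least one set from each class. This color-coded version has the same complexity up to polynomial factors in $k$ and $\log n$.

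Second, I would construct the Set Cover instance $G=(A,B,E)$ by composing the color-coded $k$-OV instance with a gap gadget in the style of Karthik--Laekhanukit--Manurangsi \cite{SLM19}. The left side $A$ has one node per (vector, color) pair; the right side $B$ is the disjoint union of $d$ copies of a small gadget universe $U$, one copy per coordinate. The gadget is designed using a combinatorial design / constant-rate constant-distance error-correcting code, so that it has the following covering property: for each coordinate $j\in [d]$, the corresponding copy of $U$ is fully covered by the sets associated to a collection $S\subseteq A$ if and only if $S$ ``witnesses'' that coordinate $j$ has a zero among vectors of $S$ from every color class, up to a sub-constant fraction of errors tolerated by the code. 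Combined with the color partitioning, this enforces that any set cover of size strictly less than $Ck$ must project, via the gadget, to an (approximately) orthogonal $k$-tuple in the original $k$-OV instance; conversely, an orthogonal $k$-tuple gives a size-$k$ cover.

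The main obstacle is the quantitative design of the gadget: I need the universe $|U|$ to be small enough that $|A|+|B|=n^{1+o(1)}$ and yet the code distance must be large enough to force the $(k,Ck)$ gap for the prescribed constant $C$. This is exactly what the construction of \cite{SLM19} achieves (in fact with a stronger $(\log n)^{\Omega(1)}$ inapproximability ratio); specializing their parameters to a constant ratio $C$ makes all coding-theoretic parameters constant and keeps $|U|=\poly(\log n)$, so the reduction runs in $n^{1+o(1)}$ time and the resulting Set Cover instance inherits the SETH-based $n^{k-o(1)}$ lower bound. The final step is just to transfer this lower bound through the linear-size reduction, which is routine.
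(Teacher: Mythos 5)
You should first note what the paper actually does here: it does \emph{not} prove \cref{templem:approxsetcoverlb} at all. The lemma is imported verbatim as a consequence of \cite[Theorem 1.5]{SLM19}, with only the trivial observation (made in the surrounding text) that a $(\log n)^{\Omega(1)}$ inapproximability factor in particular gives every constant factor $C$. Since your proposal ends by saying that the required gadget ``is exactly what the construction of \cite{SLM19} achieves'' and that you merely specialize its parameters to constant $C$, the load-bearing part of your argument is the same citation the paper uses; in that sense you and the paper take the same route, and the conclusion is fine.

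The problem is that the scaffolding you build around the citation would not stand on its own, so it adds confusion rather than a proof. Two concrete issues. First, appending $\lceil \log_2 k\rceil$ indicator coordinates does not force an orthogonal $k$-tuple to take one vector per color class (with a binary encoding, e.g.\ classes $00$ and $11$ already ``cover'' both bit positions); the standard trick needs $k$ one-hot coordinates, and in fact no color-coding is needed for this reduction at all, since the naive $k$-OV-to-Set-Cover correspondence (\cref{thm:sethkov} and the paper's footnote) already works without it. Second, and more seriously, the gadget you describe --- $d$ disjoint copies of a small universe, one per coordinate, with the copy for coordinate $j$ covered iff the chosen sets witness a zero in coordinate $j$ --- is semantically just the naive reduction and creates no gap: in a NO instance of $k$-OV there may still be covers of size $k+1\le Ck$, and a per-coordinate error-correcting-code/covering property does nothing to rule them out, because the gap in \cite{SLM19} does not arise coordinate-by-coordinate. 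Their construction goes through the distributed PCP / MA-protocol framework: the left side is blown up to (vector, proof/answer) pairs, the right side encodes (randomness, answer) constraints of the protocol, and soundness is a global agreement argument over the protocol's randomness, which is exactly the mechanism that converts ``no orthogonal $k$-tuple'' into ``no cover of size $\le Ck$.'' So if your write-up is meant as an actual re-derivation, this is the missing idea; if it is meant as a citation, you should state it as such, which is precisely what the paper does.
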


For our purpose, we need the hardness to hold even for Set Cover instances $(A,B,E)$ with small $|B|$. This can be obtained from \cref{templem:approxsetcoverlb} by a simple powering argument, as shown in the following corollary.
\begin{cor}[Small-$B$ version of \cref{templem:approxsetcoverlb}]
Assuming SETH, for every integer $k \ge 2$ and for every $\delta,\gamma > 0, C\ge 1$, no $O(n^{k-\delta})$-time algorithm can distinguish whether an  $n$-node Set Cover instance $G'=(A',B,E')$ with $|B|\le n^{\gamma}$ has a solution of size $k$ or has no solutions of size $\le Ck$.
\label{cor:approxsetcoverlb}
\end{cor}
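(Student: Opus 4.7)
The plan is to derive \cref{cor:approxsetcoverlb} from \cref{templem:approxsetcoverlb} via a tuple-powering reduction that makes $|B|$ small relative to the total instance size. Given target parameters $k\ge 2$, $C\ge 1$, and $\delta,\gamma>0$, suppose for contradiction that an algorithm $\mathcal{A}$ distinguishes the $(k, Ck)$-gap for small-$B$ Set Cover in time $O(n^{k-\delta})$. I set $s:=\lceil 2/\gamma\rceil$, $K:=ks$, and $C':=C$, and aim to use $\mathcal{A}$ to solve the $(K, C'K)$-gap version of Set Cover on arbitrary $n_0$-node instances $G=(A,B,E)$ in time $O(n_0^{K-s\delta})$, contradicting \cref{templem:approxsetcoverlb}.

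Given $G$, I would first pad $A$ with $|B|$ dummy sets that contain no element of $B$; dummies are never useful in a cover, so the YES/NO answer and the lower bound are preserved (the instance size only at most doubles), and now $|A|\ge |B|$. I then construct $G'=(A',B',E')$ by setting $A':=A^s$ (ordered $s$-tuples from the padded $A$), $B':=B$, and declaring that a tuple $(a_1,\dots,a_s)\in A'$ covers $b\in B$ iff some $a_i$ covers $b$ in $G$. The standard tuple correspondence then yields the gap translation: a size-$K=ks$ cover of $B$ in $G$ groups into a size-$k$ tuple cover of $G'$ (Source-YES implies Target-YES), while conversely any size-$Ck$ tuple cover of $G'$ unpacks into a set of at most $Cks=C'K$ original sets of $A$ covering $B$ (Source-NO implies Target-NO). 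Hence running $\mathcal{A}$ on $G'$ correctly decides the source $(K,C'K)$-gap instance $G$.

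For the required size constraint $|B'|\le N^\gamma$ with $N:=|A'|+|B'|$, I would note that $|A'|=|A|^s\ge (n_0/2)^s$ after padding and $|B'|=|B|\le n_0$, so for all sufficiently large $n_0$, $N^\gamma \ge (n_0/2)^{s\gamma}\ge n_0^{s\gamma-1}\ge n_0\ge |B'|$, using $s\gamma\ge 2$. For the running time, $N\le 2n_0^s$, so the hypothetical $\mathcal{A}$ would solve $G$ in time $O((2n_0^s)^{k-\delta})=O(n_0^{sk-s\delta})=O(n_0^{K-s\delta})$, contradicting the $n_0^{K-o(1)}$ lower bound of \cref{templem:approxsetcoverlb} applied with $K=ks\ge 2$ and gap $C'=C$.

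The one delicate step is arranging that $|A|$ is a constant fraction of $n_0$ so that $|A|^s$ truly dominates and drives $N^\gamma$ above $|B|$; the dummy-padding step handles this at the cost of only a factor-$2$ blowup in the source instance size, which is absorbed into the slack in the exponent. The constants $s$, $K$, and $C'$ depend only on $k,C,\gamma$, so the reduction itself is constructible in time polynomial in $N$ and is negligible relative to the exponent driving the lower bound.
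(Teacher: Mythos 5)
Your proof is correct and follows essentially the same route as the paper: the tuple-powering construction $A\mapsto A^{s}$ (a tuple covers $b$ iff some coordinate does), which translates the $(ks,Cks)$ gap into a $(k,Ck)$ gap while making $|B|$ polynomially small compared to the blown-up instance size, exactly as in the paper's proof of \cref{cor:approxsetcoverlb}. The only differences are cosmetic — you phrase it as a contradiction rather than a direct reduction, pad dummies into the source $A$ instead of the powered $A^{g}$ side, and take $s=\lceil 2/\gamma\rceil$ rather than $\lceil 1/\gamma\rceil$ — none of which changes the argument.
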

\begin{proof}
   Let $g = \lceil 1/\gamma\rceil$.  Given a Set Cover instance $G=(A,B,E)$, we can create a larger Set Cover instance $G' = (A^g,B,E')$, in which a $g$-tuple $(u_1,\dots,u_g)\in A^g$ is adjacent to $b\in B$ in $E'$ iff there exists $u_i$ such that $(u_i,b)\in E$. Observe that for any nonnegative integer $k'$, $G$ has a size-$k'g$ solution if and only if $G'$ has a size-$k'$ solution.

   By \cref{templem:approxsetcoverlb}, it requires $(|A|+|B|)^{kg-o(1)}$ time to decide whether $G$ has a size-$kg$ solution or has no solutions of size $\le Ckg$ under SETH. Hence, the same time is required for deciding whether $G'$ has a size-$k$ solution or has no solutions of size $\le Ck$ under SETH. 
   Note $G'=(A^g,B,E')$ has $|A|^g + |B|$ nodes, and we can pad dummy nodes into the $A^g$ side so that $G'$ now has $n = (|A|+|B|)^g$ nodes, and the number of nodes in $B$ is only $|B|\le n^{1/g} \le n^\gamma$.  The time lower bound then becomes $(|A|+|B|)^{kg-o(1)} = \Omega(n^{k-o(1)})$.
\end{proof}

We will also use the lower bound for Gap Set Cover under Exponential Time Hypothesis.
\begin{lemma}[{ETH-hardness of Gap Set Cover, implied by \cite[Theorem 1.4]{SLM19}}]
Assuming ETH, for any constant $C\ge 1$,
no $O(f(k) n^{o(k)})$-time algorithm can distinguish whether an  $n$-node Set Cover instance $G=(A,B,E)$ has a solution of size $k$ or has no solutions of size $\le Ck$.
\label{lem:approxsetcoverlbETH}
\end{lemma}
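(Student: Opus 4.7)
The plan is to derive this lemma as a direct weakening of \cite[Theorem~1.4]{SLM19}, the result it is stated to follow from. That theorem shows that under ETH, no $f(k)\,n^{o(k)}$-time algorithm can distinguish an $n$-node Set Cover instance with optimum $k$ from one with optimum exceeding $k\cdot g(n)$, where $g(n)=(\log n)^{\Omega(1)}$ is an \emph{unbounded} function of the instance size. Our statement requires only a fixed constant gap $C$ in place of the growing gap $g(n)$, which is strictly easier to achieve.

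The argument is then a single step. Suppose for contradiction that some algorithm $\mathcal{A}$ distinguishes the promise $\mathrm{opt}\le k$ from $\mathrm{opt}>Ck$ in time $O(f(k)\,n^{o(k)})$ for the constant $C\ge 1$ given in the hypothesis. Since $g(n)\to\infty$, there is a threshold $n_0$ depending only on $C$ such that $g(n)\ge C$ whenever $n\ge n_0$. On any instance with $n\ge n_0$, the promise in \cite[Theorem~1.4]{SLM19} implies our promise (since $k\cdot g(n)\ge Ck$), so $\mathcal{A}$ also solves the stronger promise problem within the same time bound, contradicting ETH. Instances with $n<n_0$ contribute only an $O(1)$ amount of work depending solely on $C$, which is absorbed into the $f(k)$ prefactor.

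The main (and essentially only) obstacle is verifying that the cited theorem of \cite{SLM19} really does produce a gap of the form $(\log n)^{\Omega(1)}$ for the Set Cover / Dominating Set promise problem parameterized by $k$, under ETH (as opposed to a weaker gap, or a different parameter regime such as being parameterized by $n$). Once that has been confirmed, the present lemma is its weakest possible corollary, and no new reductions, gadgets, or graph manipulations are required.
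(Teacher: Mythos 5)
Your proposal is correct and matches the paper's treatment: the paper gives no separate proof of this lemma, simply citing \cite[Theorem 1.4]{SLM19} and noting (as you do) that the $(\log n)^{\Omega(1)}$ inapproximability factor there subsumes any constant factor $C$, so the constant-gap promise problem inherits the ETH-hardness. Your extra care about the finite threshold $n_0$ and brute-forcing small instances is a harmless formalization of the same one-line weakening.
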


\subsection{A simple lower bound for \texorpdfstring{$(\frac{3}{2}-\eps,\beta)$}{(3/2-eps,beta)}-approximation}
\label{sec:simplelb}

Our simple lower bound is stated as follows.

\begin{restatable}{theorem}{thmsimplelb}
\label{thm:simplelb}
   Let $k\ge 2$, $\ell \ge 1$ be constant integers. Assuming SETH,
   distinguishing between radius $\leq 2\ell$ and radius $\geq 3\ell$ for $k$-center cannot be done in $O(m^{k-\delta})$ time, for any constant $\delta>0$.

   As a corollary, for any constants $\eps\in (0,1), \beta\ge 0$, $(\frac{3}{2}-\eps,\beta)$-approximating $k$-center radius requires $m^{k-o(1)}$ time under SETH.
\end{restatable}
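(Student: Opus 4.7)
The plan is to reduce from Set Cover, exploiting its SETH-based $n^{k-o(1)}$ hardness (\cref{thm:sethkov}) on sparse instances. Given a Set Cover instance $G=(A,B,E)$ with $|A|=n$, $|B|=O_\delta(k\log n)$, and hence $|E|=\tilde O(n)$, I will construct in polynomial time an unweighted undirected graph $H$ with $m=\tilde O(n)$ edges such that the $k$-center radius of $H$ is at most $2\ell$ if $G$ has a size-$k$ set cover, and at least $3\ell$ otherwise. Since $m=\tilde O(n)$, the $n^{k-o(1)}$ lower bound on Set Cover then transfers to an $m^{k-o(1)}$ lower bound for the $k$-center distinguishing problem, proving the main statement.

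The reduction has two stages. First, I build a base graph $H_0$ with $\tilde O(n)$ edges giving a radius-$2$-vs-$3$ gap. Its vertex set is $A\cup B$ together with, for each $b\in B$, a \emph{witness gadget} containing $k+1$ witness vertices $w_b^1,\ldots,w_b^{k+1}$, each attached via short gadget paths to $N_A(b)\cup\{b\}$ so that the witnesses are pairwise ``cost-isolated'' (distance $>2$ apart through gadget-internal paths). The edges include the bipartite edges from $E$ plus the $\tilde O(n)$ gadget edges. In the YES case, a size-$k$ set cover $S\subseteq A$ as centers achieves radius $2$: each $a\in A$ reaches $S$ via a shared $B$-neighbor at distance $\le 2$, each $b\in B$ is at distance $1$ from some $s\in S\cap N_A(b)$, and each witness $w_b^i$ is at distance $2$ via the gadget path through such an $s$. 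In the NO case, the coverage constraints of the $k+1$ witnesses per $b$ cannot all be satisfied at radius $2$ using only gadget-local centers (pigeonhole: $k$ centers vs.\ $k+1$ private regions per $b$); the only shared $2$-neighborhood option is $\{b\}\cup N_A(b)$, forcing $C\cap(\{b\}\cup N_A(b))\ne\emptyset$ for every $b\in B$. Reading off one representative $a_b\in N_A(b)$ per $b\in C\cap B$ and combining with $C\cap A$ then yields a size-$k$ set cover of $G$, contradicting the NO assumption. Second, to amplify to the $2\ell$-vs-$3\ell$ gap, I subdivide each edge of $H_0$ into a path of length $\ell$, obtaining $H$ with $\tilde O(n)$ edges for constant $k,\ell$; distances between original $H_0$-vertices multiply by $\ell$, giving radius $2\ell$ in the YES case, and the pigeonhole argument lifts to show any $k$ centers in $H$ must include a center near $\{b\}\cup N_A(b)$ for every $b$, yielding the same set-cover extraction.

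The main obstacle is engineering the witness gadget so that the NO case argument is robust to centers placed at \emph{subdivision} vertices in $H$: such a center could in principle provide ``shortcut'' coverage of multiple witnesses across different $b$'s at distance $<3\ell$, breaking the pigeonhole. The gadget must be designed (e.g.\ via length-$2$ paths through witness-private intermediate vertices, and keeping the bipartite and cross-$b$ distances large after subdivision) so that each subdivision-vertex center in $H$ can cover at most one witness at distance $<3\ell$, preserving the pigeonhole. Granted this construction, the corollary on $(\frac{3}{2}-\eps,\beta)$-approximation hardness is immediate: choose $\ell$ large enough that $(\frac{3}{2}-\eps)\cdot 2\ell+\beta<3\ell$, i.e., $\ell>\beta/(2\eps)$. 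Any $(\frac{3}{2}-\eps,\beta)$-approximation algorithm then distinguishes radius $\le 2\ell$ from $\ge 3\ell$ on $H$, and by the main statement requires $m^{k-o(1)}$ time under SETH.
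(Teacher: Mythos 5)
Your high-level strategy (reduce from sparse Set Cover using \cref{thm:sethkov}, build an unweighted radius-$2$-vs-$3$ gap, then amplify by $\ell$) is the same as the paper's, but the witness gadget that carries the whole argument is left unspecified, and both natural instantiations of it fail. If each witness $w_b^i$ is joined to every vertex of $N_A(b)\cup\{b\}$ by a private length-$2$ path, the YES case breaks: for a cover $S$ and any $a\in N_A(b)\setminus S$, the private intermediate vertex on the path from $w_b^i$ to $a$ is at distance exactly $3$ from the nearest center (its only routes to $S$ go through $w_b^i$, or through $a$ and then some $b''\in B$), so the optimum in the YES case is $3$, not $2$, and the gap collapses. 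If instead the witnesses are attached by single edges, the NO case breaks: $B(w_b^i,2)$ then contains every $b'\in B$ sharing a set with $b$ and all other witnesses of $b$, so the pigeonhole forcing a center into $\{b\}\cup N_A(b)$ no longer holds, and the extraction of one covering set per center fails because a single center can serve different $b$'s via different common neighbors. Concretely, if some element $b^*$ lies in every set, one center placed at $b^*$ covers the entire graph within radius $2$ even when no size-$k$ cover exists. So the difficulty you flag at the end (robustness to centers on subdivision vertices) is not the only gap: the unsubdivided gadget itself is unsound, and the proposal defers precisely the engineering that constitutes the proof.

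The paper's construction (\cref{defn:gadget}) resolves both issues by a different mechanism: a hub $c'$ joined to every $a'\in A'$ by an $\ell$-edge path guarantees in the YES case that a single center covers all of $A'$, $c'$, and the $A'$-to-$c'$ paths within $2\ell$, while every internal vertex of an $A'$-to-$B'$ path is within $\ell$ of its $b'$, which is itself within $\ell$ of a center (\cref{lemma:simpleyes}); in the NO case a single pendant $\ell$-path per $b'$ suffices, because within distance $<3\ell$ its endpoint $u_{b',\ell}$ cannot reach $c'$ or any other $b''$, so any center covering it determines a unique set $a$ with $(a,b)\in E$ (\cref{lemma:simplelbNO}) --- a distance/association argument rather than a pigeonhole over $k+1$ witnesses per element. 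Note also that your amplification step ``subdivide every edge into an $\ell$-path'' is not automatic: if both endpoints of an edge sit at the maximum base distance $2$ from the centers, the midpoint of the subdivided edge is at distance roughly $2\ell+\ell/2>2\ell$, so subdivision need not preserve the YES-case radius; the paper instead builds the length-$\ell$ paths into the construction and checks coverage path by path. Your derivation of the corollary (choosing $\ell>\beta/(2\eps)$ so that $(\tfrac{3}{2}-\eps)\cdot 2\ell+\beta<3\ell$) is correct and matches the paper.
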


To show the corollary, we set $\ell = \lfloor \frac{\beta}{2\eps}\rfloor + 1 $ which satisfies $(\frac{3}{2}-\eps)\cdot 2\ell + \beta < 3\ell$, so a $(\frac{3}{2}-\eps,\beta)$-approximation algorithm can distinguish between radius $\le 2\ell$ and $\ge 3\ell$ and hence requires $m^{k-o(1)}$ time. 

We prove \cref{thm:simplelb} in the rest of this section.

 Suppose we are given a Set Cover instance $G=(A,B,E)$ where $|A|=\Theta(n)$ and $|B| = n^{o(1)}$ and want to decide whether it has a size-$k$ solution. By \cref{thm:sethkov}, this requires $n^{k-o(1)}$ time under SETH.

\newcommand{\Gad}{\mathrm{Gad}}
Based on the Set Cover instance, we define a base gadget graph (which will be repeatedly used in this section and later sections) as follows.
\begin{definition}[Base gadget graph $\Gad(\hat A,\hat B,\hat c,L)$]
   Let bipartite graph $G=(A,B,E)$ be the given Set Cover instance,  and let $L\ge 1$ be an integer parameter.
  Then,  the base gadget graph is defined by the following procedure:
  \begin{itemize}
    \item Create node sets $\hat A,\hat B$ which are copies of the node sets $A,B$. By convention, let $\hat a\in \hat A$ denote the copy of $a\in A$ (and similarly for $\hat b\in \hat B$ and $b\in B$).  
    \item  For every $(a,b)\in E$, connect $\hat a\in \hat A$ and $\hat b\in \hat B$ by an $L$-edge path $(\hat a, v_{\hat a,\hat b,1},  v_{\hat a,\hat b,2},\dots, v_{\hat a,\hat b,L-1}, \hat b)$.\footnote{When we connect two nodes by a path, we mean we add new internal nodes and edges into the graph to form this path.}
        \item Add a new node $\hat c$.
            \item For every $\hat a\in \hat A$, connect $\hat a$ and $\hat c$ by an $L$-edge path $(\hat a,w_{\hat a,1},w_{\hat a,2},\dots,w_{\hat a,L-1},\hat c)$.
  \end{itemize}
  We denote this base gadget graph by $\Gad(\hat A,\hat B,\hat c,L)$. (If we have a base gadget graph named $\Gad( A', B', c',L)$, then we will analogously use the convention that $a'\in A'$ denotes the copy of $a\in A$, and similarly for $b'\in B'$ and $b\in B$.)
  \label{defn:gadget}
\end{definition}
Note that a Base gadget graph with parameter $L$  has $L|E|+L|A| \le O(L|A||B|)\le Ln^{1+o(1)}$ edges.

Using the base gadget graph in \cref{defn:gadget},   we now create a $k$-center instance $G'=(V',E')$ as follows (see \cref{fig0:simplelb}):
 \begin{itemize}
     \item Add a base gadget graph $\Gad(A',B',c',\ell)$.
     \item For every $b'\in B'$, attach an $\ell$-edge path $(b',u_{b',1},u_{b',2},\dots,u_{b',\ell})$ to $b'$.
 \end{itemize}
 Observe that the new graph $G'$ has $O(\ell \cdot n^{1+o(1)})$  edges.

\begin{figure}[h]
    \centering
 \definecolor{c909090}{RGB}{144,144,144}
\definecolor{cffffab}{RGB}{255,255,171}
\definecolor{c009a00}{RGB}{0,154,0}

\def \globalscale {1.000000}
\begin{tikzpicture}[y=1cm, x=1cm, yscale=\globalscale,xscale=\globalscale, every node/.append style={scale=\globalscale}, inner sep=0pt, outer sep=0pt]
  \path[draw=black,line width=0.0265cm] (4.2556, 20.9805) circle (0.2827cm);

  \path[draw=black,line width=0.0265cm] (4.6752, 23.1808) circle (0.2827cm);

  \path[draw=black,line width=0.0265cm] (5.308, 21.0067) circle (0.2827cm);

  \path[draw=black,line width=0.0265cm] (6.4242, 21.0199) circle (0.2827cm);

  \path[draw=black,line width=0.0265cm] (8.2645, 21.0001) circle (0.2827cm);

  \path[draw=black,line width=0.0265cm] (6.0958, 18.798) circle (0.2827cm);

  \path[draw=black,line width=0.0265cm] (5.6802, 23.1888) circle (0.2827cm);

  \path[draw=black,line width=0.0265cm] (7.2809, 23.1708) circle (0.2827cm);

  \path[draw=c909090,line width=0.0265cm,dash pattern=on 0.053cm off 0.053cm] (5.9742, 23.1752) ellipse (2.0311cm and 0.4687cm);

  \path[draw=c909090,line width=0.0265cm,dash pattern=on 0.053cm off 0.053cm] (6.3239, 20.9953) ellipse (2.8569cm and 0.5357cm);

  \path[draw=black,fill=cffffab,line width=0.0265cm] (4.6871, 23.4654) -- (4.8062, 25.1914);

  \path[draw=black,fill=white,line width=0.0265cm] (4.8136, 25.1542) circle (0.1562cm);

  \path[draw=black,fill=white,line width=0.0265cm] (4.7782, 24.5382) circle (0.1562cm);

  \path[draw=black,fill=white,line width=0.0265cm] (4.7286, 23.9446) circle (0.1562cm);

  \path[draw=black,fill=cffffab,line width=0.0265cm] (5.7156, 23.4659) -- (5.8346, 25.1919);

  \path[draw=black,fill=white,line width=0.0265cm] (5.842, 25.1547) circle (0.1562cm);

  \path[draw=black,fill=white,line width=0.0265cm] (5.8066, 24.5387) circle (0.1562cm);

  \path[draw=black,fill=white,line width=0.0265cm] (5.757, 23.9451) circle (0.1562cm);

  \path[draw=black,fill=cffffab,line width=0.0265cm] (7.2901, 23.4513) -- (7.4092, 25.1773);

  \path[draw=black,fill=white,line width=0.0265cm] (7.4166, 25.1401) circle (0.1562cm);

  \path[draw=black,fill=white,line width=0.0265cm] (7.3812, 24.524) circle (0.1562cm);

  \path[draw=black,fill=white,line width=0.0265cm] (7.3316, 23.9305) circle (0.1562cm);

  \path[draw=black,fill=white,line width=0.0265cm] (4.3821, 20.7424) -- (5.818, 18.9271);

  \path[draw=black,fill=white,line width=0.0265cm] (5.9445, 19.0461) -- (5.4088, 20.7721);

  \path[draw=black,fill=white,line width=0.0265cm] (6.4281, 20.7498) -- (6.1677, 19.0759);

  \path[draw=black,fill=white,line width=0.0265cm] (8.1392, 20.7424) -- (6.3685, 18.9047);

  \path[draw=black,fill=white,line width=0.0265cm] (4.6425, 22.9148) -- (4.2854, 21.2855);

  \path[draw=black,fill=white,line width=0.0265cm] (4.7838, 22.9148) -- (5.2079, 21.2929);

  \path[draw=black,fill=white,line width=0.0265cm] (5.7808, 22.9297) -- (6.2941, 21.2855);

  \path[draw=black,fill=white,line width=0.0265cm] (5.8998, 23.019) -- (8.0276, 21.1813);

  \path[draw=black,fill=white,line width=0.0265cm] (7.4324, 22.9223) -- (8.1764, 21.2855);

  \path[draw=black,fill=white,line width=0.0265cm] (4.526, 22.3922) circle (0.1562cm);

  \path[draw=black,fill=white,line width=0.0265cm] (4.9336, 22.4082) circle (0.1562cm);

  \path[draw=black,fill=white,line width=0.0265cm] (5.0684, 21.7651) circle (0.1562cm);

  \path[draw=black,fill=white,line width=0.0265cm] (5.9273, 22.4164) circle (0.1562cm);

  \path[draw=black,fill=white,line width=0.0265cm] (6.1317, 21.8445) circle (0.1562cm);

  \path[draw=black,fill=white,line width=0.0265cm] (6.5839, 22.4255) circle (0.1562cm);

  \path[draw=black,fill=white,line width=0.0265cm] (7.2618, 21.8462) circle (0.1562cm);

  \path[draw=black,fill=white,line width=0.0265cm] (7.9175, 21.8443) circle (0.1562cm);

  \path[draw=black,fill=white,line width=0.0265cm] (7.6283, 22.4718) circle (0.1562cm);

  \path[draw=black,fill=white,line width=0.0265cm] (4.8333, 20.1687) circle (0.1562cm);

  \path[draw=black,fill=white,line width=0.0265cm] (5.2725, 19.5724) circle (0.1562cm);

  \path[draw=black,fill=white,line width=0.0265cm] (5.7488, 19.6475) circle (0.1562cm);

  \path[draw=black,fill=white,line width=0.0265cm] (6.2556, 19.6508) circle (0.1562cm);

  \path[draw=black,fill=white,line width=0.0265cm] (6.9328, 19.497) circle (0.1562cm);

  \path[draw=black,fill=white,line width=0.0265cm] (7.5719, 20.098) circle (0.1562cm);

  \path[draw=black,fill=white,line width=0.0265cm] (6.3502, 20.2089) circle (0.1562cm);

  \path[draw=black,fill=white,line width=0.0265cm] (5.5713, 20.2097) circle (0.1562cm);

  \path[draw=black,fill=white,line width=0.0265cm] (4.379, 21.7464) circle (0.1562cm);

  \path[draw=c009a00,line width=0.0254cm,dash pattern=on 0.1018cm off 0.0254cm,rounded corners=0.066cm] (2.968, 23.7263) rectangle (9.3525, 18.3537);

  \node[draw=none] at (4.2965, 25.18775) {$u_{b',3}$};

  \node[draw=none] at (4.26305, 24.5442) {$u_{b',2}$};
  
  \node[draw=none] at (4.2147, 23.9564) {$u_{b',1}$};
  
  \node[draw=none] at (4.6871, 23.17895) {$b'$};
  
  \node[draw=none] at (3.6348, 23.16035) {$B'$};
  
  \node[draw=none] at (3.1847, 20.9656) {$A'$};
  
  \node[draw=none] at (3.83525, 22.43125) {$v_{a',b',2}$};
  
  \node[draw=none] at (3.70575, 21.76165) {$v_{a',b',1}$};
  
  \node[draw=none] at (4.2556, 21.0065) {$a'$};
  
  \node[draw=none] at (4.22585, 20.1323) {$w_{a',1}$};
  
  \node[draw=none] at (4.63875, 19.552) {$w_{a',2}$};
  
  \node[draw=none] at (6.0895, 18.79315) {$c'$};
  
  \node[draw=none] at (6.48755, 23.1864) {$\cdots$};
  
  \node[draw=none] at (7.32085, 21.0177) {$\cdots$};
  
  \node[draw=none] at (10.1294, 24.8844) {$(\ell = 3)$};
  
  \node[draw=none] at (10.1294, 18.09885) {{\textcolor{c009a00}{$\Gad(A',B',c',\ell)$}}};

\end{tikzpicture}
    \caption{The $k$-center instance $G'=(V',E')$ constructed in the proof of \cref{thm:simplelb} (for distinguishing radius $\le 2\ell$ or $\ge 3\ell$). In this example, $\ell=3$. The dashed green box contains the base gadget graph $\Gad(A',B',c',\ell)$. 
    Given a Set Cover solution of size $k$, picking their copies in $A'$ gives a $k$-center solution of radius $2\ell$.}
    \label{fig0:simplelb}
\end{figure}

First we show a good $k$-center solution exists in the YES case.
\begin{lemma}
    If the original Set Cover instance $G=(A,B,E)$ has a solution $S\subseteq A$ of size $|S|= k$, then the $k$-center instance $G'$ has a solution with radius $2\ell$.
    \label{lemma:simpleyes}
\end{lemma}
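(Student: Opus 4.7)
The plan is to take $C := \{a' \in A' : a\in S\} \subseteq A'$ as the claimed $k$-center of $G'$. Since $|S|=k$, this is a valid choice, and the task reduces to verifying that every vertex of $G'$ lies within distance $2\ell$ of $C$. I would do this by case-analyzing the six types of vertices that appear in $G'$: the nodes of $A'$, the hub $c'$, the internal nodes $w_{a',i}$ on the $a'$--$c'$ paths, the nodes of $B'$, the internal nodes $v_{a',b',j}$ on the $a'$--$b'$ paths (for $(a,b)\in E$), and the pendant path nodes $u_{b',i}$.

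The key structural fact driving the argument is that $c'$ sits at distance exactly $\ell$ from every $a'\in A'$, so $c'$ acts as a hub: any $a'\in A'$ (even one with $a\notin S$) is within distance $2\ell$ of $C$ by routing through $c'$. This immediately handles the $A'$-side vertices and the $w_{a',i}$'s (which are at distance $\min(i,\ell-i)\le \ell$ from either endpoint of the $a'$--$c'$ path, giving a total of at most $2\ell$ to some center).

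For the $B'$-side, I would invoke the covering property: since $S$ is a set cover, for each $b\in B$ there is some $a\in S$ with $(a,b)\in E$, so the gadget contains an $\ell$-edge path from $b'$ to $a'\in C$, giving $d(b',C)\le \ell$. A node $v_{a'',b',j}$ on a path from some (possibly uncovered) $a''$ to $b'$ is then within distance $(\ell-j)+\ell\le 2\ell$ of $C$ by routing through $b'$ and onward to its covering center, and a pendant-path node $u_{b',i}$ is within distance $i+\ell\le 2\ell$ of $C$ by the same route.

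I expect no real obstacle: the statement only requires constructing a witness, and the choice $C=\{a':a\in S\}$ is essentially forced. The main thing to be careful about is checking that the route via $c'$ (rather than via $b'$) is the correct one for vertices on the $A'$ side, since going via $b'$ would cost $(\ell-j)+\ell+\ell$ for an internal node $w_{a',i}$ when $a\notin S$. Once the hub role of $c'$ is identified, each case is a one-line distance computation, and the bound $2\ell$ holds tightly for uncovered $a'\in A'$, which is what motivates the $2\ell$ versus $3\ell$ gap in the theorem statement.
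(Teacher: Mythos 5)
Your proposal is correct and follows essentially the same argument as the paper: take the copies of the set-cover solution as centers, cover the $A'$-side (including $c'$ and the $w_{a',i}$ nodes) by routing through the hub $c'$ at cost at most $2\ell$, and cover $b'$, the $v_{\cdot,b',j}$ nodes, and the pendant $u_{b',i}$ nodes via the length-$\ell$ path to a covering center. The only quibble is the parenthetical ``$\min(i,\ell-i)$ from either endpoint'' for $w_{a',i}$ (routing via an uncovered $a'$ would exceed $2\ell$; one should always go via $c'$, giving $(\ell-i)+\ell\le 2\ell$), which you effectively acknowledge and which does not affect correctness.
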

\begin{proof}
Given the Set Cover solution $S=\{s_1,\dots,s_k\}\subset A$, in $G'=(V',E')$ we simply pick the copies of the same $k$ nodes, $s_1',\dots,s_k'\in A' \subset V'$ as the centers, and we now verify that they cover every node in $V'$ within distance $2\ell$:
\begin{itemize}
    \item Every node $a'\in A' \subseteq V'$ can be reached from the center $s_1'\in A'$ via a length-$2\ell$ path through node $c'$, namely $(s_1', w_{s_1',1}, \cdots , w_{s_1',\ell-1} , c' , w_{a',\ell-1}, \cdots , w_{a',1}, a')$. 
     This also means all intermediate nodes $w_{\cdot,\cdot}$ and the node $c'$ can be reached from the center $s_1'$ within distance less than $2\ell$.
    \item For every $b\in B$, the Set Cover solution $S$ guarantees that there exists an $s_i$ such that $(s_i,b)\in E$. 
    Then, by definition of the base gadget graph, $b'\in B'\subset V'$ can be reached from the center $s'_i$ via a length-$\ell$ path $(s_i',v_{s_i',b',1},\dots,v_{s_i',b',\ell-1},b')$. 

    Then, since all nodes $u_{b',\cdot}$ on the path attached to $b'$, as well as all nodes $v_{\cdot,b',\cdot}$ on the paths between $b'$ and $A'$ in the base gadget graph, are reachable from $b'$ within distance $\le \ell$, they are hence reachable from $s_i'$ within distance $\le \ell+\ell=2\ell$.
\end{itemize}
We have verified all nodes in $V'$ are covered by some center $s_i'$ within distance $2\ell$, finishing the proof.
\end{proof}

Next, we show no good $k$-center solution can exist in the NO case. To show this, we prove the contrapositive.
\begin{lemma}
If the $k$-center instance $G'$ has a solution with radius $< 3\ell$, then the original Set Cover instance $G=(A,B,E)$ has a solution of size $k$.
\label{lemma:simplelbNO}
\end{lemma}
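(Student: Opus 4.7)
I will prove the contrapositive: given a $k$-center solution $C\subseteq V'$ of radius $<3\ell$ for $G'$, I will extract a set cover of $G$ of size $k$. The key observation focuses on the pendant tips $u_{b',\ell}$: since $u_{b',\ell}$ has degree $1$, every path from $u_{b',\ell}$ to a node outside its own pendant passes through $b'$, so $d(u_{b',\ell},c) = \ell + d(b',c)$ for such $c$. Since distances are integers, the hypothesis $d(u_{b',\ell},C)<3\ell$ means $d(u_{b',\ell},C)\leq 3\ell-1$. Hence for each $b'\in B'$ there is a ``witness'' center $c(b')\in C$ that is either on the pendant attached to $b'$ (including $b'$ itself) or inside the base gadget at distance at most $2\ell-1$ from $b'$.

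Next, I will run a short case analysis inside the base gadget to show that every node within distance $2\ell-1$ of $b'$ lies on some length-$\ell$ path incident to an $a'\in A'$ with $(a,b)\in E$. Concretely, the eligible witnesses are: $b'$ itself; internal nodes $v_{a',b',i}$ of the direct $a'$-$b'$ path (which forces $(a,b)\in E$); nodes $a'$ with $(a,b)\in E$ (at distance exactly $\ell$); and nodes $v_{a',b'',i}$ or $w_{a',i}$ with $(a,b)\in E$ branching off such an $a'$ (at distance $\ell+i\leq 2\ell-1$). On the other hand, any $a'$ with $(a,b)\notin E$ lies at distance $\geq 3\ell$ from $b'$ (via $c'$), every $b''\neq b'$ lies at distance $\geq 2\ell$ from $b'$, $c'$ lies at distance exactly $2\ell$, and any pendant node $u_{b'',j}$ with $b''\neq b'$ lies at distance $\geq 2\ell+j>2\ell-1$. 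Thus these ``long-range'' nodes cannot be witnesses.

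Finally, I will define a map $\phi\colon C\to A$ by the following rules: $\phi(v_{a',b'',i})=\phi(w_{a',i})=\phi(a')=a$; if $c=b'$ or $c=u_{b',j}$, then $\phi(c)$ is set to any fixed neighbor of $b$ in $G$ (which exists by the standing assumption that every $b\in B$ is adjacent to at least one $a\in A$); $\phi(c')$ is arbitrary. By the case analysis above, for every $b\in B$ the witness $c(b')$ satisfies $(\phi(c(b')),b)\in E$. Therefore $S\coloneqq\phi(C)\subseteq A$ is a set cover of $B$ with $|S|\leq|C|=k$, and padding $S$ arbitrarily up to exactly $k$ elements finishes the proof.

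The main obstacle I anticipate is the bookkeeping in the second paragraph: one must verify both that every short-range neighborhood of $b'$ admits an attribution to a genuine neighbor of $b$ in $G$, and that no long-range node (notably $c'$, other $b''$, or their pendants) can spuriously cover $u_{b',\ell}$ within distance $3\ell-1$. Once these two facts are pinned down, the extraction of the set cover is immediate, and the standing assumption on $G$ cleanly handles the otherwise awkward case where the witness happens to lie on $b'$'s own pendant.
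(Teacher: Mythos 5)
Your proposal is correct and follows essentially the same route as the paper's proof: both project each center to a node of $A$ (taking an arbitrary neighbor of $b$ when the center sits on $b'$'s pendant or on $b'$ itself, which is where the standing assumption that every $b$ has a neighbor is used), and both argue by a case analysis on where a center covering $u_{b',\ell}$ within distance $<3\ell$ can lie, ruling out $c'$, other $b''$, and anything associated with a non-neighbor $a$. Your reduction $d(u_{b',\ell},c)=\ell+d(b',c)$ and the explicit distance bookkeeping inside the gadget is just a slightly more quantitative rendering of the paper's "the path can only exit $b'$ through some $v_{a',b',\ell-1}$ with $(a,b)\in E$" argument.
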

\begin{proof}
    Let $\{\tilde s_1,\dots,\tilde s_k\}\subset V'$ denote the $k$-center solution of $G'=(V',E')$ of radius $< 3\ell$.  For each $i\in [k]$, define $s_i'$ to be the node in $A'\subset V'$ that is the closest to $\tilde s_i$ on graph $G'$ (breaking ties arbitrarily).  In other words,
    \begin{itemize}
        \item If $\tilde s_i   \in \{u_{b',j} \}_{j\in [\ell]} \cup \{b'\}$ for some $b'\in B'$ (i.e., $\tilde s_i$ is on the $\ell$-edge path attached to $b'$), then $s_i'$ is the copy of some $s_i\in A$ such that $(s_i,b)\in E$ (which exists by our initial assumption that $b$ has at least one neighbor in the Set Cover instance).
            \item If $ \tilde s_i  \in \{v_{a',b',j}\}_{b'\in B', j\in [\ell-1]} \cup \{a'\} \cup \{w_{a',j}\}_{j\in [\ell-1]} $ for some $a'\in A'$ (i.e., $\tilde s_i$ is in the base gadget graph excluding $B'$ and $\{c'\}$), then $s_i' = a'$.
                \item If $\tilde s_i = c'$ then $s_i'\in A'$ is arbitrary (this case will not happen in our proof).
    \end{itemize}
    We now show that $\{s_1,\dots,s_k\}\subset A$, namely the nodes corresponding to $s_1',\dots,s_k'\in A'$, is a Set Cover solution.

Fix any $b\in B$. Consider $u_{b',\ell}$, the last node on the $\ell$-edge path attached to $b'$, and suppose it is reachable from the center $\tilde s_i$ within distance $< 3\ell$. 
By inspecting the construction of $G'$, observe that within $< 3\ell$ distance $u_{b',\ell}$ cannot reach node $c'$ or any other $b''\in B'\setminus \{b'\}$. Hence, such center $\tilde s_i$ within $<3\ell$ distance from $u_{b',\ell}$ can only be one of the following two cases: 
\begin{itemize}
    \item Case 1: $\tilde s_i  \in \{u_{b',j} \}_{j\in [\ell]} \cup \{b'\}$.

        By earlier discussion, we have $(s_i,b)\in E$ in this case. 
    
    \item Case 2: $ \tilde s_i  \in \{v_{a',b'',j}\}_{b''\in B', j\in [\ell-1]} \cup \{a'\} \cup \{w_{a',j}\}_{ j\in [\ell-1]} $ for some $a'\in A'$.
        
        By earlier discussion, we have $s_i = a$ in this case. 
    
   By inspecting the construction of $G'$, we can observe that in this case we must have $(a,b)\in E$ (and hence $(s_i,b)\in E$). Intuitively speaking, the shortest path from $u_{b',\ell}$ to the center $\tilde s_i$ can exit $b'$ only through some $v_{a',b',\ell-1}$ for which $(a,b)\in E$, and from there it cannot reach $c'$ or any other $b''\in B'\setminus \{b'\}$, so it always remains the closet to the same $a'\in A'$.
\end{itemize}
This finishes the proof that every $b\in B$ is covered by some $s_i$, so $\{s_1,\dots,s_k\}$ is a Set Cover solution.
\end{proof}

\begin{proofof}{Proof of \cref{thm:simplelb}}
 By combining \cref{lemma:simpleyes} and \cref{lemma:simplelbNO}, we see that any algorithm on graphs of  $m \le n^{1+o(1)}$ edges that distinguishes between $k$-center radius $\le 2\ell$ and $\ge 3\ell$ can be used to decide whether the original Set Cover instance has a size-$k$ solution, which requires $n^{k-o(1)}\ge m^{k-o(1)}$ time under SETH. 
\end{proofof}

\subsection{Warm-up I: Lower bounds via Gap-Set-Cover}
\label{sec:betterlbwarmup}

In this section, we prove the following special case of our full lower bound result \cref{thm:fulllbrestated}, which achieves better inapproximability ratio $(5/3-\eps)$ than the previous $(3/2-\eps)$, but has a lower time lower bound $m^{k- 1 - o(1)}$.
We prove this special case first in order to clearly illustrate how the hardness of Gap Set Cover is helpful, which is one of  the main ideas behind our full result.

\begin{theorem}[special case of \cref{thm:fulllbrestated}]
\label{thm:betterlbwarmup1}
   Let $k\ge 2$, $\ell \ge 1$ be constant integers. Assuming SETH,
   distinguishing between radius $\leq 3\ell$ and radius $\geq 5\ell$ for $(k+1)$-center cannot be done in $O(m^{k-\delta})$ time, for any constant $\delta>0$.

   As a corollary, for any constants $\eps\in (0,1), \beta\ge 0$, $(\frac{5}{3}-\eps,\beta)$-approximating $k$-center radius requires $m^{k-1-o(1)}$ time under SETH.
\end{theorem}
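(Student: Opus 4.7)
The plan is to reduce from the Gap Set Cover problem (\cref{cor:approxsetcoverlb}) rather than the exact Set Cover used in \cref{thm:simplelb}. I fix any constant $C \ge 2$ (so that $k+1 \le Ck$ for all $k \ge 1$) and a small $\gamma > 0$; \cref{cor:approxsetcoverlb} then gives instances $(A,B,E)$ with $|B| \le n^{\gamma}$ for which no $O(n^{k-\delta})$-time algorithm can distinguish between ``a size-$k$ cover exists'' and ``no size-$Ck$ cover exists''.

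From such an instance I build the $(k+1)$-center graph $G'$ by taking the base gadget $\Gad(A', B', c', 2\ell)$ from \cref{defn:gadget} (doubling the path length compared to \cref{thm:simplelb}) and attaching a fresh length-$\ell$ dangling path $(b', u_{b',1}, \dots, u_{b',\ell})$ to every $b' \in B'$. This graph has $n^{1+o(1)}$ edges for small enough $\gamma$.

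For the YES direction, given a size-$k$ set cover $\{s_1, \dots, s_k\}\subseteq A$, I take centers $\{s_1', \dots, s_k', c'\}$ and check that the radius is $\le 3\ell$: the set-cover property makes each deep endpoint $u_{b',\ell}$ reachable from some $s_i'$ with $(s_i,b)\in E$ in $2\ell+\ell=3\ell$ steps; the extra center $c'$ covers $A'$ and every $w_{a',\cdot}$ path within $2\ell$; and a middle-of-gadget node $v_{a',b',j}$ is within $2\ell+j\le 3\ell$ of $c'$ when $j\le\ell$ and within $4\ell - j\le 3\ell$ of some $s_i'$ (going through $b'$) when $j\ge\ell$. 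This is precisely where the $(k+1)$st center is needed, since without $c'$ the nodes in $A'\setminus\{s_i'\}$ would sit at distance $4\ell$ from every $s_i'$.

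For the NO direction, arguing the contrapositive, I suppose a $(k+1)$-center solution $\{\tilde s_1,\dots,\tilde s_{k+1}\}$ has radius $<5\ell$ and will show that the Gap Set Cover instance admits a size-$(k+1)\le Ck$ solution, contradicting the NO-guarantee. I define $\pi(\tilde s_i)\in A$ as an element whose copy $a'\in A'$ is closest to $\tilde s_i$ in $G'$; for each $b\in B$, I take a center $\tilde s_i$ that covers $u_{b',\ell}$ within distance $<5\ell$, and claim $(\pi(\tilde s_i),b)\in E$, which makes $\{\pi(\tilde s_1),\dots,\pi(\tilde s_{k+1})\}$ a set cover of size $\le k+1$. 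The main obstacle is the case analysis justifying this claim: every shortest path into $u_{b',\ell}$ must enter through $b'$, so $\tilde s_i$ lies within $<4\ell$ of $b'$, and then for each possible location of $\tilde s_i$ (on the attached path of $b'$, at some $v_{a',b'',j}$, at some $a'\in A'$, on some $w_{a',\cdot}$ path, or at $c'$) I must show that every detour route through $c'$ has length at least $2\ell+2\ell+2\ell+\ell = 7\ell$, and every detour through a different $b''$ has length at least $5\ell+1$, so neither alternative can be the route delivering the $<5\ell$ coverage; hence $\tilde s_i$ sits in the neighborhood of some $a^*\in A$ with $(a^*,b)\in E$ and $a^{*\prime}$ is its unique closest $A'$-node, forcing $\pi(\tilde s_i)=a^*$. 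The doubling of the base-gadget path length to $2\ell$ is exactly what pushes those detour distances strictly above $5\ell$, and is what enables the improved gap $3\ell$ vs $5\ell$ relative to \cref{thm:simplelb}.
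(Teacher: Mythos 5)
Your proposal is correct and follows essentially the same route as the paper: the same reduction from Gap Set Cover (via \cref{cor:approxsetcoverlb}, using that $k+1\le Ck$), the same graph $\Gad(A',B',c',2\ell)$ with $\ell$-edge paths attached to $B'$, the same YES-case centers (copies of the cover plus $c'$) giving radius $3\ell$, and the same contrapositive NO-case argument mapping each center to its nearest $A'$-copy and ruling out routes through $c'$ or other $b''$ (the paper phrases this as $d(u_{b',\ell},\{c'\}\cup B'\setminus\{b'\})=\ell+2\ell+2\ell=5\ell$, which is the clean form of your detour bounds). The only nitpick is that your stated detour lengths ($7\ell$, $5\ell+1$) are loosely quantified; the argument only needs, and in fact only yields in all cases, the bound $\ge 5\ell$.
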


We will use a similar construction as the simple lower bound (\cref{thm:simplelb}) from the previous section. In order to improve the ratio from $(3/2-\eps)$ to $(5/3-\eps)$, we would like to increase the parameter of the base gadget graph from $\ell$ to $2\ell$, while the paths attached to $B'$ still have length $\ell$. But naively doing this would make some parts of the graph no longer covered by the $k$ chosen centers in $A'$ within the desired radius $3\ell$.
To solve this issue, we pick the node $c'$ as an extra center to cover the remaining parts. In this way, a size-$k$ Set Cover solution of  implies a $(k+1)$-center solution of radius $3\ell$. And, in the converse direction, the same argument as before shows that a $(k+1)$-center solution of radius $<5\ell$ would imply a $(k+1)$-size Set Cover solution. Therefore,  we can use the hardness of Gap Set Cover (distinguishing between solution size $\le k$ or $>k+1$) to conclude the proof.


\begin{proofof}{Proof of \cref{thm:betterlbwarmup1}}
Suppose we are given a Set Cover instance $G=(A,B,E)$ where $|A|=\Theta(n)$ and $|B| \le n^{\gamma}$ (where constant $\gamma>0$ can be chosen arbitrarily small), and want to decide whether it has a size-$k$ solution or has no solutions of size $\le k+1$. By \cref{cor:approxsetcoverlb}, this requires $n^{k-o(1)}$ time under SETH.

  We will create a $(k+1)$-center instance $G'=(V',E')$, in a similar way to the proof of \cref{thm:simplelb}:
 \begin{itemize}
     \item Add a base gadget graph $\Gad(A',B',c',2\ell)$ (\cref{defn:gadget}).
     \item For every $b'\in B'$, attach an $\ell$-edge path $(b',u_{b',1},u_{b',2},\dots,u_{b',\ell})$ to $b'$.
 \end{itemize}

 Observe that the new graph $G'$ has $O(\ell|A||B|) \le n^{1+\gamma}$ edges.

\begin{figure}[h]
    \centering
 \definecolor{cff5555}{RGB}{255,85,85}
\definecolor{c0066ff}{RGB}{0,102,255}
\definecolor{c909090}{RGB}{144,144,144}

\def \globalscale {1.000000}
\begin{tikzpicture}[y=1cm, x=1cm, yscale=\globalscale,xscale=\globalscale, every node/.append style={scale=\globalscale}, inner sep=0pt, outer sep=0pt]
  \path[draw=black,fill=cff5555,nonzero rule,line width=0.0265cm] (1.6765, 24.6625) circle (0.2827cm);

  \path[draw=black,fill=c0066ff,line width=0.0265cm] (3.0646, 22.5051) circle (0.2827cm);

  \path[draw=black,line width=0.0265cm] (2.096, 26.8629) circle (0.2827cm);

  \path[draw=black,fill=cff5555,line width=0.0265cm] (2.7289, 24.6888) circle (0.2827cm);

  \path[draw=black,line width=0.0265cm] (4.061, 24.6865) circle (0.2827cm);

  \path[draw=black,line width=0.0265cm] (2.0959, 27.7844) circle (0.2827cm);

  \path[draw=black,line width=0.0265cm] (3.8708, 27.8608) circle (0.2827cm);

  \path[draw=black,line width=0.0265cm] (3.866, 26.8696) circle (0.2827cm);

  \path[draw=black,line width=0.0265cm] (2.9979, 26.8594) circle (0.2827cm);

  \path[draw=c909090,line width=0.0233cm,dash pattern=on 0.0466cm off 0.0466cm] (2.9267, 26.8573) ellipse (1.5637cm and 0.4702cm);

  \path[draw=c909090,line width=0.0219cm,dash pattern=on 0.0438cm off 0.0438cm] (2.8263, 24.6774) ellipse (1.9407cm and 0.538cm);

  \path[draw=black,line width=0.1cm] (2.108, 26.5969) -- (1.7211, 24.9304);

  \path[draw=black,line width=0.0811cm] (2.0931, 27.1475) -- (2.0931, 27.5195);

  \path[draw=black,line width=0.0826cm] (3.8761, 27.1641) -- (3.8761, 27.5808);

  \path[draw=black,line width=0.0265cm] (3.0037, 27.8212) circle (0.2827cm);

  \path[draw=black,line width=0.0826cm] (3.009, 27.1246) -- (3.009, 27.5412);

  \path[draw=black,line width=0.1cm] (4.1108, 24.9638) -- (4.0052, 26.6205);

  \path[draw=black,line width=0.1cm] (1.7658, 24.4096) -- (2.8073, 22.6835);

  \path[draw=black,line width=0.1cm] (4.0357, 24.4295) -- (3.217, 22.7529);

  \path[draw=black,line width=0.1cm] (2.6883, 24.4096) -- (2.971, 22.7728);

  \path[draw=black,line width=0.1cm] (3.6721, 26.6383) -- (2.7671, 24.9862);

  \path[draw=black,line width=0.1cm] (2.7161, 24.9521) -- (2.934, 26.5988);

  \path[draw=blue,line width=0.03cm] (3.3564, 22.6611).. controls (4.4927, 24.6496) and (4.4085, 24.576) .. (4.4296, 24.8496).. controls (4.4506, 25.1231) and (4.377, 25.7544) .. (4.377, 25.7544);

  \path[draw=red,line width=0.03cm] (2.925, 25.0179).. controls (2.925, 25.0179) and (3.6194, 26.3331) .. (3.6931, 26.3331).. controls (3.7667, 26.3331) and (3.8509, 26.3752) .. (3.8509, 26.2594).. controls (3.8509, 26.1437) and (3.8825, 25.7018) .. (3.8825, 25.7018);

;

  \path[draw=black,fill=white,nonzero rule,line width=0.0265cm] (4.0666, 25.7386) circle (0.1156cm);

  \node[draw=none] at (1.0627, 26.8697) {$B'$};

\node[draw=none] at (0.584, 24.6865) {$A'$};

\node[draw=none] at (2.0933, 24.44975) {$a'_1$};

\node[draw=none] at (3.1433, 24.44975) {$a'_2$};

\node[draw=none] at (3.8433, 26.84975) {$b'_3$};

\node[draw=none] at (4.0433, 24.64975) {$a'_3$};

\node[draw=none] at (3.49315, 22.16655) {$c'$};

\node[draw=none] at (3.5879, 25.6702) {$3\ell$};

\node[draw=none] at (4.498, 24.092) {$3\ell$};

\node[draw=none] at (1.78865, 27.3484) {$\ell$};

\node[draw=none] at (1.6677, 25.81755) {$2\ell$};

\node[draw=none] at (2.0254, 23.4291) {$2\ell$};

\end{tikzpicture}
    \caption{The $(k+1)$-center instance $G'=(V',E')$ constructed in the proof of \cref{thm:simplelb} (for distinguishing radius $\le 3\ell$ or $\ge 5\ell$). Here we use a thick segment to denote a path of prescribed number of edges.  
    Given a Set Cover solution of size $k$ (this example has $k=2$ attained by $\{a_1,a_2\}$), picking their copies in $A'$  together with $c'$  gives a $(k+1)$-center solution of radius $3\ell$. In this example, on the $2\ell$-edge path from $a'_3\in A'$ to $b'_3\in B'$, the first half of nodes are covered by center $c'$, and the second half are covered by center $a_2'$.}
    \label{fig1:53lb}
\end{figure}
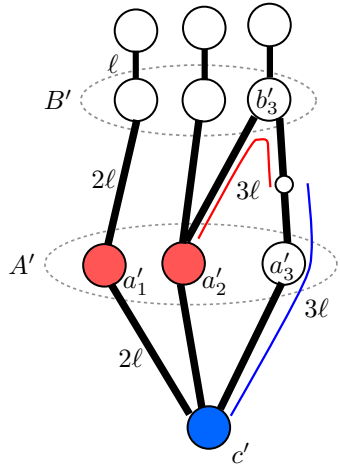

\begin{lemma}
\label{lemma:warmup1claimyes}
    If the original Set Cover instance $G=(A,B,E)$ has a solution $A'\subseteq A$ of size $|A'|= k$, then the $(k+1)$-center instance $G'$ has a solution with radius $3\ell$.
\end{lemma}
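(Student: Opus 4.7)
The plan is to produce an explicit $(k+1)$-center solution of radius $3\ell$, namely $\{s_1',\dots,s_k',c'\} \subseteq V'$, where $s_1',\dots,s_k' \in A'$ are the copies in $A'$ of a given Set Cover solution $\{s_1,\dots,s_k\}\subseteq A$, and $c'$ is the apex node of the base gadget. The verification is a case analysis over the natural pieces of $G'$: the apex $c'$, the $2\ell$-edge paths from each $a'\in A'$ to $c'$, the $2\ell$-edge paths from each $a'\in A'$ to each $b'\in B'$ (one path for each $(a,b)\in E$), and the $\ell$-edge paths attached to each $b'\in B'$.

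First I would dispatch the easy pieces using only $c'$: every $a'\in A'$ is at distance exactly $2\ell$ from $c'$, and every intermediate node $w_{a',j}$ on such a path lies at distance $\le 2\ell$ from $c'$. Next I would dispatch the $\ell$-edge paths attached to $B'$ using the Set Cover witnesses: for each $b\in B$, fix some $s_i$ with $(s_i,b)\in E$, so that by definition of the base gadget $d_{G'}(s_i',b') = 2\ell$; then every node $u_{b',j}$ ($j\le \ell$) is within distance $2\ell + j \le 3\ell$ of $s_i'$, and $b'$ itself is within $2\ell \le 3\ell$.

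The main substantive step is the middle piece: the interior nodes $v_{a',b',j}$ ($1\le j\le 2\ell-1$) on the $2\ell$-edge paths between $A'$ and $B'$, where $a$ need not belong to the Set Cover solution. Here I would split the path at its midpoint. Nodes in the first half (distance $\le \ell$ from $a'$) route through $a'$ to $c'$: their distance to $c'$ is at most $\ell + 2\ell = 3\ell$. Nodes in the second half (distance $\le \ell$ from $b'$) route through $b'$ to the witness $s_i'$: their distance to $s_i'$ is at most $\ell + 2\ell = 3\ell$. This is precisely where the choice to scale the base gadget by $2\ell$ (while keeping the attached paths of length $\ell$) pays off, and is the only nontrivial observation in the argument.

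I do not expect any real obstacle here; the entire proof is a bookkeeping case analysis, and the chosen parameters make all distance bounds meet $3\ell$ exactly. The only subtlety worth flagging explicitly is that we do need each $b\in B$ to have at least one neighbor in $A$ so that the witness $s_i$ exists, which is guaranteed by the standing assumption on Set Cover instances made earlier in the paper.
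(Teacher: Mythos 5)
Your proposal is correct and matches the paper's proof essentially verbatim: both pick the copies $s_1',\dots,s_k'$ together with $c'$ as the $k+1$ centers, cover $B'$, the attached $\ell$-edge paths, and the second halves of the $A'$-to-$B'$ paths via the Set Cover witnesses, and cover $A'$, the $A'$-to-$c'$ paths, and the first halves of the $A'$-to-$B'$ paths via $c'$. No issues.
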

\begin{proof}
Given the Set Cover solution $S=\{s_1,\dots,s_k\}\subset A$, in $G'=(V',E')$ we pick the copies of the same $k$ nodes, $s_1',\dots,s_k'\in A' \subset V'$ 
together with $c'\in V'$ as the $(k+1)$ centers, and we now verify that they cover every node in $V'$ within distance $3\ell$ (see \cref{fig1:53lb}):

\begin{itemize}
    \item As before, the Set Cover solution guarantees that every $b'\in B'\subset V'$ can be reached from some center $s'_i$ within distance $2\ell$ via the path $(s_i',v_{s_i',b',1},\dots,v_{s_i',b',2\ell-1},b')$. 

 Then,  the second half of the $2\ell$-edge paths between $A'$ and $b'$, namely the nodes $v_{\cdot,b',j}$ for $j\ge \ell$, are reachable from $b'$ within distance $\le \ell$, and hence reachable from that center within distance $\le 2\ell+\ell=3\ell$.

 Similarly, all nodes $u_{b',\cdot}$ on the $\ell$-edge path attached to $b'$ are also reachable from the center within distance $3\ell$.
    \item Observe that the center $c'$ has distance $\le 3\ell$ to the first half of the $2\ell$-edge paths between $A',B'$, namely $v_{\cdot,\cdot,j}$ for $j\le  \ell$. 
    
    Similarly one can check that $c'$ covers 
    all the remaining nodes in $G'$ (including $c', w_{\cdot,\cdot}$, and nodes in $A'$) within distance $3\ell$.
   \end{itemize}
\end{proof}

\begin{lemma}
If the $(k+1)$-center instance $G'$ has a solution with radius $< 5\ell$, then the original Set Cover instance $G=(A,B,E)$ has a solution of size $(k+1)$.
\end{lemma}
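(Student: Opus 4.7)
The plan is to mimic the structure of \cref{lemma:simplelbNO} (the NO-case argument for the simple lower bound), adapted to the new gadget parameters. Let $\{\tilde s_1,\dots,\tilde s_{k+1}\}\subset V'$ be the given $(k+1)$-center solution of radius $<5\ell$. For each $\tilde s_i$ I will define a ``witness'' element $s_i\in A$: if $\tilde s_i$ lies in the base gadget $\Gad(A',B',c',2\ell)$ on one of the $2\ell$-paths incident to a specific $a'\in A'$ (namely $v_{a',\cdot,\cdot}$, $w_{a',\cdot}$, or $\tilde s_i=a'$ itself), I set $s_i=a$; if $\tilde s_i$ lies on the $\ell$-edge tail attached to some $b'\in B'$ (including $b'$ itself), I let $s_i$ be any neighbor of $b$ in the Set Cover instance; and if $\tilde s_i=c'$ I pick $s_i\in A$ arbitrarily. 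Then $\{s_1,\dots,s_{k+1}\}\subseteq A$ has size at most $k+1$, and the claim is that it is a valid Set Cover.

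To verify this, I will fix an arbitrary $b\in B$ and focus on the far endpoint $u_{b',\ell}$ of the length-$\ell$ tail attached to $b'$; by hypothesis some center $\tilde s_i$ satisfies $d(u_{b',\ell},\tilde s_i)<5\ell$. The main step is a careful enumeration of \emph{which} vertices of $G'$ lie strictly within distance $5\ell$ from $u_{b',\ell}$. Walking outward from $u_{b',\ell}$: reaching $b'$ costs exactly $\ell$; from $b'$ one can enter any $2\ell$-path $v_{a',b',\cdot}$ with $(a,b)\in E$ and proceed to $a'$ at total cost $3\ell$; from such $a'$ one can continue either $<2\ell$ further down $w_{a',\cdot}$ (but not reach $c'$, which sits at distance exactly $5\ell$), or $<2\ell$ further down some $v_{a',b'',\cdot}$ with $(a,b'')\in E$ (but not reach $b''$, which is at distance exactly $5\ell$). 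Thus the ball of radius $<5\ell$ around $u_{b',\ell}$ contains only: tail nodes attached to $b'$, the node $b'$ itself, and nodes lying on the $2\ell$-paths incident to $A'$-vertices $a'$ satisfying $(a,b)\in E$. In particular it contains no $c'$ and no $b''\in B'\setminus\{b'\}$.

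In each of these cases the definition of $s_i$ guarantees $(s_i,b)\in E$: on the tail of $b'$, $s_i$ was chosen as a neighbor of $b$; on a $v_{a',b',\cdot}$, $w_{a',\cdot}$, or at $a'$ itself, the path enumeration forces $(a,b)\in E$, so $s_i=a$ works; and on a $v_{a',b'',\cdot}$ with $b''\ne b'$, the fact that the ball reaches this node through $a'$ still forces $(a,b)\in E$, again giving $s_i=a$ as a neighbor of $b$. Therefore every $b\in B$ is covered, so $\{s_1,\dots,s_{k+1}\}$ is a size-$(k+1)$ Set Cover solution, completing the proof by contrapositive.

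The main obstacle I anticipate is handling the boundary case of centers of the form $\tilde s_i=v_{a',b'',j}$ with $b''\ne b'$: one must verify that such a center, while not ``naturally'' associated with $b'$, can only lie in the $<5\ell$-ball of $u_{b',\ell}$ when its associated $a$ is still a neighbor of $b$, which is exactly the fact that the enumeration above is forced to pass through an $a'\in A'$ with $(a,b)\in E$. Once this case is cleanly stated, the rest of the argument is a direct adaptation of \cref{lemma:simplelbNO}.
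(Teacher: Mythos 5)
Your proposal is correct and follows essentially the same route as the paper, which simply adapts \cref{lemma:simplelbNO}: the whole argument hinges, in both cases, on the fact that from $u_{b',\ell}$ the node $c'$ and every $b''\in B'\setminus\{b'\}$ are at distance $\ell+2\ell+2\ell=5\ell$, so any center within distance $<5\ell$ lies on the tail of $b'$, at $b'$, or on a path incident to some $a'$ with $(a,b)\in E$, forcing $(s_i,b)\in E$. Your explicit enumeration of the radius-$(<5\ell)$ ball (including the $v_{a',b'',\cdot}$ boundary case) is just a spelled-out version of the same case analysis the paper invokes by reference.
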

\begin{proofof}{Proof Sketch}
   The proof of this lemma uses exactly the same argument as \cref{lemma:simplelbNO} from the previous section. The current lemma holds for radius $<5\ell$ because the shortest distance from $u_{b',\ell}$ to node $c'$ or any other $b''\in B'\setminus \{b'\}$ is $\ell + 2\ell + 2\ell = 5\ell$.
\end{proofof}
By combining the previous two lemmas, we see that any  algorithm on graphs of $m\le O(n^{1+\gamma})$ edges that distinguishes between $(k+1)$-center radius $\le 3\ell$ and $\ge 5\ell$ can be used to decide whether the original Set Cover instance has a size-$k$ solution or has no solutions of size $\le k+1$, which requires $n^{k-o(1)} \ge m^{k/(1+\gamma) - o(1)}$ time.  Since $\gamma>0$ can be chosen arbitrarily small, we rule out $O(m^{k-\delta})$-time  algorithms for all constant $\delta >0$.
\end{proofof}

\subsection{Warm-up II: Recursively covering the paths}
\label{sec:betterlbwarmup2}
In \cref{sec:betterlbwarmup} we saw how using more centers can lead to higher inapproximability ratio. In this section we develop that idea further and prove the following result.

\begin{theorem}[special case of \cref{thm:fulllbrestated}]
\label{thm:betterlbwarmup2}
   Let $k\ge 2$, $\ell \ge 1$ be constant integers. Assuming SETH,
   distinguishing between radius $\leq 5\ell$ and radius $\geq 9\ell$ for $(2k+2)$-center cannot be done in $O(m^{k-\delta})$ time, for any constant $\delta>0$.

   As a corollary, for any constants $\eps\in (0,1), \beta\ge 0$, $(\frac{9}{5}-\eps,\beta)$-approximating $k$-center radius requires $m^{\lfloor k/2\rfloor -1-o(1)}$ time under SETH.
\end{theorem}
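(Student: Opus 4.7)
The plan is to extend the gadget construction of \cref{thm:betterlbwarmup1} by roughly doubling the key path lengths and adding a nested secondary structure, so that the YES case requires $k+2$ additional ``helper'' centers (beyond the $k$ copies of the Set Cover solution), raising the total center count to $2k+2$ and simultaneously widening the approximation gap from $(3\ell,5\ell)$ to $(5\ell,9\ell)$. Concretely, I would start from an $n$-node Gap Set Cover instance $(A,B,E)$ with $|B|\le n^\gamma$, and by \cref{cor:approxsetcoverlb} (with any constant gap $C>2+2/k$, e.g.\ $C=3$ for all $k\ge 2$) obtain the SETH-based hardness of distinguishing a size-$k$ solution vs.\ no-size-$(2k+2)$ solution in time $n^{k-o(1)}$.

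The reduction would construct $G'$ from a primary base gadget $\Gad(A',B',c',4\ell)$ with doubled paths, length-$\ell$ leaf paths attached to each $b'\in B'$, and an auxiliary ``middle-cover'' structure built on top of the midpoints $v_{a',b',2\ell}$ of the primary paths: a fresh $A$-copy $A''$ connected to the midpoints via short connector paths, plus a second hub $c''$ attached to $A''$ in a base-gadget-like fashion. The total edge count remains $n^{1+o(1)}$. In the YES case, given a Set Cover $S=\{s_1,\dots,s_k\}$, the $2k+2$ chosen centers would be the primary copies $\{s_i'\}\subset A'$, the secondary copies $\{s_i''\}\subset A''$, and the two hubs $c',c''$. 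The primary copies cover $B'$, the leaves, and the outer halves of primary paths within distance $5\ell$; the hub $c'$ covers $A'$, the $A'$-$c'$ paths, and the first $\ell$ positions of each primary path; the secondary copies together with $c''$ cover the middle portions $v_{a',b',j}$ for $\ell<j<3\ell$, leveraging the Set Cover property so that each $b\in B$ has some adjacent $s_i$ and every midpoint is reachable from $s_i''$ via its own connector and a transit through $b'$.

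In the NO case, I would mirror the argument of \cref{lemma:simplelbNO}: for each $b\in B$, the leaf $u_{b',\ell}$ forces some center to lie within distance $<9\ell$ of it, and by the calibrated distances this center must lie on a subgraph whose nearest element in $A'\cup A''$ (identified with the corresponding $A$-element) is some $a\in A$ with $(a,b)\in E$. Mapping each of the $2k+2$ centers to such an $A$-element yields a Set Cover of size $\le 2k+2$, contradicting the Gap Set Cover NO hypothesis. The corollary for $(\tfrac{9}{5}-\eps,\beta)$-approximation then follows by picking $\ell$ large enough that $(\tfrac{9}{5}-\eps)\cdot 5\ell+\beta<9\ell$, with the $\lfloor k/2\rfloor-1$ exponent arising from rescaling the Set Cover parameter to the $k$-center parameter $2k+2$.

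The main obstacle is the delicate calibration of the auxiliary structure's path lengths. In the YES case, $c''$ and the secondary copies $\{s_i''\}$ must be close enough to the midpoints so that the entire middle portion $v_{a',b',j}$ for $\ell<j<3\ell$ of each primary path is covered within radius $5\ell$; in the NO case, $c''$ and its neighboring auxiliary nodes must lie at graph-distance $\ge 9\ell$ from every leaf $u_{b',\ell}$, so that no center placed on the auxiliary structure can serve as a spurious ``shortcut'' covering a leaf while projecting to a non-adjacent $a\in A$. Balancing these opposing constraints by an appropriate choice of the connector length and the $A''$-to-$c''$ path length is the technical crux of the argument.
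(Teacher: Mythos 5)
There is a genuine gap, and it sits exactly where you flagged it: the ``calibration'' of your auxiliary structure is not just delicate, it is impossible in the architecture you propose. Your plan covers the middle portion $v_{a',b',j}$, $\ell<j<3\ell$, by routing from $s_i''$ through its \emph{own} connector to $v_{s_i',b',2\ell}$ and then transiting through $b'$; that route has length (connector length $x$) $+\,2\ell+(4\ell-j)$, which for $j$ just above $\ell$ forces $x\le 1$ if you want radius $5\ell$ in the YES case. But with connectors of length $O(1)$ attached directly from $A''$ to the midpoints, the NO case breaks: for any $a$ with $(a,b)\notin E$ but a length-$3$ bipartite path $a\!-\!b''\!-\!\tilde a\!-\!b$, one has $d(a'',\tilde a'')\le x+2\ell+2\ell+x$ (connector, midpoint-to-$b''$, $b''$-to-midpoint, connector) and $d(\tilde a'',u_{b',\ell})\le x+2\ell+\ell$, so $d(a'',u_{b',\ell})\le 7\ell+3x<9\ell$ for all $\ell\ge 2$. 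Hence a center placed in $A''$ covers leaves of $B$-elements at bipartite distance $3$, and the map ``center $\mapsto$ nearest $A$-element'' no longer certifies a set cover of size $2k+2$; the soundness direction fails, and no choice of $x$ (or of the $A''$--$c''$ length) fixes both directions simultaneously.

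The missing idea in the paper (\cref{sec:betterlbwarmup2}) is that the connector from each midpoint must lead to a node that \emph{remembers $b$ but forgets $a$} before fanning out to the second $A$-copy: the construction attaches a full second base gadget $\Gad(\bar A,\bar B,\bar c,2\ell)$ and joins $v_{a',b',2\ell}$ to $\bar b\in\bar B$ by a $2\ell$-edge path, so that $\bar s_i$ reaches every midpoint of a $b$-path within $2\ell+2\ell=4\ell$ (YES case), while $d(u_{b',\ell},\bar A)=7\ell$ and $d(\bar A,\bar B\cup\{\bar c\}\cup B'\cup\{c'\})=2\ell$, which is what makes every ``bad'' path from a leaf have length $\ge 9\ell$ (\cref{lem:badislong9ell}). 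Your $A''$-to-midpoint connectors skip the $\bar B$ layer and thereby collapse the $7\ell$ separation to about $3\ell$, which is precisely the shortcut above. Relatedly, your NO-case sketch (``mirror \cref{lemma:simplelbNO}'') understates what is needed: with two gadget layers the paper has to introduce the association/typing machinery of \cref{defn:typeassoc}, \cref{obs:fourobs} and the bad-path decomposition (\cref{lem:badislong9ell}, \cref{lem:nonbadisgood}) to track which $a$ and $b$ a center ``remembers''; the simple nearest-$A'$-node argument does not extend once centers may sit on connector paths or in the secondary gadget. Your use of Gap Set Cover (\cref{cor:approxsetcoverlb} with constant gap) and the derivation of the corollary by choosing $\ell$ with $(\tfrac95-\eps)5\ell+\beta<9\ell$ are both correct and match the paper.
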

Compared to the previous construction in \cref{sec:betterlbwarmup}, here we will increase the inapproximability ratio by further increasing the parameter of the base gadget graph (namely the distance between $A', B'$ and between $A',c$), which would again cause some of the $A'$-to-$B'$ paths to be uncovered. This time we have to add more centers in order to cover everything: we will create another copy of the base gadget graph $\Gad(\bar A, \bar B, \bar c,\bar L)$ (for some smaller parameter $\bar L$),  and additionally pick $k$ centers from $\bar{A}$ (and $\bar c$) as well. 
We will connect edges appropriately so that the originally uncovered part on the $A'$-to-$B'$ path between node $b'\in B'$ and any $a'\in A'$ will be covered by the new centers by going through $\bar{b}\in \bar{B}$.

We remark that our construction for proving \cref{thm:betterlbwarmup2} will be slightly redundant, and $(2k+2)$ in the theorem statement can in fact be improved to $(2k+1)$ (see \cref{footnote:cbar}). 
We present this slightly weaker version just to keep consistency with the full generalized construction to be described in the next section.

The rest of this section proves \cref{thm:betterlbwarmup2}. 
Suppose we are given a Set Cover instance $G=(A,B,E)$ where $|A|=\Theta(n)$ and $|B| = n^{\gamma}$ (where constant $\gamma>0$ can be chosen arbitrarily small), and want to decide whether it has a size-$k$ solution or has no solutions of size $\le 2k+2$. By \cref{cor:approxsetcoverlb}, this requires $n^{k-o(1)}$ time under SETH.

  We will create a $(2k+2)$-center instance $G'=(V',E')$  of $O(\ell|A||B|) \le n^{1+\gamma}$ edges,
 as follows. See \cref{fig2:95lb}. 

\begin{itemize}
    \item Add two base gadget graphs $\Gad(A',B',c',4\ell)$ and $\Gad(\bar A, \bar B,\bar c,2\ell)$ (\cref{defn:gadget}).
     \item For every $b'\in B'$, attach an $\ell$-edge path $(b',u_{b',1},u_{b',2},\dots,u_{b',\ell})$ to $b'$.
     \item  Then, for every $(a,b)\in E$, let $v_{a',b',2\ell}$ denote the middle node on the $4\ell$-edge path between $a'$ and $b'$ in $\Gad(A',B',c',4\ell)$, and we add a $2\ell$-edge path that connects  $v_{a',b',2\ell}$ and $\bar{b}$ (depicted as dashed blue paths in \cref{fig2:95lb}). 
      Here we stress that both $\bar{b}\in \bar B$ and $b'\in B'$ are copies of the same $b\in B$.
\end{itemize}

\begin{figure}[h]
  \centering
  \begin{minipage}{0.45\textwidth}
    \centering
    \definecolor{c909090}{RGB}{144,144,144}
\definecolor{c0000b9}{RGB}{0,0,185}

\def \globalscale {1.000000}

\vspace{-20pt} 
\begin{tikzpicture}[y=1cm, x=1cm, yscale=\globalscale,xscale=\globalscale, every node/.append style={scale=\globalscale}, inner sep=0pt, outer sep=0pt]
  \path[draw=black,line width=0.0265cm] (0.9821, 24.2838) circle (0.2827cm);

  \path[draw=black,line width=0.0265cm] (2.3702, 22.1263) circle (0.2827cm);

  \path[draw=black,line width=0.0265cm] (6.6297, 23.2782) circle (0.2827cm);

  \path[draw=black,line width=0.0265cm] (1.4016, 26.4841) circle (0.2827cm);

  \path[draw=black,line width=0.0265cm] (5.7199, 26.0515) circle (0.2827cm);

  \path[draw=black,line width=0.0265cm] (7.0169, 26.0564) circle (0.2827cm);

  \path[draw=black,line width=0.0265cm] (2.0345, 24.31) circle (0.2827cm);

  \path[draw=black,line width=0.0265cm] (3.7559, 24.3182) circle (0.2827cm);

  \path[draw=black,line width=0.0265cm] (5.3166, 24.583) circle (0.2827cm);

  \path[draw=black,line width=0.0265cm] (6.369, 24.6093) circle (0.2827cm);

  \path[draw=black,line width=0.0265cm] (7.7333, 24.6324) circle (0.2827cm);

  \path[draw=black,line width=0.0265cm] (1.4015, 27.4056) circle (0.2827cm);

  \path[draw=black,line width=0.0265cm] (3.1343, 27.4504) circle (0.2827cm);

  \path[draw=black,line width=0.0265cm] (3.1294, 26.4593) circle (0.2827cm);

  \path[draw=c909090,line width=0.0233cm,dash pattern=on 0.0466cm off 0.0466cm] (2.2323, 26.4785) ellipse (1.5637cm and 0.4702cm);

  \path[draw=c909090,line width=0.0233cm,dash pattern=on 0.0466cm off 0.0466cm] (6.3784, 26.0376) ellipse (1.5637cm and 0.4702cm);

  \path[draw=c909090,line width=0.0229cm,dash pattern=on 0.0458cm off 0.0458cm] (2.3212, 24.2986) ellipse (2.1296cm and 0.5375cm);

  \path[draw=c909090,line width=0.0214cm,dash pattern=on 0.0428cm off 0.0428cm] (6.5949, 24.5946) ellipse (1.8551cm and 0.5382cm);

  \path[draw=black,line width=0.1cm] (1.4136, 26.2181) -- (1.0267, 24.5516);

  \path[draw=black,line width=0.0811cm] (1.3987, 26.7687) -- (1.3987, 27.1407);

  \path[draw=black,line width=0.0826cm] (3.1396, 26.7538) -- (3.1396, 27.1704);

  \path[draw=black,line width=0.1cm] (3.6604, 24.5665) -- (3.214, 26.2181);

  \path[draw=black,line width=0.1cm] (1.0713, 24.0308) -- (2.1129, 22.3048);

  \path[draw=black,line width=0.1cm] (3.6604, 24.0457) -- (2.5296, 22.3792);

  \path[draw=black,line width=0.1cm] (1.9939, 24.0308) -- (2.2766, 22.394);

  \path[draw=black,line width=0.1cm] (5.6394, 25.8015) -- (5.3865, 24.8492);

  \path[draw=black,line width=0.1cm] (7.1572, 25.8164) -- (7.5887, 24.879);

  \path[draw=black,line width=0.1cm] (5.4609, 24.3433) -- (6.3685, 23.4059);

  \path[draw=black,line width=0.1cm] (6.4727, 24.3433) -- (6.6215, 23.5398);

  \path[draw=black,line width=0.1cm] (7.6184, 24.4177) -- (6.8744, 23.4951);

  \path[draw=c0000b9,line width=0.1cm,dash pattern=on 0.4cm off 0.1cm] (6.7405, 25.9949).. controls (5.5948, 25.2063) and (3.467, 25.37) .. (3.467, 25.37);

  \path[draw=c0000b9,line width=0.1cm,dash pattern=on 0.4cm off 0.1cm] (5.5799, 26.2925).. controls (4.7169, 29.3131) and (1.2053, 25.3254) .. (1.2053, 25.3254);

  \path[draw=black,line width=0.1cm] (1.5475, 26.2628) -- (3.4372, 24.4177);

  \path[draw=black,line width=0.1cm] (5.9668, 25.8908) -- (7.4548, 24.745);

  \path[draw=c0000b9,line width=0.1cm,dash pattern=on 0.4cm off 0.1cm] (5.4311, 26.114).. controls (2.7379, 25.9652) and (2.5147, 25.3551) .. (2.5147, 25.3551);

  \path[draw=black,fill=white,line width=0.0265cm] (1.2201, 25.3402) circle (0.1637cm);

  \path[draw=black,fill=white,line width=0.0265cm] (2.4891, 25.3354) circle (0.1637cm);

  \path[draw=black,fill=white,line width=0.0265cm] (3.4401, 25.375) circle (0.1637cm);

\node[draw=none] at (1.3997, 26.4604) {$b'$};

\node[draw=none] at (2.2897, 26.4604) {$\dots$};

\node[draw=none] at (2.86435, 24.28375) {$\cdots$};

\node[draw=none] at (7.00835, 24.6037) {$\cdots$};

\node[draw=none] at (6.33875, 26.12145) {$\cdots$};

\node[draw=none] at (1.13485, 26.9491) {$\ell$};

\node[draw=none] at (1.00435, 25.8833) {$2\ell$};

\node[draw=none] at (0.77375, 24.9236) {$2\ell$};

\node[draw=none] at (0.97375, 24.2436) {$a'$};

\node[draw=none] at (3.73375, 24.2936) {$a'_{\star}$};

\node[draw=none] at (1.2885, 23.019) {$4\ell$};

\node[draw=none] at (2.3538, 22.18055) {$c'$};

\node[draw=none] at (6.6066, 23.3247) {$\bar{c}$};

\node[draw=none] at (7.6419, 25.4146) {$2\ell$};

\node[draw=none] at (7.44495, 23.73325) {$2\ell$};

\node[draw=none] at (4.2705, 27.62425) {$2\ell$};

\node[draw=none] at (4.26305, 26.22555) {$2\ell$};

\node[draw=none] at (4.4267, 25.1394) {$2\ell$};

\node[draw=none] at (8.77905, 24.5516) {$\bar{A}$};

\node[draw=none] at (8.36905, 26.0516) {$\bar{B}$};

\node[draw=none] at (0.35715, 26.5083) {$B'$};

\node[draw=none] at (-0.0372, 24.29865) {$A'$};

\node[draw=none] at (1.42845, 27.95315) {$u_{b',\ell}$};

\node[draw=none] at (5.7138, 26.0619) {$\bar{b}$};

\node[draw=none] at (5.3344, 24.58135) {$\bar{a}$};

\node[draw=none] at (7.7744, 24.58135) {$\bar{a}_{\star}$};








\node[draw=none] at (0.3847, 25.37745) {$v_{a',b',2\ell}$};

\end{tikzpicture}

\vspace{-5pt} 
    \caption{The $(2k+2)$-center instance $G'=(V',E')$ constructed in the proof of \cref{thm:betterlbwarmup2} (distinguishing between radius $\le 5\ell$ or $\ge 9\ell$). For example, the $4\ell$-edge path between $a'$ and $b'$ can be covered within radius $5\ell$ by centers $a'_{\star}, \bar{a}_{\star}, c'$.
    }
    \label{fig2:95lb}
    
  \end{minipage}
  \hfill 
  \begin{minipage}{0.45\textwidth}
    \centering

    \definecolor{c0000b9}{RGB}{0,0,185}
\definecolor{cffffab}{RGB}{255,255,171}

\def \globalscale {1.000000}

\vspace{+15pt} 
\begin{tikzpicture}[y=1cm, x=1cm, yscale=\globalscale,xscale=\globalscale, every node/.append style={scale=\globalscale}, inner sep=0pt, outer sep=0pt]

  \path[draw=black,line width=0.0265cm] (1.0342, 29.1508) circle (0.2827cm);

  \path[draw=black,line width=0.0811cm] (1.0467, 28.5043) -- (1.0467, 28.8763);

  \path[draw=black,line width=0.0811cm] (1.0488, 28.5043) -- (1.0488, 28.8763);

  \path[draw=black,line width=0.0265cm] (1.0396, 28.2364) circle (0.2827cm);

  \path[draw=black,line width=0.0265cm] (2.6549, 27.2311) circle (0.2827cm);

  \path[draw=black,line width=0.0265cm] (2.647, 26.1116) circle (0.2827cm);

  \path[draw=black,line width=0.0265cm] (2.6518, 25.0186) circle (0.2827cm);

  \path[draw=black,line width=0.0265cm] (1.0396, 26.0805) circle (0.2827cm);

  \path[draw=black,line width=0.0265cm] (1.0578, 23.9275) circle (0.2827cm);

  \path[draw=black,fill=white,line width=0.1cm] (1.0416, 27.9591) -- (1.0416, 26.3521);

  \path[draw=black,fill=white,line width=0.1cm] (1.0499, 25.8291) -- (1.0499, 24.222);

  \path[draw=black,fill=white,line width=0.0265cm] (1.0359, 27.1184) circle (0.1637cm);

  \path[draw=black,fill=white,line width=0.1cm] (2.6337, 26.9472) -- (2.6337, 26.3967);

  \path[draw=black,fill=white,line width=0.1cm] (2.6486, 25.8313) -- (2.6337, 25.2956);

    \path[draw=c0000b9,fill=white,line width=0.1cm,dash pattern=on 0.2cm off 0.1cm] (1.1904, 27.1407) -- (2.3659, 27.1407);

\node[draw=none] at (1.04155, 29.14945) {$u_{\cdot,\ell}$};

\node[draw=none] at (0.5952, 28.6659) {$\ell$};

\node[draw=none] at (1.0565, 28.2269) {$B'$};

\node[draw=none] at (0.5952, 27.54245) {$2\ell$};

\node[draw=none] at (0.5952, 26.61245) {$2\ell$};

\node[draw=none] at (0.5952, 24.9831) {$4\ell$};

\node[draw=none] at (1.07135, 26.08425) {$A'$};

\node[draw=none] at (1.0565, 23.9192) {$c'$};

\node[draw=none] at (1.7409, 27.3862) {$2\ell$};

\node[draw=none] at (2.6486, 27.2374) {$\bar B$};

\node[draw=none] at (3.17685, 26.672) {$2\ell$};

\node[draw=none] at (2.64115, 26.10655) {$\bar A$};

\node[draw=none] at (3.17685, 25.51875) {$2\ell$};

\node[draw=none] at (2.6635, 24.998) {$\bar c$};
\end{tikzpicture}
    \caption{The ``skeleton graph'' of the graph $G'=(V',E')$ in \cref{fig2:95lb}. Here, each super-node represents a subset of nodes in $G'$, and the lengths denote shortest distances between subsets of nodes.}
    \label{fig3:95simple}
    
  \end{minipage}
\end{figure}

\begin{lemma}
\label{lemma:95yes}
    If the original Set Cover instance $G=(A,B,E)$ has a solution $S\subseteq A$ of size $|S|= k$, then the $(2k+2)$-center instance $G'$ has a solution with radius $5\ell$.
\end{lemma}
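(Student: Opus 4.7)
The plan is to take $2k+2$ centers: the copies $s_1',\dots,s_k'\in A'$ of the size-$k$ Set Cover solution $S=\{s_1,\dots,s_k\}$, their corresponding copies $\bar s_1,\dots,\bar s_k\in \bar A$ in the smaller base gadget, together with the two hub nodes $c'$ and $\bar c$. I will then verify by case analysis that every node of $G'$ lies within distance $5\ell$ of at least one of these $2k+2$ centers.

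The first class of cases consists of the ``easy'' parts of $G'$ that are already close to one of the two hub centers. Every node in $A'\cup\{c'\}$, together with each internal node $w_{a',j}$ on the $4\ell$-edge paths from $A'$ to $c'$, lies within distance $\le 4\ell$ of $c'$. Symmetrically, every node in $\bar A\cup\{\bar c\}$, each $w_{\bar a,j}$, and each internal node $v_{\bar a,\bar b,j}$ on the $2\ell$-edge paths inside the smaller gadget lies within distance $\le 4\ell$ of $\bar c$. For nodes near $B'$ and $\bar B$, the Set Cover property supplies, for each $b\in B$, some $s_i\in S$ with $(s_i,b)\in E$: then $b'\in B'$ lies within distance $4\ell$ of $s_i'$, each attached node $u_{b',j}$ with $j\in[\ell]$ is within distance $4\ell+j\le 5\ell$ of $s_i'$, and $\bar b\in \bar B$ is within distance $2\ell$ of $\bar s_i$.

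The main case, and the heart of the verification, is the $4\ell$-edge path $(a',v_{a',b',1},\dots,v_{a',b',4\ell-1},b')$ between arbitrary $a'\in A'$ and $b'\in B'$ in the large base gadget, together with the $2\ell$-edge bridge path from its middle node $v_{a',b',2\ell}$ to $\bar b$. The strategy, illustrated by the three highlighted centers $a'_{\star},\bar a_{\star},c'$ in \cref{fig2:95lb}, is to cover this path using $c'$, $\bar s_i$, and $s_i'$, where $s_i\in S$ satisfies $(s_i,b)\in E$. Specifically, I will split the path at the positions $j\in\{\ell,2\ell,3\ell\}$ into four segments of length $\ell$ and cover them, from the $a'$-end to the $b'$-end, by $c'$ (through $a'$, at distance $j+4\ell$), by $\bar s_i$ (through $v_{a',b',2\ell}$ and the bridge to $\bar b$, at distance $|j-2\ell|+4\ell$), and by $s_i'$ (through $b'$, at distance $(4\ell-j)+4\ell$). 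A direct check shows each of these three bounds is at most $5\ell$ on its respective segment, with the tight cases occurring at $j=\ell,2\ell,3\ell$; the bridge path itself is within distance $\le 4\ell$ of $\bar s_i$.

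The main obstacle is simply to organize the enumeration so that no type of node in $G'$ is missed, since $G'$ glues several kinds of paths together at shared endpoints; once the centers are fixed, each individual distance bound follows from a routine shortest-path calculation along the construction.
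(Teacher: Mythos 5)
Your proposal is correct and follows essentially the same route as the paper: the same $2k+2$ centers ($s_1',\dots,s_k'$, $\bar s_1,\dots,\bar s_k$, $c'$, $\bar c$), with the $4\ell$-edge $A'$-to-$B'$ paths split into a first quarter covered by $c'$, a middle half covered by $\bar s_i$ via the bridge through $\bar b$ (distance $2\ell+2\ell+|j-2\ell|\le 5\ell$), and a last quarter covered by $s_i'$ through $b'$, plus $\bar c$ handling the small gadget. (Only a cosmetic slip: the bound is tight at $j=\ell$ and $j=3\ell$, not at $j=2\ell$, but this does not affect the argument.)
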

\begin{proof}
Given the Set Cover solution $S=\{s_1,\dots,s_k\}\subset A$, in $G'=(V',E')$ we pick both copies of the same $k$ nodes, $s_1',\dots,s_k'\in A', \bar{s_1},\dots,\bar{s_k}\in \bar{A}$, and $c',\bar c$ as the $(2k+2)$ centers, and we now verify that they cover every node in $V'$ within distance $5\ell$:

\begin{itemize}
    \item Similarly to \cref{lemma:warmup1claimyes}  from the previous section, here we see that within $5\ell$ radius the centers $s_1',\dots,s_k'\in A'$ together cover all the $\ell$-edge paths attached to $B'$, as well as the last $1/4$ fraction of every $4\ell$-edge path from $A'$ to $B'$.
     Also, within $5\ell$ radius, $c'$ covers all the $4\ell$-edge paths between $c'$ and $A'$, as well as the first $1/4$ fraction of every 
    $4\ell$-edge path from $A'$ to $B'$.
    \item This is the key part in our construction: For every $4\ell$-edge path between $a'\in A$ and $b'\in B'$, it remains to verify that its middle $1/2$ fraction (namely the nodes $v_{a',b',j}$ where $\ell<j<3\ell$) are also covered within $5\ell$ radius, for which it is sufficient to show that the middle node $v_{a',b',2\ell}$ can reach some center within $4\ell$ distance. 
    To show this, suppose $b$ is covered by $s_i\in S$ in the Set Cover solution, and note that the center $\bar{s_i}\in \bar A$ can reach the middle node via a $4\ell$-edge path $\bar{s_i} \overset{2\ell}{\leadsto} \bar{b} \overset{2\ell}{\leadsto} v_{a',b',2\ell}$. (This also shows that the intermediate nodes on the path $\bar{b} \overset{2\ell}{\leadsto} v_{a',b',2\ell}$ are covered.)
    \item Then, observe that the remaining nodes in the graph, namely the nodes on the $2\ell$-edge paths between $\bar A,\bar B$ and between $\bar c, \bar A$, are covered by the center $\bar c$ within $4\ell<5\ell$ distance.\footnote{Alternatively, one can show that they are covered by the centers $\bar s_1,\dots,\bar s_k$ within $4\ell$ distance. Hence it is actually not necessary to include $\bar{c}$ as a center. \label{footnote:cbar}}
\end{itemize}
\end{proof}

Now we proceed to the NO case. For the sake of analysis, we introduce a few terminologies.  We \emph{associate} each node in the $(2k+2)$-center instance $G'$ to at most one node in $A$, and to at most one node in $B$ in the most natural way. Also, some of the nodes in $G'$ are called \emph{type-$A$} (or \emph{type-$B$, type-$C$}). Their precise definitions are given as follows (we state the definitions with a little bit of generality so that it can be reused in the next section where $G'$ may contain even more copies of the base gadget graph):

\begin{definition}
\label{defn:typeassoc}
   In each copy of the base gadget graph $\Gad(\hat A,\hat B,\hat c,L)$ (\cref{defn:gadget}):
   \begin{itemize}
       \item Each $\hat a \in \hat A$ is associated to $a\in A$ (recall that $\hat a$ is a copy of $a$), and we say $\hat a$ is a type-$A$ node.
       \item Similarly, each $\hat b \in \hat B$ is associated to $b\in B$, and we say $\hat b$ is a type-$B$ node.
       \item $\hat c$ is not associated to any node. We say $\hat c$ is a type-$C$ node.
       \item For each $L$-edge path between $\hat a\in \hat A$ and $\hat b\in \hat B$, all internal nodes on this path (namely, $v_{\hat a,\hat b,j}, j\in [L-1]$) are associated to both $a\in A$ and $b\in B$. 
       \item For each $L$-edge path between $\hat a\in \hat A$ and $\hat c$, all internal nodes on this path (namely, $w_{\hat a,j}, j\in [L-1]$) are associated to $a\in A$. 
   \end{itemize}
   Then, in the main construction of $G'=(V',E')$:
   \begin{itemize}
   \item For each $\ell$-edge path $(b',u_{b',1},u_{b',2},\dots,u_{b',\ell})$  attached to $b'\in B'$, all nodes on this path are associated to $b\in B$.
   \item Whenever we add a path from some 
   $v_{\hat a,\hat b,j}$ inside a base gadget graph $\Gad(\hat A, \hat B, \hat c,L)$ to another node outside this base gadget graph, we associate all internal nodes on this added path to both $a\in A$ and $b\in B$. (In the example in \cref{fig2:95lb}, these paths are depicted as dashed blue paths.)
   \end{itemize}
\end{definition}

Now we observe a few simple but useful properties of the constructed instance $G'=(V',E')$ and the way we associated nodes of $V'$ to nodes of $A,B$:
 \begin{obs}
    \label{obs:easy} 
    If a node in $G'$ is associated to both $a\in A$ and to $b\in B$, then $(a,b)\in E$ in the original Set Cover instance.
 \end{obs}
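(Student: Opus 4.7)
The plan is a straightforward case analysis guided by \cref{defn:typeassoc}. First I would enumerate all the rules that produce an association to some $a\in A$ and, separately, all the rules that produce an association to some $b\in B$, then intersect the two lists to determine which nodes of $V'$ can possibly receive both kinds of association.

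On the $A$-side, a node gets associated to some $a\in A$ only if it is an $\hat a\in \hat A$, an internal node $v_{\hat a,\hat b,j}$ on an $\hat a$-to-$\hat b$ path inside some base gadget $\Gad(\hat A,\hat B,\hat c,L)$, an internal node $w_{\hat a,j}$ on an $\hat a$-to-$\hat c$ path, or an internal node on a path added outward from some $v_{\hat a,\hat b,j}$. On the $B$-side, a node gets associated to some $b\in B$ only if it is an $\hat b\in \hat B$, an internal node $v_{\hat a,\hat b,j}$ on an $\hat a$-to-$\hat b$ path, an internal node $u_{b',j}$ on a path attached to some $b'\in B'$, or an internal node on a path added outward from some $v_{\hat a,\hat b,j}$. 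Intersecting the two lists, only two kinds of nodes can receive both associations: (i) an internal node $v_{\hat a,\hat b,j}$ on an $\hat a$-to-$\hat b$ path inside some base gadget graph, and (ii) an internal node on a path added outward from such a $v_{\hat a,\hat b,j}$.

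I would then dispatch the two cases. For case (i), \cref{defn:gadget} creates the $L$-edge path between $\hat a$ and $\hat b$ \emph{only} when $(a,b)\in E$, so we immediately conclude $(a,b)\in E$. For case (ii), \cref{defn:typeassoc} stipulates that the associations of an internal node on an added path are exactly those of the source $v_{\hat a,\hat b,j}$, which is of type (i), so again $(a,b)\in E$. This exhausts all possibilities and proves the observation.

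There is no real obstacle; the only thing to be careful about is that the enumeration of association-producing rules in \cref{defn:typeassoc} is genuinely exhaustive. I would verify this by walking through each bullet of the definition and noting explicitly which of $A,B$ it contributes associations to, so that no hidden case of double association is missed.
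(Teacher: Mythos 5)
Your proof is correct and takes essentially the same approach as the paper: the paper gives no explicit argument, asserting only that the observation ``can be directly verified by carefully examining'' \cref{defn:gadget}, the construction of $G'$, and \cref{defn:typeassoc}, and your exhaustive enumeration of which rules create $A$-associations and $B$-associations, followed by the two-case dispatch (junction nodes $v_{\hat a,\hat b,j}$, whose path exists only when $(a,b)\in E$, and internal nodes of added paths, which inherit the same pair), is precisely that verification made explicit. No gaps.
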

 \begin{obs}
 \label{obs:fourobs}
 For every edge $(x,y)$ in $G'$:
     \begin{enumerate}
         \item (``Remember $a$''): If $x$ is associated to some $a_x\in A$ and $y$ is associated to some $a_y\in A$, then $a_x=a_y$.
         \label{item:remembera}
         \item (``Forget $a$''): If $x$ is associated to some $a_x\in A$, and $y$ is not associated to any $a\in A$, then $y$ is a type-$B$ or type-$C$ node.
         \label{item:forgeta}
         \item (``Remember $b$''): If $x$ is associated to some $b_x\in B$ and $y$ is associated to some $b_y\in B$, then $b_x=b_y$.
         \label{item:rememberb}
         \item (``Forget $b$ but still remember $a$''): If $x$ is associated to some $b_x\in B$, and $y$ is not associated to any $b\in B$, then $y$ is a type-$A$ node, and $y$ is associated to an $a_y\in A$ such that $(a_y,b_x)\in E$ in the Set Cover instance.
         \label{item:forgetb}
     \end{enumerate}
 \end{obs}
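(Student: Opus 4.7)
The plan is to verify all four clauses by a unified case analysis on the edges of $G'$. By inspection of \cref{defn:gadget} and the construction of $G'$ in this section, every edge of $G'$ lies on one of four kinds of paths: (a) an $L$-edge path between $\hat a \in \hat A$ and $\hat b \in \hat B$ inside a base gadget graph; (b) an $L$-edge path between $\hat a \in \hat A$ and $\hat c$ inside a base gadget graph; (c) an $\ell$-edge path attached to some $b'\in B'$; or (d) one of the added $2\ell$-edge paths connecting a middle node $v_{a',b',2\ell}$ of a $4\ell$-path in $\Gad(A',B',c',4\ell)$ to the corresponding $\bar b \in \bar B$. For any such edge, the $A$- and $B$-associations of both endpoints are fully pinned down by \cref{defn:typeassoc}, so each of the four clauses reduces to a bounded check.

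I would first dispense with the interior edges of each path: along any single path the associated $a\in A$ (if any) and the associated $b\in B$ (if any) are constant, so clauses (\ref{item:remembera}) and (\ref{item:rememberb}) hold trivially and clauses (\ref{item:forgeta}) and (\ref{item:forgetb}) are vacuous. The substance of the proof lies at the path boundaries, where an endpoint is a ``named'' vertex ($\hat a$, $\hat b$, $\hat c$, $b'$, $v_{a',b',2\ell}$, or $\bar b$) whose association may be strictly narrower than that of the adjacent interior node. For clause (\ref{item:remembera}), at each such boundary both endpoints that carry an $A$-association refer to the same $a$ by construction; for clause (\ref{item:rememberb}), both endpoints carrying a $B$-association refer to the same $b$. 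For clause (\ref{item:forgeta}), the endpoints that ``lose'' an $A$-association are exactly $\hat b\in\hat B$ (type-$B$), $\hat c$ (type-$C$), and $\bar b\in \bar B$ (type-$B$), each matching the claimed conclusion.

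The crux is clause (\ref{item:forgetb}). Scanning the edge list, the only edges where one endpoint carries a $B$-association and the other does not are the boundary edges $(\hat a, v_{\hat a, \hat b,1})$ of an $\hat A$-to-$\hat b$ path within some base gadget: here $y=\hat a$ is a type-$A$ node associated to some $a\in A$, while $x=v_{\hat a,\hat b,1}$ is associated to both $a$ and the relevant $b\in B$. The required membership $(a_y, b_x)\in E$ is then immediate from \cref{obs:easy} applied to the richer endpoint $x$. The main obstacle I expect is making the enumeration of edge types watertight, in particular around the added $2\ell$-paths that cross between the two base gadgets (these are specific to the construction of this section and not part of \cref{defn:gadget}), and handling each named vertex carefully so that both of its incident ``flavors'' of neighbors are considered. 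Once this bookkeeping is done, all four clauses follow at once.
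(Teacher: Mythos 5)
Your proposal is correct and matches the paper's treatment: the paper simply asserts that the observation "can be directly verified by carefully examining" \cref{defn:gadget}, the construction of $G'$, and \cref{defn:typeassoc}, and your edge-type enumeration (gadget-internal $\hat A$--$\hat B$ and $\hat A$--$\hat c$ paths, the $\ell$-paths attached to $B'$, and the added $2\ell$-paths to $\bar B$) is exactly that verification spelled out, including the key boundary case $(\hat a, v_{\hat a,\hat b,1})$ for clause (\ref{item:forgetb}) handled via \cref{obs:easy}.
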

 Both observations can be directly verified by carefully examining \cref{defn:gadget}, our definition of $G'=(V',E')$ (see~\cref{fig2:95lb}), and \cref{defn:typeassoc}. 
 
Now we can prove the lemma for the NO case:
\begin{lemma}
\label{lemma:95no}
If the $(2k+2)$-center instance $G'$ has a solution with radius $< 9\ell$, then the original Set Cover instance $G=(A,B,E)$ has a solution of size $2k+2$.
\end{lemma}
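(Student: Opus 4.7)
The plan is to mirror the argument used for the NO-case of the simple lower bound (\cref{lemma:simplelbNO}), adapted to the current construction with two base gadget graphs and the dashed connector paths. For each center $\tilde s_i$ of the $(2k+2)$-center solution, I will define a representative $s_i\in A$ so that $\{s_1,\dots,s_{2k+2}\}$ is a Set Cover of $G=(A,B,E)$. Following the earlier template: if $\tilde s_i$ is associated (in the sense of \cref{defn:typeassoc}) to some $a\in A$, set $s_i=a$; else if $\tilde s_i$ is associated only to some $b\in B$, pick any Set-Cover neighbor of $b$ as $s_i$ (one exists by the standing assumption on $G$); else $\tilde s_i$ is a type-$C$ node ($c'$ or $\bar c$) and $s_i$ is picked arbitrarily (this case will be ruled out).

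The heart of the proof is a reachability claim: for every $b\in B$ and every $x\in V'$ with $d(u_{b',\ell},x)<9\ell$, if $x$ is associated to some $a_x\in A$ then $(a_x,b)\in E$. I will prove it by walking along a shortest path from $u_{b',\ell}$ to $x$ and repeatedly invoking \cref{obs:fourobs}. While the walk remains associated to $b$, item \ref{item:rememberb} prevents switching to a different $b$; the first edge leaving the $b$-region must land on a type-$A$ node $\hat a$ with $(a,b)\in E$ (item \ref{item:forgetb}); once on a type-$A$ node, item \ref{item:remembera} forces the associated $a$ to remain fixed until the walk either re-enters a $b$-region (necessarily with the same $a$, by item \ref{item:forgetb} applied in the opposite direction) or crosses a type-$C$ node. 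So the only way to change the associated $a$ is to pass through a type-$C$ node ($c'$ or $\bar c$) or to relay through a type-$B$ node associated to some $b''\neq b$. A direct distance computation on the skeleton (\cref{fig3:95simple}) rules both out within $<9\ell$: from $u_{b',\ell}$, the shortest-path distances to $c'$, to $\bar c$, and to every type-$B$ node associated to some $b''\neq b$ are all exactly $9\ell$ (e.g., the route $u_{b',\ell}\to b'\to v_{a',b',2\ell}\to \bar b\to \bar a\to \bar c$ has length $\ell+2\ell+2\ell+2\ell+2\ell=9\ell$).

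Given the reachability claim, a short case analysis finishes the proof. Fix any $b\in B$ and let $\tilde s_i$ be a center with $d(\tilde s_i, u_{b',\ell})<9\ell$. If $\tilde s_i$ is associated to some $a\in A$, the claim directly gives $(s_i,b)=(a,b)\in E$. If $\tilde s_i$ is associated only to some $b_{\tilde s_i}\in B$, then $\tilde s_i$ is either a type-$B$ node or lies on an $\ell$-path attached to some node in $B'$, and the same $<9\ell$ distance argument forces $b_{\tilde s_i}=b$; by the definition of $s_i$ we obtain $(s_i,b)\in E$. The type-$C$ case is impossible since $d(c',u_{b',\ell})=d(\bar c,u_{b',\ell})=9\ell$. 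Hence every $b\in B$ has a neighbor in $\{s_1,\dots,s_{2k+2}\}$, which is therefore a Set Cover.

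The main obstacle I anticipate is cleanly organizing the many shortest-path routes from $u_{b',\ell}$ through the branched two-gadget graph, especially the fact that the dashed $2\ell$-paths create several alternate routes into $\bar A,\bar B,\bar c$. Reducing to the skeleton in \cref{fig3:95simple} keeps the distance bookkeeping at the level of super-nodes and should make all the $9\ell$ bounds routine to verify.
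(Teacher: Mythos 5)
Your proposal is correct and follows essentially the paper's own argument: the same definition of the representatives $s_i$ (\cref{defn:si}), the same walk along a shortest path driven by \cref{obs:easy} and \cref{obs:fourobs}, and the same skeleton-level distance bookkeeping from $u_{b',\ell}$. The only cosmetic difference is that you bound the distance from $u_{b',\ell}$ directly to the forbidden nodes ($c'$, $\bar c$, and type-$B$ nodes associated to $b''\neq b$, all at distance $\ge 9\ell$), whereas the paper packages the same computation as the ``bad path'' lemma (\cref{lem:badislong9ell}); both reduce to the identical $\ge 9\ell$ bounds, so this is a repackaging rather than a different route.
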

\begin{proof}

    Let $\{\tilde s_1,\dots,\tilde s_{2k+2}\}\subset V'$ denote the $(2k+2)$-center solution of $G'=(V',E')$ of radius $< 9\ell$.   We define $\{s_1,\dots,s_{2k+2}\}\subset A$ as follows.
    \begin{definition}
    \label{defn:si}
    Given a center $\tilde s_i \in V'$, we define $s_i\in A$ as follows:
    \begin{itemize}
        \item If $\tilde s_i$ is associated to some $a\in A$ (see \cref{defn:typeassoc}), then let $s_i=a$.
        \item If $\tilde s_i$ is associated to $b\in B$ but not to  any node in $A$, then take an arbitrary $a\in A$ such that $(a,b)\in E$ (which exists by our initial assumption of the Set Cover instance) and let $s_i = a$.
        \item If $\tilde s_i$ is not associated to any node in $A$ or $B$, then just let $s_i$ be an arbitrary node in $A$.
    \end{itemize}
    \end{definition}
 We will show that $\{s_1,\dots,s_{2k+2}\}\subset A$ is a Set Cover solution.
Fix any $b\in B$, and consider $u_{b',\ell}$ (the last node on the $\ell$-edge path attached to $b'\in B'$). There is a center $\tilde s_i$ at distance $< 9\ell$ from $u_{b',\ell}$. It remains to prove that $(s_i,b)\in E$ in the Set Cover instance.

Let $P$ denote the shortest path from $u_{b',\ell}$ to the center $\tilde s_i$ of length $|P|< 9\ell$. 
We make the following definition.
\begin{definition}
\label{defn:badpath}
We say a path $P = (p_0,p_1,\dots,p_{|P|})$ is \emph{bad}, if there exists $0\le x<y\le |P|$ such that $p_x$ is a type-$A$ node, and $p_y$ is a type-$B$ or type-$C$ node.
\end{definition}

Then we will prove the following two lemmas, which together with $|P|<9\ell$ immediately imply $(s_i,b)\in E$ as desired, finishing the proof of \cref{lemma:95no}.
 \begin{lemma}
 \label{lem:badislong9ell}
    In $G'$, any bad path $P$ starting from $u_{b',\ell}$ must have length $\ge 9\ell$. 
 \end{lemma}
\begin{lemma}
\label{lem:nonbadisgood}
    If there is a path $P$ from $u_{b',\ell}$ to $\tilde s_i$ that is not bad, then $s_i$ (as defined in \cref{defn:si}) satisfies $(s_i,b)\in E$ in the Set Cover instance.
\end{lemma}

 \begin{proofof}{Proof of \cref{lem:badislong9ell}}
In our construction of $G'$, type-$A$ nodes are $A'\cup \bar A$, type-$B$ nodes are $B'\cup \bar B$, and type-$C$ nodes are $\{c',\bar c\}$. 
In a bad path $P= (p_0,\dots,p_{|P|})$ where $p_0 = u_{b',\ell}$, suppose $p_x$ is a type-$A$ node and $p_y$ is a type-$B$ or type-$C$ node ($0\le x<y\le |P|$).
 We now use a case distinction. (The distance lower bounds we are using here can be seen more transparently from the ``skeleton graph'' of our construction depicted in \cref{fig3:95simple}.)
\begin{itemize}
    \item Case $p_x\in A'$: Observe $x\ge d_{G'}(u_{b',\ell}, A') = 5\ell$, and $y-x \ge d_{G'}(A', \{c',\bar c\}\cup B'\cup \bar B) = 4\ell$.
    \item Case $p_x \in \bar A$: Observe $x\ge d_{G'}(u_{b',\ell}, \bar A) = 7\ell$, and $y-x \ge d_{G'}(\bar A, \{c',\bar c\}\cup B'\cup \bar B) = 2\ell$.
\end{itemize}
In both cases we have $|P|\ge y= x+(y-x) \ge 9\ell$.
 \end{proofof}

\begin{proofof}{Proof of \cref{lem:nonbadisgood}}
 Let $P=(p_0,p_1,\dots,p_{|P|})$. 
 Pick the smallest $x\in [0, |P|]$  such that $p_x$ is not associated to any node in $B$ (if none exists, let $x=|P|+1$).
 Since $p_0=u_{b',\ell}$ is associated to $b$, by repeatedly applying \cref{item:rememberb} of \cref{obs:fourobs} we know $p_{x-1}$ is also associated to $b$. If $x=|P|+1$, then this means $p_{|P|} = \tilde s_i$ is associated to $b$, and then from \cref{defn:si} and \cref{obs:easy} we have $(s_i,b)\in E$ as claimed. So we  assume $x\le |P|$ from  now on.
 
Since $p_{x-1}$ is associated to $b\in B$ but $p_x$ is not associated to any node in $B$, by \cref{item:forgetb} of \cref{obs:fourobs} we know $p_{x}$ is a type-$A$ node and is associated to some $a\in A$ such that $(a,b)\in E$.
  Pick the smallest $y\in [x,|P|]$ such that $p_y$ is not associated to any node in $A$ (if none exists, let $y=|P|+1$).
 Then, by repeatedly applying \cref{item:remembera} of \cref{obs:fourobs} we know $p_{y-1}$ is also associated to the same $a$.
 If $y=|P|+1$, then this means $p_{|P|}=\tilde s_i$ is associated to the same $a\in A$ satisfying $(a,b)\in E$, and we have $s_i=a$ by \cref{defn:si}, and hence $(s_i,b)\in E$ as claimed.
 So we assume $y\le |P|$ from now on.

Since $p_{y-1}$ is associated to $a\in A$ but $p_y$ is not associated to any node in $A$, by \cref{item:forgeta} of \cref{obs:fourobs} we know $p_{y}$ is a type-$B$ or type-$C$ node. Hence, we have found $0\le x<y\le |P|$ witnessing that  $P$ is a bad path, contradicting the assumption that $P$ is not bad.
\end{proofof}
\end{proof}

\begin{proofof}{Proof of \cref{thm:betterlbwarmup2}}
By combining \cref{lemma:95yes} and \cref{lemma:95no}, we see that any $(2k+2)$-center algorithm on graphs of  $m\le O(n^{1+\gamma})$ edges that distinguishes between  radius $\le 5\ell$ and $\ge 9\ell$ can be used to decide whether the original Set Cover instance has a size-$k$ solution or has no solutions of size $\le 2k+2$, which  requires $n^{k-o(1)} \ge m^{k/(1+\gamma) - o(1)}$ time.  Since $\gamma>0$ can be chosen arbitrarily small, we rule out $O(m^{k-\delta})$-time  algorithms for all constant $\delta >0$.
\end{proofof}

\subsection{The full recursive construction for \texorpdfstring{$(2-\eps,\beta)$}{(2-eps,beta)}-inapproximability}
\label{sec:fulllb}
    \begin{algorithm}[t]
         \caption{Recursive construction for proving \cref{thm:fulllbrestated}}\label{alg:lbrecursion}
         \begin{algorithmic}[1]
            \item \textbf{Global Input:} A Set Cover instance $G=(A,B,E)$ and constant integers $t,\ell\ge 1$.

\State 
            
            \Procedure{Recurse}{$p$}
            \item \textbf{Input:} positive integer $1\le p\le 2t$ 
            \item \textbf{Output:} a tuple $(\bar G, \bar A,\bar B)$, where $\bar G=(\bar V,\bar E)$ is a graph, and  $\bar A, \bar B\subseteq \bar V$ are copies of $A,B$
            \State $\bar G \gets$ a fresh copy of the base gadget graph $\Gad(\bar A, \bar B, \bar c, (2t+1-p)\ell)$\label{line:pathab}
                            \For{$q \in \{1,2,\dots, \lfloor \frac{2t-p}{2p}\rfloor  \}$}\Comment{Non-empty iff $p\le 2t/3$}
                                \State $(G'',A'',B'') \gets \textsc{Recurse}((2q+1)\cdot p)$\label{line:recurse} \Comment{$(2q+1)p\le 2t$ holds}
                                \State Add $G''$ into $\bar G$
                                \For{$(a,b)\in E$}
                               \State Add a $2qp\ell$-edge path in $\bar G$ between $v_{\bar a, \bar b, (2t+1-p-2qp)\ell}\in\Gad(\bar A, \bar B, \bar c, (2t+1-p)\ell)$ and  $b''\in B''$ 
            \label{line:interpath}
            \Comment{By convention, $\bar b\in \bar B, b''\in B''$ are copies of $b\in B$}
                                \EndFor
                            \EndFor
                \State \Return{$(\bar G, \bar A,\bar B)$}
            \EndProcedure

            \State 
            \Procedure{Main}{}
                \State $(G',A',B') \gets \textsc{Recurse}(1)$ 
                 \State   For every $b'\in B'$, attach an $\ell$-edge path $(b',u_{b',1},u_{b',2},\dots,u_{b',\ell})$ to $b'$.
            \label{line:pathb}
                 \State \Return{$G'$}
            \EndProcedure
            \end{algorithmic}
    \end{algorithm}   
\begin{figure}[t]
    \centering
 \definecolor{c0066ff}{RGB}{0,102,255}
\definecolor{c0000b9}{RGB}{0,0,185}

\def \globalscale {1.000000}
\begin{tikzpicture}[y=1.5cm, x=1.5cm, yscale=\globalscale, xscale=\globalscale, every node/.append style={scale=\globalscale}, inner sep=0pt, outer sep=0pt]
  \path[draw=black,line width=0.1cm] (1.9302, 42.25475) -- (1.9302, 42.7783);
  \path[draw=black,fill=white,line width=0.1067cm] (1.93065, 41.9605) -- (1.93065, 37.79835);
  \path[draw=black,fill=white,line width=0.0217cm] (1.92165, 42.62655) circle (0.25995cm);
  \path[draw=black,fill=white,line width=0.0217cm] (1.9266, 42.10745) circle (0.25995cm);
  \path[draw=black,fill=white,line width=0.1cm] (1.93065, 37.4859) -- (1.93065, 36.0462);
  \path[draw=black,fill=c0066ff,line width=0.1cm] (6.73305, 41.4459) -- (2.63745, 41.4459);
  \path[draw=black,fill=c0066ff,line width=0.1cm] (6.7131, 40.20855) -- (4.00125, 40.20855);
  \path[draw=black,fill=c0066ff,line width=0.1cm] (6.71205, 39.5871) -- (4.65855, 39.5871);
  \path[draw=black,fill=c0066ff,line width=0.1cm] (6.702, 38.95215) -- (5.28465, 38.95215);
  \path[draw=black,fill=c0066ff,line width=0.1cm] (6.69945, 38.35065) -- (5.9295, 38.35065);
  \path[draw=black,fill=c0066ff,line width=0.1cm] (6.71565, 40.85715) -- (3.41235, 40.85715);
  \path[draw=black,fill=white,line width=0.0217cm] (2.61855, 41.47935) circle (0.25995cm);
  \path[draw=black,fill=white,line width=0.0217cm] (3.25725, 40.8804) circle (0.25995cm);
  \path[draw=black,fill=white,line width=0.0217cm] (3.9108, 40.2453) circle (0.25995cm);
  \path[draw=black,fill=white,line width=0.0236cm] (1.935, 41.451) circle (0.1392cm);
  \path[draw=black,fill=white,line width=0.0236cm] (1.93365, 40.8495) circle (0.1392cm);
  \path[draw=black,fill=white,line width=0.0236cm] (1.93245, 38.9844) circle (0.1392cm);
  \path[draw=black,fill=white,line width=0.0236cm] (1.92375, 38.37405) circle (0.1392cm);
  \path[draw=black,fill=white,line width=0.0236cm] (1.9287, 40.22955) circle (0.1392cm);
  \path[draw=black,fill=white,line width=0.0236cm] (1.94115, 39.60075) circle (0.1392cm);
  \path[draw=black,fill=white,line width=0.0217cm] (4.60155, 39.62145) circle (0.25995cm);
  \path[draw=black,fill=white,line width=0.0217cm] (5.2044, 38.96535) circle (0.25995cm);
  \path[draw=black,fill=white,line width=0.0217cm] (5.82825, 38.36385) circle (0.25995cm);
  \path[draw=black,fill=white,line width=0.0217cm] (6.7029, 41.48535) circle (0.25995cm);
  \path[draw=black,fill=white,line width=0.0217cm] (5.004, 43.1619) circle (0.25995cm);
  \path[draw=black,fill=white,line width=0.0217cm] (6.96765, 43.16355) circle (0.25995cm);
  \path[draw=black,fill=white,line width=0.0217cm] (1.9239, 37.6713) circle (0.25995cm);
  \path[draw=black,fill=white,line width=0.0236cm] (1.92105, 36.00585) circle (0.1917cm);
  \path[draw=black,fill=white,line width=0.0217cm] (6.7164, 40.8369) circle (0.25995cm);
  \path[draw=black,fill=white,line width=0.0217cm] (6.7101, 40.21065) circle (0.25995cm);
  \path[draw=black,fill=white,line width=0.0217cm] (6.70515, 39.5979) circle (0.25995cm);
  \path[draw=black,fill=white,line width=0.0217cm] (6.6891, 38.9505) circle (0.25995cm);
  \path[draw=black,fill=white,line width=0.0217cm] (6.70395, 38.35515) circle (0.25995cm);
  \path[draw=black,fill=white,line width=0.0236cm] (4.6482, 41.46705) circle (0.1392cm);
  \path[draw=black,fill=c0066ff,line width=0.1cm] (6.86115, 41.43645) -- (9.75375, 41.43645);
  \path[draw=black,fill=c0066ff,line width=0.1cm] (6.86115, 40.8447) -- (9.19155, 40.8447);
  \path[draw=black,fill=c0066ff,line width=0.1cm] (6.86115, 40.20855) -- (8.6463, 40.20855);
  \path[draw=black,fill=c0066ff,line width=0.1cm] (6.86115, 39.5871) -- (8.23455, 39.5871);
  \path[draw=black,fill=c0066ff,line width=0.1cm] (6.86115, 38.92485) -- (7.79295, 38.92485);
  \path[draw=black,fill=c0066ff,line width=0.1cm] (6.86115, 38.36665) -- (7.4892, 38.36665);
  \path[draw=c0000b9,fill=c0066ff,line width=0.1cm,dash pattern=on 0.3cm off 0.15cm] (4.95585, 42.94515) -- (4.6815, 41.58915);
  \path[draw=black,fill=c0066ff,line width=0.1cm] (5.18975, 43.155) -- (6.7735, 43.1439);
  \path[draw=black,fill=c0066ff,line width=0.1cm] (7.14035, 43.1601) -- (8.7241, 43.149);
  \path[draw=black,fill=white,line width=0.0236cm] (7.5354, 38.3619) circle (0.1917cm);
  \path[draw=black,fill=white,line width=0.0236cm] (7.8537, 38.9403) circle (0.1917cm);
  \path[draw=black,fill=white,line width=0.0236cm] (8.20965, 39.54675) circle (0.1917cm);
  \path[draw=black,fill=white,line width=0.0236cm] (8.61645, 40.20285) circle (0.1917cm);
  \path[draw=black,fill=white,line width=0.0236cm] (9.05805, 40.839) circle (0.1917cm);
  \path[draw=black,fill=white,line width=0.0236cm] (9.6855, 41.45175) circle (0.1917cm);
  \path[draw=black,fill=white,line width=0.0236cm] (8.77845, 43.1535) circle (0.1917cm);
  \path[draw=c0000b9,fill=c0066ff,line width=0.1cm,dash pattern=on 0.3cm off 0.15cm] (5.56875, 38.35635) -- (2.0646, 38.35635);
  \path[draw=c0000b9,fill=c0066ff,line width=0.1cm,dash pattern=on 0.3cm off 0.15cm] (4.9326, 38.97015) -- (2.0646, 38.97015);
  \path[draw=c0000b9,fill=c0066ff,line width=0.1cm,dash pattern=on 0.3cm off 0.15cm] (4.3077, 39.60615) -- (2.08695, 39.60615);
  \path[draw=c0000b9,fill=c0066ff,line width=0.1cm,dash pattern=on 0.3cm off 0.15cm] (3.66045, 40.21995) -- (2.10915, 40.21995);
  \path[draw=c0000b9,fill=c0066ff,line width=0.1cm,dash pattern=on 0.3cm off 0.15cm] (2.9796, 40.83375) -- (2.0757, 40.83375);
  \path[draw=c0000b9,fill=c0066ff,line width=0.1cm,dash pattern=on 0.3cm off 0.15cm] (2.36595, 41.4588) -- (2.0646, 41.4588);

  \node[draw=none] at (1.5866, 42.292375) {$\ell$};
  \node[draw=none] at (1.57735, 42.89315) {$u_{\cdot,\ell}$};
  \node[draw=none] at (1.9251, 42.105975) {$B'$};
  \node[draw=none] at (1.607025, 41.737725) {$2\ell$};
  \node[draw=none] at (1.584675, 41.168625) {$2\ell$};
  \node[draw=none] at (1.573575, 40.543575) {$2\ell$};
  \node[draw=none] at (1.59585, 39.90195) {$2\ell$};
  \node[draw=none] at (1.579125, 39.31605) {$2\ell$};
  \node[draw=none] at (1.56795, 38.668725) {$2\ell$};
  \node[draw=none] at (1.545675, 38.071725) {$2\ell$};
  \node[draw=none] at (1.91385, 37.68675) {$A'$};
  \node[draw=none] at (1.6014, 36.83295) {$14\ell$};
  \node[draw=none] at (1.913925, 35.996025) {$c'$};
  \node[draw=none] at (2.220825, 41.6931) {$2\ell$};
  \node[draw=none] at (3.57675, 41.69865) {$6\ell$};
  \node[draw=none] at (5.747325, 41.709825) {$6\ell$};
  \node[draw=none] at (8.2359, 41.704275) {$12\ell$};
  \node[draw=none] at (7.850925, 41.090475) {$10\ell$};
  \node[draw=none] at (5.03865, 41.05695) {$10\ell$};
  \node[draw=none] at (2.511, 41.034675) {$4\ell$};
  \node[draw=none] at (2.700675, 40.426425) {$6\ell$};
  \node[draw=none] at (5.30085, 40.415225) {$8\ell$};
  \node[draw=none] at (7.689075, 40.398525) {$8\ell$};
  \node[draw=none] at (3.11355, 39.80145) {$8\ell$};
  \node[draw=none] at (5.62455, 39.779175) {$6\ell$};
  \node[draw=none] at (7.549575, 39.795975) {$6\ell$};
  \node[draw=none] at (3.572225, 39.187425) {$10\ell$};
  \node[draw=none] at (3.822225, 38.580375) {$12\ell$};
  \node[draw=none] at (5.83835, 39.180375) {$4\ell$};
  \node[draw=none] at (7.23425, 39.196275) {$4\ell$};
  \node[draw=none] at (6.23835, 38.580375) {$2\ell$};
  \node[draw=none] at (7.13425, 38.596275) {$2\ell$};
  \node[draw=none] at (5.166975, 42.318075) {$6\ell$};
  \node[draw=none] at (6.05985, 43.422825) {$6\ell$};
  \node[draw=none] at (7.962525, 43.44475) {$6\ell$};
  \node[draw=none] at (2.6226, 41.4978) {$B''$};
  \node[draw=none] at (6.701475, 41.5257) {$A''$};
  \node[draw=none] at (10.02235, 41.5146) {$c''$};
  \node[draw=none] at (4.994025, 43.194075) {$B'''$};
  \node[draw=none] at (6.9693, 43.194075) {$A'''$};
  \node[draw=none] at (9.118025, 43.143825) {$c'''$};
\end{tikzpicture}
    \caption{The ``skeleton graph'' of graph $G'$ constructed by \cref{alg:lbrecursion} for $t=7$ (for distinguishing between radius $\le 15\ell$ or $\ge 29\ell$ for $f(7)\cdot (k+1)$-center). To avoid clutter, we do not draw the graph $G'$ itself; one can refer to \cref{fig2:95lb} and \cref{fig3:95simple} to understand the relation between $G'$ and its skeleton graph by analogy.}
    \label{fig4:fullt7}
\end{figure}
In this section we prove our full hardness result \cref{thm:fulllbrestated}.
The proof generalizes the construction from the previous section (\cref{thm:betterlbwarmup2}) by adding even more copies of the base gadget graph, and hence achieves better inapproximability ratio. More specifically:
\begin{itemize}
    \item In the proof of \cref{thm:betterlbwarmup2} we only had one junction point on each $4\ell$-edge path between $A'$ and $B'$. Now we will put more junction points (in order to cover a bigger fraction of the path).
    \item The proof of \cref{thm:betterlbwarmup2} only had one level of recursion, but in general we may continue the recursion by, for example, adding junction points on the paths between $\bar A$ and $\bar B$ and connect them to another copy of the base gadget graph.
\end{itemize}

Throughout, let $t,\ell\ge 1$ be fixed integers, and suppose
 we are given a Set Cover instance $G=(A,B,E)$ where $|A|=\Theta(n)$ and $|B| \le n^{\gamma}$ (where constant $\gamma>0$ can be chosen arbitrarily small).
We describe the general construction of graph $G'=(V',E')$ as a recursive procedure in \cref{alg:lbrecursion}.   The recursion is parameterized by a positive integer variable $p$.  We remark that the constructions from \cref{sec:betterlbwarmup} and \cref{sec:betterlbwarmup2} can be obtained from running \cref{alg:lbrecursion} with $t=1$ and $t=2$, respectively. For reference, in \cref{fig4:fullt7} we include the ``skeleton graph'' of the graph $G'$ returned by \cref{alg:lbrecursion} for $t=7$.

Let $h(p)$ denote the total number of base gadget graphs contained in the graph returned by $\textsc{Recurse}(p)$. The final graph $G'$ returned by \cref{alg:lbrecursion} contains $h(1)$ copies of base gadget graphs ($h(1)$ is a function of $t$),  and in total $O(h(1)\cdot |A||B|t\ell) \le O(n^{1+\gamma})$ many edges and nodes. We will analyze the dependence of $h(1)$ on $t$ in the end of this section.

As in the previous section,  we use the same rule as \cref{defn:typeassoc} to associate each node in the constructed instance $G'$ to at most one node in $A$ and at most one node in $B$, and use the same definition of type-$A$ (type-$B$, type-$C$) nodes as  \cref{defn:typeassoc}.
We use the same definition of bad paths as \cref{defn:badpath}.

By inspecting our recursive construction, we can see that both \cref{obs:easy} and \cref{obs:fourobs} still hold for the constructed $G'$.


We inductively prove the following properties of the recursive construction.
\begin{lemma}
\label{lem:recursiveprop}
   Suppose $\textsc{{\rm\textsc{Recurse}}}(p)$ returns $(\bar G,\bar A,\bar B)$.  Then both of the following hold.
   \begin{enumerate}
       \item Any bad path in $\bar G$ starting from any $\bar b\in \bar B$ must have length at least $\ge (4t+2-2p)\ell$.
       \label{item:recurseproperty1}
       \item Suppose the original Set Cover instance has a size-$k$ solution. If we pick the copies of these $k$ nodes in $\hat A$ and node $\hat c$ in all base gadget graphs $\Gad(\hat A,\hat B,\hat c,L)$ in $\bar G$ as centers (picking $h(p)\cdot (k+1)$ centers in total),  then all nodes in $\bar G$ are covered within radius $(2t+1)\ell$. Moreover, all nodes in $\bar B$ are covered within radius $(2t+1-p)\ell$.
       \label{item:recurseproperty2}
   \end{enumerate}
\end{lemma}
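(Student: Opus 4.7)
The plan is to prove both items together by strong induction on $p$, proceeding in decreasing order from $p = 2t$ down to $p = 1$. This direction is valid because \textsc{Recurse}$(p)$ only invokes \textsc{Recurse}$((2q+1)p)$ for $q \ge 1$, hence always with a strictly larger parameter. The base case consists of those $p$ with $p > 2t/3$, for which the for loop does not execute and $\bar G$ is a single base gadget $\Gad(\bar A, \bar B, \bar c, L)$ with $L = (2t+1-p)\ell$. Item 1 follows immediately in this case: a bad path from $\bar b \in \bar B$ must reach a type-$A$ node at distance $\ge L$ and then a type-$B$ or type-$C$ node at an additional distance $\ge L$, and so has length $\ge 2L = (4t+2-2p)\ell$. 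For Item 2, picking the centers $\bar s_1, \ldots, \bar s_k, \bar c$ covers $\bar B$ within $L$ by the Set Cover property, and coverage of the internal nodes of each outer $\bar a$-$\bar b$ path reduces to showing that the ``$\bar c$-window'' $[0, p\ell]$ and the ``$\bar s_i$-window'' $[(2t+1-2p)\ell, L]$ meet, which follows from $3p \ge 2t+1$; this inequality is guaranteed by integrality together with $p > 2t/3$.

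For the inductive step with $p \le 2t/3$, Item 2 combines the outer centers with the inductive hypothesis applied to each \textsc{Recurse}$((2q+1)p)$. The recursive centers cover $G''_q$ within $(2t+1)\ell$ and, critically, cover $B''_q$ within $L''_q = (2t+1-(2q+1)p)\ell$. Consequently the junction $v_{\bar a, \bar b, (2t+1-p-2qp)\ell}$ is at distance at most $2qp\ell + L''_q = L$ from the recursive centers (via the added path to $b''_q$), so it acts as a ``virtual center'' for a $\pm p\ell$ window on the outer $\bar a$-$\bar b$ path within total distance $(2t+1)\ell$. A routine calculation shows that the $\bar c$-window, the $\bar s_i$-window, and the $q_{\max}$ junction windows together tile $[0, L]$: writing $2t-p = 2p q_{\max} + r$ with $r \in [0, 2p)$, the worst-case gap reduces to $2p - r - 1 \ge 0$, which holds since $r \le 2p-1$. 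The internal nodes of each added $2qp\ell$-path lie within $L$ of the recursive centers via $b''_q$ and are thus also covered.

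For Item 1 in the inductive step, fix a bad path $P$ from $\bar b$ with witnesses $p_x$ (type-$A$) and $p_y$ (type-$B$ or type-$C$), and split on the location of $p_x$. If $p_x \in \bar A$, then $d_{\bar G}(\bar b, \bar A) = L$ and $d_{\bar G}(\bar A, \text{type-}B\text{ or type-}C) = L$ both hold in $\bar G$ (detours through recursive sub-structures only lengthen paths), giving $|P| \ge 2L = (4t+2-2p)\ell$. Otherwise $p_x$ lies in some recursive $G''_q$; the path $P$ must have entered $G''_q$ at some $b''_q \in B''_q$, so letting the ``visit'' of $P$ that contains $p_x$ enter at position $z_j$ and exit (or terminate) at $w_j$, we have $z_j \ge d_{\bar G}(\bar b, B''_q) = 4qp\ell$. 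This visit is a walk entirely inside $G''_q$ that starts in $B''_q$, passes through the type-$A$ node $p_x$, and ends at a type-$B$ or type-$C$ node (either the exit $p_{w_j} \in B''_q$, or $p_y$ itself if $p_y$ lies in $G''_q$), so it is a bad path in $G''_q$. The inductive hypothesis applied to $G''_q$ with parameter $(2q+1)p$ lower bounds its length by $(4t+2-(4q+2)p)\ell$, and combining gives $|P| \ge z_j + (w_j - z_j) \ge 4qp\ell + (4t+2-(4q+2)p)\ell = (4t+2-2p)\ell$.

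The main technical obstacle will be carefully verifying the three distance identities $d_{\bar G}(\bar b, \bar A) = L$, $d_{\bar G}(\bar A, \text{type-}B\text{ or type-}C) = L$, and $d_{\bar G}(\bar b, B''_q) = 4qp\ell$ in the presence of all added paths, i.e., ruling out shortcuts through alternative recursive sub-structures. Each such attempted shortcut must traverse an extra added path of positive length and visit an additional base-gadget backbone, incurring an overhead that strictly exceeds any possible saving; a short case analysis on the type of detour handles this.
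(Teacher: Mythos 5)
Your proposal is correct and follows essentially the same route as the paper's proof: induction over the recursion, covering the outer $\bar a$--$\bar b$ paths by combining the $\bar c$-window, the Set-Cover centers in $\bar A$ (which give the ``Moreover'' bound on $\bar B$), and the junction nodes reached within $(2t+1-p)\ell$ via the recursively covered $B''$, and lower-bounding bad paths by $d(\bar B,\bar A)=d(\bar A,B'')=(2t+1-p)\ell$ and $d(\bar B,B'')=4qp\ell$ plus the inductive bad-path bound inside $G''$. Your only deviations are cosmetic refinements (explicit downward induction, the $r\le 2p-1$ tiling check, and using the exit node of a visit to $G''_q$ as the type-$B$ witness instead of the paper's shortest-bad-path argument), and your flagged distance identities are asserted in the paper at the same level of detail.
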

\begin{proofof}{Proof of \cref{lem:recursiveprop}, \cref{item:recurseproperty1}}
Take the shortest bad path $P = (p_0,\dots,p_{|P|})$ in $\bar G$ starting from any $p_0 = \bar b\in \bar B$, and let $p_x$ be the first type-$A$ node on $P$. It ends at some type-$B$ or type-$C$ node $p_{|P|}$ (since otherwise $P$ has a proper prefix that is also bad). There are two cases:
   \begin{itemize}
       \item Case 1: $p_x\in \bar A$.

       We have $x\ge d_{\bar G}(\bar B, \bar A) = (2t+1-p)\ell$ by Line~\ref{line:pathab}. The type-$B$ or type-$C$ node $p_{|P|}$ is either from $\bar B \cup \{\bar c\}$, or from some subgraph $G''$ constructed in a recursive call to $\textsc{Recurse}((2q+1)\cdot p)$ at Line~\ref{line:recurse}. In the former case, observe that $|P|-x \ge  d_{\bar G}(\bar A, \bar B\cup \{ \bar c\}) = (2t+1-p)\ell$ (by Lines~\ref{line:pathab}). In the latter case, from Line~\ref{line:interpath} we observe that $|P| - x \ge d_{\bar G}(\bar A, B'') = (2t+1-p-2qp)\ell + 2pq\ell = (2t+1-p)\ell$. In either case we get $|P| = x + (|P|-x) \ge (4t+2-2p)\ell$.

       \item Case 2: $p_x\notin \bar A$.

       Then, $p_x$ can only be some type-$A$ node from some subgraph $G''$ constructed in a recursive call to $\textsc{Recurse}((2q+1)\cdot p)$ at Line~\ref{line:recurse}. 
       Note that the path from $p_0$ to $p_x$ has to visit $B''$ in order to enter $G''$. Let $p_j$ be the maximum $j<x$ such that $p_j\in B''$. Then the path $(p_j,\dots,  p_x)$ must be entirely in $G''$.  Note that the path $(p_x,\dots,p_{|P|})$ is also entirely in $G''$, since otherwise it has to exit $G''$ through some $p_y\in B''$, but in that case the path $(p_0,\dots,p_x,\dots,p_y)$ is a shorter bad path. Hence, we have argued that the bad path $(p_j,\dots,p_x,\dots,p_{|P|})$ starting from $p_j \in B''$ is entirely in $G''$, and hence by the inductive hypothesis it must have at length $ |P| - j\ge (4t+2 - 2(2q+1)\cdot p)\ell$. On the other hand, from Line~\ref{line:interpath} observe that $j = d_{\bar G}(\bar b, p_j)\ge d_{\bar G}(\bar B, B'') = \big ( (2t+1-p)\ell - (2t+1-p-2qp)\ell  \big ) + 2qp\ell = 4qp\ell$. So $|P| = j + (|P|-j) \ge (4t+2-2p)\ell$.
   \end{itemize}
\end{proofof}
\begin{proofof}{Proof of \cref{lem:recursiveprop}, \cref{item:recurseproperty2}}
Inside $\Gad(\bar A, \bar B, \bar c, (2t+1-p)\ell)$ added at Line~\ref{line:pathab}, within radius $(2t+1)\ell$, center $\bar c$ covers all the $(2t+1-p)\ell$-edge paths adjacent to $\bar c$, and also covers all $v_{\bar a,\bar b, j}$ where $1\le j\le (2t+1)\ell - (2t+1-p)\ell  = p\ell$. 
Also, the Set Cover solution guarantees that picking the $k$ centers in $\bar A$ can cover all nodes in $\bar B$ within radius $(2t+1-p)\ell$ (this proves the ``Moreover'' part), and hence cover all $v_{\bar a, \bar b, j}$ where $j\ge (2t+1-2p)\ell$ within radius $(2t+1)\ell$.

    The remaining nodes on the $\bar A$-to-$\bar B$ paths are those $v_{\bar a,\bar b,j}$ with $p\ell < j < (2t+1-2p)\ell$, and we can check that each of them is at distance $\le p\ell$ to a junction node $v_{\bar a, \bar b, (2t+1-p-2qp)\ell}$ for some $q\in \{1,2,\dots, \lfloor \frac{2t-p}{2p}\rfloor \}$, which is connected via a $2qp\ell$-edge path to  $b'' \in B''$ (by Line~\ref{line:interpath}). Since $(G'',A'',B'')$ is returned from $\textsc{Recurse}((2q+1)\cdot p)$, by the inductive hypothesis, $b''$ is covered by some center within radius $(2t+1 - (2q+1)\cdot p)\ell$, so this center covers $v_{\bar a, \bar b, (2t+1-p-2qp)\ell}$ (and all nodes on that $2qp\ell$-edge path) within radius  $(2t+1 - (2q+1)\cdot p)\ell + 2qp\ell = (2t+1-p)\ell$. Hence, all remaining nodes $v_{\bar a,\bar b,j}$ (where $p\ell < j < (2t+1-2p)\ell$) are covered within radius $(2t+1-p)\ell + p\ell = (2t+1)\ell$.

    By the inductive hypothesis, all nodes in the recursively created subgraphs $G''$ are also covered within radius $(2t+1)\ell$. Hence, we have verified that all nodes in $\bar G$ are covered within radius $(2t+1)\ell$.
\end{proofof}
Now we get the corollary for the top level of the recursion.
\begin{cor}
\label{cor:toplevelrecursiveprop}
   Suppose \cref{alg:lbrecursion} returns $(G',A', B')$ (where $G'$ contains $h(1)$ copies of the basic gadget graph).  Then both of the following hold.
   \begin{enumerate}
       \item Any bad path in $G'$ starting from any $u_{b',\ell}$ (defined in Line~\ref{line:pathb}) must have length at least $\ge (4t+1)\ell$.
       \label{item:toplevelrecurseproperty1}
       \item Suppose the original Set Cover instance has a size-$k$ solution. We can cover all nodes in $G'$
       within radius $(2t+1)\ell$ by picking  $h(1)\cdot (k+1)$ centers.
       \label{item:toplevelrecurseproperty2}
   \end{enumerate}
\end{cor}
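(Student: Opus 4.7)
The plan is to deduce \cref{cor:toplevelrecursiveprop} directly from \cref{lem:recursiveprop} applied to the top-level call $\textsc{Recurse}(1)$, handling only the extra $\ell$-edge paths $(b',u_{b',1},\dots,u_{b',\ell})$ added in Line~\ref{line:pathb} of \cref{alg:lbrecursion}. So part (1) follows from \cref{lem:recursiveprop}(1) with $p=1$ plus an $\ell$-edge adjustment, and part (2) follows from \cref{lem:recursiveprop}(2) with $p=1$ plus the triangle inequality.

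For part (1), I would consider any bad path $P=(p_0,p_1,\dots,p_{|P|})$ in $G'$ with $p_0=u_{b',\ell}$, and show that its suffix starting at $p_\ell=b'$ is itself a bad path in the output of $\textsc{Recurse}(1)$. The key observation is that, by \cref{defn:typeassoc}, each node on the attached path $u_{b',1},\dots,u_{b',\ell}$ is associated only to $b\in B$, and $b'$ itself is type-$B$; none of $p_0,\dots,p_\ell$ is type-$A$. Therefore the type-$A$ witness $p_x$ in the definition of bad path must satisfy $x>\ell$, so the suffix $(p_\ell,p_{\ell+1},\dots,p_{|P|})$ starts at $b'\in B'$ and contains both a type-$A$ and a later type-$B$/type-$C$ node, i.e., it is bad. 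By \cref{lem:recursiveprop}(1) with $p=1$, this suffix has length at least $(4t+2-2)\ell = 4t\ell$, so $|P| \ge \ell + 4t\ell = (4t+1)\ell$.

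For part (2), I would directly invoke \cref{lem:recursiveprop}(2) with $p=1$: choosing the copies of the size-$k$ Set Cover solution in each $\hat A$ together with each $\hat c$, across all $h(1)$ base gadgets, gives $h(1)\cdot(k+1)$ centers that cover every node of the subgraph returned by $\textsc{Recurse}(1)$ within radius $(2t+1)\ell$, and moreover cover every $b'\in B'$ within radius $(2t+1-1)\ell = 2t\ell$. Any attached node $u_{b',j}$ for $j\in[\ell]$ is at distance at most $\ell$ from $b'$, so by the triangle inequality it is covered within radius $2t\ell+\ell=(2t+1)\ell$ by the same center that covers $b'$.

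The proof is essentially routine once \cref{lem:recursiveprop} is in hand; the only subtle step is the observation in part (1) that the attached path cannot ``hide'' a shorter bad path, which relies squarely on the node-association rules in \cref{defn:typeassoc}. I do not expect a real obstacle here, since both the type labels and the distances line up exactly with the $p=1$ instantiation of the inductive lemma, with the extra $\ell$ accounting precisely for the attached tail that converts the $4t\ell$ lower bound into $(4t+1)\ell$ and the $2t\ell$ covering radius into $(2t+1)\ell$.
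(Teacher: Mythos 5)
Your proposal is correct and follows essentially the same route as the paper: reduce to \cref{lem:recursiveprop} with $p=1$, noting that the pendant $\ell$-edge path contains no type-$A$ nodes so the suffix of any bad path from $b'$ onward is still bad (giving $\ell+4t\ell=(4t+1)\ell$), and that $b'$ is covered within $2t\ell$ so the attached tail is covered within $(2t+1)\ell$. No gaps.
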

\begin{proof}
    For \cref{item:toplevelrecurseproperty1}, note that in order for the bad path $P$ from $u_{b',\ell}$ to reach any type-$A$ node, it must first traverse the $\ell$-edge path $(u_{b',\ell},\dots,u_{b',1},b')$. Since none of the traversed nodes so far are type-$A$ nodes, by the definition of bad paths (\cref{defn:badpath}) we know the suffix of $P$ starting from $b'$ is still a bad path. By \cref{lem:recursiveprop} \cref{item:recurseproperty1} (for $p=1$), this suffix path must have length at least $(4t+2-2)\ell = 4t\ell$. Hence $|P| \ge \ell + 4t\ell = (4t+1)\ell$.

    For \cref{item:toplevelrecurseproperty2}, note that every $b'\in B'$ is covered within radius $(2t+1-1)\ell = 2t\ell$ by \cref{lem:recursiveprop} \cref{item:recurseproperty2}, and hence all nodes on the $\ell$-edge path attached to $b$ (at Line~\ref{line:pathb}) are covered within $(2t+1)\ell$ radius. All the remaining nodes in $G'$ are also covered within $(2t+1)\ell$ radius by \cref{lem:recursiveprop} \cref{item:recurseproperty2}.
\end{proof}

\cref{cor:toplevelrecursiveprop} \cref{item:recurseproperty2} already deals with the YES case.
Now we can use \cref{cor:toplevelrecursiveprop} \cref{item:recurseproperty1} with the same arguments as before to analyze the NO case.
\begin{cor}
\label{cor:generalno}
If the  graph $G'$ constructed by \cref{alg:lbrecursion} has a $h(1)\cdot (k+1)$-center solution with radius $< (4t+1)\ell$, then the original Set Cover instance $G=(A,B,E)$ has a solution of size $h(1)\cdot (k+1)$.
\end{cor}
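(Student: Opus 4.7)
The plan is to mirror the proof of \cref{lemma:95no} from \cref{sec:betterlbwarmup2} (the $t=2$ warm-up), simply replacing its ad hoc length bound \cref{lem:badislong9ell} with the generalized bound supplied by \cref{cor:toplevelrecursiveprop} \cref{item:toplevelrecurseproperty1}. The paper has already noted that \cref{obs:easy} and \cref{obs:fourobs} continue to hold for the full recursive construction of \cref{alg:lbrecursion}, so the NO-case argument transplants essentially without change.

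Concretely, given a $h(1)\cdot(k+1)$-center solution $\{\tilde s_1,\ldots,\tilde s_{h(1)\cdot(k+1)}\}\subseteq V'$ of radius strictly less than $(4t+1)\ell$, I would first apply exactly the rule in \cref{defn:si} to extract corresponding candidates $s_1,\ldots,s_{h(1)\cdot(k+1)}\in A$: if $\tilde s_i$ is associated to some $a\in A$, set $s_i=a$; else if $\tilde s_i$ is associated only to some $b\in B$, set $s_i$ to an arbitrary neighbor $a$ of $b$ in $E$ (which exists by the initial assumption on the Set Cover instance); otherwise let $s_i$ be arbitrary.

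To show that $\{s_1,\ldots,s_{h(1)\cdot(k+1)}\}$ is a Set Cover solution, I fix $b\in B$ and consider the endpoint $u_{b',\ell}$ of the $\ell$-edge path attached to $b'\in B'$ at Line~\ref{line:pathb}. By hypothesis some center $\tilde s_i$ lies within distance $<(4t+1)\ell$ of $u_{b',\ell}$, so let $P$ be the corresponding shortest path. By \cref{cor:toplevelrecursiveprop} \cref{item:toplevelrecurseproperty1}, every bad path starting at $u_{b',\ell}$ has length $\geq(4t+1)\ell$, so $P$ must be non-bad. Then \cref{lem:nonbadisgood} applies word for word, since its proof only manipulates the local ``remember/forget $a$, remember/forget $b$'' rules of \cref{obs:fourobs} together with \cref{obs:easy}; it therefore yields $(s_i,b)\in E$. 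As $b\in B$ was arbitrary, $\{s_1,\ldots,s_{h(1)\cdot(k+1)}\}$ is a Set Cover solution.

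The main (and essentially only) bookkeeping obstacle is checking that \cref{obs:fourobs} really does survive the new types of edges introduced by the recursion, in particular the $2qp\ell$-edge junction paths added at Line~\ref{line:interpath} connecting $v_{\bar a,\bar b,(2t+1-p-2qp)\ell}$ to $b''\in B''$. By \cref{defn:typeassoc}, however, all internal nodes on such a junction path are associated to both $a\in A$ and $b\in B$ (with $(a,b)\in E$ by \cref{obs:easy}), and both endpoints inherit these same associations; thus each of the four local edge properties is preserved on every edge of the path, and likewise on the two unchanged base-gadget edges meeting either endpoint. With this verification in hand, the corollary is immediate.
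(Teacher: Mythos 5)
Your proposal is correct and takes essentially the same route as the paper: define the $s_i$ via \cref{defn:si}, use \cref{cor:toplevelrecursiveprop} (\cref{item:toplevelrecurseproperty1}) to rule out bad paths from $u_{b',\ell}$, and then invoke \cref{lem:nonbadisgood} together with \cref{obs:easy} and \cref{obs:fourobs}. The only (harmless) imprecision is your claim that both endpoints of a junction path added at Line~\ref{line:interpath} ``inherit these same associations''---the endpoint $b''\in B''$ is a type-$B$ node associated only to $b$, not to $a$---but the edge-by-edge verification of \cref{obs:fourobs} still goes through, exactly as the paper asserts.
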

\begin{proof}
   The proof uses the same argument as 
   \cref{lemma:95no} from previous section.
    Let $\{\tilde s_1,\dots,\tilde s_{h(1)\cdot(k+1)}\}\subset V'$ denote the $(h(1)\cdot(k+1))$-center solution of $G'=(V',E')$ of radius
    $< (4t+1)\ell$.  
    We define $\{s_1,\dots,s_{h(1)\cdot(k+1)}\}\subset A$ using \cref{defn:si}. For any $b'\in B'$, consider $u_{b',\ell}$ (the last node on the $\ell$-edge path attached to $b'\in B'$), and consider its shortest path $P$ to a center $\tilde s_i$ at distance $< (4t+1)\ell$.
   By \cref{cor:toplevelrecursiveprop} \cref{item:toplevelrecurseproperty1}, $P$ cannot be a bad path. Then by 
  \cref{lem:nonbadisgood},  we have $(s_i,b)\in E$ in the Set Cover instance, finishing the proof that $\{s_1,\dots,s_{h(1)\cdot(k+1)}\}\subset A$ is a Set Cover solution.
\end{proof}

\begin{proofof}{Proof of \cref{thm:fulllbrestated}}
By combining \cref{cor:toplevelrecursiveprop} \cref{item:recurseproperty2} and \cref{cor:generalno}, we see that any $(h(1)(k+1))$-center algorithm on graphs of  $m\le O(n^{1+\gamma})$ edges that distinguishes between  radius $\le (2t+1)\ell$ and $\ge (4t+1)\ell$ can be used to decide whether the original Set Cover instance has a size-$k$ solution or has no solutions of size $\le (h(1)(k+1))$, which  requires $n^{k-o(1)} \ge m^{k/(1+\gamma) - o(1)}$ time by \cref{cor:approxsetcoverlb}.  Since $\gamma>0$ can be chosen arbitrarily small, we rule out $O(m^{k-\delta})$-time  algorithms for all constant $\delta >0$.  Here, $h(1) = f(t)$ is a function of $t$. The last lemma gives a (loose) upper bound on $f(t)$.

\begin{lemma}
   $f(t)  = h(1)\le \begin{cases} t & 1\le t\le 4,\\ 6 & t=5, \\ t^2 & \text{otherwise.}  \end{cases}$
\end{lemma}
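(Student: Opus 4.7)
The plan is to interpret $h(1)$ combinatorially as a count of ordered factorizations and then bound this count. By unrolling the recursion in \cref{alg:lbrecursion}, one sees that $h(p)$ equals the number of finite sequences $(r_1,\dots,r_k)$ of odd integers $r_i \ge 3$ (including the empty sequence $k=0$) satisfying $p\cdot\prod_{i=1}^k r_i\le 2t$: each descent from $p$ to a child multiplies $p$ by some odd $(2q{+}1)\ge 3$, and the loop guard $(2q{+}1)p\le 2t$ is exactly the product constraint. So $f(t)=h(1)$ is the number of such sequences (of odd integers $\ge 3$) with product at most $2t$.

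For the small cases $t\in\{1,2,3,4,5\}$, I would just enumerate. The sequences with product at most $2t$ are the empty sequence together with all singletons $(3),(5),\dots,(2t{-}1)$, and additionally $(3,3)$ as soon as $2t\ge 9$. This gives $f(t)=t$ for $t\in\{1,2,3,4\}$ and $f(5)=6$, matching the cases listed.

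For the general bound $f(t)\le t^2$, I would relax the oddness requirement: let $T(N)$ denote the number of finite sequences of integers $\ge 3$ with product at most $N$, so trivially $f(t)\le T(2t)$. The plan is to prove the clean bound $T(N)\le N^2/4$ for all real $N\ge 2$ by strong induction on $\lfloor N\rfloor$, using the identity $T(N)=1+\sum_{r=3}^{\lfloor N\rfloor}T(N/r)$. I split the sum at $r=\lfloor N/2\rfloor$: for $3\le r\le N/2$ we have $N/r\ge 2$, so IH gives $T(N/r)\le (N/r)^2/4$; for $r>N/2$ we have $N/r<2$ and so $T(N/r)=1$ trivially. Using $\sum_{r\ge 3}1/r^2=\pi^2/6-5/4<0.4$ together with $\lfloor N\rfloor-\lfloor N/2\rfloor\le N/2+1$, one gets
\[
T(N)\;\le\;1+\tfrac{N^2}{4}\cdot\bigl(\tfrac{\pi^2}{6}-\tfrac{5}{4}\bigr)+\tfrac{N}{2}+1,
\]
which is at most $N^2/4$ once $N$ exceeds a small explicit threshold (roughly $N\ge 6$). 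The finitely many remaining base cases $N\in\{2,3,4,5\}$ are checked by direct computation ($T(2)=1, T(3)=2, T(4)=3, T(5)=4$). Combining, $f(t)\le T(2t)\le (2t)^2/4=t^2$ for every $t\ge 1$.

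The main obstacle is the base-case bookkeeping in the induction: because $N/r$ dips below $2$ for $r>N/2$, the inductive bound $T(M)\le M^2/4$ does not literally apply there (it would wrongly assert $1\le M^2/4<1$). The fix is precisely the split above, absorbing the near-trivial terms (each contributing $1$) into a linear-in-$N$ slack term that is dominated by the $\Theta(N^2)$ main term once $N$ is not tiny, and then finishing the small residual range by hand.
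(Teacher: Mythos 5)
Your proposal is correct and follows essentially the same route as the paper: both interpret $h(1)$ as counting multiplicative sequences of odd factors (the paper's $g(T)$), verify the small cases $t\le 5$ by direct enumeration, and prove the quadratic bound by induction on the recurrence, splitting off the terms that are trivially $1$ and bounding the remaining terms via the convergent series $\sum_r 1/r^2$. Your only deviation---dropping the oddness restriction and counting all factors $\ge 3$, which slightly weakens the constant and shifts the induction threshold---does not change the substance of the argument.
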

\begin{proof}
By \cref{alg:lbrecursion} we have the following recurrence:
\begin{equation}
    h(p) = \begin{cases} 1 & 2t/3< p\le 2t, \\ 1+\sum_{q=1}^{\lfloor (2t-p)/(2p)\rfloor} h((2q+1)\cdot p) & 1\le p\le 2t/3.\end{cases}
\end{equation}

Equivalently, $h(1)$ counts the number of integer sequences $(p_1,\dots,p_{s-1}, p_s)$ (with flexible length $s \ge 1$) where $p_1 = 1$, $p_{s} \le 2t$ and $p_{i}/p_{i-1} \in \{3,5,7,9,\dots\}$. 
As an alternative way to count these sequences, let $g(T)$ denote the number of sequences $(p_1,\dots,p_{s-1}, p_s)$ (with flexible length $s \ge 1$) where $p_1 = 1$, $p_{i}/p_{i-1} \in \{3,5,7,9,\dots\}$, and $p_s \le T$. Then $h(1)=g(2t)$, and $g$ has the recurrence 
\begin{equation}
g(T)  = \begin{cases}
    1 + g(\lfloor T/3\rfloor ) + g(\lfloor T/5 \rfloor )+g(\lfloor T/7 \rfloor )+g(\lfloor T/9 \rfloor )+\dots & T\ge 1 \\ 0 & T=0.
\end{cases} 
\end{equation}
Here are values of $g(T)$ for some small $T$:  $g(0)=0$, $g(1)=g(2)=1, g(3)=g(4)=2, g(5)=g(6) = 3, g(7)=g(8)=4,g(9)=g(10)=6$.

We will not attempt to study the asymptotics of $g(T)$. Instead, we prove a very loose upper bound $g(T)\le \max\{1, 0.25 T^2\}$ by induction:  For $T\le 10$, this inequality holds. For $T \ge 11$,  we have $g(\lfloor T/(2k+1)\rfloor) = 1$ if $T/(2k+1) \in [1,3)$, and otherwise $g(\lfloor T/(2k+1)\rfloor) \le 0.25T^2/(2k+1)^2$ by the inductive hypothesis. Hence,
\begin{align*}
g(T) &\le 1+|\{k\in \N^+: T/(2k+1)\in [1,3)\}| +  \sum_{k\in \N^+} 0.25T^2/(2k+1)^2 \\
& \le  T/2 + 0.25T^2\cdot 0.234\\
& <  0.25T^2.
\end{align*}
Hence, $f(t) = g(2t) \le t^2$.
\end{proof}
\end{proofof}

Finally, we remark that if we assume ETH instead of SETH, we can use the same proof as above (but using the ETH-hardness of Gap Set Cover \cref{lem:approxsetcoverlbETH} instead) and show that there can be no $f(k)n^{o(k)}$ time $(2-\eps,\beta)$-approximation algorithm for $k$-Center for $\eps>0,\beta = O(1)$.  
    \newcommand{\step}{\textsc{Step}}

\section{Improved Approximation Algorithms for \texorpdfstring{$k$}{k}-Center}
We now turn to our upper bounds and show that if we allow a small additive error, we are able to break the 2-approximation barrier for $k$-center. In fact, we achieve a $<2$ multiplicative approximation for any unweighted graph with $k$-radius $>1$. We begin by showing an approximation algorithm for $2$-center that obtains a multiplicative error of $5/3$ and additive error of $\leq 2/3$. We then construct a general algorithm for any $k$ that achieves a $(3/2, 1/2)$-approximation. Next we construct a faster algorithm that achieves an approximation of $\left(2 - \frac{1}{2k-1}, 1 - \frac{1}{2k-1}\right)$. We then show how to interpolate between these two algorithms and obtain a $\left(2 - \frac{1}{2\ell}, 1 - \frac{1}{2\ell}\right)$ approximation for any $\ell < k$. We conclude with an improved algorithm for $3$-center, which allows for bounded integer edge weights.

\subsection{Warm Up - \texorpdfstring{$5/3$}{5/3} Approximation for \texorpdfstring{$2$}{2}-Center}
\label{sec:algo53warmup}
We begin by presenting the first $<2$ approximation algorithm for $k$-center running in $o(n^k)$ time (in sparse graphs), a $5/3$ approximation algorithm for $2$-center running in $O(mn^{\omega/3})$ time, which incurs an additive error when the radius is not divisible by $3$. Our algorithm runs in two steps, both of which we generalize later to use in $k$-center approximation for any $k$. We prove the following statement.

\begin{theorem}\label{thm:53approx2center}
  There exists an algorithm running in $\tilde{O}(mn^{\omega/3})$ that computes a $(5/3,2/3)$-approximation to the $2$-center of any undirected, unweighted graph, w.h.p. If the $2$-center radius is divisible by $3$, the algorithm gives a true multiplicative $5/3$-approximation.
\end{theorem}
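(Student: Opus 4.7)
The plan is to binary-search (or enumerate) over the unknown $2$-radius $R \in \{0,1,\dots,n\}$ at an $O(\log n)$ overhead, so we may assume $R$ is known. The goal is to output two candidate centers $\hat c_1, \hat c_2$ with $d(\hat c_i, c_i^\ast) \le \lceil 2R/3\rceil$ for some optimal pair $(c_1^\ast, c_2^\ast)$; by the triangle inequality this yields radius at most $R + \lceil 2R/3\rceil \le (5R+2)/3$, which is exactly the claimed $(5/3, 2/3)$-approximation and collapses to a clean multiplicative $5/3$ when $3 \mid R$. The main technical work is then to efficiently find such $\hat c_i$'s without computing APSP.

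The first step I would run is a standard ball-hitting sampling step. Fix a threshold $\tau$, pick $S \subseteq V$ uniformly at random of size $\tilde{\Theta}(n/\tau)$, and run BFS from each vertex of $S$, taking total time $\tilde{O}(mn/\tau)$ and yielding the distance matrix from $S$ to all of $V$. With high probability, $S$ intersects every $\lceil 2R/3\rceil$-ball of size at least $\tau$. In the easy case where both $B(c_i^\ast, \lceil 2R/3\rceil)$ are ``heavy'' (size $\ge \tau$), it suffices to enumerate pairs $(\hat c_1, \hat c_2) \in S \times S$ and verify the covering condition using the precomputed BFS distances, at overhead $\tilde{O}((n/\tau)^2 \cdot n)$.

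The interesting case is when at least one optimal ball is ``light.'' This is where I expect to invoke fast rectangular matrix multiplication. The structural observation is that a light ball $B(c_i^\ast, \lceil 2R/3\rceil)$ of size $<\tau$ tightly constrains both $c_i^\ast$ and the vertices it must cover, so after guessing (via the sample $S$, or by the same light-ball reasoning on the other side) an approximate center for the opposite cluster, one is left with a rectangular instance: determine, for each of $O(n)$ candidate $\hat c_1$'s, whether together with the fixed $\hat c_2$ it covers $V$ at radius $(5R+2)/3$. This check reduces to a single rectangular matrix product involving the $n \times |S|$ BFS-distance matrix (thresholded at $\lceil 2R/3\rceil$) and an auxiliary incidence matrix, costing $\tilde{O}(\mathrm{MM}(n, n/\tau, n))$. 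Choosing $\tau$ so that $n/\tau = n^{\omega/3 - 1} \cdot m$ balances the BFS cost $mn/\tau$ against the matmul cost and yields the final $\tilde{O}(mn^{\omega/3})$ bound.

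The main obstacle, in my view, is the light-ball case: both isolating the correct structural lemma (that a light ball together with the heavy/light status of the complementary ball determines a searchable candidate set small enough for matmul) and phrasing the covering test as a single rectangular product rather than an adaptive search. Secondary issues I would handle carefully are the degenerate regime $R = O(1)$ (brute force in $O(n^2)$ time), the exact tightness of the additive $2/3$ term (driven by $2\lceil R/3\rceil - \lceil 2R/3\rceil \in \{0,1\}$ depending on $R \bmod 3$), and ensuring the high-probability guarantee survives the outer enumeration over $R$ via a union bound over $O(\log n)$ guesses.
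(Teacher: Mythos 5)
Your overall frame (guess $R$, find approximate centers, use one BFS-from-a-sample stage plus fast matrix multiplication) points in the right direction, but the heart of the argument is missing, and the parts you do specify don't fit in the $\tilde{O}(mn^{\omega/3})$ budget. First, your ``easy case'' (both balls heavy) is handled by brute-forcing all pairs in $S\times S$ at cost $\tilde{O}((n/\tau)^2\cdot n)$; with any choice of $\tau$ consistent with your balancing this exceeds $mn^{\omega/3}$ (e.g.\ for $m=\Theta(n)$ it is around $n^{1+2\omega/3}$ versus a budget of $n^{1+\omega/3}$). The paper avoids exactly this by replacing the pair enumeration with the Boolean product $XX^t$ of the thresholded $|S|\times n$ distance matrix with its transpose, whose output is only $|S|\times|S|$, giving exponent $\omega(1-\delta,1,1-\delta)\le 1+\omega/3$. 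Second, your ``light ball'' case is where the real idea is needed, and the proposal does not supply it: you assert that a light ball ``tightly constrains'' $c_i^\ast$ and that the covering test ``reduces to a single rectangular matrix product'' $\mathrm{MM}(n,n/\tau,n)$, but (i) you never exhibit a small candidate set guaranteed to contain a good substitute for the un-hit center, and (ii) any product with an $n\times n$ output already costs $\Omega(n^2)$, which is larger than $mn^{\omega/3}$ whenever $m\ll n^{2-\omega/3}$, i.e.\ precisely in the sparse regime the theorem is about.

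The paper's mechanism in the hard case is different from ``light ball'' reasoning and is what you would need to reconstruct. When some optimal center is at distance $>2R/3$ from $S$, the \emph{furthest} vertex $w$ from $S$ also has $d(w,S)>2R/3$, so the set $W$ of the $n^{\delta}$ vertices closest to $w$ contains the whole ball $B(w,2R/3)$ (since w.h.p.\ $S$ hits $W$), and hence contains a vertex $s_1$ with $d(s_1,c_i)\le R/3$ for the center covering $w$. Given such an $s_1$, one cannot simply ``fix $\hat c_2$'': instead the paper takes a second sample $T$, defines $U(s_1)=\{v: d(s_1,v)>4R/3\}$ (which must be covered by the other center within $R$), and proves a second dichotomy (its Lemma on the last step): either the analogous furthest-point trick inside $U(s_1)$ yields $s_2\in W(s_1)$ with $d(s_2,c_2)\le 2R/3$, or every vertex of $U(s_1)$ is within $R/3$ of $T$, in which case any vertex in $Q=\bigcap_{t\in T\cap U(s_1)}B(t,R)$ completes the solution \emph{without being close to $c_2$ at all}; membership in $Q$ is detected by a rectangular product of a $|W|\times|T|$ matrix with a $|T|\times n$ matrix, whose exponent $\omega(\delta,1-\gamma,1)\le 1+\omega/3$ again fits the budget. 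Without this second-stage structure (the set $U(s_1)$, the second sample, the $Q$-intersection argument, and matrix products shaped to have small output dimensions), your sketch does not yield either the correctness of the hard case or the claimed running time.
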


\begin{proof}
    We construct an algorithm that is given an integer $R$ that is divisible by $3$ and determines whether $R_2(G) > R$ or $R_2(G) \leq \frac{5R}{3}$. 
    By performing a binary search for the correct value of $R$, we obtain a multiplicative $5/3$ approximation if the $2$-radius of $G$ is divisible by $3$. If $R$ is not divisible by $3$, the following algorithm can instead determine if $R_2(G) > R$ or $R_2(G) \leq 2R - \floor{\frac{R}{3}}$. This gives an additional additive error of at most $\frac{2}{3}$. For the sake of readability, we will only prove the case when $R$ is divisible by $3$, and note that by replacing $\frac{2R}{3}$ with either $R - \floor{\frac{R}{3}}$ or $2\floor{\frac{R}{3}}$ we obtain the general case.

    Assume $R_2(G) \leq R$, in this case we show that we are able to find a pair of points that cover all vertices with radius $\frac{5R}{3}$. If we are unable to find such a pair, we output $R_2(G) > R$.

    Let $c_1, c_2$ be the optimal $2$-centers, so that for every $v\in V$, either  $d(c_1, v) \leq R$ or $d(c_2, v) \leq R$. Set $\delta = (2\omega - 3) / (3\omega - 3)$  and sample a random subset $S$ of $V$ of size $O(n^{1-\delta} \log n)$. In $\tilde{O}(mn^{1-\delta})$ time compute all distances between every $s\in S$ and $v\in V$. 

    Let $w$ be the furthest node from the set $S$. Run BFS from $w$ and let $W$ be the closest $n^\delta$ nodes to $w$. In $\tilde{O}(mn^\delta)$ time compute the distances between all $s\in W$ and $v\in V$. Note that $\delta < 1-\delta < \omega / 3$, and thus so far we have spent $\tilde{O}(mn^\delta + mn^{1-\delta}) \leq \tilde{O}(mn^{\omega/ 3})$ time.

    \begin{lemma} \label{lm:step1}
        Either there are two nodes $s_1, s_2\in S$ that cover the graph with radius $\frac{5R}{3}$, or w.h.p for some $s_3\in W$ we have $d(s_3, c_i) \leq \frac{R}{3}$ for $i=1$ or $i=2$.
    \end{lemma}
    
    \begin{proof}
        If $\exists s_1, s_2 \in S$ such that $d(s_1, c_1) \leq \frac{2R}{3}$ and $d(s_2, c_2)\leq \frac{2R}{3}$, then any vertex $v\in V$ will have either $d(s_1, v) \leq \frac{5R}{3}$ or $d(s_2, v) \leq \frac{5R}{3}$.
        
        Otherwise, for some $c_j$, $j=1,2$, $d(s,c_j)> \frac{2R}{3}$ for all $s\in S$. As $w$ is picked to be the furthest node from $S$ we have that $d(w,S)> \frac{2R}{3}$.
        Since $S$ is large enough, w.h.p it hits the closest $n^\delta$ nodes to every vertex in the graph. Thus in particular $S\cap W\neq \emptyset$. Let $s\in S\cap W$. Since $d(w,s)>\frac{2R}{3}$ and every node closer to $w$ than $s$ is in $W$, $W$ must contain all nodes at distance $\le \frac{2R}{3}$ from $w$. Next note that for one of $i=1$ or $i=2$, $d(w,c_i)\leq R$.
        Let $s_3$ be the node on the shortest path between $w$ and $c_i$ at distance exactly $2R/3$ to $w$, $d(w,s_3) = \frac{2R}{3}, d(s_3, c_i) \leq \frac{R}{3}$. By the above observation, $s_3$ must be in $W$ and is at distance $\leq R/3$ from $c_i$, as desired.
    \end{proof}
    
    We begin by handling the first case, when there exist $s_1, s_2\in S$ that cover all vertices with radius $\frac{5R}{3}$.
    Build an $|S|$ by $|V|$ matrix $X$
    such that for any $s\in S$ and $v\in V$, $X[s,v]=0$ if $d(s,v)\leq 5R/3$, and $X[s,v]=1$ otherwise. Multiply $X$ by $X^t$. The running time exponent for this product is 
    $$\omega(1-\delta,1,1-\delta)\leq \delta +\omega\cdot(1-\delta)=\omega-\delta(\omega-1)=\omega-(2\omega-3)/3=1+\omega/3.$$
    
    By construction, $XX^t[s,s']=0$ if and only if $s$ and $s'$ together cover all vertices of $V$ at distance $5R/3$. Therefore $XX^t[s_1, s_2] = 0$, and so we are able to find this pair.

    We are left to handle the second case. We can assume that case 1 did not work, and so there is a node $s_1\in W$, s.t. $d(s_1,c_1)\leq R/3$ (up to relabeling $c_1, c_2$). Set $\gamma=(\omega^2-3\omega+3)/(3\omega-3)$ and let $T$ be a random subset of $V$ of size $O(n^{1-\gamma}\log n)$. Note that this choice of $\gamma, \delta$ gives:
     
    $$\gamma+\delta=(\omega^2-3\omega+3+2\omega-3)/(3\omega-3)=\omega/3,$$
    and
    $$1-\gamma\leq (1-\gamma)+\delta(\omega-2)=\frac{3\omega-3-\omega^2+3\omega-3+2\omega^2-4\omega-3\omega+6}{3\omega-3}=\frac{\omega^2-\omega}{3\omega-3}=\omega/3.$$
    
    Run BFS from $T$ to compute all distances between $T$ and $V$ in time $O(mn^{1-\gamma}\log n)\leq \tilde{O}(mn^{\omega/3})$. For every $s\in W$, define $U(s)$ to be the nodes $v\in V$ with $d(s, v) > \frac{4R}{3}$. Note that if $d(s_1, c_1) \leq \frac{R}{3}$, then all the nodes in $U(s_1)$ are at distance $>R$ from $c_1$, and hence must be covered by $c_2$ within distance $R$. Further define $w(s)$ to be the furthest node from $T$ in $U(s)$, and let $W(s)$ be the closest $n^\gamma$ nodes to $w(s)$. We again show that one of two cases holds.

    \begin{lemma}\label{lm:step2}
        Given $s_1$ such that $d(s_1, c_1) \leq \frac{R}{3}$, w.h.p either there exists a node $s_2\in W(s_1)$ such that $s_1, s_2$ cover all vertices with radius $\frac{5R}{3}$, or $Q\coloneqq \bigcap_{t\in T \cap U(s_1)} B(t, R) \neq \emptyset$ and for any $s'_2\in Q$, the pair $s_1, s'_2$ cover all vertices with radius $\frac{5R}{3}$.
    \end{lemma}

    \begin{proof}
        If $d(w(s_1), T) > \frac{R}{3}$, then by the same argument as we made in the proof of \cref{lm:step1}, since $T$ hits $W(s_1)$ w.h.p, all nodes of distance $\leq \frac{R}{3}$ from $w(s_1)$ will be contained in $W(s_1)$. As previously noted, $d(w(s_1), c_2) \leq R$ and so there exists a node $s_2$ on the shortest path from $w(s_1)$ to $c_2$ of distance $d(s_2, w(s_1)) = \frac{R}{3}$ and $d(s_2, c_2) \leq \frac{2R}{3}$. Therefore, $s_2\in W(s_1)$ and the nodes $s_1, s_2$ cover all vertices with radius $\frac{5R}{3}$. 

        Otherwise, any node $u\in U(s_1)$ has $d(u, T) \le \frac{R}{3}$. Since all nodes in $U(s_1)$ are covered by $c_2$, $c_2\in B(t, R)$ for all $t\in T \cap U(s_1)$. Therefore, $c_2\in Q$ so $Q\neq \emptyset$. 
        
        Let $s'_2\in Q$ and let $u\in V$. If $d(s_1, u) > \frac{4R}{3}$ then $u\in U(s_1)$ and so there exists a node $t\in T$ such that $d(t, u) \leq \frac{R}{3}$. If $d(s_1, t) \leq \frac{4R}{3}$, then $d(s_1, u) \leq \frac{5R}{3}$. Otherwise, $t\in T\cap U(s_1)$ and so $d(s'_2, t) \leq R$, in which case $d(s'_2, u) \leq \frac{4R}{3}$. We conclude that $s_1, s'_2$ cover all vertices with radius $\frac{5R}{3}$.
    \end{proof}
    
    To handle the first case of \cref{lm:step2}, for every $s_1\in W$ run BFS from all nodes in $W(s_1)$ in time $\tilde{O}(mn^{\gamma + \delta})\leq \tilde{O}(mn^{\omega / 3})$. For every $s_2\in W(s_1)$, check if the pair $s_1, s_2$ cover all vertices with radius $\frac{5R}{3}$, in total time $O(n^{1+\gamma + \delta})\leq O(n^{1+\omega / 3})$. If we are in the first case, for the $s_1\in W$ such that $d(s_1, c_1)\leq \frac{R}{3}$ we will find the appropriate $s_2$ and complete the algorithm.

    To handle the second case of \cref{lm:step2}, create an $|W|=\tilde{O}(n^{\delta})$ by $|T|=\tilde{O}(n^{1-\gamma})$ Boolean matrix $A$ with rows indexed by $W$ and columns indexed by $T$.
    Define
    \[A[s,t]=1 \textrm{ iff } d(s,t)>4R/3.\]
    
    Next, create an  $\tilde{O}(n^{1-\gamma})$ by $n$ Boolean matrix $B$ with rows indexed by $T$ and columns indexed by $V$.
    Define
    \[B[t,v]=1 \textrm{ iff } d(t,v)>R.\]
    
    Multiply $A$ and $B$ in time $\tilde{O}(n^{\omega(\delta,1-\gamma,1)})$. Notice that by our choice of parameters we get
    $$\omega(\delta,1-\gamma,1)\leq (1-\gamma-\delta)+(1-\delta)+\omega\delta=2-\omega/3+\delta(\omega-1)=2-\omega/3+(2\omega-3)/3=1+\omega/3.$$

    Consider the $s_1$-row of $AB$, where $d(s_1, c_1) \leq \frac{R}{3}$. $AB[s_1, v] = 0$ if and only if $v\in Q$. Therefore, after computing $AB$, for every $s_1\in W$ we can look for  $AB[s_1, s'_2] = 0$ and then check if $s_1, s'_2$ cover all vertices with radius $\frac{5R}{3}$. As all the distances have already been computed, and we only need to check one pair for every $s_1\in W$, this step takes $O(|W|\cdot n)\leq \tilde{O}(n^{1+\omega / 3})$. 
    
    The total running time (as $m\geq n-1$) is within polylog factors 
    $mn^{\omega/3}$.
    If $\omega=2$, the running time is $\tilde{O}(mn^{2/3})$. If $\omega>2$, one can obtain some improvements using rectangular matrix multiplication. 
\end{proof}

We note that similar to Subsection \ref{sub:3center}, one can obtain an $(5/3,O(M))-$approximation algorithm running in time $\tilde{O}(mn^{\omega/3})$ for graphs with integer weights bounded by $M\leq \textrm{poly}(n)$.

\subsection{\texorpdfstring{$3/2$}{3/2} Approximation for \texorpdfstring{$k$}{k}-Center}
Next, for $k$-center for any $k$, 
we present a simple $3/2$ approximation algorithm
that incurs an additive error of $+\frac{1}{2}$ when the $k$-radius is odd. 

First we see a lemma we will use in many of our algorithms, which generalized \cref{lm:step1} to any value of $k$. Let $G$ be a graph with $k$-radius $\leq R$ achieved by centers $c_1, c_2, \ldots, c_k$. Let $S\subseteq V$ be a random sample of size $O(n^{1-\delta}\log n)$ and let $w$ be the furthest node in $V$ from $S$. Take $W$ to be the closest $n^\delta$ vertices to the vertex $w$.

\begin{lemma}\label{lm:step1gen}
    For any $r\in \N$, either there exist $k$ nodes $t_1, t_2, \ldots, t_k\in S$ that cover the graph with radius $R + r$, or w.h.p there exists some $s_1\in W$ such that $d(s_1, c_i) \leq R - r$ for some $i\in [k]$, w.l.o.g $i=1$.
\end{lemma}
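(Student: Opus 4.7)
The plan is to directly generalize the two-case argument in the proof of \cref{lm:step1}, with the cases corresponding to whether the random sample $S$ manages to place some point within distance $r$ of every optimal center.

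First I would handle the easy case: if for every $i \in [k]$ there exists $t_i \in S$ with $d(t_i, c_i) \leq r$, then $\{t_1,\dots,t_k\}\subseteq S$ is immediately an $(R+r)$-cover, because every $v \in V$ lies within distance $R$ of some $c_i$ and then, by the triangle inequality, within $R+r$ of $t_i$. This yields the first disjunct of the lemma.

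Otherwise, some $c_j$ satisfies $d(s, c_j) > r$ for every $s \in S$, so $d(c_j, S) > r$. Since $w$ is chosen to maximize distance to $S$, this forces $d(w, S) \geq d(c_j, S) > r$. At this point I would invoke a standard coupon-collector/hitting-set bound: a uniform random sample of size $\Theta(n^{1-\delta}\log n)$ w.h.p.\ intersects every vertex subset of cardinality at least $n^\delta$, so in particular $S \cap W \neq \emptyset$. Picking any $s \in S \cap W$ gives $d(w, s) \geq d(w, S) > r$; and by the definition of $W$ as the $n^\delta$ closest vertices to $w$, every vertex at distance $\leq r$ from $w$ is strictly closer to $w$ than $s$ and therefore lies in $W$.

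To finish, I would pick some optimal center covering $w$, say $d(w, c_i) \leq R$, relabel it as $c_1$, and (assuming $r \leq R$, else the conclusion $R-r$ is vacuous) walk distance exactly $r$ along a shortest $w$-to-$c_1$ path. The resulting vertex $s_1$ satisfies $d(w, s_1) = r$, so it lies in $W$ by the previous paragraph, and $d(s_1, c_1) \leq R - r$ by the triangle inequality along the shortest path, completing the proof. I do not foresee any real obstacle: the argument is a direct abstraction of \cref{lm:step1}. The only mild point of care is to keep the possibly distinct indices straight -- the center $c_j$ that $S$ fails to approximate need not coincide with the center $c_i$ closest to $w$, and the ``w.l.o.g.\ $i=1$'' relabeling in the statement refers only to the latter.
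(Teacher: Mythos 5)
Your proposal is correct and follows essentially the same argument as the paper's proof: the same case split on whether $S$ comes within $r$ of every center, the same hitting-set fact that $S\cap W\neq\emptyset$ implies $W$ contains all vertices within distance $r$ of $w$, and the same choice of $s_1$ at distance exactly $r$ along a shortest path from $w$ to a center covering it. Your remark about keeping the index of the unapproximated center distinct from that of the center covering $w$ is exactly the (implicit) bookkeeping in the paper's proof.
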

\begin{proof}
    If for every center $d(c_i, S) \leq r$, denote the closest node to $c_i$ in $S$ by $t_i$. Now these $k$ nodes $t_1, t_2, \ldots, t_k$ cover the entire graph with radius $R + r$, as any node $v\in V$ has $d(v, c_i) \leq R$ for some $i\in [k]$, and thus $d(v, t_i) \leq R + r$.

    Otherwise, there exists a center such that $d(c_i, S) > r$ and therefore $d(w, S) > r$. W.h.p, the set $S$ hits
    the closest $n^\delta$ nodes to every vertex in the graph,
    and so $S$ hits $W$. Therefore, there exists a node $u\in W \cap S$ such that $d(w,u) > r$. By the definition of $W$, the set must contain all the nodes closer to $w$ than $u$ and so it contains all nodes of distance $\leq r$ from $w$.

    For some $i\in [k]$, $d(w,c_i) \leq R$. Consider the shortest path between $w$ and $c_i$. There exists a vertex $s_1$ on the path that has $d(w, s_1) = r$ and $d(s_1, c_i) \leq R-r$. By the above observation, we have that $s_1 \in W$.
\end{proof}

Using this lemma we can now prove our $3/2$ approximation algorithm.


\begin{theorem}\label{thm:32approxk}
    Given an unweighted, undirected graph $G$, there is an algorithm that computes a $(3/2, 1/2)$-approximation to the $k$-center w.h.p.\ and runs in time \begin{itemize}
    \item $\tilde{O}(n^{\omega+1/(\omega+1)})$ time for $k=3$,
    \item $\tilde{O}(n^{k-(3-\omega) + 1 / (k+1)})$ for $k\geq 4$, and 
    \item $n^{k-1 + 1 / (k+1)+o(1)}$ time for $k\geq 10$.
    \end{itemize}
\end{theorem}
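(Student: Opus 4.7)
The plan is to binary-search the true radius $R$ and, for each guess, run a decision procedure that either certifies infeasibility or returns $k$ centers of radius at most $\lfloor 3R/2\rfloor$. The additive $1/2$ slack absorbs the rounding when $R$ is odd. The decision procedure is recursive in $k$ and driven by Lemma~\ref{lm:step1gen} instantiated with $r = R/2$.

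At a recursion level responsible for $k$ centers on a subset $V' \subseteq V$, sample $S \subseteq V'$ of size $\tilde{O}(n^{1-\delta})$, run BFS from $S$, identify the vertex $w \in V'$ furthest from $S$, and let $W$ be the $n^\delta$ closest vertices to $w$. Lemma~\ref{lm:step1gen} yields two cases. In Case~A, some $k$-tuple of vertices in $S$ already covers $V'$ within radius $3R/2$; I detect this by forming the Boolean matrix $X \in \{0,1\}^{S \times V'}$ with $X[s,v] = 1$ iff $d(s,v) > 3R/2$ and searching for $k$ rows of $X$ whose bitwise AND is the zero vector. In Case~B, some $s_1 \in W$ satisfies $d(s_1, c_i) \le R/2$ for an optimal center $c_i$; by the triangle inequality every vertex of $V'' := V' \setminus B(s_1, 3R/2)$ is at distance $> R$ from $c_i$ and hence is covered within $R$ by the remaining $k-1$ optimal centers. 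I iterate over $s_1 \in W$, compute $B(s_1, 3R/2)$ by BFS, and recurse on $(V'', k-1)$. The base case $k = 1$ simply scans $S \cup W$ using the precomputed BFS distances.

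Case~A is implemented via rectangular matrix multiplication. For $k \ge 4$, enumerate all $(k-1)$-tuples from $S$ to form an $n^{(k-1)(1-\delta)} \times n$ Boolean matrix whose rows encode each tuple's uncovered-vertex indicator, then multiply by $X^t$ of size $n \times n^{1-\delta}$; a zero entry in the product identifies a valid $k$-tuple. Setting $\delta = 1/(k+1)$, the product has output size $n^{k(1-\delta)} = n^{k-1+1/(k+1)}$, and the recursive enumeration in Case~B contributes only $(n^\delta)^{k-1} \cdot n^{2-\delta} = n^{2 + (k-2)/(k+1)}$, which is dominated by Case~A for $k \ge 3$. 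Using the standard bound that a product of an $n^a \times n$ by an $n \times n^c$ matrix with $a \ge 1$ takes time $\tilde{O}(n^{a+c+(\omega-2)})$, the Case~A cost evaluates to $\tilde{O}(n^{k - (3-\omega) + 1/(k+1)})$, giving the $k \ge 4$ bullet. For $k \ge 10$, the short rectangular dimension becomes large enough that, after splitting the inner dimension into blocks of size $n^\alpha$ and invoking $\omega(1,\alpha,1)=2$, the $(\omega-2)$ contribution is absorbed up to lower-order terms, yielding $\tilde{O}(n^{k-1+1/(k+1)+o(1)})$. The $k = 3$ case is handled separately: for each fixed $s_1 \in S$, restrict $X$ to the columns not covered by $s_1$ and apply the warm-up-style $XX^t$ product to locate $(s_2,s_3)$; optimizing $\delta$ to $1/(\omega+1)$ yields $\tilde{O}(n^{\omega + 1/(\omega+1)})$.

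The main obstacle is the matrix-multiplication accounting for Case~A: choosing the right factorization of the ``$k$ rows of $X$ whose AND is zero'' search into a rectangular product, balancing $\delta$ across the three regimes, and invoking the $\alpha$-based rectangular bounds tightly enough to obtain the stated $o(1)$ slack in the $k \ge 10$ case. Correctness is comparatively straightforward, following by induction on $k$ from Lemma~\ref{lm:step1gen} and the triangle-inequality observation that $V' \setminus B(s_1, 3R/2)$ retains a $(k-1)$-cover of radius $R$ whenever $s_1$ is within $R/2$ of an optimal center.
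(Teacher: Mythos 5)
Your recursion is where the argument breaks. At the top level your Case~B reasoning is fine: if $d(s_1,c_i)\le R/2$ then every vertex of $V''=V'\setminus B(s_1,3R/2)$ is at distance $>R$ from $c_i$, so $V''$ is covered within $R$ by the other $k-1$ optimal centers. But those centers need \emph{not} lie in $V''$ (a center can be within $3R/2$ of $s_1$ while still covering vertices outside that ball), and they need not be close to your new sample $S\subseteq V''$. Consequently the dichotomy of \cref{lm:step1gen} does not transfer to the subinstance: its proof infers ``some center is $>r$ from $S$, hence $d(w,S)>r$'' from the fact that $w$ is the furthest vertex from $S$ \emph{over a set containing the centers}, and it places the good vertex $s_2\in W$ by walking along a shortest path from $w$ toward its covering center and using that $S$ hits the ball of the $n^{\delta}$ nearest vertices of $w$ \emph{in the same universe $S$ was sampled from}. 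In the recursive call either $W$ is taken inside $V''$ (then the path vertex near the center may be missing from $W$) or $W$ is taken in all of $V$ (then $S\subseteq V''$ no longer hits it w.h.p.), and in the bad case where all of $V''$ is close to $S$ but some relevant center is far from $S$, the best you can conclude is that each $V''$-vertex is within $R/2$ of a sample point, which only yields covering tuples of radius about $5R/2$, not $3R/2$. This degradation with depth is exactly what \cref{lm:keylemma} formalizes, and it is why the paper's recursive algorithm only attains $2-\frac{1}{2k-1}$. The paper's $(3/2,1/2)$ algorithm is deliberately non-recursive: after APSP (Seidel) and a single application of \cref{lm:step1gen}, it either finds all $k$ centers inside $S$ via a product of matrices indexed by $\lceil k/2\rceil$- and $\lfloor k/2\rfloor$-tuples of $S$, or it fixes $s_1\in W$ and searches the remaining $k-1$ centers over \emph{all of} $V$ (not a sample) via a second rectangular product with rows $W\times V^{\lceil k/2\rceil-1}$ and columns $V^{\lfloor k/2\rfloor}$; balancing the two products with $\delta=1/(k+1)$ gives the stated exponents.

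A secondary, also concrete, problem is your matrix-multiplication accounting. Factoring the Case~A search as ($(k-1)$-tuples of $S$) $\times$ $V$ $\times$ ($1$-tuples of $S$) makes the smaller outer dimension $n^{1-\delta}<n$, so the middle dimension $n$ is not small relative to it and the regime $\omega(1,a,1)=2$ (equivalently $\mathrm{MM}(N,N^{a},N)\le N^{2+o(1)}$ for $a<0.3213$) cannot be invoked; with this split the cost works out to roughly $n^{(k-2+\omega)(1-\delta)+\delta}$, which exceeds both the claimed $k\ge 4$ exponent (by $(3-\omega)/(k+1)$) and, more importantly, cannot be pushed to $n^{k-1+1/(k+1)+o(1)}$ for $k\ge 10$. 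The balanced $\lceil k/2\rceil$-versus-$\lfloor k/2\rfloor$ split used in the paper is what makes both outer dimensions of order $n^{(1-\delta)k/2}\gg n$, so that for $k\ge 10$ the shared dimension $n$ falls below the $0.3213$ threshold and the $(\omega-2)$ overhead disappears up to $n^{o(1)}$ factors.
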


\begin{proof}
    For any integer $R$ we will construct an algorithm that finds a set of $k$ points that cover all vertices with radius $\leq R + \ceil{\frac{R}{2}}$, if $R_k(G)\leq R$, thus determining if $R_k(G)> R$ or $R_k(G) \leq R + \ceil{\frac{R}{2}}$. By performing a binary search over the possible values of $R$, this algorithm achieves a $3/2$ approximation to the $k$-center if $R_k(G)$ is even. If $R_k(G)$ is odd, we obtain a $(3/2, 1/2)$-approximation. 
    
    Assume $R_k(G)\leq R$ and begin by computing APSP in $O(n^\omega)$ time via Seidel's algorithm~\cite{seidel}.
    Sample a vertex set $S$ of size $|S| = O(n^{1-\delta}\log n)$. Let $w$ be the furthest point from $S$ and take $W$ to be the closest $n^\delta$ nodes to $w$. By \cref{lm:step1gen}, either there exist $k$ points in $S$ that cover the graph with radius $R + \ceil{\frac{R}{2}}$ or there exists a point $s_1\in S$ such that $d(s_1, c_1) \leq R - \ceil{\frac{R}{2}} \leq  \frac{R}{2}$.

    To check if we are in the first case, construct an $|S|^{\ceil{k/2}}\times n$ matrix $A$ with rows indexed by $\ceil{k/2}$-tuples of points in $S$ and columns indexed by $V$, such that $A[(t_1, \ldots, t_{\ceil{k/2}}), v] = 0$ if $\exists i~d(v, t_i) \leq \ceil{\frac{3R}{2}}$ and $1$ otherwise. Similarly, construct an $n\times |S|^{\floor{k/2}}$ matrix $B$ such that $B[v, (t_1, \ldots, t_{\floor{k/2}})] = 0 \iff \exists i~d(v, t_i) \leq \ceil{\frac{3R}{2}}$. Multiply $A$ by $B$. By construction, $A B[(t_1, \ldots, t_{\ceil{k/2}}), (t'_1, \ldots, t'_{\floor{k/2}})] = 0$ if and only if the points $(t_1, \ldots, t_{\ceil{k/2}}, t'_1, \ldots, t'_{\floor{k/2}})$ cover all vertices of $V$ with radius $\ceil{\frac{3R}{2}}$. Thus, if we are in the first case, where there exist $t_1,\ldots, t_k\in S$ that cover $V$ with radius $\ceil{\frac{3R}{2}}$, we have $AB[(t_1, \ldots, t_{\ceil{k/2}}), (t_{\ceil{k/2} +1}, \ldots, t_k)] = 0$ we are able to find this $k$-tuple at this point.

    Otherwise, we have a point $s_1\in W$ such that $d(s_1, c_1) \leq \frac{R}{2}$, and hence $s_1$ covers all nodes with radius $\le \frac{3R}{2}$. We now search for the remaining $k-1$ centers. To do so we construct an $n^{\delta + \ceil{k/2} - 1} \times n$ matrix $C$ with rows indexed by  $W\times V^{\ceil{k/2} - 1}$ and columns indexed by $V$ where $C[(s_1, v_2, \ldots, v_{\ceil{k/2}}), v] = 0$ if $d(v, \{s_1, v_2, \ldots, v_{\ceil{k/2}}\}) \leq \frac{3R}{2}$ and 1 otherwise. Similarly define an $n\times n^{\floor{k/2}}$ matrix $D$ where $D[v, (v_1, \ldots, v_{\floor{k/2}})] = 0 \iff \exists i ~ d(v, v_i) \leq \frac{3R}{2}$. Multiply $C$ by $D$ and note that, as before, $CD[(s_1, v_2, \ldots, v_{\ceil{k/2}}), (v_{\ceil{k/2}+1}, \ldots, v_k)]=0$ if and only if the points $s_1, v_2, \ldots, v_k$ cover all vertices with radius $\leq \frac{3R}{2}$. Therefore, we are able to find such a $k$-tuple.

    The runtime of this algorithm is dominated by computing APSP  and the two matrix products. Assuming $k\geq 3$, we can bound the final runtime within polylogs by
\[n^\omega + \textrm{MM}\left(n^{(1-\delta)\ceil{k/2}}, n, n^{(1-\delta)\floor{k/2}}\right) + \textrm{MM}\left(n^{\delta + \ceil{k/2} - 1}, n, n^{\floor{k/2}}\right).\]

Let's consider the running time for $k=3$. We bound all rectangular matrix multiplication bounds by appropriate square matrix multiplications; we also set $\delta=\frac{1}{\omega+1}$:

\begin{align*}
& \tilde{O}\left(n^\omega + \textrm{MM}\left(n^{2(1-\delta)}, n, n^{1-\delta}\right) + \textrm{MM}\left(n^{\delta + 1}, n, n\right)\right) \leq \\
& \tilde{O}\left(n\cdot \textrm{MM}\left(n^{1-\delta}, n^{1-\delta}, n^{1-\delta}\right) + n^{\delta}\cdot\textrm{MM}\left(n, n, n\right) \right)\leq \\
& \tilde{O}\left(n^{1+\omega(1-\delta)}+n^{\omega+\delta}\right) = \tilde{O}(n^{\omega+\frac{1}{\omega+1}}).
\end{align*}

We proceed similarly for $k\geq 4$ but setting $\delta = \frac{1}{1+k}$. We omit the $\tilde{O}$ below:

\begin{align*}
& n^\omega + \textrm{MM}\left(n^{(1-\delta)\ceil{k/2}}, n, n^{(1-\delta)\floor{k/2}}\right) + \textrm{MM}\left(n^{\delta + \ceil{k/2} - 1}, n, n^{\floor{k/2}}\right) \leq \\
  & n^\omega + n^{(1-\delta)\ceil{k/2} - 1}\cdot n^{(1-\delta)\floor{k/2} - 1}\cdot \textrm{MM}(n,n,n) + n^{\delta + \ceil{k/2} - 2}\cdot n^{\floor{k/2} - 1}\cdot \textrm{MM}(n,n,n)  = \\
 & n^\omega + n^{(1-\delta)k + \omega - 2} + n^{\delta + k +\omega -3}\leq  n^{k + \frac{1}{k+1} + \omega - 3}.
 \end{align*}


    If $\omega > 2$ we can improve on this runtime using the current fastest techniques for rectangular matrix multiplication. 
    We give the best improvement which holds for $k\geq 10$.
We again set $\delta = \frac{1}{1+k}$.
For $k\geq 10$ we can use the fact that for all $a<0.3213$, $\textrm{MM}(N,N^a,N)\leq N^{2+o(1)}$ (see \cite{VXXZ24}) to bound the running time. We omit $n^{o(1)}$ factors below:
\begin{align*}
   & n^\omega + \textrm{MM}\left(n^{(1-\delta)\ceil{k/2}}, n, n^{(1-\delta)\floor{k/2}}\right) + \textrm{MM}\left(n^{\delta + \ceil{k/2} - 1}, n, n^{\floor{k/2}}\right) \leq \\
   & n^\omega + n^{(1-\delta)(\ceil{k/2}-\floor{k/2})}\cdot
   \textrm{MM}\left(n^{(1-\delta)\floor{k/2}}, n, n^{(1-\delta)\floor{k/2}}\right) + n^{1-\delta+\ceil{k/2}-\floor{k/2}}\cdot \textrm{MM}\left(n^{\delta + \floor{k/2} - 1}, n, n^{\delta + \floor{k/2} - 1}\right) \leq \\
   & n^\omega + n^{(1-\delta)k}+n^{\delta-1+k} =  n^{k-1+\frac{1}{k+1}},
\end{align*}
where the last inequality holds because for $k\geq 10$, 
$0.3213(1-\delta)\floor{k/2}\geq 0.3213(1-1/(k+1))5> 1.5 (1-1/11)>1$ and 
$0.3213(\delta + \floor{k/2} - 1)\geq 0.3213\cdot 4>1.$
\end{proof}

\subsection{\texorpdfstring{$2-\frac{1}{2k-1}$}{2-1/(2k-1)} Approximation for \texorpdfstring{$k$}{k}-Center}
Next we show a faster algorithm for approximating $k$-center, with the approximation factor growing with $k$. Our algorithm will begin the same way as the $3/2$ approximation, by sampling a set $S$ that either contains $k$ points that can act as approximate centers, or finding a set $W$ which has a point $s_1\in W$ such that $d(s_1, c_1)$ is small. We then show that we can find $k-1$ centers in $S$ to cover all vertices along with $s_1$, or we find a point $s_2$ which is close to $c_2$. We repeat this process until we have $s_1, s_2, \ldots, s_k$ such that $d(s_i, c_i)$ is bounded for every $i$, and use these $k$ points as approximate centers. We first prove a combinatorial version of our algorithm, with a simple running time analysis. In the following section we show how to improve the runtime using fast matrix multiplication.


\begin{theorem} \label{thm:2kapproxkcenterwordy} 
    There exists an algorithm running in $\tilde{O}(mn + n^{k/2 + 1})$ time that, when given an integer $R$ and an unweighted, undirected graph $G$,  determines w.h.p one of the following 
    \begin{enumerate}
        \item $R_k(G) > R$,
        \item $R_k(G) \leq 2R - \floor{\frac{R}{2k-1}}$.
    \end{enumerate}
\end{theorem}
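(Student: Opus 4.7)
The plan is to design a depth-$k$ branching algorithm that constructs the approximate centers incrementally. Set $r = \lfloor R/(2k-1)\rfloor$ and target coverage radius $\tilde R = 2R - r$. Assuming $R_k(G) \leq R$ with optimal centers $c_1, \ldots, c_k$, it suffices to find $k$ centers covering $V$ at radius $\tilde R$; otherwise I report $R_k(G) > R$. First, I would precompute BFS from every vertex in $\tilde O(mn)$ time, then sample $S \subseteq V$ of size $\Theta(\sqrt n \log n)$, which w.h.p.\ intersects every vertex subset of size at least $\sqrt n$.

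The recursive procedure takes a partial set $\mathcal{C}$ of chosen centers with $|\mathcal{C}| = i < k$ and performs two steps. First, it enumerates every $(k-i)$-tuple from $S$ and checks in $O(n)$ time whether the tuple together with $\mathcal{C}$ covers $V$ at radius $\tilde R$; if any tuple works, return it. If not, it computes the uncovered region $U = V \setminus B(\mathcal{C}, \tilde R)$, lets $w$ be the vertex of $U$ farthest from $S$, lets $W$ be the $\sqrt n$ vertices of $V$ nearest to $w$, and branches on each $s \in W$, recursing on $\mathcal{C} \cup \{s\}$. The correctness invariant maintained along a ``good'' branch is that $\mathcal{C} = \{s_1,\ldots,s_i\}$ satisfies $d(s_j, c_{\pi(j)}) \leq R - r$ for some injection $\pi\colon [i]\to [k]$, so each $s_j$ covers $B(c_{\pi(j)}, R)$ at radius $\tilde R$ and the remaining region $U$ lies in $\bigcup_{l \notin \operatorname{Im}(\pi)} B(c_l, R)$.

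The heart of the analysis is the expansion step. If step 1 fails, no $(k-i)$-tuple from $S$ completes $\mathcal{C}$, so some remaining active center $c_{l^*}$ (with $l^*\notin \operatorname{Im}(\pi)$ and $B(c_{l^*},R)\cap U\neq \emptyset$) must satisfy $d(c_{l^*}, S) > R - r$; otherwise the centers in $S$ closest to each such $c_{l^*}$ would yield a valid completion. I would then argue that $d(w, S) > R - r$, which puts $B(w, R - r)$ inside $W$ (since $S$ hits every set of size $\sqrt n$ and thus misses any ball with fewer than $\sqrt n$ vertices). The point $p$ at distance $R - r$ from $w$ along the shortest path to $w$'s closest optimal center $c_{l'}$ then lies in $W$ with $d(p, c_{l'}) \leq r \leq R - r$. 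Moreover, $w \in U$ forces $c_{l'}$ to be unmatched, since if $c_{l'} = c_{\pi(j)}$ the triangle inequality would give $d(w, s_j) \leq R + (R-r) = \tilde R$, contradicting $w \in U$. Hence the good branch can be extended to depth $i+1$ via $p$.

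For the runtime, the search tree has $n^{i/2}$ nodes at depth $i$; each performs step 1 in $\tilde O(|S|^{k-i}\cdot n) = \tilde O(n^{(k-i)/2+1})$ time, contributing $\tilde O(n^{k/2+1})$ per depth and $\tilde O(k \cdot n^{k/2+1})$ in total. Combined with APSP preprocessing, the overall bound is $\tilde O(mn + n^{k/2+1})$. The main obstacle is establishing $d(w,S)> R-r$ when step 1 fails: a priori one only knows $d(c_{l^*},S)> R-r$, but $c_{l^*}$ may lie in the buffer region $B(\mathcal{C},\tilde R)\setminus B(\mathcal{C},R-r)$ rather than in $U$, so the bound on $w$ does not follow by simply setting $w = c_{l^*}$. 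Resolving this requires a careful case analysis, in particular locating a vertex in $B(c_{l^*},R) \cap U$ that inherits the ``far-from-$S$'' property from $c_{l^*}$, which forms the main technical content of the proof.
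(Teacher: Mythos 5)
Your overall skeleton matches the paper's: compute APSP, sample $S$ of size $\tilde O(\sqrt n)$, at each level either complete the current partial solution with a $(k-i)$-tuple from $S$ or branch over the $\sqrt n$ vertices nearest to the farthest uncovered vertex $w$, giving the $\tilde O(mn+n^{k/2+1})$ recursion. But the step you defer at the end --- establishing that $w$ is far from $S$, or otherwise extracting a usable conclusion when it is not --- is not a technicality to be patched later; it is the entire content of the paper's key lemma (\cref{lm:keylemma}), and the fixed-threshold form you are hoping for is not available. Your plan maintains the invariant $d(s_j,c_{\pi(j)})\le R-r$ with the \emph{same} $r$ at every depth, and needs the dichotomy ``either $d(w,S)>R-r$, or $k-i$ points of $S$ cover the uncovered region at radius $2R-r$.'' The paper proves only a weaker dichotomy in which the error \emph{accumulates}: with $d(s_j,c_j)\le(2j-1)\alpha$, it works with two different sets $U=B(C_i,R+r)^c$ and $Y=B(C_i,2R-\alpha)^c$ (you collapse these into one set with threshold $2R-r$), and in the case $d(w,S)\le R-r-2\alpha$ it shows every center relevant to $Y$ is within $R-\alpha$ (not $R-r$) of $S$, via the argument that the shortest path from such a center $c_j$ to a point $y\in Y$ first enters $U$ at a vertex $x$ with $d(c_j,x)\le r+\alpha$. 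That detour loss of $r+\alpha$ is exactly why the per-step error must grow by $2\alpha$, and with your single threshold $2R-r$ the analogous path argument yields nothing (the predecessor bound degenerates to $d(p,y)>0$). Moreover, because the error reaches $(2k-1)\alpha$ after $k$ steps, the paper needs a separate final-step device (\cref{lm:laststep}/\cref{lm:laststepgen}, intersecting the balls $Q=\bigcap_{t\in S\cap U}B(t,R)$ as in the $2$-center warm-up) to recover the $\frac{1}{2k-1}$ rather than $\frac{1}{2k}$; your outline contains no counterpart, and without it the accumulating-error route only gives $2-\frac{1}{2k}$.

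A sanity check shows the deferred step cannot be closed in the form you want: nothing in your sketch uses the specific value $r=\lfloor R/(2k-1)\rfloor$ beyond $r\le R-r$, so if your fixed-threshold dichotomy were provable it would equally apply with $r\approx R/2$ and yield a $(3/2,O(1))$-approximation in $\tilde O(mn+n^{k/2+1})$ time; on sparse graphs this contradicts the paper's SETH-based lower bound (\cref{thm:betterlbwarmup1}, which rules out $(5/3-\eps,\beta)$-approximations in $m^{k-1-o(1)}$ time) already for moderate $k$. So the missing ``careful case analysis'' is where the approximation ratio is actually paid for, and the correct resolution requires both the growing distance bounds with the two-radius bookkeeping of \cref{lm:keylemma} and the ball-intersection trick at the last level.
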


By binary searching over the possible values of $R$ we obtain the desired approximation. If $R_k(G)$ is divisible by $(2k-1)$, this algorithm gives a multiplicative $(2 - \frac{1}{2k-1})$ approximation. Otherwise, we incur an additive error of at most $+\left(1 - \frac{1}{2k-1}\right)$. This gives us our desired result:

\begin{theorem}\label{thm:2kapproxkcenter}
    There exists an algorithm running in $\tilde{O}(mn + n^{k/2 + 1})$ that computes a $\left(2 - \frac{1}{2k-1}, 1 - \frac{1}{2k-1}\right)$-approximation to $k$-center.
\end{theorem}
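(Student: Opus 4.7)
Set $r=\lfloor R/(2k-1)\rfloor$ and $R^*=2R-r$; assuming $R_k(G)\le R$ with optimal centers $c_1,\dots,c_k$, the plan is to produce a $k$-tuple covering $V$ within radius $R^*$, and a binary search over $R$ then yields the $\bigl(2-\tfrac{1}{2k-1},1-\tfrac{1}{2k-1}\bigr)$-approximation stated in \cref{thm:2kapproxkcenter}. The strategy lifts the two-level structure of the $(5/3,2/3)$-algorithm for $2$-center (\cref{thm:53approx2center}) to $k$ levels of iterative candidate generation.

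First compute APSP in $O(mn)$ time by BFS from every vertex, so subsequent distance queries cost $O(1)$, and sample $T\subseteq V$ uniformly of size $\Theta(\sqrt{n}\log n)$ so that w.h.p.\ $T$ meets the $\sqrt{n}$-nearest neighborhood of every vertex. Then build candidate $k$-tuples iteratively: starting from $C_0=\{()\}$, at level $j=1,\dots,k$ and for each partial tuple $\sigma=(s_1,\dots,s_{j-1})\in C_{j-1}$, form the uncovered set $U(\sigma)=\{v:d(v,s_l)>R^*\text{ for all }l\}$ via APSP lookups; if $U(\sigma)=\emptyset$ append an arbitrary vertex, otherwise pick $w(\sigma)=\arg\max_{v\in U(\sigma)}d(v,T)$, let $W(\sigma)$ be the $\sqrt{n}$ vertices closest to $w(\sigma)$, and push $\sigma\cdot(s)$ into $C_j$ for every $s\in T\cup W(\sigma)$. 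After level $k$, verify every tuple in $C_k$ in $O(n)$ time via APSP. Since $|C_{j-1}|=O(n^{(j-1)/2})$ and each partial costs $O(n\sqrt{n})$ to process, the total sums to $\tilde{O}(mn+n^{k/2+1})$ as required.

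For correctness I plan to argue inductively that $C_{j-1}$ always contains a \emph{correct} partial tuple, meaning distinct indices $i_1,\dots,i_{j-1}$ exist with $d(s_l,c_{i_l})\le R-r$ for every $l<j$; the $j=k+1$ case then yields coverage within $R+(R-r)=R^*$ by the triangle inequality. For the inductive step on a correct $\sigma^*$ with index set $I$ and $\bar{I}=[k]\setminus I$, every $v\in U(\sigma^*)$ satisfies $d(v,c_{i_l})>R$ for all $l<j$, so $U(\sigma^*)\subseteq\bigcup_{i\in\bar{I}}B(c_i,R)$. If some $i^*\in\bar{I}$ has $T\cap B(c_{i^*},R-r)\ne\emptyset$, pick $t$ in the intersection and extend via $T$; otherwise the sampling guarantee forces $|B(c_i,R-r)|<\sqrt{n}$ for every $i\in\bar{I}$, and the extension should come from $W(\sigma^*)$: the vertex at distance exactly $r$ from $w(\sigma^*)$ along the shortest path to its covering remaining center $c_{i_w}$ lies within $R-r$ of $c_{i_w}$, and it lies in $W(\sigma^*)$ provided $d(w(\sigma^*),T)>r$.

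The main technical obstacle will be justifying $d(w(\sigma^*),T)>r$ in this second case. The easy lower bound $d(w(\sigma^*),T)\ge d(c_i,T)>R-r$ only applies when some $c_i$ ($i\in\bar{I}$) itself lies in $U(\sigma^*)$, whereas the delicate configuration in which every remaining $c_i$ has already been pulled within $R^*$ of some $s_l^*$ requires transferring the thinness $|B(c_i,R-r)|<\sqrt{n}$ along a shortest path from $c_i$ into $B(c_i,R)\cap U(\sigma^*)$ to exhibit an explicit witness vertex far from $T$; in the worst case this may necessitate either strengthening the invariant (e.g., tracking $s_l$'s obtained via $W$ at the tighter distance bound $r$ rather than $R-r$) or introducing an auxiliary $Q$-style candidate analogous to the intersection-of-balls construction used in the proof of \cref{thm:53approx2center}.
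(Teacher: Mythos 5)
Your algorithmic skeleton (APSP, a $\tilde\Theta(\sqrt n)$ random sample, furthest-point selection, the $\sqrt n$-nearest sets $W(\sigma)$, a tree of $\tilde O(n^{k/2})$ candidate tuples, final verification) matches the paper's, and your runtime accounting is fine; but the correctness argument has a genuine gap, exactly at the step you flagged, and it is not a minor patch. Your inductive dichotomy --- either some remaining center $c_{i^*}$ has $T\cap B(c_{i^*},R-r)\neq\emptyset$, or $d(w(\sigma^*),T)>r$ so that $W(\sigma^*)$ contains a vertex within $R-r$ of a remaining center --- is false. Counterexample shape (already for $k=2$): let $c_2$ lie within $2R-r$ of the already-chosen $s_1$, let $B(c_2,R-r)$ be sparse (a bare path, so $T$ misses it w.h.p.), and let the only part of $B(c_2,R)$ that is uncovered at radius $2R-r$ be a dense cluster $X$ of $\Theta(n)$ vertices sitting at distance exactly $R$ from $c_2$ at the far end of that path. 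Then $U(\sigma^*)\approx X$, every point of $U(\sigma^*)$ is within distance $O(1)\le r$ of $T$ (so your Case B hypothesis fails), no $T$-point is within $R-r$ of $c_2$ (so Case A fails), and $W(\sigma^*)$ consists of cluster vertices all at distance about $R$ from $c_2$ --- so no extension in $T\cup W(\sigma^*)$ satisfies your uniform invariant $d(s_l,c_{i_l})\le R-r$, nor the tightened bound $r$ you suggest as a fallback. (The algorithm happens to survive this instance because a $T$-point inside $X$ covers $U(\sigma^*)$ directly, but that outcome is invisible to your invariant, so the proof does not go through.)

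The missing idea is the paper's two-radius mechanism. In \cref{lm:keylemma} the ``exploration'' set is $U=B(C_i,R+r')^c$ with $r'$ the \emph{current} error bound, which is strictly smaller in radius than the ``still uncovered'' set $Y=B(C_i,2R-\alpha)^c$; the positive gap $(2R-\alpha)-(R+r')=R-r'-\alpha$ is what makes the predecessor/crossing-point argument work: if the furthest point of $U$ is within $R-r'-2\alpha$ of the sample, then \emph{every} center responsible for a point of $Y$ is within $R-\alpha$ of the sample, so the solution can be completed from sample points at that level. With your single radius $R^*=2R-r$ used both to define $U(\sigma)$ and as the coverage target, the analogous triangle-inequality bound degenerates to $d(p,y)>0$ and gives no control on $d(c_j,T)$ --- which is precisely what the counterexample exploits. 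The price of the paper's fix is that the per-coordinate error must be allowed to grow by $2\alpha$ per level ($d(s_j,c_j)\le(2j-1)\alpha$), and a last-step intersection-of-balls argument (\cref{lm:laststep}) is needed to save one doubling and reach $2-\frac{1}{2k-1}$ rather than $2-\frac{1}{2k}$; a uniform error bound of $R-r$ maintained purely by your $T$-versus-$W$ dichotomy cannot be sustained. So to repair your proof you would have to (i) define $U(\sigma)$ with radius $R+(\text{current error})$ rather than $R^*$, (ii) add the ``complete the tuple from the sample'' outcome at every level and justify it by the crossing-point lemma, and (iii) handle the last coordinate via the $Q$-type construction --- at which point you have reproduced the paper's proof of \cref{thm:2kapproxkcenterwordy}.
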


\begin{proofof}{Proof of \cref{thm:2kapproxkcenterwordy}}
    Denote by $\alpha = \floor{\frac{R}{2k-1}}$. Assume $R_k(G) \leq R$, we will show in this case that we are able to find a set of $k$ points that cover all vertices with radius $2R- \alpha$. If we are unable to do so we determine that $R_k(G) > R$. As with the $3/2$ approximation, we begin by computing APSP in time $O(mn)$ and then randomly sampling a set $S$ of size $|S| = O(\sqrt{n} \log n)$. Let $w_1$ be the furthest node from $S$ and let $W_1$ be the closest $\sqrt{n}$ vertices to $w_1$.

    By \cref{lm:step1gen}, either there exist $k$ points in $S$ that cover all vertices with radius $2R - \alpha$, or there exists a point $s_1\in W_1$ with $d(s_1, c_1) \leq \alpha$. To check if we are in the first case, check for every every $k$-tuple in $S^k$ whether it covers all vertices with radius $2R - \alpha$, doing so takes $n\cdot |S|^k = \tilde{O}(n^{k/2 + 1})$ time.

    If we have not finished, we can assume $\exists s_1\in W_1~d(s_1, c_1) \leq \alpha$. Fix a vertex $s_1\in W_1$ and define $U = B(s_1, R + \alpha)^c$, the points of distance $>R+\alpha$ from $s_1$, and $Y = B(s_1, 2R - \alpha)^c$. If we have the correct $s_1$ (i.e. $d(s_1, c_1) \le \alpha$), then none of the points in $U$ are covered by $c_1$. Let $w_2$ be the furthest point in $U$ from $S$.
    In particular, there exists a center $c_i$ for $i\neq 1$ such that $d(c_i, w_2) \leq R$, w.l.o.g $i=2$.
    Let $W_2$ be the $\sqrt{n}$ closest nodes to $w_2$. If $d(w_2, S) > R - 3\alpha$, then $W_2$ contains all the vertices of distance $\leq R - 3\alpha$ from $w_2$, and therefore, there exists a point $s_2\in W_2$ on the shortest path from $w_2$ to $c_2$ such that $d(w_2, s_2) = R - 3\alpha$ and $d(s_2, c_2) \leq 3\alpha$. Otherwise, we claim that $k-1$ points in $S$ can cover $Y$ with radius $2R - \alpha$. More generally, we show the following key lemma.

    \begin{lemma}\label{lm:keylemma}
        Let $1\leq i < k$, let $r, \alpha \in \Z$ be such that $r +\alpha \leq R$, and let $C_i = \{s_1, \ldots, s_i\}$ be a set of $i$ points where $d(s_j, c_j) \leq r$ for every $1\le j\le i$. Define $U = B(C_i, R + r)^c, Y = B(C_i, 2R -\alpha)^c$. Let $w_{i+1}$ be the furthest point in $U$ from $S$ and let $W_{i+1}$ be the closest $\sqrt{n}$ nodes to $w_{i+1}$. One of the following two holds.
        \begin{enumerate}
            \item $\exists s_{i+1}\in W_{i+1}$ such that $d(s_{i+1}, c_j) \le  r + 2\alpha$ for some $j > i$, w.l.o.g $j = i+1$.
            \item There exist $k-i$ points in $S$ that cover $Y$ with radius $2R - \alpha$.
        \end{enumerate}
    \end{lemma}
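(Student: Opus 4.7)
My approach is to prove the lemma by case analysis on the distance $d(w_{i+1}, S)$, using the threshold $\tau := R - (r + 2\alpha)$.

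First I would establish two preliminary facts. Since $d(s_l, c_l) \le r$ for each $l \le i$, no $u \in U = B(C_i, R+r)^c$ can satisfy $d(u, c_l) \le R$ for any $l \le i$: otherwise the triangle inequality would yield $d(u, s_l) \le R + r$, contradicting $u \in U$. Combined with the standing assumption $R_k(G) \le R$, this forces every $u \in U$ to be covered within $R$ by some $c_j$ with $j > i$. Second, the hypothesis $r + \alpha \le R$ implies $2R - \alpha \ge R + r$, so $Y \subseteq U$; in particular every $y \in Y$ is covered by some $c_j$ with $j > i$.

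For Case~1 ($d(w_{i+1}, S) > \tau$), I would use the standard sampling argument: since $|S| = \widetilde\Theta(\sqrt n)$ hits the $\sqrt n$-nearest neighbors of every vertex w.h.p., some $s \in S \cap W_{i+1}$ satisfies $d(w_{i+1}, s) \ge d(w_{i+1}, S) > \tau$, which forces $W_{i+1}$ to contain \emph{every} vertex within distance $\tau$ of $w_{i+1}$. Using the preliminary observation, after relabeling I may assume $d(w_{i+1}, c_{i+1}) \le R$, and then pick $s_{i+1}$ to be the vertex on the shortest $w_{i+1}$-to-$c_{i+1}$ path at distance exactly $\tau$ from $w_{i+1}$ (such a vertex exists because the graph is unweighted and $\tau \ge 0$ in all iterations the algorithm actually invokes). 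Then $s_{i+1} \in W_{i+1}$ and $d(s_{i+1}, c_{i+1}) \le R - \tau = r + 2\alpha$, giving alternative~1.

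For Case~2 ($d(w_{i+1}, S) \le \tau$), the maximality of $w_{i+1}$ in $U$ gives $d(u, S) \le \tau$ for every $u \in U$, and thus for every $y \in Y$. I would exhibit the $k-i$ covering points by partitioning $Y = \bigsqcup_{j > i} Y_j$ according to which center in $\{c_{i+1}, \ldots, c_k\}$ is closest to each $y$ (so $Y_j \subseteq B(c_j, R)$), and selecting one $t_j \in S$ per nonempty $Y_j$. The technical heart---and the step I expect to be the main obstacle---is choosing each $t_j$ so that $d(t_j, y) \le 2R - \alpha$ for all $y \in Y_j$: the naive choice $t_j := $ (nearest $S$-point to some witness $y^* \in Y_j$) only gives $d(t_j, y) \le \tau + \mathrm{diam}(Y_j) \le 3R - r - 2\alpha$ by the triangle inequality through $c_j$, which matches the target $2R - \alpha$ only at the boundary $r + \alpha = R$. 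I anticipate the paper closes this gap either by selecting each $t_j$ from an intersection of sample-indexed balls (analogous to the set $Q$ used in the $5/3$-approximation proof of \cref{thm:53approx2center}), or by a sharper analysis that exploits simultaneously the proximity of each $Y_j$-witness to $S$ and to $c_j$ in a way stronger than what the bare triangle inequality provides.
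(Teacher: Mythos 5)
Your Case~1 argument matches the paper's exactly, and your preliminary observations (every $u\in U$ must be covered by some $c_j$ with $j>i$, and $Y\subseteq U$) are correct. But your Case~2 is not a proof: you explicitly leave open the step of selecting the $k-i$ points of $S$, you correctly compute that the naive choice (nearest $S$-point to a witness $y^*\in Y_j$) only gives $3R-r-2\alpha$, and you only speculate about how the gap might be closed. That is precisely the missing idea, so the proposal has a genuine gap.

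The paper closes it with a claim you did not find: in Case~2 ($d(u,S)\le R-r-2\alpha$ for all $u\in U$), \emph{any center $c_j$ that covers some $y\in Y$ within distance $R$ must itself satisfy $d(c_j,S)\le R-\alpha$}. If $c_j\in U$ this is immediate; otherwise one takes a shortest $c_j$-to-$y$ path, lets $x$ be the first vertex of the path lying in $U$ (it exists since $y\in Y\subseteq U$) and $p$ its predecessor. Since $p\notin U$, $d(p,s_\ell)\le R+r$ for some $\ell\le i$, while $d(s_\ell,y)>2R-\alpha$, so by the triangle inequality $d(p,y)>R-r-\alpha$; as $p$ lies on a shortest path of length $d(c_j,y)\le R$, this forces $d(c_j,p)<r+\alpha$, hence $d(c_j,x)<r+\alpha+1$, and integrality gives $d(c_j,x)\le r+\alpha$. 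Then $d(c_j,S)\le d(c_j,x)+d(x,S)\le (r+\alpha)+(R-r-2\alpha)=R-\alpha$. With this claim, for each center $c_j$ ($j>i$) covering part of $Y$ one picks $t_j\in S$ with $d(t_j,c_j)\le R-\alpha$, and every $y\in Y$ covered by $c_j$ satisfies $d(t_j,y)\le (R-\alpha)+R=2R-\alpha$, which is alternative~2. Note that the correct choice is an $S$-point near the \emph{center} $c_j$, not near a witness in $Y_j$, which is exactly why your triangle-inequality bound fell short; and your other guess, an intersection of balls like $Q$, is what the paper uses only in the separate last-step lemma, not here. Also note where the integrality hypothesis on $r,\alpha$ and the fact that $p$ lies on a shortest path are used --- your sketch never exploits either.
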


    \begin{proof}
        First consider the case when $d(w_{i+1}, S) > R - r - 2\alpha$. By the argument we have shown before, since $S$ hits $W_{i+1}$ and $w_{i+1}$ is covered by a center $c_j$ for some $j > i$, there exists a point $s_{i+1}\in W_{i+1}$ on the shortest path from $w_{i+1}$ to $c_j$ that has $d(w_{i+1}, s_{i+1}) = R - r - 2\alpha$ and $d(s_{i+1}, c_j) \leq r + 2\alpha$.

        Otherwise, any point $u\in U$ has $d(u, S)\leq R - r - 2\alpha$. We claim that in this case, any center $c_j$ that has $d(c_j, y) \leq R$ for some point  $y\in Y$, must have $d(c_j, S) \leq R - \alpha$.
        
        If $c_j \in U$, then $d(c_j, S) \leq R- r - 2\alpha \le R - \alpha$ and we are done. Otherwise, fix a shortest path from $c_j$ to $y$, let $x$ be the first vertex on the path that belongs to $U$ (it must exist since $y\in Y \subseteq U$). 
        Since $x\in U$ we know $d(x, S)\leq R - r - 2\alpha$. Thus, it suffices to show that $d(c_j, x) \leq \alpha + r$, in which case we would have
         \[
         d(c_j, S) \leq d(c_j, x) + d(x, S) \leq \alpha + r + R - r - 2\alpha = R - \alpha.
         \]

        Let $p$ be the predecessor of $x$ on the shortest path from $c_j$ to $y$, see \cref{fig:keylemma}. By our choice of $x$ we know $p\notin U$ and so $d(p, s_\ell) \leq R + r$ for some $1\leq \ell \leq i$, by the definition of $U$. Then, since $d(s_\ell, y) > 2R - \alpha$, we have by the triangle inequality,
        \[
        d(p, y) \geq d(s_\ell, y) - d(p, s_\ell) > 2R - \alpha - (R+r) = R - r - \alpha.
        \]

        \begin{figure}
            \centering
            \includegraphics[width=0.50\linewidth]{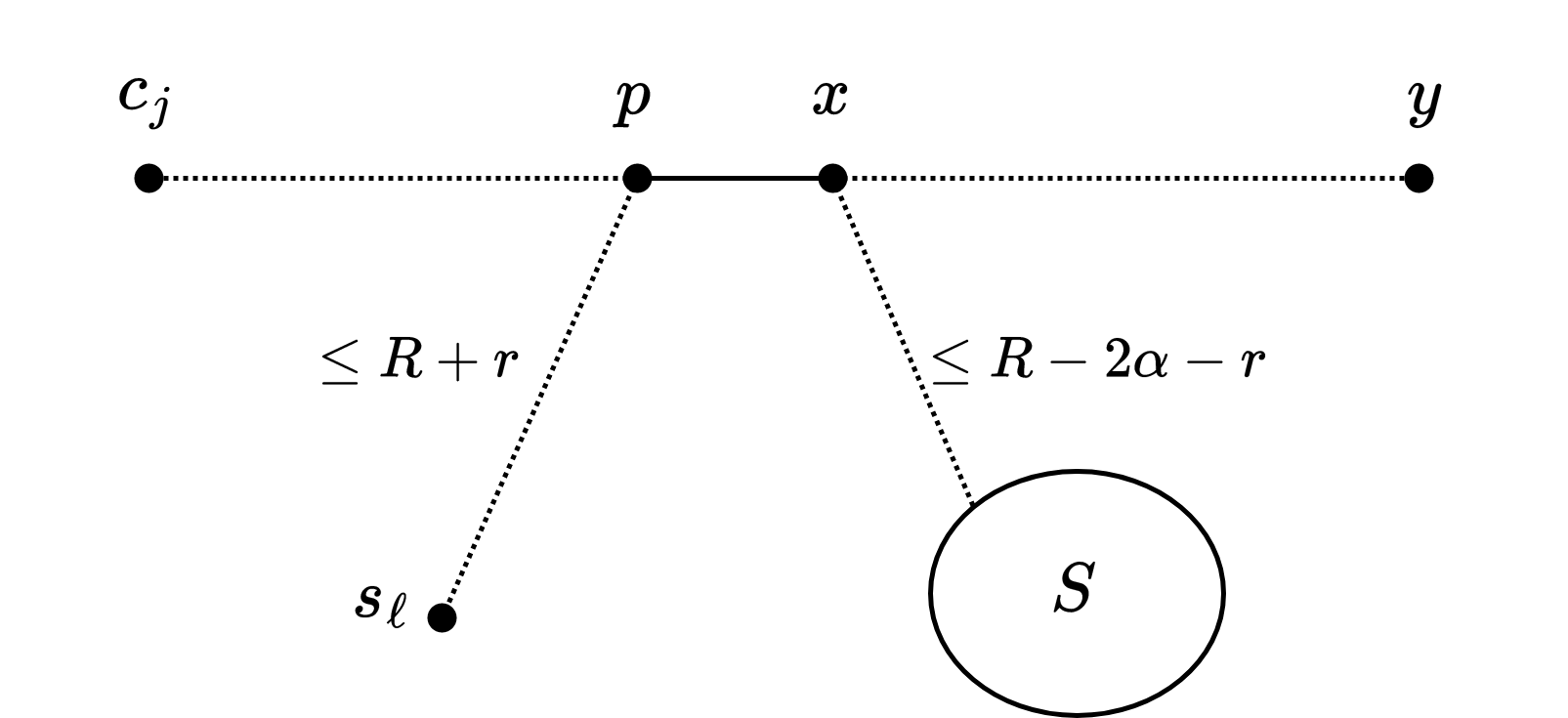}
            \caption{Shortest Path from $c_j$ to $y$}
            \label{fig:keylemma}
        \end{figure}
        
        Now, since $d(c_j, y)\leq R$ we conclude that $d(c_j, p) < r + \alpha$ and so $d(c_j, x) < r + \alpha + 1$. Since $d(c_j, x), r$ and $\alpha$ are all integers we can conclude that $d(c_j, x)\leq r + \alpha$. 
        
        Therefore, since the points in $Y$ are covered by $k-i$ centers $c_{i+1}, \ldots, c_k$, there exist $k-i$ points in $S$, each within distance $R-\alpha $ of the appropriate center, that cover $Y$ with radius $2R - \alpha$.
    \end{proof}

    Using this lemma, we perform $k-1$ iterations of the following algorithm. At step $i+1$ we have defined $W_1, \ldots, W_i$ and are guaranteed to have $s_1\in W_1, \ldots, s_i\in W_i$ such that $d(s_j, c_j) \le (2j - 1)\alpha$ for every $j \leq i$. We check if there exists a $k$-tuple in $W_1\times W_2 \times \ldots \times W_i \times S^{k-i}$ that covers all vertices in $V$ with radius $2R - \alpha $ and if so, return this $k$-tuple. This takes us $\tilde{O}(n^{k/2+1})$ time. Otherwise, for every $i$-tuple in $W_1\times \ldots \times W_i$ define $w_{i+1}$ and $W_{i+1}$ as in \cref{lm:keylemma}. By \cref{lm:keylemma} we know that there exists $s_{i+1}\in W_{i+1}$ such that $d(s_{i+1}, c_{i+1}) \le (2i + 1)\alpha$ and so we can begin step $i+1$.

    If we were to perform this step $k$ times we would have that $d(s_k, c_k) \le (2k - 1)\alpha $. In order for $s_1, \ldots, s_k$ to cover all vertices in $V$ with radius $2R - \alpha$, we would need $\alpha \leq \frac{R}{2k}$, which would give us a $2 - \frac{1}{2k}$ approximation. Instead, we do something slightly different at the last step that allows us to get a $2 - \frac{1}{2k-1}$ approximation, using the approach from our $5/3$ approximation for $2$-center.

    At step $k-1$, we have $W_1, \ldots, W_{k-1}$ such that $\exists s_i \in W_i ~d(s_i, c_i) \leq (2k-3)\alpha$ for any $i<k$. For every $(k-1)$-tuple $\tau \in W_1 \times \ldots \times W_{k-1}$ define $U^\tau, Y^\tau, w_k^\tau, W_k^\tau$ as in \cref{lm:keylemma}, using $r = (2k-3)\alpha$. We use the following lemma, generalizing \cref{lm:step2}.

    \begin{lemma}\label{lm:laststep}
        Given $\tau = (s_1, \ldots, s_{k-1})$ such that $d(s_i, c_i) \leq (2k-3)\alpha$ for every $i<k$, one of the following two holds.
        \begin{enumerate}
            \item $\exists s_k \in W_k^\tau$ such that $d(s_k, c_k) \le (2k-2)\alpha$.
            \item $Q\coloneqq \bigcap_{t\in S \cap U^\tau} B(t, R)\neq \emptyset$, and any point $s'_k \in Q$ covers all points in $Y^\tau$ with radius $2R - \alpha$.
        \end{enumerate}
    \end{lemma}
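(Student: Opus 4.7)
The proof plan is to adapt the two-case analysis from \cref{lm:step2} to the $k$-center chain, splitting on whether $d(w_k^\tau, S)$ exceeds the threshold $R - (2k-2)\alpha$. This threshold is chosen precisely so that case~1 yields $d(s_k, c_k) \leq (2k-2)\alpha$, matching the bound in the lemma statement.

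A preliminary structural observation, used in both cases, is that every vertex $u \in U^\tau$ must be covered by $c_k$ (after a suitable relabeling). Indeed, by definition of $U^\tau$ we have $d(u, s_i) > R + (2k-3)\alpha$ for all $i < k$, and combined with the hypothesis $d(s_i, c_i) \leq (2k-3)\alpha$, the triangle inequality yields $d(u, c_i) > R$ for $i < k$. Since the optimal centers cover $V$ with radius $R$, $u$ must be covered by $c_k$. In particular, $c_k \in B(t, R)$ for every $t \in U^\tau$.

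For case~1, when $d(w_k^\tau, S) > R - (2k-2)\alpha$: the standard hitting-set argument (as in \cref{lm:step1gen}) gives w.h.p.\ a vertex $p \in S \cap W_k^\tau$. Since $d(w_k^\tau, p) > R - (2k-2)\alpha$ and $p \in W_k^\tau$, every vertex strictly closer to $w_k^\tau$ than $p$ also lies in $W_k^\tau$; in particular, any vertex at distance $\leq R - (2k-2)\alpha$ from $w_k^\tau$ does. Picking $s_k$ to be the vertex on a shortest $w_k^\tau$-to-$c_k$ path at distance exactly $R - (2k-2)\alpha$ from $w_k^\tau$ (well-defined since $d(w_k^\tau, c_k) \leq R$ and $\alpha \leq R/(2k-1)$) then places $s_k$ in $W_k^\tau$ with $d(s_k, c_k) \leq (2k-2)\alpha$, as desired.

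For case~2, when $d(w_k^\tau, S) \leq R - (2k-2)\alpha$: by maximality of $w_k^\tau$, every $u \in U^\tau$ has a point of $S$ within distance $R - (2k-2)\alpha$. The structural observation then forces $c_k \in \bigcap_{t \in S \cap U^\tau} B(t, R) = Q$, so $Q \neq \emptyset$. For any $s'_k \in Q$ and $y \in Y^\tau$ (noting $Y^\tau \subseteq U^\tau$, since $\alpha = \lfloor R/(2k-1) \rfloor$ forces $R \geq (2k-2)\alpha$), let $t \in S$ be closest to $y$, so that $d(t,y) \leq R - (2k-2)\alpha$. The main obstacle here is verifying $t \in U^\tau$ so that the bound $s'_k \in B(t, R)$ from the definition of $Q$ can be applied: were $t \notin U^\tau$, we would have $d(t, s_j) \leq R + (2k-3)\alpha$ for some $j < k$, giving $d(s_j, y) \leq 2R - \alpha$, contradicting $y \in Y^\tau$. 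Hence $t \in S \cap U^\tau$, and the triangle inequality gives $d(s'_k, y) \leq R + (R - (2k-2)\alpha) = 2R - (2k-2)\alpha \leq 2R - \alpha$, completing the proof.
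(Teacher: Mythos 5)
Your proof is correct and follows essentially the same route as the paper, which establishes this statement as the special case $r=(2k-3)\alpha$, $\delta=1/2$ of the more general \cref{lm:laststepgen}: the same threshold split on $d(w_k^\tau,S)$ versus $R-(2k-2)\alpha$, the same observation that points of $U^\tau$ must be covered by $c_k$ (so $c_k\in Q$), and the same closest-sample-point argument showing $t\in U^\tau$ and $d(s'_k,y)\le 2R-\alpha$. The only difference is presentational: you spell out the hitting-set step that the paper delegates to earlier proofs.
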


    We will prove a slightly more general statement, which we will use in other contexts later on in the paper. Setting $r = (2k - 3)\alpha, \delta = \frac{1}{2}$ in the following lemma gives us \cref{lm:laststep}.

    \begin{lemma}\label{lm:laststepgen}
        Let $r, \alpha\in \Z$ be such that $r+\alpha\leq R$, let $S$ be a random subset of vertices of size $O(n^{1-\delta}\log n)$ and $C =\{ s_1, \ldots, s_{k-1}\}$ be nodes such that $d(s_i, c_i) \leq r$ for every $i < k$. Define $U = B(C, R + r)^c$ and $Y = B(C, 2R - \alpha)^c$.
        Let $w$ be the furthest point in $U$ from $S$ and let $W$ be the closest $n^\delta$ vertices to $w$. One of the following two holds.
        \begin{enumerate}
            \item $\exists s_k \in W$ such that $d(s_k, c_k) \le r + \alpha$.
            \item $Q\coloneqq \bigcap_{t\in S \cap U} B(t, R)\neq \emptyset$, and any point $s'_k \in Q$ covers all points in $Y$ with radius $2R - \alpha$.
        \end{enumerate}
    \end{lemma}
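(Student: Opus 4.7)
The plan is a case analysis on the distance $d(w,S)$ with the cutoff $\tau \coloneqq R - r - \alpha$ (which is nonnegative because $r+\alpha \leq R$). I will show that $d(w,S) > \tau$ yields Item~1, and $d(w,S) \leq \tau$ yields Item~2. Throughout I rely on the standard sampling guarantee---following from $|S| = \Omega(n^{1-\delta}\log n)$ and a union bound over all $n$ vertices---that $S$ intersects the $n^\delta$ closest neighborhood of every vertex w.h.p., so in particular $S \cap W \ne \emptyset$.

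In Case~1 ($d(w,S) > \tau$), pick any $s^* \in S \cap W$. Since $w \in U$, the bound $d(w,s_i) > R+r$ together with $d(s_i,c_i) \leq r$ gives $d(w,c_i) > R$ for every $i<k$; hence the center covering $w$ within radius $R$ must be $c_k$, so $d(w,c_k) \leq R$. Walking along a shortest $w$-to-$c_k$ path, let $s_k$ be the vertex at distance $\min(\tau, d(w,c_k))$ from $w$ (take $s_k = c_k$ if the walk terminates earlier). Then $d(w,s_k) \leq \tau < d(w,s^*)$ places $s_k$ strictly among the $n^\delta$ vertices closest to $w$, so $s_k \in W$; moreover $d(s_k,c_k) \leq d(w,c_k) - d(w,s_k) \leq R - \tau = r + \alpha$, proving Item~1. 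In Case~2 ($d(w,S) \leq \tau$), the same triangle argument applied to each $t \in S \cap U$ yields $d(t,c_k) \leq R$, so $c_k \in Q$ and $Q \neq \emptyset$. Fix any $s'_k \in Q$ and any $y \in Y$. Since $r+\alpha \leq R$ forces $Y \subseteq U$, we have $y \in U$, so $d(y,S) \leq d(w,S) \leq \tau$ by the maximality of $w$. Pick $t \in S$ with $d(t,y) \leq \tau$. If $t \in U$, then $s'_k \in B(t,R)$, so $d(s'_k,y) \leq R + \tau = 2R - r - \alpha \leq 2R - \alpha$, as required. If $t \notin U$, then $d(t,s_i) \leq R + r$ for some $i<k$, giving $d(y,s_i) \leq \tau + (R+r) = 2R - \alpha$, which contradicts $y \in Y$; so this sub-case does not arise.

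The main delicate point is calibrating the threshold $\tau = R - r - \alpha$: this value is exactly what is needed to simultaneously let Case~1 produce an $s_k \in W$ at distance $\leq r+\alpha$ from $c_k$, let Case~2 yield the desired $(2R-\alpha)$-radius cover of $Y$, and force the ``$t \notin U$'' sub-case in Case~2 into a clean contradiction with $y \in Y$. Everything else reduces to routine triangle inequalities together with the standard union-bound guarantee for the random sample $S$.
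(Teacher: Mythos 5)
Your proof is correct and follows essentially the same route as the paper's: the same threshold $R-r-\alpha$ on $d(w,S)$, the same shortest-path argument placing $s_k\in W$ with $d(s_k,c_k)\le r+\alpha$ in the first case, and the same $c_k\in Q$ plus ``closest $t\in S$ to $y$ must lie in $U$'' triangle-inequality argument in the second. The only differences are cosmetic (explicitly handling $d(w,c_k)<R-r-\alpha$ and spelling out the hitting-set guarantee the paper cites from earlier lemmas).
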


    \begin{proof}
        If $d(w, S) > R - (r +\alpha)$, then by the argument we have seen in previous proofs we have that $\exists s_k \in W$ such that $d(s_k, c_k) \le r + \alpha$. Otherwise, any point $u\in U$ has $d(u, S) \leq R - (r+\alpha)$. 

        Any point $t\in S \cap U$ is not covered by $c_1, \ldots, c_{k-1}$ and must be covered by $c_k$. Thus, $c_k\in B(t, R)$ for any $t\in S \cap U$ and $c_k \in Q$, meaning $Q\neq \emptyset$. 
        
        Let $s'_k$ be an arbitrary vertex in $Q$ and let $y\in Y$. Let $t\in S$ be the closest vertex to $y$ in $S$. 
        Since $Y \subseteq U$, $d(t,y) \leq R - (r+\alpha) $. If $t\notin U$, then there exists some $s_i$ such that $d(s_i, t) \leq R + r$ and hence $d(s_i, y) \leq R + r + R - (r+\alpha) = 2R - \alpha$. However, this would imply $y\notin Y$ and so we conclude $t\in U$. Therefore, by definition of $Q$, $d(s'_k, t) \leq R$, and so $d(s'_k, y) \leq R + R - (r + \alpha) \leq 2R - \alpha$.
    \end{proof}

    We can now finish the algorithm. For every $\tau\in W_1\times \ldots \times W_{k-1}$, compute $W_k^\tau$ and check if any point in $s_k \in W_k^\tau$ covers all vertices in $Y^\tau$ within distance $R+(2k-2)\alpha \leq 2R - \alpha$. If so, return $(\tau, s_k)$. Otherwise, compute $Q$, and if it is not empty check if an arbitrary point in $s_k'\in Q$ covers all vertices in $Y^\tau$ within distance $2R- \alpha$. If so, return $(\tau, s'_k)$. For some $\tau\in W_1\times \ldots \times W_{k-1}$ we know that the conditions of \cref{lm:laststep} hold, and so we will be able to find the appropriate $s_k$ or $s_k'$ and complete the algorithm. 

    We describe the entire algorithm in Algorithm \ref{alg:2kapprox}.

\begin{algorithm}
         \caption{$\left(2-\frac{1}{2k-1}\right)$-Approximation Algorithm for $k$-Center}\label{alg:2kapprox}
         \begin{algorithmic}[1]
            \item \textbf{Input:} $G,R, k$ such that $R_k(G) \leq R$.
            \item \textbf{Output:} Vertices $s_1, s_2, \ldots, s_k$ that cover all $V$ with radius $\leq 2R -\alpha$, where $\alpha \coloneqq \floor{\frac{R}{2k-1}}$ 
            \State Compute APSP, randomly sample a subset $S\subset V$ of size $|S| = O(\sqrt{n}\log n)$.
            \State $\step(R, k, S, 0, ())$.\\
            \Procedure{Step}{$R, k, S, i, (s_1, \ldots, s_i)$}
                \State $w_{i+1} \leftarrow $ furthest vertex from $S$ in $B(\{s_1, s_2, \ldots, s_{i}\}, R + (2i - 1)\alpha)^c$.
                \State $W_{i+1}\leftarrow$ closest $\sqrt{n}$ vertices to $w_{i+1}$.
                \If{$i < k - 1$}
                    \If { $\exists(t_{i+1},\ldots,  t_k)\in S^{k-i}$ that covers $B(\{s_1, s_2, \ldots, s_{i}\}, 2R - \alpha)^c$ with radius $2R - \alpha$} \label{ln:algcheckcover}
                            \State \Return$s_1, \ldots, s_{i}, t_{i+1}, \ldots, t_k$.
                            \EndIf
                            \For{$s_{i+1} \in W_{i+1}$}
                                \State $\step(R, k, S, i+1, (s_1, \ldots, s_i, s_{i+1}))$.
                            \EndFor
                \Else \Comment{Last step, $i=k-1$}
                    \If { $\exists s_k\in W_k$ that covers $B(\{s_1, s_2, \ldots, s_{k-1}\}, 2R - \alpha)^c$ with radius $2R - \alpha$} \label{ln:laststep1}
                    \State \Return $s_1, \ldots, s_{k-1}, s_k$.
                    \EndIf
                    \State $Q \leftarrow \bigcap_{t\in S \cap B(\{s_1, \ldots, s_{k-1}\}, R + (2k-3)\alpha)^c} B(t, R)$,  $s'_k \leftarrow $ arbitrary point in $Q$. \label{ln:laststep2}
                    \If{$(s_1, \ldots, s_{k-1}, s'_k)$ cover all vertices with radius $\leq 2R - \alpha$} 
                        \State \Return $(s_1, \ldots, s_{k-1}, s'_k)$. 
                    \EndIf \label{ln:laststep2end}
                \EndIf
                
            \EndProcedure
            \end{algorithmic}
    \end{algorithm}   

    \paragraph{Runtime:} Denote by $T_i$ the runtime of the function $\step(R, k, S, i, (s_1, \ldots, s_i))$. The final runtime of Algorithm \ref{alg:2kapprox} is ${O}(mn) + T_0 = \tilde{O}(mn + n^{k/2+1})$, as we prove in the following claim.
    
    \begin{claim}
        $T_i = \tilde{O}(n^{(k-i)/2 + 1})$.
    \end{claim}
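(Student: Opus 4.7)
The plan is to establish $T_i = \tilde{O}(n^{(k-i)/2 + 1})$ by reverse induction on $i$, from $i = k-1$ down to $i = 0$, treating $k$ as a constant so that factors depending only on $k$ are absorbed into $\tilde{O}(\cdot)$. Throughout, I exploit that APSP has already been precomputed before the first call to $\step$, so every distance lookup costs $O(1)$.

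For the base case $i = k - 1$, identifying $w_k$ amounts to sweeping each $v \in V$, computing $d(v, S) = \min_{s \in S} d(v, s)$ in $\tilde{O}(|S|) = \tilde{O}(\sqrt{n})$ time per vertex, for a total of $\tilde{O}(n^{3/2})$; selecting the $\sqrt{n}$ closest vertices to $w_k$ is an additional $O(n)$ pass. Line \ref{ln:laststep1} enumerates the $|W_k| = \sqrt{n}$ candidates for $s_k$ and verifies coverage of at most $n$ vertices in $O(n)$ time per candidate, totaling $\tilde{O}(n^{3/2})$. For lines \ref{ln:laststep2}--\ref{ln:laststep2end}, finding a single element of $Q$ reduces to scanning each $v \in V$ and checking membership in the at most $|S| = \tilde{O}(\sqrt{n})$ balls being intersected, again $\tilde{O}(n^{3/2})$; one final $O(n)$ coverage check completes the call. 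Thus $T_{k-1} = \tilde{O}(n^{3/2}) = \tilde{O}(n^{(k - (k-1))/2 + 1})$, as required.

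For the inductive step $i < k - 1$, assume $T_{i+1} = \tilde{O}(n^{(k - i - 1)/2 + 1})$. Within a single call at depth $i$, the per-call work outside the recursion consists of (a) computing $w_{i+1}$ and $W_{i+1}$, which is $\tilde{O}(n^{3/2})$ just as in the base case, and (b) the existence check on line \ref{ln:algcheckcover}, which enumerates all $|S|^{k-i} = \tilde{O}(n^{(k-i)/2})$ ordered tuples and tests each one's coverage of the at most $n$ uncovered vertices in $O(n)$ distance lookups, for a total of $\tilde{O}(n^{(k-i)/2 + 1})$. The recursive calls contribute $|W_{i+1}| \cdot T_{i+1} = \sqrt{n} \cdot \tilde{O}(n^{(k - i - 1)/2 + 1}) = \tilde{O}(n^{(k - i)/2 + 1})$. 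Since $(k-i)/2 + 1 \ge 2$ whenever $i \le k - 2$, the $\tilde{O}(n^{3/2})$ setup is absorbed, yielding $T_i = \tilde{O}(n^{(k-i)/2 + 1})$ and completing the induction.

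The argument reduces to plugging parameters into the branching recurrence $T_i = T_{\mathrm{check}}(i) + |W_{i+1}| \cdot T_{i+1}$, so the calculation is elementary. The only point requiring care is confirming that each local subroutine --- especially line \ref{ln:algcheckcover} and the per-call construction of $w_{i+1}$ --- actually fits inside the target budget $\tilde{O}(n^{(k-i)/2 + 1})$ using the precomputed APSP tables; once this is in place, the induction goes through mechanically, and plugging $i = 0$ gives $T_0 = \tilde{O}(n^{k/2+1})$, which together with the $O(mn)$ APSP preprocessing yields the theorem's $\tilde{O}(mn + n^{k/2+1})$ bound.
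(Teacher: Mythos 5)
Your proof is correct and follows essentially the same route as the paper: the same recurrence $T_i = \tilde{O}(n^{(k-i)/2+1}) + \sqrt{n}\cdot T_{i+1}$ with the same base case $T_{k-1} = \tilde{O}(n^{3/2})$, resolved by induction. You simply spell out the per-call bookkeeping (computing $w_{i+1}$, $W_{i+1}$, and the tuple coverage checks via the precomputed APSP table) in more detail than the paper does.
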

    \begin{proof}
        For $i< k-1$, the function $\step(R, k, S, i, (s_1, \ldots, s_i))$ checks for $|S|^{k-i}$ tuples whether they cover a set of points, taking $O(n^{(k-i)/2 + 1})$, and then runs $\step(R, k, S, i+1, (s_1, \ldots, s_{i+1}))$ on $|W_{i+1}| = \sqrt{n}$ different values of $s_{i+1}$. Therefore, 
        \[
        T_i = \tilde{O}(n^{(k-i)/2 + 1}) + \sqrt{n}\cdot T_{i+1}.
        \]
    
        We prove the claim by induction on $i$. For $i=k-1$, $\step(R, k, k-1, S, (s_1,\ldots, s_{k-1}))$ checks for $\sqrt{n}$ vertices if they cover a set of point, and then computes the intersection of $\leq |S|=\tilde{O}(\sqrt{n})$ balls and checks if a single $k$-tuple covers all points with a sufficiently small radius. Both these steps run in $\tilde{O}(n^{3/2})$. Now assume the claim is true for $i+1$, we have
        \begin{align*}
        T_i &= \tilde{O}(n^{(k-i)/2 + 1}) + \sqrt{n}\cdot T_{i+1} \\&= \tilde{O}(n^{(k-i)/2 + 1}) + \sqrt{n}\cdot \tilde{O}(n^{(k-i - 1)/2 + 1}) \\&= \tilde{O}(n^{(k-i)/2 + 1}).
        \end{align*}
    
    \end{proof}

\end{proofof}

\subsubsection{Improving the Runtime Using Fast Matrix Multiplication}

In this section we use fast matrix multiplication to speed up the running time of the algorithm presented in \cref{thm:2kapproxkcenter}. To introduce the techniques, we first assume in our analysis that $\omega = 2$. We then get rid of that assumption and prove \cref{thm:approxscheme}.

The approach will be analogous to the algorithm above, but we will produce a new random sample $S_i$ during each stage of the algorithm, instead of using the same set $S$ in all stages. The sample sizes $|S_i|$ and the sizes of the sets $W_i$ will vary (unlike before when they were $\sqrt n$). We then use fast matrix multiplication to check if there exists a set of points covering another set with sufficiently small radius.

We first prove the following result, which generalizes our earlier algorithm for $k=2$ in \cref{thm:53approx2center}. We still include the result of \cref{thm:53approx2center} as a special case of the following theorem.
\begin{theorem} \label{thm:2kapproxmatmult}
   If $\omega = 2$, there is an algorithm that computes w.h.p. a $\left(2 - \frac{1}{2k-1}, 1 - \frac{1}{2k-1}\right)$-approximation to the $k$-center of any unweighted, undirected graph, and runs in time 
    \begin{itemize}
        \item $\tilde{O}(mn^{2/3})$ for $k=2$, and
        \item $\tilde{O}(n^{k/2+22/9(k+1)})$ for $k\geq 3$. 
    \end{itemize}
\end{theorem}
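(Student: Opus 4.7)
The plan is to emulate the combinatorial algorithm of \cref{thm:2kapproxkcenter} stage by stage, but replace the brute-force check ``does some $(k-i)$-tuple in $S^{k-i}$ cover $V$?'' by a rectangular fast matrix multiplication, and allow the sample sizes to vary across stages so that the costs of the $k$ stages can be balanced. First I would precompute APSP with Seidel's algorithm in $O(n^\omega)=O(n^2)$ time (under $\omega=2$), which is comfortably within the target budget and allows $O(1)$ distance lookups throughout. At stage $i$, take a fresh random sample $S_i\subseteq V$ of size $n^{1-\delta_i}$, and let $W_{i+1}$ be the $n^{\delta_i}$ vertices closest to $w_{i+1}$ (the furthest vertex from $S_i$ inside $B(\{s_1,\dots,s_i\},R+(2i-1)\alpha)^c$). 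By \cref{lm:keylemma}, either some $(k-i)$-tuple from $S_i$ together with $(s_1,\dots,s_i)$ gives a valid $(2R-\alpha)$-covering, or some $s_{i+1}\in W_{i+1}$ can be added and the recursion continues.

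For stage $i<k-1$, the covering check is performed by splitting the $(k-i)$-tuple into a $\lceil(k-i)/2\rceil$-tuple and a $\lfloor(k-i)/2\rfloor$-tuple and building two Boolean matrices whose entries record (non-)coverage by each sub-tuple; a zero entry in the product exposes a covering $(k-i)$-tuple. Iterating over the fixed prefix $(s_1,\dots,s_i)\in W_1\times\cdots\times W_i$, the work at this stage is $n^{\sum_{j<i}\delta_j}$ times one rectangular MM of shape $n^{(1-\delta_i)\lceil(k-i)/2\rceil}\times n\times n^{(1-\delta_i)\lfloor(k-i)/2\rfloor}$. For the final stage $i=k-1$ I would use \cref{lm:laststepgen}: case~1 (some $s_k\in W_k$ finishes the covering) is checked per prefix in $O(|W_k|\cdot n)$ time; case~2 (any point in $Q=\bigcap_{t\in S_{k-1}\cap U(\tau)} B(t,R)$ finishes it) is batched across all prefixes $\tau$ via a single MM analogous to the $2$-center algorithm of \cref{thm:53approx2center}, multiplying an $(|W_1|\cdots|W_{k-1}|)\times|S_{k-1}|$ ``$U$-membership'' matrix by an $|S_{k-1}|\times n$ ``outside-ball'' matrix, after which an arbitrary element of each $Q(\tau)$ is selected and verified in $O(n)$ per $\tau$.

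The main obstacle is tuning the exponents $\delta_0,\dots,\delta_{k-1}$ so that every stage cost simultaneously meets the target $n^{k/2+22/(9(k+1))}$. Under $\omega=2$, $\omega(a,b,c)$ equals the sum of the two largest of $\{a,b,c\}$, so each stage cost is a piecewise-linear function of the $\delta_j$'s, with break points coming from whether $(1-\delta_i)\lceil(k-i)/2\rceil$ is above or below $1$ and whether the final-stage MM is dominated by its middle or by its outer dimensions. Balancing the early stages (large tuples, expensive MM, few iterations) against the late stages (small tuples, cheap MM, many iterations), together with the two sub-cases of the final stage, reduces to a small linear system in the $\delta_i$'s whose solution produces the claimed bound; the $k=2$ case is already covered by \cref{thm:53approx2center} and is included in the statement merely for uniformity. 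Tracking the break points across all $k$ stages, and verifying that the optimal $\delta_i$'s actually lie in the regime assumed for the MM-exponent formula used, is the most delicate part of the argument.
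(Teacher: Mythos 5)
Your proposal follows essentially the same route as the paper's proof: fresh per-stage samples $S_i$ with stage-dependent exponents $\delta_i$, the $\lceil (k-i)/2\rceil$/$\lfloor (k-i)/2\rfloor$ split Boolean matrix product to test $(k-i)$-tuples from $S_i$, the $2$-center-style matrix product handling the $Q$-set at the end of the recursion, Seidel's APSP up front, and balancing the recursion $\hat T_i = \mathrm{MM}(\cdot)+n^{\delta_i}\hat T_{i+1}$. The only content you leave implicit is what the paper works out explicitly — the concrete choices $\delta_{k-1}=\delta_{k-2}=\tfrac13$, $\delta_{k-3}=\tfrac49$, $\delta_{k-i}=\frac{i-t_{k-i+1}}{i+1}$ and the induction giving $t_{k-i}=\tfrac i2+\tfrac{22}{9(i+1)}$ — and the paper batches the $Q$-check inside each stage-$(k-2)$ call over $W_{k-1}$ rather than globally over all prefixes, a routine variant of your plan.
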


\begin{proof}
    We begin by recursively defining a sequence of values $\delta_{k-1}, \delta_{k-2},\ldots, \delta_0$. For every $1\leq i\leq k$, let $t_{k-i} = 1 + \sum_{j=1}^i \delta_{k-j}$.
    \begin{itemize}
        \item $\delta_{k-1} = \delta_{k-2} = \frac{1}{3}$,
        \item $\delta_{k-3} = \frac{4}{9}$, and
        \item For $4\leq i \leq k$, $\delta_{k-i} = \frac{i - t_{k-i+1}}{i + 1}$.
    \end{itemize}

    We note several properties.
    \begin{obs}\label{obs:deltas}
        For all $i\geq 4$, $i\cdot (1-\delta_{k-i}) = \delta_{k-i} + t_{k-i+1} = t_{k-i}$.
    \end{obs}

    \begin{claim} \label{clm:2kruntimeomega2}
        For all $i\geq 3$, $t_{k-i} = \frac{i}{2} + \frac{22}{9(i+1)}$.
    \end{claim}
    We will prove a slightly more general claim.
    \begin{claim} \label{clm:generalrecursion}
        Let $\delta_0, \ldots, \delta_j, t_0, \ldots, t_{k-j}$ be such that $t_{k-i} = \delta_{k-i} + t_{k-i+1}$ and $\delta_{k-i} = \frac{i - t_{k-i+1} + c}{i+1}$ for some constant $c$ for all $i > j$. Then there exists a constant $a$ such that $t_{k-i} = \frac{i}{2} + \frac{a}{i+1} + c$ for all $i\geq j$.
    \end{claim}
    \begin{proof}
        By induction on $i$. For $i = j$ pick $a$ such that $t_{k-j} = \frac{j}{2} + \frac{a}{j+1} + c$. Now assume the claim holds for $i-1$.
        \begin{align*}
            t_{k-i} &= t_{k-i+1} + \delta_{k-i} = 
            t_{k-i+1} + \frac{i - t_{k-i+1} + c}{i + 1} \\&=
            \frac{i - 1}{2} + \frac{a}{i} + c + \frac{i - (\frac{i-1}{2} + \frac{a}{i} +c )- c}{i+1} = \\&=
            \frac{i-1}{2} + \frac{a}{i} + c + \frac{\frac{i+1}{2} - \frac{a}{i}}{i + 1}  \\&=
            \frac{i-1}{2} + \frac{1}{2} + \frac{a}{i} - \frac{a}{i(i+1)} + c = \frac{i}{2} + \frac{a}{i+1} + c.
        \end{align*}
    \end{proof}
    \begin{proofof}{Proof of \cref{clm:2kruntimeomega2}}
         For $i=3$, $t_{k-3} = 1 + \frac{1}{3} + \frac{1}{3} + \frac{4}{9} = \frac{19}{9} = \frac{3}{2} + \frac{22}{9\cdot 4}$. Therefore, by \cref{clm:generalrecursion}, for all $i\geq 3$, $t_{k-i} = \frac{i}{2} + \frac{22}{9(k+1)}$.
    \end{proofof}
    \begin{claim} \label{clm:deltalesshalf}
        Let $\delta_0, \ldots, \delta_j, t_0, \ldots, t_{k-j}$ be such that $t_{k-i} = \delta_{k-i} + t_{k-i+1}$ and $t_{k-i} = \frac{i}{2} + \frac{a}{i+1}$ for  $a \geq 0$ for all $i > j$. Then $\delta_{k-i}\leq \frac{1}{2}$ for all $i > j$.
    \end{claim}
    \begin{proof}
        $\delta_{k-i} = t_{k-i} - t_{k-i+1} = \frac{i}{2} + \frac{a}{i+1} - \frac{i-1}{2}-\frac{a}{i} = \frac{1}{2}  - \frac{a}{i(i+1)} \leq \frac{1}{2}$.
    \end{proof}
    We conclude that $\delta_{k-i}\leq \frac{1}{2}$ for all $1\leq i \leq k$.

    Given these values, we can now adjust Algorithm \ref{alg:2kapprox}. 
    For $k>2$, we begin by running APSP in $\tilde{O}(n^\omega) = \tilde{O}(n^2)$ time \cite{seidel}. For $k=2$ we avoid running APSP and instead run BFS from the necessarily points and sets throughout the algorithm.
    We adjust the subroutine $\step(R, k, S, i, (s_1, \ldots, s_i))$ and optimize the final step of the algorithm in a similar way to \cref{thm:53approx2center}. We show that the runtime of the resulting algorithm, when $\omega = 2$, is $\tilde{O}(n^{k/2+22/9(k+1)}).$

    First, instead of using the set $S$, the routine $\step(R, k, S, i, (s_1, \ldots, s_i))$ samples a set $S_i$ of size $|S_i| = O(n^{1-\delta_i}\log n)$ and uses this set in place of the set $S$. We will therefore omit the $S$ from calls to $\step()$ in the remainder of this section. We then define $W_{i+1}$ to be the $n^{\delta_i}$ closest vertices to $w_{i+1}$. W.h.p, the set $S_i$ hits $W_{i+1}$, so the resulting arguments still hold.

    Second, to check whether there exist $(t_{i+1}, \ldots, t_k)\in S_i^{k-i}$ that cover $Z\coloneqq B(\{s_1, \ldots, s_i\}, 2R-\alpha)^c$ (line \ref{ln:algcheckcover} in Algorithm \ref{alg:2kapprox}) we use fast matrix multiplication. Construct a $n^{(1-\delta_i)\cdot \floor{(k-i)/2}}\times |Z|$ boolean matrix $X$ whose rows are indexed by $\floor{(k-i)/2}$-tuples of vertices of $S_i$ and whose columns are indexed by $Z$. For a tuple $\tau$ and a vertex $u$, $X[\tau,u]=1$ if $d(\tau,u)>2R-\alpha$, and $X[\tau,u]=0$ otherwise. Similarly, we form a $|Z|$ by $\tilde{O}(n^{(1-\delta_i)\cdot \ceil{(k-i)/2}})$  matrix $Y$ whose rows are indexed by $Z$ and whose columns are indexed by $\ceil{(k-i)/2}$-tuples of nodes of $S_i$. For a tuple $\tau$ and a vertex $u$, $Y[u,\tau]=1$ if $d(\tau,u)>2R-\alpha$, and $Y[\tau,u]=0$ otherwise.
    
    After multiplying $X$ and $Y$ we can determine if a $(k-i)$-tuple of vertices of $S_i$ covers $Z$ within distance $\leq 2R - \alpha $ by finding a $0$ entry in $XY$.
    By Claim \ref{clm:deltalesshalf}, $1-\delta_i \geq \frac{1}{2}$. Therefore, the running time, when $\omega=2$ and $k-i\geq 4$, is up to polylogarithmic factors 
    
    \[
    \leq \textrm{MM}(n^{(1-\delta_i)\cdot \floor{(k-i)/2}}, n, n^{(1-\delta_i)\cdot \ceil{(k-i)/2}}) = n^{(1-\delta_i)(k-i)}.
    \]
    
    If we haven't found an approximate $k$-center at this point (for $i < k-1$), we run $\step(R, k, i+1, (s_1, \ldots, s_{i+1}))$ for $|W_{i+1}| = n^{\delta_i}$ different values of $s_{i+1}$. Thus, if we denote by $\hat{T}_i$ the running time of the subroutine $\step$ at stage $i$, we have for $i < k- 1$,
    \[
    \hat{T}_i = \tilde{O}(\textrm{MM}(n^{(1-\delta_i)\cdot \floor{(k-i)/2}}, n, n^{(1-\delta_i)\cdot \ceil{(k-i)/2}}) + n^{\delta_i}\cdot\hat{T}_{i+1}).
    \]
    
    If $i\leq k-4$ we have,
    \[
    \hat{T}_i = \tilde{O}(n^{(1-\delta_i)(k-i)} + n^{\delta_i}\cdot\hat{T}_{i+1}).
    \]

    Finally, we change the final steps of the algorithm for $i=k-1$ and $i=k-2$ using the technique from \cref{thm:53approx2center}. When calling $\step(R, k, k-1, (s_1, \ldots, s_{k-1}))$ we only check for a point $s_k\in W_k$ that completes the center and do nothing, i.e. we run line \ref{ln:laststep1} but not lines \ref{ln:laststep2}-\ref{ln:laststep2end}. Thus, $\hat{T}_{k-1} = \tilde{O}(n^{1+\delta_{k-1}}) = \tilde{O}(n^{4/3})$.

    When calling $\step(R, k, k-2, (s_1, \ldots, s_{k-2}))$ we add an additional step. Construct a $|W_{k-1}|\times |S_{k-1}|$ boolean matrix $A$ where $A[s_{k-1}, t] = 1 \iff d(s_{k-1}, t) > R + (2k-3)\alpha$ and a $|S_{k-1}|\times V$ boolean matrix $B$ where $B[t, v]=1 \iff d(t,v) > R$. Multiply them in time $n^{\omega(\delta_{k-2}, 1 - \delta_{k-1}, 1)}$. Now, for every $s_{k-1}\in W_{k-1}$, if there is a zero in the row corresponding to it $AB[s_{k-1}, u] = 0$, check if $u$ covers $B(\{s_1, \ldots, s_{k-1}\}, 2R -\alpha)^c$ with radius $2R - \alpha$. Otherwise, for every $s_{k-1} \in W_{k-1}$ run $\step(R, k, k-1, (s_1, \ldots, s_{k-1}))$. As we showed in the proof of \cref{thm:53approx2center}, this obtains the same result as our original version of Algorithm \cref{alg:2kapprox}.

    The new running time for $\step(R, k, k-2, (s_1, \ldots, s_{k-2}))$ now includes two matrix products and a recursive call to $\step(R, k, k-1, (s_1, \ldots, s_{k-1}))$. So we have, omitting polylogarithmic factors,
    \begin{align*}
    \hat{T}_{k-2} &= n^{\omega(1-\delta_{k-2}, 1, 1 - \delta_{k-2})} + n^{\omega(\delta_{k-2}, 1 - \delta_{k-1}, 1)} + n^{\delta_{k-2}}\hat{T}_{k-1} \\&= n^{2 - \delta_{k-2}} +n^{2 - \delta_{k-1}} + n^{1 + \delta_{k-1} + \delta_{k-2}} = n^{5/3}.
    \end{align*}

    Consider the running time for $k=2$. In this case, $\hat{T}_0 = \hat{T}_{k-2} = n^{5/3}$.
    In addition to running $\step(R, 2, 0, ())$ in $\hat{T}_0$ time, for $k=2$ we need to run BFS from all points in the sets $S_0$, $S_1$ in time $\tilde{O}(mn^{1 - \delta_0} + mn^{1-\delta_1}) \leq \tilde{O}(mn^{2/3})$. Additionally, we run BFS from all points in $W_1$ in time $O(mn^{\delta_0}) \leq \tilde{O}(mn^{2/3})$ during our call to $\step(R, 2, 0, ())$, and from all points in $W_2$ in every call to $\step(R, 2, 1, (s_1))$. This takes $O(mn^{\delta_1})$ and is called $n^{\delta_0}$ times, so in total runs in time ${O}(mn^{\delta_0 + \delta_1}) \leq \tilde{O}(mn^{2/3})$. Therefore, for $k=2$ our algorithm runs in time $\tilde{O}(mn^{2/3})$.

    We can now use the properties we proved about $\delta_0,\ldots, \delta_{k-1}$ to compute the runtime of our adjusted algorithm for $k > 2$.

    \begin{claim}
        For $1\leq i\leq k$, $\hat{T}_{k-i} = \tilde{O}(n^{t_{k-i}})$.
    \end{claim}
    \begin{proof}
        For $i = 1, 2$ we showed $\hat{T}_{k-1} = \tilde{O}(n^{4/3}) = \tilde{O}(n^{t_{k-1}})$ and $\hat{T}_{k-2} = \tilde{O}(n^{5/3}) = \tilde{O}(n^{t_{k-2}})$.

        For $i = 3$, since $1/2 > \delta_{k-3}>0$ we have $(1-\delta_{k-3})<1<2(1-\delta_{k-3})$ and so the matrix product step running time, 
        \[
            \textrm{MM}(n^{(1-\delta_{k-3})},n,n^{2(1-\delta_{k-3})}) = n^{3-2\delta_{k-3}}.
        \]
        
        Thus, within polylogarithmic factors
        \[
            \hat{T}_{k-3} = n^{3-2\delta_{k-3}} + n^{\delta_{k-3}}\cdot \hat{T}_{k-2} = n^{3-2\delta_{k-3}} + n^{\delta_{k-3} + t_{k-2}} = n^{19/9} = n^{t_{k-3}}.
        \]

        For $i\geq 4$, assume the claim holds for $i-1$, then by \cref{obs:deltas}, $$\hat{T}_{k-i} = \tilde{O}(n^{(1-\delta_{k-i})i} + n^{\delta_{k-i}}\cdot\hat{T}_{k-i+1}) = \tilde{O}(n^{(1-\delta_{k-i})i} + n^{\delta_{k-i} + t_{k-i+1}}) = \tilde{O}(n^{t_{k-i}}).$$
    \end{proof}
    
    The final runtime of the algorithm is dominated by running APSP in $\tilde{O}(n^\omega)$ time (and we assumed $\omega=2$) and calling $\step(R, k, 0, ())$. Therefore, when $k\geq 3$, our algorithm runs in time $\tilde{O}(n^2) + \hat{T}_0 = \tilde{O}(n^2) + \tilde{O}(n^{t_0}) =  \tilde{O}(n^{k/2 + 22/9(k+1)})$.
\end{proof}


We can now compute the runtime of our adjusted Algorithm \ref{alg:2kapprox} without the assumption that $\omega = 2$.

\begin{theorem} \label{thm:2kapproxmatmultgen}
   There is an algorithm that computes w.h.p. a $\left(2 - \frac{1}{2k-1}, 1 - \frac{1}{2k-1}\right)$-approximation to the $k$-center of any unweighted, undirected graph, and runs in time 
    \begin{itemize}
        \item $\tilde{O}(mn^{\omega/3})$ for $k=2$, 
        \item $\tilde{O}(n^{\frac{k}{2} + \frac{\beta_0}{k+1} + (\omega - 2)})$ for $3\leq k \leq 13$, where $\beta_0 \coloneqq \frac{-8\omega^2 + 18\omega + 18}{3\omega + 3}\approx 1.5516$, and
        \item $\tilde{O}(n^{\frac{k}{2} + \frac{\beta_1}{k+1}+o(1)})$ for $k\geq 13$, where $\beta_1 \coloneqq  \frac{34\omega^2 - 24 \omega - 66}{3\omega + 3}\approx 6.7533$.
    \end{itemize}
\end{theorem}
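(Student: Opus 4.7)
The plan is to reuse the same algorithm as in \cref{thm:2kapproxmatmult} but re-optimize the sample-size exponents $\delta_{k-1},\dots,\delta_0$ for general $\omega$, and redo the cost analysis with $\omega$ kept symbolic. The case $k=2$ is already covered by \cref{thm:53approx2center}, so we focus on $k\ge 3$.

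For $k-i \ge 4$, the stage-$(k-i)$ matrix product has shape $n^{(1-\delta_{k-i})\lfloor i/2\rfloor}\times n\times n^{(1-\delta_{k-i})\lceil i/2\rceil}$. A straightforward block partition gives $\mathrm{MM}(n^a,n,n^b)\le n^{a+b+\omega-2}$ whenever $a,b\ge 1$; combined with the $\omega$-agnostic analogue of \cref{clm:deltalesshalf} forcing $\delta_{k-i}\le 1/2$, this bounds the stage cost by $n^{(1-\delta_{k-i})\,i + (\omega-2)}$. Balancing against the recursive contribution $n^{\delta_{k-i}+t_{k-i+1}}$ yields
\[
\delta_{k-i} \;=\; \frac{i+(\omega-2)-t_{k-i+1}}{i+1},
\]
which is exactly the hypothesis of \cref{clm:generalrecursion} with $c=\omega-2$, so $t_{k-i}=i/2+a/(i+1)+(\omega-2)$ for a constant $a$ determined by the small-case initial conditions.

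For $k-i\in\{1,2,3\}$ I would redo the same three base-case computations as in the proof of \cref{thm:2kapproxmatmult} with $\omega$ symbolic, evaluating the short-side matrix products by padding to square multiplication or by the $\omega(\cdot,\cdot,\cdot)$ notation directly. The only remaining freedom is the joint choice of $\delta_{k-3},\delta_{k-2},\delta_{k-1}$, which I would optimize to minimize $t_{k-3}$. Substituting the resulting $t_{k-3}$ into the closed form above pins down $a=\beta_0=(-8\omega^2+18\omega+18)/(3\omega+3)$, proving the $3\le k\le 13$ bullet; as a sanity check, setting $\omega=2$ recovers $\beta_0=22/9$, matching \cref{thm:2kapproxmatmult}.

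The third bullet improves upon the $(\omega-2)$ additive overhead for large $k$ by invoking fast rectangular matrix multiplication: whenever a middle-stage product can be arranged with its short dimension at most $n^{\alpha}$, its cost drops from $n^{(1-\delta_{k-i})i+(\omega-2)}$ to $n^{(1-\delta_{k-i})i+o(1)}$ since $\omega(1,\alpha,1)=2$. Re-solving \cref{clm:generalrecursion} with $c=o(1)$ gives $t_{k-i}=i/2+a/(i+1)+o(1)$, and the corresponding modification of the base-case initial conditions shifts the constant from $\beta_0$ to $\beta_1$. This trade-off saves $(\omega-2)$ per level but increases the $a/(i+1)$ term by $(\beta_1-\beta_0)/(k+1)$; a direct calculation gives $\beta_1-\beta_0=14(\omega-2)$, so the crossover occurs precisely at $k+1=14$, matching the stated threshold $k\ge 13$. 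The main technical subtlety I expect is verifying, at the transitional levels where the $\alpha$-based bound is first applied, that the chosen $\delta_{k-i}$ leave the intermediate matrix shapes within the range where $\omega(\cdot,\cdot,\cdot)=2+o(1)$ holds, while simultaneously preserving $\delta_{k-i}\le 1/2$ so that the elementary $n^{a+b+\omega-2}$ bound remains valid at every other level.
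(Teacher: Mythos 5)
Your proposal follows essentially the same route as the paper's proof: the same recursive algorithm with re-balanced exponents $\delta_{k-i}$ chosen to equalize the two terms via \cref{clm:generalrecursion} with $c=\omega-2$ at the middle levels, the same symbolic-$\omega$ redo of the three deepest base cases to pin down $\beta_0=\frac{-8\omega^2+18\omega+18}{3\omega+3}$ (specializing to $22/9$ at $\omega=2$), and the same switch to rectangular matrix multiplication for levels with many remaining centers, whose initial condition at the switch gives exactly your identity $\beta_1=\beta_0+14(\omega-2)$ and hence the $k\ge 13$ threshold. The only caveats are cosmetic: your conditions written as ``$k-i\ge 4$'' and ``$k-i\in\{1,2,3\}$'' should refer to the number $i$ of remaining centers, and in the paper the level at which the rectangular bound is first applied ($i\ge 14$) is dictated by the applicability condition $\lfloor i/2\rfloor(1-\delta_{k-i})\ge 1/0.3213$ (guaranteed by $\delta_{k-i}\le 1/2$) rather than by the cost crossover alone---precisely the subtlety you flagged as needing verification.
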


\begin{proof}
    We begin by defining slightly different values of $\delta_0, \ldots, \delta_{k-1}$. Let $t_{k-i} = 1 + \sum_{j=1}^i \delta_{k-j}$ as before.
    \begin{itemize}
        \item $\delta_{k-1} = \frac{2\omega - 3}{3\omega - 3}$,
        \item $\delta_{k-2} = \frac{\omega^2 - 3\omega +3}{3\omega - 3}$, and
        \item $\delta_{k-3} = \frac{2\omega}{3\omega + 3}$.
        \item For $4\leq i \leq 13$, let $\delta_{k-i} = \frac{i - t_{k-i+1} + (\omega - 2)}{i + 1}$.
        \item For $i\geq 14$, let $\delta_{k-i} = \frac{i - t_{k-i+1}}{i + 1}$.
    \end{itemize}

    We will again show that $\hat{T}_{i} = \tilde{O}(n^{t_{i}})$ and use the following two claims to compute our runtime. For the remainder of this proof we omit logarithmic factors.

    \begin{claim}\label{clm:fullruntime3}
        For $3\leq i \leq 13$, $t_{k-i} = \frac{i}{2} + \frac{-8\omega^2 + 18\omega + 18}{(3\omega + 3)(i+1)} + (\omega - 2).$
    \end{claim}
    Note that $\frac{-8\omega^2 + 18\omega + 18}{3\omega + 3} > 0$ (for any $2\leq \omega < 3$) and so \cref{clm:deltalesshalf} holds for $\delta_{k-4}, \ldots, \delta_{k-13}$.
    \begin{proof}
        For $i=3$, 
        \begin{align*}
          t_{k-3} = 1 + \frac{2\omega - 3}{3\omega - 3} + \frac{\omega^2 - 3\omega +3}{3\omega - 3} + \frac{2\omega}{3\omega + 3}  = \frac{\omega^2 + 6\omega + 3}{3\omega + 3} = \frac{3}{2} + \frac{-8\omega^2 + 18\omega + 18}{(3\omega + 3)\cdot 4} + (\omega - 2).
        \end{align*}
        Therefore, by \cref{clm:generalrecursion}, the claim holds for all $3\leq i \leq 13$.
    \end{proof}

    \begin{claim}\label{clm:fullruntime12}
        For $i \geq 13$, $t_{k-i} = \frac{i}{2} + \frac{34\omega^2 -24 \omega - 66}{(3\omega + 3)(i+1)}.$
    \end{claim}

    Again note that $\frac{34\omega^2 - 24\omega -66}{3\omega + 3} > 0$ (for any $2\leq \omega$) and so \cref{clm:deltalesshalf} holds for $\delta_{k-14}, \ldots, \delta_{0}$.
    
    \begin{proof}
        For $i = 13$, by \cref{clm:fullruntime3}, 
        \begin{align*}
          t_{k-13} = \frac{13}{2} + \frac{-8\omega^2 + 18\omega + 18}{(3\omega + 3)\cdot 14} + (\omega - 2) = \frac{13}{2} + \frac{34\omega^2 -24 \omega - 66}{(3\omega + 3 )\cdot 14}.
        \end{align*}

        Therefore, by \cref{clm:generalrecursion}, the claim holds for all $i \geq 13$.
    \end{proof}
    
    For $k=2$ these parameters are identical to the ones we chose in the proof of \cref{thm:53approx2center}, taking $\delta = \delta_{k-1}, \gamma = \delta_{k-2}$. Therefore, by the calculations shown there, 
    \begin{align*}
    \hat{T}_{k-2} &= n^{\omega(1-\delta_{k-2}, 1, 1 - \delta_{k-2})} + n^{\omega(\delta_{k-2}, 1 - \delta_{k-1}, 1)} + n^{1 + \delta_{k-1} + \delta_{k-2}} \leq n^{1 + \omega/3}.
    \end{align*}

    The running time for $k=2$ is $\tilde{O}(mn^{\omega/3})$, as shown in \cref{thm:53approx2center}. For $k>2$, as computed previously, 
    \[
    \hat{T}_{k-3} = n^{\omega(2 - 2\delta_{k-3}, 1, 1 - \delta_{k-3})} + n^{\delta_{k-3} + t_{k-2}}.
    \]
    Since $1-\delta_{k-3}\leq 1 \leq 2 - 2\delta_{k-3}$ we have \begin{align*}
        \omega(2 - 2\delta_{k-3}, 1, 1 - \delta_{k-3}) &\leq (1-\delta_{k-3}) + \delta_{k-3}  + (1-\delta_{k-3})\omega = 1 + \omega(1-\delta_{k-3}) \\&=
        1 + \omega\left(1 - \frac{2\omega}{3\omega + 3} \right) = \frac{\omega^2 + 6\omega + 3}{3\omega + 3} = t_{k-3}.
    \end{align*}
    Thus, 
    \[
    \hat{T}_{k-3} \leq n^{1 + \omega (1-\delta_{k-3})} + n^{\delta_{k-3} + t_{k-2}} = n^{t_{k-3}}.
    \]

    Now consider $k\geq 4$. By \cref{clm:deltalesshalf}, $1-\delta_{k-i}\geq \frac{1}{2}$ and so for any $4\leq i\leq 13$, $\floor{\frac{i}{2}}\cdot (1-\delta_{k-i})\geq 1$. Thus,
    \begin{align*}
        &\omega\left(\floor{\frac{i}{2}}\cdot (1-\delta_{k-i}), 1, \ceil{\frac{i}{2}}\cdot (1-\delta_{k-i})\right) \leq
        \floor{\frac{i}{2}}\cdot (1-\delta_{k-i})- 1 +  \ceil{\frac{i}{2}}\cdot (1-\delta_{k-i}) - 1 + \omega \\ =&~ i \cdot (1 - \delta_{k-i}) + (\omega - 2) =  t_{k-i+1} + \delta_{k-i} = t_{k-i}.
    \end{align*}
    Where the final equations hold by the definition of $\delta_{k-i}$ for $4\leq i \leq 13$. Therefore, for $4\leq i \leq 13$, assuming the claim for $i-1$, 
    \[
    \hat{T}_{k-i} = \textrm{MM}(n^{\floor{\frac{i}{2}}\cdot (1-\delta_{k-i})}, n, n^{\ceil{\frac{i}{2}}\cdot (1-\delta_{k-i})}) + n^{\delta_{k-i}}\cdot \hat{T}_{k-i+1} \leq n^{t_{k-i}} + n^{\delta_{k-i} + t_{k-i+1}} = n^{t_{k-i}}.
    \]

    Now consider $k\geq 14$, we again use the fact that for all $a < 0.3213$, $\textrm{MM}(N, N^a, N)\leq N^{2 + o(1)}$ \cite{VXXZ24}. For $i\geq 14$, $1-\delta_{k-i}\geq \frac{1}{2}$ and so $\floor{\frac{i}{2}}\cdot (1-\delta_{k-i})\geq \frac{1}{0.3213}$. Thus for any $i\geq 14$, omitting $n^{o(1)}$ factors,

    \begin{align*}
        &\omega\left(\floor{\frac{i}{2}}\cdot (1-\delta_{k-i}), 1, \ceil{\frac{i}{2}}\cdot (1-\delta_{k-i})\right) \leq
        \floor{\frac{i}{2}}\cdot (1-\delta_{k-i}) +  \ceil{\frac{i}{2}}\cdot (1-\delta_{k-i})  \\ =&~ i \cdot (1 - \delta_{k-i})  =  t_{k-i+1} + \delta_{k-i} = t_{k-i}.
    \end{align*}
    
    And so once again,
    \[
    \hat{T}_{k-i} = \textrm{MM}(n^{\floor{\frac{i}{2}}\cdot (1-\delta_{k-i})}, n, n^{\ceil{\frac{i}{2}}\cdot (1-\delta_{k-i})}) + n^{\delta_{k-i}}\cdot \hat{T}_{k-i+1} \leq n^{t_{k-i}} + n^{\delta_{k-i} + t_{k-i+1}} = n^{t_{k-i}}.
    \]

    We conclude that for all $0\leq i\leq k$, $\hat{T}_{i} = \tilde{O}(n^{t_i})$ and so $\hat{T}_0 = \tilde{O}(n^{t_0}) = \tilde{O}(n^{t_{k-k}})$. The final runtime of the algorithm for $k\geq 3$ is $\tilde{O}(n^\omega) + \hat{T}_0$. Therefore, from \cref{clm:fullruntime3} and \ref{clm:fullruntime12} we get our desired runtimes.
\end{proof}

\subsubsection{Obtaining a Runtime / Approximation Tradeoff}
In our proof of \cref{thm:2kapproxkcenter}, we introduced an algorithm running in $k$ steps that achieves a $2-\frac{1}{2k-1}$ approximation to the $k$-radius. In each step the algorithm finds an `approximate center', a point $s_i$ that is close to one of the real centers $c_i$. However, this distance grows with each approximate center. In this section we optimize the algorithm to stop after $\ell$ steps, obtaining a tradeoff between approximation and runtime. 

In the following theorem we demonstrate how to achieve this runtime/approximation tradeoff with a {combinatorial} algorithm. Similarly to \cref{thm:2kapproxmatmult}, we can speed up the algorithm using fast matrix multiplication. If $\omega = 2$ we get a runtime of $\tilde{O}(n^{k-\ell +  \frac{\ell(\ell + 1)}{2(k+1)}})$. Using the current bounds on $\omega$, if $\ell \leq k - 13$ we get $\tilde{O}(n^{k-\ell +  \frac{\ell(\ell + 1)}{2(k+1)}+o(1)})$. We omit the details for the algebraic optimization as they are similar to previous calculations.



\begin{theorem}\label{thm:2lapproxkcenter}
    For any integer $1\leq \ell \leq k$, there exists a combinatorial algorithm running in $\tilde{O}(mn + n^{k-\ell +  \frac{\ell(\ell + 1)}{2(k+1)}+1})$ time that computes a $\left( 2- \frac{1}{2\ell}, 1 - \frac{1}{2\ell}\right)$-approximation to the $k$-center of $G$ with high probability.
\end{theorem}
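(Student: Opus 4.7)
The plan is to adapt Algorithm~\ref{alg:2kapprox} so that instead of finding $k$ approximate centers iteratively, we only find $\ell$ of them and then brute-force the remaining $k-\ell$ centers from $V$. Set $\alpha := \lfloor R/(2\ell)\rfloor$; we aim to produce a $k$-tuple of centers covering $V$ within radius $2R-\alpha$, and binary-searching over $R$ then yields the claimed $\left(2-\tfrac{1}{2\ell},\,1-\tfrac{1}{2\ell}\right)$-approximation. First I would compute APSP in $\tilde{O}(mn)$ so that distances are available thereafter.

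The recursion maintains, at level $i\in\{0,1,\ldots,\ell-1\}$, a partial tuple $(s_1,\ldots,s_i)$ satisfying the invariant $d(s_j,c_j)\leq (2j-1)\alpha$ after a suitable relabeling of the optimal centers. At level $i$, draw a fresh random sample $S_i$ of size $\tilde{O}(n^{1-d_i})$, where the parameters $d_i$ are chosen below. Enumerate every $(k-i)$-tuple in $S_i^{k-i}$ and check whether it, together with $(s_1,\ldots,s_i)$, covers $V$ within radius $2R-\alpha$; if so, return. Otherwise invoke Lemma~\ref{lm:keylemma} with $r=(2i-1)\alpha$: let $w_{i+1}$ be the furthest vertex from $S_i$ in $B(\{s_1,\ldots,s_i\},R+(2i-1)\alpha)^c$, let $W_{i+1}$ consist of the $n^{d_i}$ closest neighbors of $w_{i+1}$, and recurse on every choice of $s_{i+1}\in W_{i+1}$. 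The lemma's precondition $r+\alpha\leq R$ holds because $2i\alpha\leq 2(\ell-1)\alpha\leq R$ for $i\leq \ell-1$, so the dichotomy does force one of the two branches to succeed.

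If the recursion reaches level $\ell$, we hold $(s_1,\ldots,s_\ell)$ with $d(s_\ell,c_\ell)\leq (2\ell-1)\alpha \leq R-\alpha$; hence $s_1,\ldots,s_\ell$ already cover $B(c_1,R)\cup\cdots\cup B(c_\ell,R)$ within radius $2R-\alpha$. To finish, brute-force over all $(k-\ell)$-tuples in $V^{k-\ell}$; since $c_{\ell+1},\ldots,c_k\in V$ and cover the remaining region $Y_\ell=B(\{s_1,\ldots,s_\ell\},2R-\alpha)^c$ within radius $R\leq 2R-\alpha$, such a tuple must be found.

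For the runtime, the work at level $i<\ell$ is $\tilde{O}\bigl(n^{\sum_{j<i}d_j+(k-i)(1-d_i)+1}\bigr)$ and at level $\ell$ is $\tilde{O}\bigl(n^{\sum_{j<\ell}d_j+k-\ell+1}\bigr)$. Equating all of these yields the recurrence
\[
d_i \;=\; \frac{1+(k-i-1)\,d_{i+1}}{k-i+1}, \qquad d_{\ell-1}=\frac{1}{k-\ell+2},
\]
whose solution one can check by induction satisfies $d_i\in[0,1)$ and, using the $i=0$ balancing equation, $\sum_{j=0}^{\ell-1} d_j = \tfrac{\ell(\ell+1)}{2(k+1)}$. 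The total runtime is then $\tilde{O}\bigl(mn + n^{1+k-\ell+\ell(\ell+1)/(2(k+1))}\bigr)$, as claimed. The main obstacle will be carefully preserving the $d(s_j,c_j)$ invariant through $\ell$ levels of a recursion whose samples $S_i$ vary in size, so that Lemma~\ref{lm:keylemma} applies at every level, together with the algebra confirming $\sum_{j=0}^{\ell-1}d_j = \ell(\ell+1)/(2(k+1))$.
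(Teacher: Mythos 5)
Your proposal is correct and follows essentially the same route as the paper: run the iterative $W_i$/sample recursion (via \cref{lm:step1gen} and \cref{lm:keylemma}) for only $\ell$ levels with level-dependent sample sizes, then brute-force the last $k-\ell$ centers over $V^{k-\ell}$, and balance the per-level exponents — your recurrence $d_i=\frac{1+(k-i-1)d_{i+1}}{k-i+1}$, $d_{\ell-1}=\frac{1}{k-\ell+2}$ is exactly the paper's choice of $\delta_i$, and its solution gives $\sum_j d_j=\frac{\ell(\ell+1)}{2(k+1)}$ as you state (the paper verifies this via its general recursion claim, which you leave as an induction check).
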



\begin{proof}
    Let $\alpha = \floor{\frac{R}{2\ell}}$, and $\gamma$ be a parameter to be set later. Fix $c_1,...,c_k$ to be a $k$-center of $G$. We assume that $R_k(G) \leq R$ and show how to find a set of $k$ points that cover all vertices with radius $\leq 2R - \alpha$. As before, this allows us to distinguish between $R_k(G)> R$ and $R_k(G) \leq 2R - \alpha$, and by binary searching over possible values of $R$ we obtain the desired approximation. We begin by computing APSP and running Algorithm \ref{alg:2kapprox} with different sized hitting sets, as described in \cref{thm:2kapproxmatmult}. However, after $\ell$ steps, when we have $s_1, \ldots, s_\ell$, we stop and try every $k-\ell$ tuple in $V^{k-\ell}$ to complete the approximate $k$-center. See Algorithm \ref{alg:2lapprox} for details.

    \begin{algorithm}
         \caption{$\left(2-\frac{1}{2\ell}\right)$-Approximation Algorithm for $k$-Center}\label{alg:2lapprox}
         \begin{algorithmic}[1]
            \item \textbf{Input:} $G,R, k$ such that $R_k(G) \leq R$, $\ell\leq k$, parameters $\bar{\delta} = (\delta_0, \ldots, \delta_{\ell -1})$.
            \item \textbf{Output:} Vertices $s_1, s_2, \ldots, s_k$ that cover all $V$ with radius $\leq 2R -\alpha$, where $\alpha \coloneqq \floor{\frac{R}{2\ell}}$.
            \State Compute APSP.
            \State \textsc{Step}$(R, k, \ell, \bar{\delta}, 0, ())$.\\
            \Procedure{Step}{$R, k, \ell, \bar{\delta}, i, (s_1, \ldots, s_i)$}
                \If{$i < \ell$}
                    \State Sample a set $S_i$ of size $n^{1-\delta_i}\log n$.
                    \State $w_{i+1} \leftarrow $ furthest vertex from $S_i$ in $B(\{s_1, s_2, \ldots, s_{i}\}, R + (2i - 1)\alpha)^c$.
                    \State $W_{i+1}\leftarrow$ closest $n^{\delta_i}$ vertices to $w_{i+1}$.
                    \If { $\exists(t_{i+1},\ldots,  t_k)\in S^{k-i}$ that covers $B(\{s_1, s_2, \ldots, s_{i}\}, 2R - \alpha)^c$ with radius $2R - \alpha$} 
                            \State \Return$s_1, \ldots, s_{i}, t_{i+1}, \ldots, t_k$.
                            \EndIf
                            \For{$s_{i+1} \in W_{i+1}$}
                                \State \textsc{Step}$(R, k, \ell, \bar{\delta}, i+1, (s_1, \ldots, s_i, s_{i+1}))$.
                            \EndFor
                \Else \Comment{Last step, $i=\ell$}
                    \If{$\exists(t_{\ell+1}, \ldots , t_k)\in V^{k - \ell }$ that covers $B(\{s_1, s_2, \ldots, s_{\ell}\}, 2R - \alpha)^c$ with radius $2R - \alpha$} \label{ln:ltuple}
                        \State \Return$s_1, \ldots, s_{\ell}, t_{\ell+1}, \ldots, t_k$. \label{ln:lapproxreturn}
                    \EndIf
                \EndIf               
            \EndProcedure
            \end{algorithmic}
    \end{algorithm}

    Note that \cref{lm:keylemma} holds for any $|S| = O(n^{1-\gamma}\log n)$ and $|W| = n^\gamma$. Therefore, by \cref{lm:step1gen} and \cref{lm:keylemma}, either we find a $k$-tuple that covers all $V$ with radius $2R - \alpha$, or we will have at least one call to line \ref{ln:ltuple} in which the points $s_1, \ldots, s_\ell$ satisfy $d(s_i, c_i) \leq (2\ell - 1)\alpha\leq 2R - \ell$ for every $1\leq i \leq \ell$. Therefore, there exist $k-\ell$ points that complete this $k$-tuple to an approximate $k$-center, in particular $c_{\ell+1}, \ldots, c_k$, and so line \ref{ln:lapproxreturn} will return a $k$-tuple that covers the entire graph with radius $\leq 2R - \alpha$.

    \paragraph{Runtime:} We define $\bar{\delta} = (\delta_0, \ldots, \delta_{\ell - 1})$ as follows. Define $t_{\ell} = k-\ell + 1$ and $t_{i} = \delta_i + t_{i+1}$ for any $i<\ell$.
    \begin{itemize}
        \item $\delta_{\ell - 1} = \frac{1}{k-\ell + 2}$.
        \item For $k-\ell + 1\leq i\leq k$ define $\delta_{k-i} = \frac{i - t_{k-i+1} + 1}{i+1}$.
    \end{itemize}

    \begin{claim}
        For all $k-\ell + 1\leq i\leq k$, $t_{k-i} = \frac{i}{2} + \frac{(k-\ell)(k-\ell + 1)}{2(i+1)} + 1$.
    \end{claim}
    \begin{proof}
        For $i = k-\ell$,
        \[
        t_{\ell} = k - \ell + 1 = \frac{k-\ell}{2} + \frac{(k-\ell)(k-\ell + 1)}{2(k -  \ell +1)} + 1.
        \]
        Therefore, by \cref{clm:generalrecursion} the claim holds for any $i\geq k-\ell$.
    \end{proof}

    \begin{cor}
        $t_0 = k - \ell + \frac{\ell(\ell+1)}{2(k+1)} + 1$.
    \end{cor}
    \begin{proof}
        \begin{align*}
            t_0 &= t_{k-k} = \frac{k}{2} + \frac{(k-\ell)(k-\ell + 1)}{2(k+1)} + 1 = \frac{k}{2} + \frac{(k-\ell)(k+1)}{2(k+1)} - \frac{(k-\ell)\ell}{2(k+1)} + 1 \\&=
            \frac{k}{2} + \frac{k-\ell}{2} -\frac{\ell(k+1)}{2(k+1)}+ \frac{\ell(\ell+1)}{2(k+1)} + 1 = k - \ell + \frac{\ell(\ell+1)}{2(k+1)} + 1.
        \end{align*}
    \end{proof}

    Denote by $\bar{T}_j$ the runtime of $\step(R, k, \ell, \bar{\delta}, j, (s_1, \ldots, s_j))$. We claim that $\bar{T}_j = \tilde{O}(n^{t_j})$. 

    For $j = \ell$, $\step(R, k, \ell, \bar{\delta}, (s_1, \ldots, s_\ell))$ checks for $\tilde{O}(n^{k-\ell})$ tuples if they cover a set of points. This takes $\tilde{O}(n^{k-\ell + 1}) = O(n^{t_\ell})$ time.

    For $j < \ell$, let $i = k-j > k-\ell$. The subroutine $\step(R, k, \bar{\delta}, j, (s_1, \ldots, s_j))$ checks for $\tilde{O}(n^{(1-\delta_{j})\cdot (k-j)})$ tuples if they cover a set of points. This takes time, $$\tilde{O}(n^{(1-\delta_{j})\cdot (k-j)+1}) = \tilde{O}(n^{t_{k-i+1} + i\delta_{k-i} + 1}) = \tilde{O}(n^{t_{k-i}}).$$ 

    Further, $\step(R, k, \ell, \bar{\delta}, k-i, (s_1, \ldots, s_{k-i}))$ performs $n^{\delta_{k-i}}$ calls to $\step(R, k, \bar{\delta}, k-i+1, (s_1, \ldots, s_{k-i+1}))$ and so

    \[
    \bar{T}_{k-i} = \tilde{O}(n^{t_{k-i}}) + \tilde{O}(n^{\delta_{k-i} + t_{k-i + 1}}) = \tilde{O}(n^{t_{k-i}}).
    \]

    The final runtime is comprised of running APSP in $O(mn)$ time and then running $\step(R, k, \ell,\bar{\delta}, 0, ())$ and so Algorithm \ref{alg:2lapprox} takes $O(mn) + \tilde{O}(n^{t_0}) = \tilde{O}(mn + n^{k-\ell +  \frac{\ell(\ell + 1)}{2(k+1)}+1})$ time.
\end{proof}


We note that the $mn$ component of the runtime comes from computing exact APSP. To avoid this runtime, we could instead compute an additive approximation of APSP (e.g.\ \cite{DorHZ00} or \cite{saha}) and obtain a faster runtime at the expense of a greater additive error.

\subsection{Approximating 3-Center}\label{sub:3center}

In this section we refine the techniques introduced in the approximation algorithms for $2$-center and general $k$-center for the case of 3 centers. By generalizing the ideas of \cref{thm:53approx2center} we obtain a nearly-$7/4$-approximation algorithm that works for graphs with integer edge weights bounded by a polynomial in $n$ and whose running time is always less than $mn$ for some values of $m$. 

We first adapt \cref{lm:step1gen} to weighted graphs with a slight change in the proof:

\begin{lemma}\label{lm:step1genw}
Given a graph with positive integer weights bounded by $M$, let $S$ be a random sample of size $O(n^{1-\delta}\log n)$. Let $w$ be the furthest node from $S$ and let $W$ be the closest $n^\delta$ nodes to $w$.

    For any real $r>0$, either there exist $k$ nodes $t_1, t_2, \ldots, t_k\in S$ that cover the graph with radius $R + r$, or w.h.p there exists some $s_1\in W$ such that $d(s_1, c_i) < R - r +M$ for some $i\in [k]$, w.l.o.g $i=1$.
\end{lemma}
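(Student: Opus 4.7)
The plan is to adapt the proof of \cref{lm:step1gen} with one small tweak. The only step in that proof that used unweighted distances was the choice, along a shortest path from $w$ to a center $c_i$, of a vertex at distance \emph{exactly} $r$ from $w$; in the weighted case we can only guarantee a vertex at distance close to $r$, and that slack is precisely what produces the $+M$ term in the new statement.

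First I would handle the easy case identically to the unweighted proof: if every center $c_i$ has $d(c_i, S) \leq r$, take $t_i$ to be the closest sampled vertex to $c_i$, and the triangle inequality gives that $\{t_1, \ldots, t_k\}$ covers $V$ with radius $R + r$. Otherwise some center $c_j$ has $d(c_j, S) > r$, and since $w$ is the furthest vertex from $S$, also $d(w, S) > r$. The standard hitting-set argument (unchanged in the weighted setting, since it only uses $|S| = \Omega(n^{1-\delta}\log n)$ and $|W| = n^{\delta}$) implies that w.h.p.\ $S \cap W \neq \emptyset$, and any $u \in S \cap W$ satisfies $d(w,u) > r$; hence every vertex at distance $\leq r$ from $w$ lies in $W$.

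Next I would pick a center $c_i$ with $d(w, c_i) \leq R$, take a shortest $w$-$c_i$ path $P$, and let $s_1$ be the last vertex along $P$ (starting from $w$) with $d(w, s_1) \leq r$. By the previous paragraph $s_1 \in W$. Here is where the weighted argument departs from the unweighted one: the successor $s_1'$ of $s_1$ on $P$ (if it exists) has $d(w, s_1') > r$, and since the edge $(s_1, s_1')$ has weight at most $M$, this forces $d(w, s_1) > r - M$. Because $s_1$ is on a shortest $w$-$c_i$ path,
\[
d(s_1, c_i) \;=\; d(w, c_i) - d(w, s_1) \;<\; R - (r - M) \;=\; R - r + M,
\]
as required. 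The only degenerate case, $s_1 = w$, occurs when the first edge on $P$ has weight $> r$; then $M > r$ and the bound holds trivially since $d(s_1, c_i) = d(w, c_i) \leq R < R - r + M$. The main (and only nontrivial) obstacle is exactly this $+M$ gap coming from integer edge weights of magnitude up to $M$; every other ingredient, including the concentration bound that $S$ hits $W$ with high probability and the selection of the furthest-from-$S$ vertex $w$, is a verbatim rewriting of the proof of \cref{lm:step1gen}.
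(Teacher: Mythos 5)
Your proposal is correct and follows essentially the same argument as the paper: the identical easy case, the same hitting-set argument showing $W$ contains all vertices within distance $r$ of $w$, and the same choice of the last path vertex within distance $r$ of $w$, whose distance exceeds $r-M$ because edge weights are at most $M$, yielding $d(s_1,c_i) < R-r+M$. Your explicit treatment of the degenerate case $s_1 = w$ is a minor extra care the paper leaves implicit, but it does not change the approach.
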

\begin{proof}
    If for every center $d(c_i, S) \leq r$, denote the closest point to $c_i$ in $S$ by $t_i$. Now these $k$ points $t_1, t_2, \ldots, t_k$ cover the entire graph with radius $R + r$, as any point $v\in V$ has $d(v, c_i) \leq R$ for some $i\in [k]$, and thus $d(v, t_i) \leq R + r$.

    Otherwise, there exists a center such that $d(c_i, S) > r$ and therefore $d(w, S) > r$. W.h.p, the set $S$ hits every set of size $n^\delta$ and so $S$ hits $W$. Therefore, there exists a point $u\in W \cap S$ such that $d(w,u) > r$. By the definition of $W$, the set must contain all the points closer to $w$ than $u$ and so it contains all nodes of distance $\leq r$ from $w$.

    For some $i\in [k]$, $d(w,c_i) \leq R$. Consider the shortest path between $w$ and $c_i$. There are two consecutive nodes $s_1$ and $s'_1$ on the path so that $d(w, s_1) \leq r$ and $d(w, s'_1) >r$. Because the largest edge weight in $G$ is at most $M$, $d(w,s_1)> r-M$.
   Hence, $d(s_1, c_i) \leq R-d(w,s_1)< R-r+M$. By the above observation, we have that $s_1 \in W$.
\end{proof}

Now we slightly generalize \cref{lm:laststepgen} for weighted graphs.

    \begin{lemma}\label{lm:laststepgenw}
    Let $G$ be a given $n$-node graph with positive integer edge weights bounded by $M$.
        Let $r, \alpha$ be positive real numbers with $r+\alpha\leq R$, let $S$ be a random subset of vertices of size $O(n^{1-\delta}\log n)$ and $C =\{ s_1, \ldots, s_{k-1}\}$ be nodes such that $d(s_i, c_i) \leq r$ for every $i < k$. Define $U = B(C, R + r)^c$ and $Y = B(C, 2R - \alpha)^c$.
        Let $w$ be the furthest point in $U$ from $S$ and let $W$ be the closest $n^\delta$ vertices to $w$. One of the following two holds.
        \begin{enumerate}
            \item $\exists s_k \in W$ such that $d(s_k, c_k) < r + \alpha+M$.
            \item $Q\coloneqq \bigcap_{t\in S \cap U} B(t, R)\neq \emptyset$, and any point $s'_k \in Q$ covers all points in $Y$ with radius $2R - \alpha$.
        \end{enumerate}
    \end{lemma}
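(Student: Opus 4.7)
My plan is to adapt the proof of \cref{lm:laststepgen} essentially verbatim, with one small modification in the first case to handle the discrete step size introduced by edge weights bounded by $M$. As in the unweighted proof, I would split on whether $d(w, S) > R - (r+\alpha)$ or not.

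In the first case, I use the standard hitting-set argument: since $S$ hits $W$ w.h.p.\ and $w$ is furthest in $U$ from $S$, every node at distance $\le R-(r+\alpha)$ from $w$ lies in $W$. Because $w \in U$ and $d(s_i, c_i) \le r$ for $i<k$, the triangle inequality gives $d(w, c_i) > R$ for $i<k$, so $c_k$ must cover $w$, i.e., $d(w, c_k)\le R$. I then walk along a shortest $w$-to-$c_k$ path and take $s_k$ to be the last vertex with $d(w, s_k) \le R-(r+\alpha)$. The key (and only) place where the weighted setting enters: the next edge on the path has weight at most $M$, so $d(w, s_k) > R-(r+\alpha) - M$, hence
\[
d(s_k, c_k) \;\le\; d(w,c_k) - d(w,s_k) \;<\; R - \bigl(R-(r+\alpha)-M\bigr) \;=\; r+\alpha+M,
\]
and $s_k\in W$ by construction.

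In the complementary case $d(w, S) \le R-(r+\alpha)$, the unweighted argument carries over without change: every $u\in U$ has $d(u,S)\le R-(r+\alpha)$; for each $t\in S\cap U$, the bound $d(t, s_i) > R+r$ combined with $d(s_i,c_i)\le r$ forces $d(c_k, t)\le R$, so $c_k\in Q$ and $Q\neq\emptyset$; and for any $s'_k\in Q$ and $y\in Y\subseteq U$, letting $t\in S$ be the nearest point to $y$, one checks $t\in U$ (otherwise some $s_i$ would satisfy $d(s_i,y)\le (R+r)+(R-(r+\alpha))=2R-\alpha$, contradicting $y\in Y$), hence $d(s'_k, t)\le R$ and $d(s'_k, y) \le R + (R-(r+\alpha)) \le 2R-\alpha$.

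I do not anticipate a real obstacle: the only substantive change from the unweighted lemma is the additive $+M$ slack in case 1, arising from the worst-case single-edge overshoot when selecting $s_k$ on a shortest path. Case 2 uses only triangle inequality and the defining inclusions of the balls $B(\cdot,\cdot)$, both of which are insensitive to whether edge weights are unit or bounded by $M$.
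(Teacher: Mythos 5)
Your proposal is correct and follows essentially the same route as the paper: the paper handles case 1 by invoking the same shortest-path argument with the $+M$ single-edge overshoot (spelled out in its weighted analogue of the sampling lemma), and its case 2 is the identical triangle-inequality argument you give. No gaps.
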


    \begin{proof}
        If $d(w, S) > R - (r +\alpha)$, then
        by our argument from the previous lemma,
        we have that $\exists s_k \in W$ such that $d(s_k, c_k) < r + \alpha+M$. 
        Otherwise, any point $u\in U$ has $d(u, S) \leq R - (r+\alpha)$. 

        Assume that the latter happens.
        Any point $t\in S \cap U$ is not covered by $c_1, \ldots, c_{k-1}$ and must be covered by $c_k$. Thus, $c_k\in B(t, R)$ for any $t\in S \cap U$ and $c_k \in Q$, meaning $Q\neq \emptyset$. 
        
        Let $s'_k$ be an arbitrary vertex in $Q$ and let $y\in Y$. 
        Let $t\in S$ be the closest vertex to $y$ in $S$, by the above observation, since 
        $Y \subseteq U$,
        $d(t,y) \leq R - (r+\alpha) $. 
        
        If $t\notin U$, then there exists some $s_i$ such that $d(s_i, t) \leq R + r$ and hence $d(s_i, y) \leq R + r + R - (r+\alpha) = 2R - \alpha$. However, this would imply $y\notin Y$ and so we conclude $t\in U$. 
        
        Therefore, $d(s'_k, t) \leq R$ and so $d(s'_k, y) \leq R + R - (r + \alpha) \leq 2R - \alpha$.
    \end{proof}

Let $\mu_0=\omega-2+\frac{1}{\omega+1}$ and $\mu_1=\frac{3\omega^2 - 3\omega - 1}{4\omega + 1}$.
For the current bound on $\omega$, $\mu_0$ is about $0.6682$ and $\mu_1$ is about $0.835$.
For any $\omega\in [2,3]$ we have $\mu_0<\mu_1<2(\omega-1)/3$.



\begin{theorem}\label{thm:74approx3center}
    Given an integer $R$ and an undirected graph $G$ with $n$ vertices and $m=n^{1+\mu}$ edges. Suppose $G$ has positive integer edge weights bounded by $M=\textrm{poly}(n)$.
    For $\mu \in [\mu_0,\mu_1]$,
    there exists an algorithm running in time
    \[\tilde{O}\left(n^{\frac{(5\omega-\omega^2+1)-\mu\omega(\omega-2)}{(3\omega-\omega^2+1)}}\right)\]
    time that w.h.p.\ computes a 
    $(7/4,M)$-approximation to $R_3(G)$.
    For all $\mu\in [\mu_0,\mu_1]$, the running time of the algorithm is polynomially faster than both $\tilde{O}(mn)$ and $\tilde{O}(n^{\omega+1/(\omega+1)})$.
\end{theorem}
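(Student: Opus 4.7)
\medskip
\noindent\textbf{Proof proposal.} The plan is to extend the two-center algorithm of \cref{thm:53approx2center} to $k=3$ while also accommodating positive integer edge weights up to $M$. Taking $\alpha=\lfloor R/4\rfloor$ we aim to find three centers that cover $V$ within radius $2R-\alpha=7R/4$, absorbing the non-divisibility of $R$ by $4$ and the additive $+M$ slack from \cref{lm:step1genw} and \cref{lm:laststepgenw} into a total additive error of $O(M)$. Binary searching over $R$ then yields a $(7/4,M)$-approximation.

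The algorithm runs in three nested phases. Phase $0$ samples $S_0$ of size $\tilde O(n^{1-\delta_0})$ and forms $W_0$, the $n^{\delta_0}$ vertices nearest to the point farthest from $S_0$; all required BFS calls from $S_0$ and then from $W_0$ are performed up front, costing $\tilde O(m n^{1-\delta_0}+mn^{\delta_0})$. By (the $k=3$ case of) \cref{lm:step1genw} applied with $r=R-\alpha-M$, either some triple in $S_0$ already covers $V$ within $7R/4$, which is detected by a rectangular Boolean matrix product whose two factors are indexed by pairs and singletons from $S_0$ (cost $\textrm{MM}(n^{2(1-\delta_0)},n,n^{1-\delta_0})$), or $W_0$ contains an $s_1$ with $d(s_1,c_1)<\alpha+M$. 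Phase $1$ fixes each candidate $s_1\in W_0$, samples $S_1$ of size $\tilde O(n^{1-\delta_1})$, and invokes \cref{lm:keylemma} (after trivially generalizing it to weighted graphs in the same way that \cref{lm:step1genw} generalizes \cref{lm:step1gen}), either locating a pair in $S_1$ via a second rectangular product or producing an $s_2\in W_1$ with $d(s_2,c_2)$ bounded by $3\alpha+O(M)$. Phase $2$ fixes $(s_1,s_2)$, samples $S_2$ and invokes \cref{lm:laststepgenw}: the required $s_3$ lies either in a small set $W_2$ or in the intersection $Q=\bigcap_{t\in S_2\cap U}B(t,R)$, and $Q$ is found by one more rectangular matrix product exactly as in the proof of \cref{thm:53approx2center}.

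The running-time analysis balances three kinds of terms: the BFS costs $m\cdot n^{1-\delta_i}\cdot n^{\delta_0+\cdots+\delta_{i-1}}=n^{1+\mu}\cdot n^{1-\delta_i+\delta_0+\cdots+\delta_{i-1}}$ (which, unlike the other terms, carry the sparsity exponent $\mu$); the ``reverse'' BFS costs $m\cdot n^{\delta_i}\cdot n^{\delta_0+\cdots+\delta_{i-1}}$; and the three rectangular matrix products of respective exponents $\omega(2(1-\delta_0),1,1-\delta_0)$, $\omega(2(1-\delta_1),1,1-\delta_1)$ multiplied by $n^{\delta_0}$, and an exponent analogous to the $\omega(\delta,1-\gamma,1)$ term appearing in \cref{thm:53approx2center} multiplied by $n^{\delta_0+\delta_1}$. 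The target exponent $T(\omega,\mu)=\bigl[(5\omega-\omega^2+1)-\mu\omega(\omega-2)\bigr]/(3\omega-\omega^2+1)$ is the value at which a specific solution $(\delta_0^*,\delta_1^*,\delta_2^*)$ makes all these terms simultaneously (sub-polynomially close to) equal.

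The main obstacle will be this three-parameter optimization. Unlike the two-center case, where two parameters suffice and are pinned down by clean identities such as $\gamma+\delta=\omega/3$, here one must solve a coupled system in $\delta_0,\delta_1,\delta_2$ whose solution depends on $\mu$ because only the BFS terms involve $\mu$. The work therefore splits into (i) deriving the interior stationary point and expressing $T(\omega,\mu)$ in closed form, and (ii) checking that the stationary point remains feasible (each $\delta_i\in[0,1]$, the matrix-product exponents dominate only the correct factors, etc.) precisely on the interval $\mu\in[\mu_0,\mu_1]$, whose endpoints are where the algorithm's bound meets the two competing bounds $mn$ and $n^{\omega+1/(\omega+1)}$. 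Conceptually the algorithm is a three-level generalization of \cref{thm:53approx2center}, so no new combinatorial idea is required; the content of the theorem lies almost entirely in this parameter analysis.
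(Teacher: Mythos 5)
There is a genuine gap, and it is exactly at the point you wave off as ``no new combinatorial idea is required.'' If you run the generic chain of \cref{lm:step1gen}/\cref{lm:keylemma}/\cref{lm:laststepgen} (weighted or not) with $\alpha=\lfloor R/4\rfloor$, the drift per step is $+2\alpha$ in the middle and $+\alpha$ at the end: you get $d(s_1,c_1)\le\alpha$, $d(s_2,c_2)\le 3\alpha$, $d(s_3,c_3)\le 4\alpha$, so the third center only covers within $R+4\alpha=2R$, which exceeds the target $2R-\alpha=7R/4$. Indeed, for \cref{lm:laststepgen} to help you would need $r+\alpha\le R-\alpha$ with $r=3\alpha$, i.e.\ $\alpha\le R-4\alpha$, which forces $\alpha\le R/5$ --- that is precisely why the generic scheme of \cref{thm:2kapproxkcenter} tops out at a $9/5$ (not $7/4$) multiplicative factor for $k=3$. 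Your proposal as written therefore proves at best a $(9/5,O(M))$-approximation in the claimed time, not $(7/4,M)$.

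The paper's proof of \cref{thm:74approx3center} circumvents this with a different middle step. After finding $s_1$ with $d(s_1,c_1)\le R-x+M$ (with $x=3R/4$), it does \emph{not} invoke \cref{lm:keylemma}. Instead it distinguishes: either some far point $w_2\in U_1(s_1)$ is far from $S_2$, in which case $W_2(s_1)$ contains $s_2$ with $d(s_2,c_2)\le 2R-2x+M=2\alpha+M$ (drift only $+\alpha$); or every point of $U_1(s_1)$ is within $2x-R$ of $S_2$, and then one rectangular Boolean product of a $|W_1\times V|\times|S_2|$ matrix with an $|S_2|\times|V|$ matrix searches for the \emph{pair} $(s_2,s_3)$ simultaneously, using exact radius-$R$ balls around $S_2$-points as intermediaries (a two-center analogue of the $Q$-trick in \cref{thm:53approx2center}), rather than asking $S_2$-points themselves to serve as centers within $2R-\alpha$. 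It is this Case-I product that removes the $+2\alpha$ penalty and makes the chain $\alpha,2\alpha,3\alpha$ --- hence $x=3R/4$ and the $7/4$ factor --- feasible; only then does \cref{lm:laststepgenw} close the third step. Your runtime framework (Dijkstra-from-samples terms carrying $\mu$, $\mu$-free rectangular products, three parameters tuned so all exponents coincide on $[\mu_0,\mu_1]$) is the right shape, but the matrices you would actually multiply, and hence the exponent system, differ from the paper's, so the closed form would not come out as claimed without first fixing the approximation chain.
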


For the current bound on $\omega$ we get that the running time is $O(n^{2.90466 - 0.35388\mu})$ for $\mu \in [0.6682,0.835]$ and in this interval it is always better than $\tilde{O}(mn)$, the runtime of exact APSP.


\begin{proof}
We will give an algorithm that w.h.p. returns an estimate of at most $7R/4+M$ whenever $R_3(G) \leq R$. The algorithm can detect when $R_3(G) >7R/4+M$.
The statement of the theorem follows from binary searching on $R$ (there is a polynomial range).

    Assume $R_3(G) \leq R$ and let $c_1, c_2, c_3$ be an optimal $3$-center with radius $\leq R$, in this case our algorithm will find three points which cover all vertices with radius $\leq 7R/4+M$. 
    Let $S_1, S_2, S_3$ be random samples of vertices in $V$ of sizes $O(n^{1-\delta}\log n), O(n^{1-\gamma}\log n), O(n^{1-\beta}\log n)$ respectively, for parameters $\beta, \gamma, \delta$ we will set later. In 
    

     \[ {\Tilde{O}(m(n^{1-\delta} + n^{1-\gamma} + n^{1-\beta}))} \]
      
    time we compute all the distances between $s\in S_1 \cup S_2 \cup S_3$ and $v\in V$ using Dijkstra's algorithm. 

    Let $w_1$ be the furthest node from the set $S_1$, let $W_1$ be the closest $n^\delta$ nodes to $w_1$. In time $\Tilde{O}(mn^{\delta})$ run Dijkstra's from every node in $W_1$. By \cref{lm:step1genw}, either there exist 3 points $t_1, t_2, t_3 \in S_1$ that cover the graph with radius $R+x$
    or there exists some $s_1\in W_1$ such that $d(s_1, c_1)\leq R-x+M$, for a choice of $x$ (we will later see that $x=3R/4$ is optimal), where $s_1$ covers all nodes that $c_1$ covers, within $2R-x+M$.


    To handle the first case, construct a $|S_1|\times |V|$ binary matrix $X$ with rows indexed by $S_1$ and columns indexed by $V$ such that $X[s,v] = 0$ iff $d(s,v) \leq R+x$
    and a $|V|\times |S_1|^2$ binary matrix $Y$ such that $Y[v, (s_1, s_2)] = 1$ iff both $d(v, s_1) > R+x$
    and $d(v, s_2)> R+x$.
    Multiply $X$ by $Y$ in time asymptotically
    
    $$n^{\omega(1 - \delta, 1, 2 - 2\delta)}\leq n^{\delta+(1-\delta)+\omega(1 - \delta)}= n^{1+\omega(1-\delta)}.$$

    The nodes $t_1, t_2, t_3$ will give $XY[t_1, (t_2, t_3)] = 0$ and any nodes that satisfy $XY[a, (b,c)] = 0$ cover all vertices with radius $R+x$.

    We are now left to handle the second case, where $\exists s_1\in W_1$ with $d(s_1, c_1)\leq R-x+M$.
    We'll have an $n^\delta$ overhead to the runtime and assume we have $s_1\in W_1$ such that $d(s_1, c_1)\leq R-x+M$.
    Define 
    $U_1(s_1) = \{u\in V : d(u, s_1) > R+x+M\}$. 
    Note that by our choice of $x=3R/4$, all nodes in $U_1(s_1)$ have more than $(R+x+M) - d(s_1,c_1) \ge 2x > R$ distance from $c_1$, and hence must be covered by either $c_2$ or $c_3$.
    
    Let $w_2(s_1)$ be the furthest node in $U_1(s_1)$ from $S_2$ and let $W_2(s_1)$ be the closest $n^\gamma$ nodes to $w_2(s_1)$. By the same argument as seen in the proof of \cref{lm:laststepgenw}, either every point in $U_1(s_1)$ is within distance $R-(2R-2x)=2x-R$
    of $S_2$ or there exists a point $s_2\in W_2(s_1)$ such that $d(s_2, c_2) \leq (2R-2x)+M$.

    \paragraph{Case I} $\forall u\in U_1(s_1)~ \exists t\in S_2$ such that 
    $d(u, t) \leq 2x-R$. 

    Construct a $|W_1\times V|\times |S_2|$ binary matrix $\hat{X}$ where $\hat{X}[(s, u_1),t] = 1$ iff both
    $d(s,t) > 2R-x+M$ and $d(u_1,t)> R$,
    and a $|S_2|\times |V|$ binary matrix $\hat{Y}$ where $\hat{Y}[t,  v] = 1\iff d(t,v)> R$. 
    
    In time asymptotically
    
    $$n^{\omega(1 + \delta, 1-\gamma, 1)}\leq n^{\delta+2\gamma+\omega(1-\gamma)}=n^{\delta+\omega-\gamma(\omega-2)}$$ 
    compute the product $\hat{X}\hat{Y}$. 

    We have that $d(s_1,c_1)\leq R-x+M$ so for all $t$ such that $d(c_1,t)\leq R$, $d(s_1,t)\leq 2R-x+M$.
    Hence any point $t\in S_2$ with
    $d(s_1,t)>2R-x+M$
    must be covered by $c_2$ or $c_3$.
    We thus have $\hat{X}\hat{Y}[(s_1,c_2), c_3] = 0$. 
    
    Furthermore, we claim that any $s_2, s_3$ with $\hat{X}\hat{Y}[(s_1,s_2), s_3] = 0$ will cover all vertices with radius 
$\leq R+x+M$.
    
    Any point $u\in V$ not covered by $s_1$ within distance
    $R+x+M$
    is in $U_1(s_1)$ 
    and so by our assumption for case I there exists a point $t\in S_2$ with 
    $d(u, t) \leq (2x-R)$.
    Therefore, 
    $d(s_1, t) > (R+x+M)-(2x-R)=2R-x+M$
    and since $\hat{X}\hat{Y}[(s_1,s_2), s_3] = 0$ we must have either $d(t, s_2) \le  R$ or $d(t, s_3) \le R$. We conclude 
    $d(u,s_i)\leq R+x+M$
    for $i = 2$ or $i = 3$.
    
    \paragraph{Case II} $\exists s_2 \in W_2(s_1)$ such that $d(s_2, c_2) \leq (2R-2x)+M $.
    This node $s_2$ covers all nodes that $c_2$ covers within radius $3R-2x+M$.

    Define $U_2(s_1, s_2) = \{u\in V: d(u,s_1) > 3R-2x \land d(u, s_2)> 3R-2x\}$.


       
    Let $w_3(s_1, s_2)$ be the furthest in $U_2(s_1, s_2)$ from $S_3$ and define $W_3(s_1, s_2)$ to be the closest $n^{\beta}$ points to $w_3(s_1, s_2)$. 
    
    We use \cref{lm:laststepgenw} with parameters 
    $\alpha=R-x$, $r=2R-2x$, $S = S_3, C = \{s_1, s_2\}$, where we will pick $x=3R/4\geq 2R/3$ so that the conditions of the lemma apply.
   Either $\exists s_3\in W_3(s_1, s_2)$ such that 
   $d(s_3, c_3) \leq 3R-3x+M$
   or for any point $s'_3\in Q \coloneqq \bigcap _{t\in S_3 \cap U_2(s_1, s_2)} (t, R)$ we have that $s_1, s_2, s'_3$ cover all vertices with radius $R+x$.

    To handle the first case, in total time 
    $$O(mn^{\beta + \gamma + \delta})$$ run Dijkstra's from every point $s_3\in W_3(s_1, s_2)$ and check if $s_1, s_2, s_3$ cover all points in $V$ with radius
    $R+(3R-3x+M)= 4R-3x+M$.

    To handle the second case, construct a $|W_2(s_1)|\times |S_3|$ binary matrix $\tilde{X}$ where
    $\tilde{X}[s, t] = 1 \iff d(s,t) > 3R-2x$ 
    and a $|S_3|\times |V|$ binary matrix $\tilde{Y}$ where $\tilde{Y}[t,u] = 1 \iff d(t,u)> R$. 
    
    In time asymptotically 
    $$n^{\delta + \omega(\gamma, 1 - \beta, 1)}\leq n^{\delta +(1-\beta-\gamma)+(1-\gamma)+\omega \gamma}=n^{\delta +2-\beta+(\omega-2)\gamma}$$
    compute the product $\tilde{X}\tilde{Y}$ for every $s_1\in W_1$. The inequality holds because we will ensure that $\gamma+\beta<1$ and hence $\gamma<1$ and $1-\beta<1$. By the same argument we have seen, for the correct $s_1$ we have $\tilde{X}\tilde{Y}[s_2, c_3] = 0$ and for any $s_3$ such that $\tilde{X}\tilde{Y}[s_2, s_3] = 0$ the triple $s_1, s_2, s_3$ will cover the graph with radius $R+x$.
    
    Thus, after computing the matrix product we need to check for every $s_2\in W_2(s_1)$ at most one potential center. This takes time ${O}(n^{\delta + \gamma + 1})$. 

    Let's see what $x$ should be set to. In the worst case the approximate radius values that we get are $4R-3x+M$, $2R-x+M$, $3R-2x+M$ and $R+x+M$.
    
If we set $x=3R/4$, we get that $4R-3x=R+x=7R/4$ and
since
$R-x\geq 0$ we have that 
%
$2R-x\leq 3R-2x\leq 4R-3x$. Thus the final approximate radius is at most $7R/4+M$.
    


    \paragraph{Final runtime:} Let $m = n^{1 + \mu}$, the runtime exponents from above are:
    \begin{align*}
        & 2+\mu-\delta\\
         & 2+\mu-\beta\\
          & 2+\mu-\gamma\\
          & 1+\omega(1-\delta)\\
          & \delta+\omega-\gamma(\omega-2)\\
          & 1+\mu +\beta+\gamma+\delta\\
          &\delta+2-\beta+\gamma(\omega-2).
    \end{align*}

We will assume that 
\[\gamma\leq\beta, \gamma\leq\delta, \gamma+\delta+\beta\leq 1,\]
which allows us to also conclude that
$\gamma\leq \frac{1}{3}<\frac{1}{2}+\frac{\beta}{2(\omega-2)}$ and so $2-\beta+\gamma(\omega-2)<\omega-\gamma(\omega-2)$.
This allows us to assume that several exponents above are dominated and what remains is:
    \begin{align*}
          & 2+\mu-\gamma\\
          & 1+\omega(1-\delta)\\
          & \delta+\omega-\gamma(\omega-2)\\
          & 1+\mu +\beta+\gamma+\delta .
    \end{align*}

Let's set
    \begin{align*}
          & \beta = \frac{2(\omega-1)-3\mu}{3-\omega+1/\omega}\\
          & \gamma = \frac{1}{3}-\frac{\beta(\omega+1)}{3\omega}\\
          & \delta = \frac{1}{3}-\frac{\beta(\omega-2)}{3\omega}.
    \end{align*}

We want to show that $\beta,\gamma,\delta\geq 0,$ and $\gamma\leq\beta, \gamma\leq\delta, \gamma+\delta+\beta\leq 1$.

As we assumed that $\mu<2(\omega-1)/3$ we gave that $\beta\geq 0$.

As we assumed that $\mu\geq \mu_0=\omega-2+\frac{1}{\omega+1}$, we get that 
\[\gamma = \frac{1}{3}-\frac{(\omega+1)}{3\omega} \cdot \frac{2(\omega-1)-3\mu}{3-\omega+1/\omega} =\]
\[\frac{1}{3}-\frac{2(\omega^2-1)-3\mu(\omega+1)}{3(3\omega-\omega^2+1)} =\]
\[\frac{(3\omega-\omega^2+1)-2(\omega^2-1)+3\mu(\omega+1)}{3(3\omega-\omega^2+1)} =\]
\[\frac{(\omega-\omega^2+1)+\mu(\omega+1)}{(3\omega-\omega^2+1)} \geq 0\]

By construction, since $\omega+1>\omega-2$, $\delta>\gamma$.

Since we assumed that $\mu\leq \mu_1=\frac{3\omega^2-3\omega-1}{4\omega+1}$, we also have that

\[\beta = \frac{2(\omega-1)-3\mu}{3-\omega+1/\omega}\geq\]
\[\frac{2(\omega-1)(4\omega+1)-3(3\omega^2-3\omega-1)}{(3-\omega+1/\omega)(4\omega+1)}=\]
\[\frac{8\omega^2-6\omega-2-9\omega^2+9\omega+3)}{(3-\omega+1/\omega)(4\omega+1)}=\]
\[\frac{\omega(1-\omega^2+3\omega)}{(3\omega-\omega^2+1)(4\omega+1)}=\frac{\omega}{4\omega+1}.\]

Hence:
\[\beta-\gamma=\beta-\frac{1}{3}+\frac{\beta(\omega+1)}{3\omega}=\]
\[\frac{\beta(4\omega+1)-\omega}{3\omega}\geq 0,\]
and thus $\gamma\leq \beta$, as promised.

Let's consider
\[\gamma+\delta+\beta = \frac{2}{3}+\beta\cdot \frac{3\omega-(\omega+1)-(\omega-2)}{3\omega}=\frac{2}{3}+\beta\cdot \frac{\omega+1}{3\omega}.\]

The above is $\leq 1$ iff $\beta\leq\frac{\omega}{\omega+1}$.
Since we have that $\mu\geq\mu_0=\frac{\omega^2-\omega-1}{\omega+1}$, we get that
\[\beta \leq \frac{\omega(2(\omega^2-1)-3(\omega^2-\omega-1))}{(3\omega-\omega^2+1)(\omega+1)}=\frac{\omega}{\omega+1},\]
and thus $\gamma+\delta+\beta<1$, as promised.

Now let's look at our running time exponents.


First we see that
\[1-\beta-\delta = \frac{2}{3}+\beta{\omega-2-3\omega}{3\omega}=2(\frac{1}{3}-\frac{\beta(\omega+1)}{3\omega}=2\gamma,\]
and so we have that two of the exponents are equal 
$$2+\mu-\gamma=1+\mu+\beta+\gamma+\delta.$$

Next we see that
\[\delta (\omega+1)-\gamma(\omega-2) = \frac{\omega+1}{3}-\frac{\omega-2}{3}-\frac{\beta(\omega-2)(\omega+1)}{3\omega}+\frac{\beta(\omega-2)(\omega+1)}{3\omega}=1,\]

And so another pair of exponents is equal:
$$1+\omega-\omega\delta = \omega+\delta-\gamma(\omega-2).$$

Now consider the implications:
$$\frac{\omega(2(\omega-1))}{3\omega}= \frac{2(\omega-1)}{3}$$
is true and implies
$$\frac{\omega(2(\omega-1)-3\mu)+3\omega\mu}{3\omega}= \frac{2(\omega-1)}{3}$$
which by our setting of $\beta$ implies
$$\mu+\frac{\beta(1-\omega^2+3\omega)}{3\omega}= \frac{2(\omega-1)}{3}$$
which implies 
$$\mu+\frac{\beta(\omega+1)}{3\omega}= \frac{2\omega-2}{3}+\frac{\beta(\omega-2)}{3}$$
which implies 
$$1+\mu-\frac{1}{3}+\frac{\beta(\omega+1)}{3\omega}= \omega-\frac{\omega}{3}+\frac{\beta(\omega-2)}{3}$$
and thus 
$$2+\mu-\gamma= 1+\omega(1-\delta).$$

Thus all four exponents are the same and they equal

\[2+\mu-\gamma=2+\mu-\frac{\omega-\omega^2+1+\mu(\omega+1)}{(3\omega-\omega^2+1)}=\]
\[\frac{(6\omega-2\omega^2+2)+(3\omega-\omega^2+1)\mu-\omega+\omega^2-1-\mu(\omega+1)}{(3\omega-\omega^2+1)}=\]
\[\frac{(5\omega-\omega^2+1)-\mu\omega(\omega-2)}{(3\omega-\omega^2+1)}.\]

\end{proof}

    \bibliographystyle{alphaurl} 
    \bibliography{main}

\newcommand{\etalchar}[1]{$^{#1}$}
\begin{thebibliography}{ACLM23}

\bibitem[ACIM99]{AingworthCIM99}
Donald Aingworth, Chandra Chekuri, Piotr Indyk, and Rajeev Motwani.
\newblock Fast estimation of diameter and shortest paths (without matrix multiplication).
\newblock {\em {SIAM} J. Comput.}, 28(4):1167--1181, 1999.

\bibitem[ACLM23]{AbboudCLM23}
Amir Abboud, Vincent Cohen{-}Addad, Euiwoong Lee, and Pasin Manurangsi.
\newblock On the fine-grained complexity of approximating \emph{k}-center in sparse graphs.
\newblock In {\em 2023 Symposium on Simplicity in Algorithms, {SOSA} 2023, Florence, Italy, January 23-25, 2023}, pages 145--155. {SIAM}, 2023.

\bibitem[AGV23]{AbboudGW23}
Amir Abboud, Fabrizio Grandoni, and Virginia {Vassilevska Williams}.
\newblock Subcubic equivalences between graph centrality problems, apsp, and diameter.
\newblock {\em {ACM} Trans. Algorithms}, 19(1):3:1--3:30, 2023.

\bibitem[AVW16]{AbboudWW16}
Amir Abboud, Virginia {Vassilevska Williams}, and Joshua~R. Wang.
\newblock Approximation and fixed parameter subquadratic algorithms for radius and diameter in sparse graphs.
\newblock In {\em Proceedings of the Twenty-Seventh Annual {ACM-SIAM} Symposium on Discrete Algorithms, {SODA} 2016, Arlington, VA, USA, January 10-12, 2016}, pages 377--391. {SIAM}, 2016.

\bibitem[BHW20]{BalcanHW20}
Maria{-}Florina Balcan, Nika Haghtalab, and Colin White.
\newblock \emph{k}-center clustering under perturbation resilience.
\newblock {\em {ACM} Trans. Algorithms}, 16(2):22:1--22:39, 2020.
\newblock \href {https://doi.org/10.1145/3381424} {\path{doi:10.1145/3381424}}.

\bibitem[BRS{\etalchar{+}}21]{BackursRSWW21}
Arturs Backurs, Liam Roditty, Gilad Segal, Virginia {Vassilevska Williams}, and Nicole Wein.
\newblock Toward tight approximation bounds for graph diameter and eccentricities.
\newblock {\em {SIAM} J. Comput.}, 50(4):1155--1199, 2021.

\bibitem[CFG{\etalchar{+}}24]{cruciani2024dynamic}
Emilio Cruciani, Sebastian Forster, Gramoz Goranci, Yasamin Nazari, and Antonis Skarlatos.
\newblock Dynamic algorithms for k-center on graphs.
\newblock In {\em Proceedings of the 2024 Annual ACM-SIAM Symposium on Discrete Algorithms (SODA)}, pages 3441--3462, 2024.
\newblock \href {https://doi.org/10.1137/1.9781611977912.123} {\path{doi:10.1137/1.9781611977912.123}}.

\bibitem[CGR16]{CairoGR16}
Massimo Cairo, Roberto Grossi, and Romeo Rizzi.
\newblock New bounds for approximating extremal distances in undirected graphs.
\newblock In {\em Proceedings of the Twenty-Seventh Annual {ACM-SIAM} Symposium on Discrete Algorithms, {SODA} 2016, Arlington, VA, USA, January 10-12, 2016}, pages 363--376. {SIAM}, 2016.

\bibitem[CIP10]{cip10}
Chris Calabro, Russell Impagliazzo, and Ramamohan Paturi.
\newblock On the exact complexity of evaluating quantified \emph{k}-cnf.
\newblock In {\em Parameterized and Exact Computation - 5th International Symposium, {IPEC} 2010, Chennai, India, December 13-15, 2010. Proceedings}, volume 6478 of {\em Lecture Notes in Computer Science}, pages 50--59. Springer, 2010.

\bibitem[CLR{\etalchar{+}}14]{ChechikLRSTW14}
Shiri Chechik, Daniel~H. Larkin, Liam Roditty, Grant Schoenebeck, Robert~Endre Tarjan, and Virginia {Vassilevska Williams}.
\newblock Better approximation algorithms for the graph diameter.
\newblock In {\em Proceedings of the Twenty-Fifth Annual {ACM-SIAM} Symposium on Discrete Algorithms, {SODA} 2014, Portland, Oregon, USA, January 5-7, 2014}, pages 1041--1052. {SIAM}, 2014.

\bibitem[DF85]{DyerF85}
Martin~E. Dyer and Alan~M. Frieze.
\newblock A simple heuristic for the p-centre problem.
\newblock {\em Op. Res. Lett.}, 3(6):285–--288, 1985.

\bibitem[DFHT03]{planar2}
Erik~D Demaine, Fedor~V Fomin, Mohammad~Taghi Hajiaghayi, and Dimitrios~M Thilikos.
\newblock Fixed-parameter algorithms for the (k, r)-center in planar graphs and map graphs.
\newblock In {\em International Colloquium on Automata, Languages, and Programming}, pages 829--844. Springer, 2003.

\bibitem[DHZ00]{DorHZ00}
Dorit Dor, Shay Halperin, and Uri Zwick.
\newblock All-pairs almost shortest paths.
\newblock {\em SIAM Journal on Computing}, 29(5):1740--1759, 2000.
\newblock \href {https://doi.org/10.1137/S0097539797327908} {\path{doi:10.1137/S0097539797327908}}.

\bibitem[EKM14]{planar1}
David Eisenstat, Philip~N Klein, and Claire Mathieu.
\newblock Approximating k-center in planar graphs.
\newblock In {\em Proceedings of the Twenty-Fifth Annual ACM-SIAM Symposium on Discrete Algorithms}, pages 617--627. SIAM, 2014.

\bibitem[Elk05]{elkin}
Michael Elkin.
\newblock Computing almost shortest paths.
\newblock {\em ACM Trans. Algorithms}, 1(2):283–323, oct 2005.

\bibitem[Fel19a]{Feldmann19}
Andreas~Emil Feldmann.
\newblock Fixed-parameter approximations for k-center problems in low highway dimension graphs.
\newblock {\em Algorithmica}, 81(3):1031--1052, 2019.
\newblock URL: \url{https://doi.org/10.1007/s00453-018-0455-0}, \href {https://doi.org/10.1007/S00453-018-0455-0} {\path{doi:10.1007/S00453-018-0455-0}}.

\bibitem[Fel19b]{feldmann2019fixed}
Andreas~Emil Feldmann.
\newblock Fixed-parameter approximations for k-center problems in low highway dimension graphs.
\newblock {\em Algorithmica}, 81:1031--1052, 2019.

\bibitem[FG88]{FederG88}
Tom{\'{a}}s Feder and Daniel~H. Greene.
\newblock Optimal algorithms for approximate clustering.
\newblock In {\em Proceedings of the 20th Annual {ACM} Symposium on Theory of Computing, May 2-4, 1988, Chicago, Illinois, {USA}}, pages 434--444. {ACM}, 1988.
\newblock \href {https://doi.org/10.1145/62212.62255} {\path{doi:10.1145/62212.62255}}.

\bibitem[FM20]{feldmann2020parameterized}
Andreas~Emil Feldmann and D{\'a}niel Marx.
\newblock The parameterized hardness of the k-center problem in transportation networks.
\newblock {\em Algorithmica}, 82:1989--2005, 2020.

\bibitem[GG24]{gan2024fully}
Jinxiang Gan and Mordecai~J Golin.
\newblock Fully dynamic k-center in low dimensions via approximate furthest neighbors.
\newblock In {\em 2024 Symposium on Simplicity in Algorithms (SOSA)}, pages 269--278. SIAM, 2024.

\bibitem[GHL{\etalchar{+}}21]{goranci2021fully}
Gramoz Goranci, Monika Henzinger, Dariusz Leniowski, Christian Schulz, and Alexander Svozil.
\newblock Fully dynamic k-center clustering in low dimensional metrics.
\newblock In {\em 2021 Proceedings of the Workshop on Algorithm Engineering and Experiments (ALENEX)}, pages 143--153. SIAM, 2021.

\bibitem[Gon85]{Gonzalez85}
Teofilo~F. Gonzalez.
\newblock Clustering to minimize the maximum intercluster distance.
\newblock {\em Theor. Comput. Sci.}, 38:293--306, 1985.
\newblock \href {https://doi.org/10.1016/0304-3975(85)90224-5} {\path{doi:10.1016/0304-3975(85)90224-5}}.

\bibitem[HN79]{HsuN79}
Wen{-}Lian Hsu and George~L. Nemhauser.
\newblock Easy and hard bottleneck location problems.
\newblock {\em Discret. Appl. Math.}, 1(3):209--215, 1979.

\bibitem[HS86]{HochbaumS86}
Dorit~S. Hochbaum and David~B. Shmoys.
\newblock A unified approach to approximation algorithms for bottleneck problems.
\newblock {\em J. {ACM}}, 33(3):533--550, 1986.
\newblock \href {https://doi.org/10.1145/5925.5933} {\path{doi:10.1145/5925.5933}}.

\bibitem[IP01]{ip2}
Russell Impagliazzo and Ramamohan Paturi.
\newblock On the complexity of k-sat.
\newblock {\em Journal of Computer and System Sciences}, 62(2):367--375, 2001.
\newblock URL: \url{https://www.sciencedirect.com/science/article/pii/S0022000000917276}, \href {https://doi.org/10.1006/jcss.2000.1727} {\path{doi:10.1006/jcss.2000.1727}}.

\bibitem[KLM19]{SLM19}
{Karthik {C. S.}}, Bundit Laekhanukit, and Pasin Manurangsi.
\newblock On the parameterized complexity of approximating dominating set.
\newblock {\em J. {ACM}}, 66(5):33:1--33:38, 2019.
\newblock \href {https://doi.org/10.1145/3325116} {\path{doi:10.1145/3325116}}.

\bibitem[KLP19]{KatsikarelisLP19}
Ioannis Katsikarelis, Michael Lampis, and Vangelis~Th. Paschos.
\newblock Structural parameters, tight bounds, and approximation for (k, r)-center.
\newblock {\em Discret. Appl. Math.}, 264:90--117, 2019.
\newblock URL: \url{https://doi.org/10.1016/j.dam.2018.11.002}, \href {https://doi.org/10.1016/J.DAM.2018.11.002} {\path{doi:10.1016/J.DAM.2018.11.002}}.

\bibitem[Lin19]{Lin19}
Bingkai Lin.
\newblock A simple gap-producing reduction for the parameterized set cover problem.
\newblock In {\em 46th International Colloquium on Automata, Languages, and Programming, {ICALP} 2019, July 9-12, 2019, Patras, Greece}, volume 132 of {\em LIPIcs}, pages 81:1--81:15. Schloss Dagstuhl - Leibniz-Zentrum f{\"{u}}r Informatik, 2019.
\newblock URL: \url{https://doi.org/10.4230/LIPIcs.ICALP.2019.81}, \href {https://doi.org/10.4230/LIPICS.ICALP.2019.81} {\path{doi:10.4230/LIPICS.ICALP.2019.81}}.

\bibitem[LVW18]{LincolnWW18}
Andrea Lincoln, Virginia {Vassilevska Williams}, and R.~Ryan Williams.
\newblock Tight hardness for shortest cycles and paths in sparse graphs.
\newblock In {\em Proceedings of the Twenty-Ninth Annual {ACM-SIAM} Symposium on Discrete Algorithms, {SODA} 2018, New Orleans, LA, USA, January 7-10, 2018}, pages 1236--1252. {SIAM}, 2018.

\bibitem[PW10]{PatrascuW10}
Mihai P{\u{a}}tra{\c{s}}cu and Ryan Williams.
\newblock On the possibility of faster {SAT} algorithms.
\newblock In {\em Proceedings of the Twenty-First Annual {ACM-SIAM} Symposium on Discrete Algorithms, {SODA} 2010, Austin, Texas, USA, January 17-19, 2010}, pages 1065--1075. {SIAM}, 2010.
\newblock \href {https://doi.org/10.1137/1.9781611973075.86} {\path{doi:10.1137/1.9781611973075.86}}.

\bibitem[RV13]{RodittyW13}
Liam Roditty and Virginia {Vassilevska Williams}.
\newblock Fast approximation algorithms for the diameter and radius of sparse graphs.
\newblock In {\em Symposium on Theory of Computing Conference, STOC'13, Palo Alto, CA, USA, June 1-4, 2013}, pages 515--524. {ACM}, 2013.

\bibitem[Sei95]{seidel}
R.~Seidel.
\newblock On the all-pairs-shortest-path problem in unweighted undirected graphs.
\newblock {\em Journal of Computer and System Sciences}, 51(3):400--403, 1995.
\newblock \href {https://doi.org/10.1006/jcss.1995.1078} {\path{doi:10.1006/jcss.1995.1078}}.

\bibitem[SY24]{saha}
Barna Saha and Christopher Ye.
\newblock Faster approximate all pairs shortest paths.
\newblock In {\em Proceedings of the 2024 Annual ACM-SIAM Symposium on Discrete Algorithms (SODA)}, pages 4758--4827, 2024.
\newblock \href {https://doi.org/10.1137/1.9781611977912.170} {\path{doi:10.1137/1.9781611977912.170}}.

\bibitem[TFL83]{kcentersurvey}
Barbaros~C. Tansel, Richard~L. Francis, and Timothy~J. Lowe.
\newblock Location on networks: A survey. part i: The p-center and p-median problems.
\newblock {\em Management Science}, 29(4):482–--497, 1983.

\bibitem[Tho04]{Thorupcenter04}
Mikkel Thorup.
\newblock Quick k-median, k-center, and facility location for sparse graphs.
\newblock {\em {SIAM} J. Comput.}, 34(2):405--432, 2004.

\bibitem[{Vas}18]{vsurvey}
Virginia {Vassilevska Williams}.
\newblock On some fine-grained questions in algorithms and complexity.
\newblock In {\em Proceedings of the ICM}, volume~3, pages 3431--3472. World Scientific, 2018.

\bibitem[VXXZ24]{VXXZ24}
Virginia {Vassilevska Williams}, Yinzhan Xu, Zixuan Xu, and Renfei Zhou.
\newblock New bounds for matrix multiplication: from alpha to omega.
\newblock In {\em Proceedings of the 2024 Annual ACM-SIAM Symposium on Discrete Algorithms (SODA)}, pages 3792--3835, 2024.
\newblock \href {https://doi.org/10.1137/1.9781611977912.134} {\path{doi:10.1137/1.9781611977912.134}}.

\bibitem[Wil05]{Williams05}
Ryan Williams.
\newblock A new algorithm for optimal 2-constraint satisfaction and its implications.
\newblock {\em Theor. Comput. Sci.}, 348(2-3):357--365, 2005.
\newblock URL: \url{https://doi.org/10.1016/j.tcs.2005.09.023}, \href {https://doi.org/10.1016/J.TCS.2005.09.023} {\path{doi:10.1016/J.TCS.2005.09.023}}.

\end{thebibliography}
\end{document}